\documentclass[10pt,a4paper]{article}
\usepackage[latin1]{inputenc}
\usepackage[T1]{fontenc}
\usepackage{indentfirst}
\usepackage{amsmath,amssymb,amsthm,amsfonts}
\usepackage{calligra,mathrsfs}
\usepackage{graphicx,color}
\usepackage{xcolor}
\usepackage[all,cmtip]{xy}
\usepackage{bbm}
\usepackage{hyperref}
\usepackage{mathtools}
\usepackage{verbatim}
\usepackage{stmaryrd}
\usepackage[shortlabels]{enumitem}
\usepackage{bm}
\usepackage{url}
\usepackage{comment}
\usepackage{slashed}
\usepackage{stmaryrd}
\usepackage{caption}
\usepackage{subcaption}
\usepackage{scalerel}
\usepackage{multicol}
\usepackage{tabularx}
\usepackage{tikz}
\usetikzlibrary{patterns,patterns.meta}
\usepackage{subcaption}
\usepackage{multirow}

\usepackage{tikz-cd}
\usetikzlibrary{matrix}
\usetikzlibrary{positioning}
\usetikzlibrary{arrows}
\usetikzlibrary{backgrounds}

\usepackage{caption}
\hypersetup{colorlinks=true,linkcolor=black,urlcolor=lcolor,citecolor=black}

\definecolor{amethyst}{rgb}{0.6, 0.4, 0.8}
\definecolor{alizarin}{rgb}{0.82, 0.1, 0.26}
\definecolor{darkorchid}{rgb}{0.6, 0.2, 0.8}
\definecolor{lcolor}{rgb}{0.6,0.3,0.3}
\definecolor{caribbeangreen}{rgb}{0.0, 0.8, 0.6}
\definecolor{flamingopink}{rgb}{0.99, 0.56, 0.67}
\definecolor{hollywoodcerise}{rgb}{0.96, 0.0, 0.63}
\definecolor{airforceblue}{rgb}{0.36, 0.54, 0.66}
\definecolor{applegreen}{rgb}{0.55, 0.71, 0.0}
\definecolor{ballblue}{rgb}{0.13, 0.67, 0.8}
\definecolor{brinkpink}{rgb}{0.98, 0.38, 0.5}
\definecolor{bluegray}{rgb}{0.4, 0.6, 0.8}
\definecolor{impx}{rgb}{0.96, 0.0, 0.63}
\definecolor{junglegreen}{rgb}{0.16, 0.67, 0.53}
\definecolor{applegreen}{rgb}{0.55, 0.71, 0.0}
\definecolor{darkpastelgreen}{rgb}{0.01, 0.75, 0.24}
\definecolor{parisgreen}{rgb}{0.31, 0.78, 0.47}
\definecolor{ufogreen}{rgb}{0.24, 0.82, 0.44}
\definecolor{bluebell}{rgb}{0.64, 0.64, 0.82}
\definecolor{cornflowerblue}{rgb}{0.39, 0.4, 0.93}
\definecolor{pastelred}{rgb}{1.0, 0.41, 0.38}
\definecolor{carminepink}{rgb}{0.92, 0.3, 0.26}
\definecolor{darkpastelpurple}{rgb}{0.59, 0.44, 0.84}
\definecolor{cerulean}{rgb}{0.0, 0.48, 0.65}
\definecolor{darkcerulean}{rgb}{0.03, 0.27, 0.49}
\definecolor{ceruleanblue}{rgb}{0.16, 0.32, 0.75}
\definecolor{green(munsell)}{rgb}{0.0, 0.66, 0.47}
\definecolor{bleudefrance}{rgb}{0.19, 0.55, 0.91}
\definecolor{antiquefuchsia}{rgb}{0.57, 0.36, 0.51}
\definecolor{deeplilac}{rgb}{0.6, 0.33, 0.73}
\definecolor{ferngreen}{rgb}{0.31, 0.47, 0.26}
\definecolor{qcolor}{rgb}{0.19,0.55,0.91}
\definecolor{hcolor}{rgb}{0.9,0.2,0.5}
\definecolor{scolor}{rgb}{0.9,0.5,0.2}
\definecolor{cyan}{rgb}{0.0,0.8,1.0}
\definecolor{cyan(process)}{rgb}{0.0, 0.72, 0.92}
\definecolor{babyblue}{rgb}{0.54, 0.81, 0.94}
\definecolor{triangleHor}{rgb}{0.93, 0.53, 0.18}
\definecolor{triangleVer}{rgb}{0.0, 0.5, 1.0}

\newcommand{\ulinsol}{\tc{magenta}{\udata}_*}
\newcommand{\vlinsol}{\tc{magenta}{\vdata_*}}

\newcommand{\rfix}{\tc{cyan}{r_*}}
\newcommand{\MinkThmsfix}{\tc{orange}{s_*}}

\newcommand{\tc}[2]{#2}

\newcommand{\step}{\vskip 3mm}

\renewcommand{\Im}{\textnormal{Im}}
\renewcommand{\Re}{\textnormal{Re}}
\newcommand{\p}{\partial}
\newcommand{\Z}{\mathbbm{Z}}
\newcommand{\N}{\mathbbm{N}}
\newcommand{\R}{\mathbbm{R}}

\newcommand{\C}{\mathbbm{C}}
\newcommand{\one}{\mathbbm{1}}
\newcommand{\eps}{\epsilon}

\renewcommand{\div}{\textnormal{div}}
\newcommand{\divST}{\widehat{\div}}

\newcommand{\CURL}{\textnormal{CURL}}

\newcommand{\image}{\textnormal{image}}

\renewcommand{\subset}{\subseteq}

\newcommand{\RNum}[1]{\mathrm{\uppercase\expandafter{\romannumeral #1\relax}}}

\newcommand{\xvecjap}{\langle \vec{x} \rangle}
\DeclareMathOperator{\arctanh}{arctanh}

\DeclareMathOperator{\Lie}{\mathcal{L}}
\renewcommand{\diamond}{\mathcal{D}}

\newcommand{\diamonddata}{\underline{\diamond\mkern-4mu}\mkern4mu }

\newcommand{\gmink}{\eta}
\newcommand{\gminkdata}{\underline{\eta}}

\newcommand{\MCinf}{\mathcal{F}}
\DeclareMathOperator{\MC}{\mathcal{F}}

\newcommand{\conenb}[1]{\tc{applegreen}{#1}}
\newcommand{\Kil}{\tc{cornflowerblue}{\mathfrak{K}}}
\newcommand{\confKilS}{\tc{orange}{\underline{\mathfrak{K}}}}

\newcommand{\transl}{\tc{cornflowerblue}{\mathfrak{t}}}
\newcommand{\boosts}{\tc{cornflowerblue}{\mathfrak{b}}}
\newcommand{\dens}[1]{|\Omega|^{#1}}
\newcommand{\I}{\mathcal{I}}
\newcommand{\Idata}{\underline{\I}}

\newcommand{\gx}{\mathfrak{g}}
\newcommand{\lx}{\mathfrak{L}}
\newcommand{\lxdata}{\underline{\lx}}

\newcommand{\anchorg}{\rho_{\gx}}

\newcommand{\OmegaC}{\Omega_{\C}}
\newcommand{\otimesCinf}{\otimes_{C^\infty}}
\newcommand{\dg}{d_{\gx}}
\newcommand{\dI}{d_{\I}}

\newcommand{\ddR}{d}

\newcommand{\dl}{d_{\lx}}
\newcommand{\uo}{u_0}
\newcommand{\uI}{u_{\I}}

\newcommand{\co}{c_{0}}
\newcommand{\ci}{c_{\ix}}
\newcommand{\cp}{c_{+}}
\newcommand{\cm}{c_{-}}
\newcommand{\cpm}{c_{\pm}}

\newcommand{\uodata}{\udata_0}
\newcommand{\uIdata}{\udata_{\I}}
\newcommand{\up}{u_{+}}
\newcommand{\um}{u_{-}}
\newcommand{\upm}{u_{\pm}}
\newcommand{\upbar}{\bar{u}_{+}}

\newcommand{\gxdata}{\underline{\mathfrak{g}}}
\newcommand{\Pconstraints}{\mkern1mu\underline{\mkern-1mu P\mkern-1mu}\mkern1mu}
\newcommand{\linPconstraints}{D\underline{P}}

\newcommand{\SPAN}{\textnormal{span}}

\newcommand{\diag}{\textnormal{diag}}

\newcommand{\tint}{\textstyle\int}
\newcommand{\tsum}{\textstyle\sum}

\newcommand{\gxG}{\mathfrak{g}_{\textnormal{G}}}
\newcommand{\gxdataG}{\gxdata_{\textnormal{G}}}
\newcommand{\OmegaG}{\Omega_{\textnormal{G}}}

\newcommand{\IG}{\I_{\textnormal{G}}}

\newcommand{\ngG}{\tc{lcolor}{n}}
\renewcommand{\ng}{\tc{lcolor}{m}}

\newcommand{\udata}{\mkern1mu\underline{\mkern-1mu u \mkern-1mu}\mkern1mu}
\newcommand{\vdata}{\mkern1mu\underline{\mkern-1muv\mkern-1mu}\mkern1mu}
\newcommand{\wdata}{\underline{w}}
\newcommand{\Ydata}{\underline{Y\mkern-5mu}\mkern2mu}
\newcommand{\kerr}{\mathrm{K}}
\newcommand{\kerrdata}{\underline{\mathrm{K}}}

\newcommand{\Cb}{C_b}

\newcommand{\Hb}{H_{b}}

\newcommand{\Dspcl}{\tc{darkpastelpurple}{\slashed{\Delta}}}

\newcommand{\Dspdata}{\tc{darkpastelpurple}{\underline{\Delta}}}

\newcommand{\otimesRR}{\otimes_{\R}}

\newcommand{\HdataApp}{\widetilde{H}_{\textnormal{data}}}

\newcommand{\Wconv}{\tc{antiquefuchsia}{\mathfrak{h}}}
\newcommand{\FP}{\Phi}
\newcommand{\RFP}{\mathfrak{r}}
\newcommand{\Csob}[1]{\tc{cerulean}{c}^{#1}_{\textnormal{sob}}}

\newcommand{\pdata}{\mkern2mu\underline{\mkern-2mu\pSHS\mkern-2mu}\mkern2mu}
\newcommand{\idata}{\underline{\iSHS}}
\newcommand{\rhohdata}{\mkern1mu\underline{\mkern-1mu \rhoh\mkern-1mu}\mkern1mu}

\newcommand{\Capply}[1]{\mathscr{#1}}
\newcommand{\epsapply}[1]{\tc{blue}{\slashed{\eps}}}

\newcommand{\cx}{\mathfrak{c}}
\newcommand{\dc}{d_{\cx}}
\newcommand{\cdata}{\tc{magenta}{\underline{c}}}
\newcommand{\cc}{\tc{ballblue}{c}}
\newcommand{\dd}{\mathrm{d}}
\newcommand{\lxc}{\mathfrak{l}}
\newcommand{\dlxc}{d_{\lxc}}

\newcommand{\uinj}{\underline{\sigma\mkern-2mu}\mkern2mu }
\newcommand{\ix}{\mathfrak{i}}
\newcommand{\iC}{\smash{\slashed{\ix}}}
\newcommand{\diC}{d_{\iC}}
\newcommand{\di}{d_{\ix}}

\newcommand{\uzeta}{\smash{\underline{\smash{\zeta}}}}
\newcommand{\rel}{\textnormal{rel}}
\newcommand{\cpt}{\textnormal{cpt}}
\newcommand{\STMat}{\widehat{\mathcal{M}}}
\newcommand{\VEC}{\mathcal{V}}

\newcommand{\dR}{\textnormal{dR}}

\newcommand{\Ccone}{\tc{orange}{\mathcal{C}}}
\newcommand{\isoDIV}{\tc{blue}{\phi}^{\ix}}
\newcommand{\Charge}{\mathfrak{C}}
\newcommand{\Chargei}{\mathfrak{C}_{\iC}}
\newcommand{\KS}{\tc{orange}{\mathbf{K}}}
\newcommand{\KSdata}{\underline{\KS}}
\newcommand{\KSconst}{\tc{magenta}{\kappa}}

\newcommand{\Gc}{G_{\cx}}
\newcommand{\Gi}{G_{\ix}}
\newcommand{\Gicomp}[1]{G_{\ix,#1}}
\newcommand{\Pic}{\Pi_{\cx}}

\newcommand{\brh}{\tc{orange}{\mathbbm{Y}}}
\newcommand{\rhoh}{\mathsf{Y}}
\newcommand{\Baux}{\tc{caribbeangreen}{B}'}
\newcommand{\B}{\tc{darkorchid}{B}}
\newcommand{\Bdata}{\tc{darkorchid}{\mkern2mu\underline{\mkern-2mu B\mkern-2mu}\mkern2mu}}

\newcommand{\CB}{\tc{hcolor}{C}}
\newcommand{\rhoB}{\tc{lcolor}{R}}

\renewcommand{\ss}{\tc{orange}{s}}

\newcommand{\isoI}[1]{\phi^{\I}_{#1}}
\newcommand{\Hinew}{\tc{orange}{H_{\ix}}}
\newcommand{\Hibig}{\tilde{H}_{\ix}}

\newcommand{\GDcomp}[1]{G_{D,#1}}

\newcommand{\KER}{K_{\kappa}}
\newcommand{\tree}{\tc{magenta}{T}}
\newcommand{\Ctree}{\tc{magenta}{\tilde{T}}}
\newcommand{\pSHS}{\tc{orange}{\mathbf{p}}}
\newcommand{\iSHS}{\tc{orange}{\mathbf{i}}}
\newcommand{\hSHS}{\tc{orange}{\mathbf{h}}}
\newcommand{\pitilde}{\tc{cyan(process)}{\tilde{\pi}_0}}
\newcommand{\CBdata}{\tilde{\Bdata}}
\newcommand{\Crhohdata}{\tilde{\rhohdata}}

\newcommand{\rr}{\tc{blue}{\eps}}
\newcommand{\FPspace}{\tc{blue}{\mathcal{X}}}
\newcommand{\tildeFPspace}{\tc{blue}{\widetilde{\FPspace}}}
\newcommand{\CFP}{C_{\Phi}}
\newcommand{\CFPapply}{C_{\Phi}}

\newtheoremstyle{mytheoremstyle}
{11pt}
{11pt}
{\itshape}
{20pt}
{\bfseries}
{.}
{.5em}
{}

\newtheoremstyle{myremarkstyle}
{10pt}
{10pt}
{}
{20pt}
{\itshape}
{.}
{.5em}
{}

\newtheoremstyle{myproofstyle}
{12pt}
{12pt}
{}
{20pt}
{\bf\itshape}
{.}
{.5em}
{}

\newtheoremstyle{mycommentstyle}
{5pt}
{5pt}
{\footnotesize\sffamily\color{airforceblue}}
{0pt}
{}
{}
{.5em}
{}

\newtheoremstyle{myissuestyle}
{12pt}
{12pt}
{\itshape\color{magenta}}
{20pt}
{\bfseries}
{.}
{.5em}
{}

\newtheoremstyle{myquestionstyle}
{12pt}
{12pt}
{\color{airforceblue}}
{20pt}
{\bfseries}
{.}
{.5em}
{}

\theoremstyle{mytheoremstyle}
\newtheorem{theorem}{Theorem}
\newtheorem*{theorem*}{Theorem}
\newtheorem{definition}{Definition}
\newtheorem{lemma}{Lemma}
\newtheorem{prop}[theorem]{Proposition}
\newtheorem{corollary}[lemma]{Corollary}

\theoremstyle{myissuestyle}

\theoremstyle{myquestionstyle}

\theoremstyle{myremarkstyle}
\newtheorem{remark}{Remark}

\theoremstyle{myproofstyle}
\newtheorem*{PROOF}{Proof}

\usetikzlibrary{decorations.markings}

\newcommand{\xshift}{-1}
\newcommand{\yshift}{3}
\newcommand{\yishift}{1.5}
\newcommand{\rcircle}{0.07}
\newcommand{\xshort}{4}
\newcommand{\dx}{1.25}
\newcommand{\dy}{1}

\newcommand{\haux}[5]{\draw [shorten >=\xshort,shorten <=\xshort,#3] (#1) -- (#2) node[midway,#4] {#5};}
\newcommand{\hl}[3]{\haux{#1}{#2}{}{anchor=east,xshift=-\xshift,yshift=\yshift}{#3}}
\newcommand{\hr}[3]{\haux{#1}{#2}{}{anchor=west,xshift=\xshift,yshift=\yshift}{#3}}
\newcommand{\il}[3]{\haux{#1}{#2}{}{anchor=east,xshift=-\xshift,yshift=\yishift}{#3}}
\newcommand{\ir}[3]{\haux{#1}{#2}{}{anchor=west,xshift=\xshift,yshift=\yishift}{#3}}
\newcommand{\pl}[3]{\haux{#1}{#2}{}{anchor=east,xshift=-\xshift,yshift=\yshift}{#3}}

\newcommand{\braux}[5]{\draw [black,fill=#1] (#2) circle (#3*\rcircle) node [#4] {#5};}
\newcommand{\br}[3]{\braux{qcolor}{#1}{1}{#2}{#3}}

\title{
Solutions of the constraints with controlled decay to Kerr, including Schwartz decay
}
\author{Andrea N\"utzi}
\date{}

\begin{document}
\maketitle

\begin{abstract}
We show that to every small and decaying solution of the linearized constraint equations about Minkowski spacetime, one can add a quadratically small correction to obtain a solution of the full constraint equations. Near spacelike infinity, the correction is given by Kerr black hole initial data, up to a term that decays faster than the linearized solution, and that has Schwartz decay if the linearized solution has Schwartz decay. Using a recent result, we obtain that the solutions of the Einstein equations with these initial data admit a regular conformal compactification along null and timelike infinity. The construction is based on a right inverse (up to necessary integrability conditions) for the linearized constraint operator about Minkowski initial data obtained previously, that has optimal mapping properties relative to weighted b-Sobolev spaces, where the weights measure decay towards infinity. On an algebraic level, we show that the constraint equations can be derived using the homotopy transfer theorem, rather than using the geometric Gauss and Codazzi equations.
\end{abstract}

\tableofcontents

\hyphenation{Min-kow-ski}

\section{Introduction}

In general relativity, the constraint equations are
the necessary and sufficient conditions on the initial data
for local solvability of the Einstein equations.
In this paper we construct solutions of the constraints
that are close to trivial Minkowski initial data,
and that, towards spacelike infinity, 
decay to the initial data of a Kerr black hole spacetime.
For such initial data we showed \cite{MinkowskiPaper}
that the corresponding solutions of the Einstein equations
admit, like Minkowski spacetime itself, a regular
conformal compactification at null and timelike infinity.

Recall that on $\R^3$ there is a trivial solution of the constraints
obtained by restricting Minkowski spacetime
to the time equal to zero level set;
infinity of $\R^3$ corresponds to spacelike infinity.
Informally,
in this paper 
we show: 
\begin{quote}
\textbf{If} $\ulinsol$ is a solution, on $\R^3$, of the 
linearized constraint equations about Minkowski initial data
that is small in suitable Sobolev norms,
that decays inverse polynomially 
with rate $\delta>1$ towards spacelike infinity,
and such that the mass generated by $\ulinsol$ is positive
and the linear momentum generated by $\ulinsol$
is small compared to the mass
(the mass and linear momentum are generated by the gravitational 
self-interaction of $\ulinsol$ and are quadratic in $\ulinsol$),
\textbf{then} there exists a solution 
$ 
\ulinsol + \KSdata + \cdata
$
of the constraint equations,
where $\cdata$ decays inverse polynomially with rate $\delta+1$,
and $\KSdata$ is the initial data of a Kerr-Schild
spacetime near spacelike infinity, smoothly extended inwards on $\R^3$,
whose mass and linear momentum
are close to those generated by $\ulinsol$.
Moreover, $\KSdata$ and $\cdata$ are globally quadratically small
in $\ulinsol$.
The fixed point iteration used to construct 
the solution
is independent of the decay parameter $\delta$,
and one has:
\vspace{-1mm}
\begin{itemize}
\item 
If $\ulinsol$ has inverse polynomial decay 
then so has $\cdata$. 
\item 
If $\ulinsol$ has Schwartz decay 
then so has $\cdata$.
\item 
If $\ulinsol$ has compact support 
then so has $\cdata$.
\end{itemize}
\end{quote}
This is made precise in Theorem \ref{thm:P=0} 
at the end of this introduction, see also Theorem \ref{thm:mainL-infinity}. 
In Corollary \ref{cor:ApplicationMinkowskiPaper} we show that
the initial data produced by Theorem \ref{thm:P=0}
satisfy the assumptions of \cite[Theorem 3]{MinkowskiPaper},
which then implies that their hyperbolic evolution
yields solutions of the Einstein equations that admit a 
regular conformal compactification along null and timelike infinity.

Our main tool to prove Theorem \ref{thm:P=0} 
is a right inverse (up to necessary integrability conditions)
for the linearized constraint operator,
constructed in \cite{HomotopyPaper},
that satisfies optimal weighted estimates,
where the weights measure decay towards infinity.
Given this right inverse, the solutions of the constraints
are obtained via renormalization of charges using the Kerr parameters
and a Banach fixed-point argument similar to \cite{MaoOhTao,Thesis}.
Note that \cite{MaoOhTao} uses a different, 
Bogovskii-type right inverse that does not
satisfy optimal weighted estimates \cite[Remark 5]{HomotopyPaper},
to construct solutions of the
constraints that are identical to Kerr near infinity.

In this paper we work with the formulation of the constraints
needed for \cite{MinkowskiPaper}.
However the results could be translated to the standard formulation
in terms of the metric and the second fundamental form.
Alternatively,
there is a version of the right inverse \cite[Remark 7]{HomotopyPaper}
that can be directly used for the standard formulation.
See also upcoming work of the authors of \cite{MaoOhTao}.

Solutions of the constraints that are identical to 
Kerr initial data near spacelike infinity,
and everywhere close to Minkowski data,
were first constructed using gluing techniques \cite{ChruscielDelayGluing,CorvinoGluing,CorvinoSchoenGluing}.
For such initial data,
the hyperbolic evolution admits a regular conformal compactification 
at null and timelike infinity 
\cite{Friedrichgeodesicallycomplete,
FriedrichConformalEinsteinEvolution},
see also 
\cite{
ChruscielDelayAsymptoticallySimpleSpacetimes,
CorvinoStabPenrose,
FriedrichPeelingQuestion,
PenroseAsymptoticSimplicity}.
Here, together with \cite{MinkowskiPaper},
we generalize these results to data that only decay 
rapidly (Schwartz class) or inverse polynomially to Kerr.

Solutions of the constraints that decay inverse polynomially 
to Kerr towards spacelike infinity
were constructed recently \cite{FangSzeftelTouati}.
The construction in \cite{FangSzeftelTouati} 
is based on a simplified conformal method
and requires inverting the Euclidean Laplacian.
Decay of the solutions is then obtained by 
a procedure that cancels finitely many slowly decaying terms
in the inverse of the Laplacian,
where the number of cancellations depends on the decay of the solution.
By comparison, we use a right inverse of the 
linearized constraint operator that preserves decay towards spacelike infinity,
and thus our fixed point iteration is independent
of the decay of the solution. 
In particular we also obtain
solutions of the constraints that decay rapidly (Schwartz class)
to Kerr.
\step
\textbf{Einstein equations and constraint equations.}
We work with the framework of \cite{Thesis,MinkowskiPaper}, 
see also Remark \ref{rem:refdefs}, where the Einstein equations
are a
system of partial differential equations on the fixed background manifold\footnote{%
The notation $\diamond$ is from \cite{Thesis,MinkowskiPaper}.
One can think of $\diamond$ as a diamond-like subset 
of the Einstein cylinder, this perspective is important
in \cite{MinkowskiPaper}, but not important in this paper.
} 
$\diamond=\R^4$ with 
global coordinates $x=(x^0,x^1,x^2,x^3)$.
The Minkowski metric is $\gmink = \eta_{\mu\nu} dx^{\mu}\otimes dx^\nu$
with $\eta_{\mu\nu}=\diag(-1,1,1,1)_{\mu\nu}$.
The Einstein equations are
\begin{equation}\label{eq:MC}
\dg u + \tfrac12[u,u] = 0
\end{equation}
The unknown $u=\uo\oplus u_{\I}$ has two components:
$\uo$ corresponds to orthonormal frame and connection, 
$\uI$ to Weyl curvature. 
It is an element in a space
\[ 
\gx^1(\diamond) = \lx^1(\diamond) \oplus \I^{2}(\diamond)
\]
where $\gx^1(\diamond)$, $\lx^1(\diamond)$, $\I^{2}(\diamond)$ 
are the modules of sections, over $\diamond$, of (trivial) rank 
50, 40, 10 vector bundles $\gx^1$, $\lx^1$, $\I^2$ on $\diamond$.
Concretely:
\begin{itemize}
\item 
$\lx^1(\diamond)=\Omega^1(\diamond)\otimesRR\Kil$
with $\Omega^1(\diamond)$ the smooth differential one-forms
on $\diamond$, and $\Kil$ the ten-dimensional Lie algebra
of Killing vector fields for $\gmink$.
The orthonormal frame associated to $\uo$ is given as follows:
Expand $\uo=\sum_{i=1}^{10}\omega_i\otimes\zeta_i$ 
with $\omega_i\in\Omega^1(\diamond)$
and $\zeta_1,\dots,\zeta_{10}$ a basis of $\Kil$.
Define the $C^\infty$-linear map 
$F_{\uo}:\Gamma(T\diamond)\to\Gamma(T\diamond)$, 
$X\mapsto \sum_{i=1}^{10} \omega_i(X)\zeta_i$.
If $\one+F_{\uo}$ is pointwise invertible, 
then it is an orthonormal frame for the metric 
$g$ on $\diamond$ defined by 
\begin{equation}\label{eq:g}
\smash{g^{-1} = (\one + F_{\uo})^{\otimes2} \gmink^{-1} }
\end{equation}
If $\uo=0$ then $g=\gmink$.
We do not make explicit the associated connection.
\item 
$\I^2(\diamond)$ is the rank 10 submodule of 
$\dens{-1}(\diamond)\otimesCinf 
S^2(\Omega^2(\diamond))$ of all sections that
satisfy the algebraic symmetries and traceless condition 
(relative to $\gmink$) of Weyl tensors.
Here $S^2$ is the symmetric tensor product over $C^\infty$, 
and $\dens{-1}(\diamond)$ are the sections 
of the minus one density bundle on $\diamond$\footnote{%
In our convention one integrates 4-densities on $\diamond$.}.
\end{itemize}
The operators $\dg$ and $[\cdot,\cdot]$ are linear respectively
bilinear first order differential operators, where $\dg$ describes
linearized gravity about Minkowski, $[\cdot,\cdot]$ describes
the gravitational interaction.
The left hand side of \eqref{eq:MC}
takes values in a space
\begin{equation}\label{eq:g2}
\smash{\gx^2(\diamond) = \lx^2(\diamond) \oplus \I^{3}(\diamond)}
\end{equation}
The bundles $\gx^2$, $\lx^2$, $\I^{3}$ have rank 76, 60, 16.

If $u$ solves \eqref{eq:MC} and $\one+F_{\uo}$ is invertible, then the metric \eqref{eq:g}
is Ricci-flat.
Informally, the component of \eqref{eq:MC} in $\lx^2(\diamond)$
requires that the connection be torsion-free
and that the Riemann curvature be equal to the Weyl curvature;
the component of \eqref{eq:MC} in $\I^{3}(\diamond)$ is the equation of motion
for the Weyl curvature.

Let $\diamonddata$ be the $x^0=0$ time slice on $\diamond$, 
\[ 
\diamonddata \;=\; \diamond \cap \{x^0=0\}
\]
Then $\diamonddata=\R^3$ with coordinates $\vec{x}=(x^1,x^2,x^3)$.
Denote by $\gxdata^1$ the pullback bundle of $\gx^1$
under the inclusion map $\diamonddata\hookrightarrow\diamond$,
and by $\gxdata^1(\diamonddata)$ its sections over $\diamonddata$.
We use analogous notation for $\lx^1$, $\I^2$.
Initial data for \eqref{eq:MC} is given by a section
\begin{equation}\label{eq:gli}
\udata\;\in\;
\gxdata^1(\diamonddata) = \lxdata^1(\diamonddata)\oplus\Idata^2(\diamonddata)
\end{equation}
that solves the constraint equations, which
are the necessary and sufficient condition on $\udata$
for local existence of a solution to \eqref{eq:MC} 
that restricts to $\udata$ along $\diamonddata$.
They are a nonlinear first order partial differential equation
\begin{equation}\label{eq:Constraints_Eq}
\Pconstraints(\udata) \;=\; 0
\end{equation}
where 
\begin{equation}\label{eq:Constraints_Pdef}
\Pconstraints(\udata) 
	\;=\;
	\left( dx^0 + \anchorg(u)(x^0) \right) \left( \dg u + \tfrac12[u,u] \right)|_{\diamonddata}
\end{equation}
On the right hand side,
$u\in\gx^1(\diamond)$ is any element that satisfies $u|_{\diamonddata}=\udata$
(the map $\Pconstraints$ is independent of the choice of $u$).
The factor $dx^0 + \anchorg(u)(x^0)$ is a differential one-form on $\diamond$, the anchor map $\anchorg$ is $C^\infty$-linear in the argument $u$.
This one-form is multiplied with $\dg u + \tfrac12[u,u]\in\gx^2(\diamond)$,
using the multiplication
\begin{align}\label{eq:mult}
\Omega^1(\diamond)\times \gx^2(\diamond) \to \gx^3(\diamond)
\end{align}
essentially given by the wedge product,
where $\gx^3(\diamond)$ 
is again the space of sections of a trivial vector bundle.
The resulting section in $\gx^3(\diamond)$ is then restricted to $\diamonddata$.
In this way, \eqref{eq:Constraints_Pdef} defines a cubic,
first order partial differential operator
$$
\Pconstraints:\;\gxdata^1(\diamonddata)\to\gxdata^3(\diamonddata)
$$

Minkowski initial data is the trivial solution 
$\udata=0$ of \eqref{eq:Constraints_Eq}.
The linearization of $\Pconstraints$ about $\udata=0$ is 
the linear first order partial differential operator
\begin{equation}\label{eq:linconstraints}
\linPconstraints (\udata) =  dx^0 (\dg u)|_{\diamonddata}
\end{equation}
Solutions of $\dg u=0$, hence of $\linPconstraints (\udata)=0$,
can be constructed using Fourier integrals over the momentum 
space light cone \cite[Section 4.2.4]{Thesis}.
\begin{remark}\label{rem:refdefs}
For the full definition of the bundles $\gx^k$, $\lx^k$, $\I^k$,
the operators $\dg$, $[\cdot,\cdot]$, $\anchorg$
and the multiplication \eqref{eq:mult},
see \cite[Introduction, Section 2.2, Theorem 4]{MinkowskiPaper};
for the definition of  $\gxdata^k$, $\lxdata^k$, $\Idata^k$
see \cite[Section 2.5, Definition 9]{MinkowskiPaper};
for the definition of $\Pconstraints$,
including independence of the choice of extension of $\udata$,
see \cite[Section 2.5, Definition 10, Lemma 5]{MinkowskiPaper};
for the relation of solutions of \eqref{eq:MC} to Ricci-flat metrics
see \cite[Introduction, Section 2.4, Proposition 5]{MinkowskiPaper}.
\end{remark}
\textbf{Bases and norms.}
We use a basis of 
$\gxdata^k(\diamonddata)
	=\lxdata^k(\diamonddata)\oplus\Idata^{k+1}(\diamonddata)$ 
that is homogeneous to leading order near infinity,
see Lemma \ref{lem:basisgdata}.
For example, the basis of $\lxdata^1(\diamonddata)$
is given by 
$dx^{\mu}/\xvecjap\otimes B^{\alpha\beta}$, $dx^{\mu}\otimes T_{\alpha}$
where 
$\xvecjap=\sqrt{1+|\vec{x}|^2}$ with $|\vec{x}|^2=\sum_{i=1}^3(x^i)^2$,
and where $B^{\alpha\beta}$, $T_{\alpha}$ are the boosts and translations \eqref{eq:BT}.
We use weighted b-Sobolev norms \cite{MelroseGreenBook}
for $\ss\in\Z_{\ge1}$ and $\delta\in\R$, given by 
\[ 
\smash{\|\udata\|_{\Hb^{\ss,\delta}(\diamonddata)}
=
\|\udata_0\|_{\Hb^{\ss,\delta}(\diamonddata)}
+
\|\udata_{\I}\|_{\Hb^{\ss-1,\delta}(\diamonddata)}
\qquad
\udata=\udata_0\oplus\udata_{\I}\in\gxdata^k(\diamonddata)}
\]
Here the norms of $\udata_0$ and $\udata_{\I}$ are
defined by differentiating their components with respect to the 
fixed bases at most $\ss$ respectively $\ss-1$
times relative to $\xvecjap\p_{x^i}$,
then multiply with $\xvecjap^{\delta}$, then take the
$L^2$-norm using the measure $dx^1dx^2dx^3/\xvecjap^3$.
The \smash{$\Cb^{\ss,\delta}$}-norms are defined similarly.
See Definition \ref{def:normsdata}.
\step
\textbf{Charges.}
Let $\udata\in\gxdata^1(\diamonddata)$ be a solution
of the linearized constraints $\linPconstraints(\udata)=0$,
and assume that
\smash{$\|\udata\|_{\Hb^{5,\delta}(\diamonddata)}<\infty$} 
for some $\delta>1$.
The nonlinear self-interaction of $\udata$ generates ten charges,
defined as follows.
Since $\udata$ solves the linearized constraints, 
it can be extended to $\diamond$ as a solution of the 
linearized Einstein equations, c.f.~\eqref{eq:d(1-hd)}. 
That is, there exists $u\in\gx^1(\diamond)$ that satisfies 
\begin{equation}\label{eq:extensionudata0}
u|_{\diamonddata}=\udata
\qquad\qquad
\dg u=0
\end{equation}
The self-interaction of $u$ is given by 
$-\frac12[u,u]\in \gx^2(\diamond)$.
Let $U_{\I}$ be its component in $\I^3(\diamond)$.
We have 
$U_{\I} = U_+\oplus U_-$ with $U_{\pm}\in \I^3_\pm(\diamond)$
\cite[Definition 2]{MinkowskiPaper}.
Here 
$$
\I_+^3(\diamond) 
\;\subset\;
\dens{-1}(\diamond)\otimesCinf
\OmegaC^3(\diamond)\otimesCinf 
\Omega^2_+(\diamond)$$
where $\OmegaC^k(\diamond)$ are the smooth complex differential
$k$-forms on $\diamond$,  
and $\Omega_+^2(\diamond)\subset\OmegaC^2(\diamond)$ 
are the self-dual forms relative to the conformal metric $[\gmink]$.
The pullback of $U_+$ along the inclusion map 
$\diamonddata\hookrightarrow\diamond$ yields an element in 
\begin{align}
\label{eq:isohodge}
\dens{-1}(\diamonddata)
\otimesCinf
\OmegaC^3(\diamonddata) 
\otimesCinf \OmegaC^2(\diamonddata)
\;\;\simeq\;\;
\OmegaC^3(\diamonddata) 
\otimesCinf \OmegaC^1(\diamonddata)
\end{align}
where $\OmegaC^k(\diamonddata)$ are the smooth 
complex differential $k$-forms on $\diamonddata$, where $\dens{-1}(\diamonddata)$
are the minus one densities on $\diamonddata$
(the pullback of densities uses $[\gmink]$),
and we use the canonical isomorphism
$\dens{-1}(\diamonddata)\otimesCinf\OmegaC^2(\diamonddata)
\simeq \OmegaC^1(\diamonddata)$
given by the Hodge star operator associated to $[\gminkdata]$
with $\gminkdata = \sum_{i=1}^3 dx^i\otimes dx^i$.
Given the pullback of $U_+$ to \eqref{eq:isohodge},
and using the conformal Killing vector fields for $\gminkdata$ in \eqref{eq:confKilS3},
one obtains ten real three-forms on $\diamonddata$ 
in the following way:
Into the second factor $\OmegaC^1(\diamonddata)$ in \eqref{eq:isohodge}
insert $\underline{S}$ (scaling) and then take the real part;
insert $\underline{L}^a$ (rotations)
and take minus the imaginary part,
insert $\underline{K}^a$ (special conformal transformations) 
and take half of the real part;
insert $\underline{K}^a$ and take minus half of the imaginary part.
The integrals of these ten three-forms over $\diamonddata$ 
are, by definition, the ten charges generated by the self-interaction of $\udata$.
By Remark \ref{rem:B2charge} 
they are independent of the choice of extension 
$u$ in \eqref{eq:extensionudata0},
and the integrals converge by Lemma \ref{lem:ChargeEst}, \eqref{eq:brdataest}.

The charges are equivalently given by 
\begin{equation}\label{eq:CPuu}
\smash{\Charge(\pSHS(-\tfrac12[u,u]))\;\in\;\R^{10}}
\end{equation}
The map $\pSHS$ is the pullback of forms and densities, 
see Definition \ref{def:pdef}.
It is part of a contraction from the complex $(\gx(\diamond),\dg)$  
on $\diamond$
to a complex on $\diamonddata$, see \eqref{eq:pih}.
The homology of this complex 
(restricted to decaying sections) 
in degree two is 10-dimensional,
and $\Charge$
establishes an isomorphism with $\R^{10}$, see Definition \ref{def:MLCA}.
\step
\textbf{Kerr near spacelike infinity.}
The family of Kerr-Schild spacetimes may be viewed as a family
of solutions of \eqref{eq:MC} defined near spacelike infinity.
Here we use an extension of these elements to $\diamond$,
using a smooth cutoff.
This construction is in Appendix \ref{sec:KSreparametrization}
and yields the following:
For $r_1,r_2>0$ define the cone%
\begin{equation}\label{eq:Ccone}
\Ccone_{r_1,r_2} 
	= 
	\big\{ (m,q)\in (0,r_1)\times \R^{9} \mid |q|< r_2 m \big\}
\end{equation}
There exists $M\in (0,2^{-11}]$ and a map
\begin{equation}\label{eq:KS}
\KS:\;\; 
\Ccone_{M,2^{-6}} 
\; \to \; 
\gx^1(\diamond)
\end{equation}
with the following properties, where $z\in\Ccone_{M,2^{-6}}$,
and where $y$ are coordinates near spacelike infinity
defined by Kelvin inversion $y=x/(-(x^0)^2+|\vec{x}|^2)$:
\begin{enumerate}[label=\textnormal{({K\arabic*})}]
\item \label{item:KSMC}
$\KS(z)$ solves \eqref{eq:MC} in the neighborhood of 
spacelike infinity given by $|y|\le \tfrac{1}{101}$.
\item \label{item:LKSsmoothnullinf}
$\KS(z)$ extends smoothly to future and past null infinity
(in the sense of \cite{Thesis,MinkowskiPaper}, 
in particular by viewing $\diamond$
as a subset of the Einstein cylinder).
\item \label{item:Kerr4dest}
For every $\ss\in\Z_{\ge0}$ one has the pointwise estimate
on $\diamond\cap\{|y|\le\frac{1}{101}\}$:
\begin{equation}\label{eq:KerrDecay}
|(|y|\p_y)^{\le\ss}\KS(z)|
\;\lesssim_{\ss}\;
|z| |y|(1+|\log|y||)
\end{equation}
which is understood as follows:
It holds for each component of $\KS(z)$
relative to the homogeneous basis in 
\cite[Definition 17]{MinkowskiPaper};
the notation $(|y|\p_y)^{\le\ss}$ means that
the components are differentiated at most $\ss$ times
with respect to $|y|\p_{y^0},\dots,|y|\p_{y^3}$.
The notation $\lesssim_{\ss}$
is explained in Remark \ref{rem:lesssim}.
\end{enumerate}
and furthermore the initial data 
$\KSdata(z)=\KS(z)|_{\diamonddata}$ satisfy:
\begin{enumerate}[label=\textnormal{({K\arabic*})},resume]
\item \label{item:IntroKSConstraints}
$\KSdata(z)$ solves the constraint equations \eqref{eq:Constraints_Eq}
on $|\vec{x}|\ge101$
(follows from \ref{item:KSMC}).
\item \label{item:IntroKSest}
For every $\ss\in\Z_{\ge1}$ there exists $\KSconst_{\ss}\ge1$
such that for all $z_1,z_2\in\smash{\Ccone_{M,2^{-6}}}$:
\begin{align}\label{eq:KSest12}
\smash{\|\KSdata(z_1)\|_{\Cb^{\ss,1}(\diamonddata)}
	 \le \KSconst_{\ss}|z_1|
	 \qquad
\|\KSdata(z_1)-\KSdata(z_2)\|_{\Cb^{\ss,1}(\diamonddata)}
	\le \KSconst_{\ss}|z_1-z_2|}
\end{align}
\item \label{item:IntroKSCharges}
$\Charge(\MCinf(\KSdata(z))) = z$, 
with $\Charge$ and $\MCinf$ in  Definition \ref{def:MLCA}
respectively \ref{def:MCinf}.
\end{enumerate}
\step

\textbf{Main Theorem.}
We use the norms in Definition \ref{def:normsdata}.
\begin{theorem}\label{thm:P=0}
For all 
\begin{align}\label{eq:P=0_parameters}
\smash{N\in\Z_{\ge8}
\qquad\qquad
\delta > 1
\qquad\qquad
b\ge1}
\end{align}
there exists $\eps\in(0,M]$ and $C\ge1$ 
such that for all 
$\ulinsol \in \Hb^{N,\delta}(\diamonddata,\gxdata^1)$:
If
\begin{enumerate}[label=\textnormal{({a\arabic*})}]
\item \label{item:MAINDPu0=0_MAINu0small}
$\linPconstraints(\ulinsol)=0$
and
\smash{$\|\ulinsol\|_{\Hb^{N,\delta}(\diamonddata)} \le \eps$}
\item \label{item:MAINu0cone}
The charge vector $z_*\in\R^{10}$ generated by $\ulinsol$,
see \eqref{eq:CPuu},
satisfies 
\begin{equation}\label{eq:z*assps}
\smash{z_*\in\Ccone_{\frac\eps2,\conenb{2^{-7}}}
\qquad\qquad
\|\ulinsol\|_{\Hb^{N,\delta}(\diamonddata)} \le b \smash{|z_*|^{\frac12}}}
\end{equation}
\end{enumerate}
Then there exist
\smash{$\cdata\in \Hb^{N,\delta+1}(\diamonddata,\gxdata^1)$}
and 
$z\in \Ccone_{\eps,\conenb{2^{-6}}}$
such that
\begin{align}\label{eq:Pu0Kv}
\Pconstraints(\ulinsol + \KSdata(z) + \cdata) &= 0
\end{align}
and
\smash{$\|\cdata\|_{\Cb^{N-2,1}(\diamonddata)}
+ \|\KSdata(z)\|_{\Cb^{N-2,1}(\diamonddata)} 
\le
\tfrac12 \|\ulinsol\|_{\Cb^{2,1}(\diamonddata)}$}
and such that:
\begin{itemize}
\item \textbf{Part 1.}
$\cdata$ and $\KSdata(z)$ are quadratically small in $\ulinsol$,
more precisely:
\begin{subequations}\label{eq:T1cKconcl}
\begin{align}
\|\cdata\|_{\Hb^{N,\delta+1}} 
	&\le C |z_*|
&
\|\cdata\|_{\Hb^{N,\delta+1}(\diamonddata)} 
	&\le C \|\ulinsol\|_{\Cb^{2,1}(\diamonddata)} \|\ulinsol\|_{\Hb^{2,\delta}(\diamonddata)}\\
|z-z_*| &\le C |z_*|^{\frac32}
&
\|\KSdata(z)\|_{\Cb^{N,1}(\diamonddata)} 
	&\le C \|\ulinsol\|_{\Cb^{2,1}(\diamonddata)} \|\ulinsol\|_{\Hb^{2,\delta}(\diamonddata)}
	\label{eq:K(z)estimqte}
\end{align}
\end{subequations}
\item 
\textbf{Part 2.}
For all $N'\in\Z_{\ge N}$, 
$\delta'\ge\delta$, $b'\ge 1$, if
\smash{$\|\ulinsol\|_{\Hb^{N',\delta'}(\diamonddata)} \le b'$} then
\begin{align}\label{eq:Part2c}
\|\cdata\|_{\Hb^{N',\delta'+1}(\diamonddata)}
  \lesssim_{N',\delta',b,b'} 
	\|\ulinsol\|_{\Hb^{N',1}(\diamonddata)}\|\ulinsol\|_{\Hb^{N',\delta'}(\diamonddata)}
\end{align}
In particular, if $\ulinsol$ decays rapidly then also $\cdata$
decays rapidly.
\item 
\textbf{Part 3.}
For all $r\ge101$, if $\ulinsol|_{|\vec{x}|\ge r}=0$
then $\cdata|_{|\vec{x}|\ge r}=0$.
\end{itemize}
\end{theorem}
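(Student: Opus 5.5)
We will set up a Banach fixed-point problem for the pair $(\cdata,z)$, with the charges renormalized by the Kerr parameters. Write the unknown as $\udata=\ulinsol+\KSdata(z)+\cdata$. Since $\Pconstraints$ is cubic and $\linPconstraints(\ulinsol)=0$, we can expand
\[
\Pconstraints(\udata)=\linPconstraints(\KSdata(z)+\cdata)+Q(\udata),
\]
where $Q$ collects the quadratic and cubic terms. The main tool is the right inverse of $\linPconstraints$ from \cite{HomotopyPaper}, used through the contraction $(\pSHS,\iSHS,\hSHS)$ in \eqref{eq:pih}. This inverse solves $\linPconstraints\cdata=f$ with $\|\cdata\|_{\Hb^{N,\delta+1}}\lesssim\|f\|_{\Hb^{N-1,\delta+2}}$ (schematically), provided $f$ satisfies the integrability conditions: it must be in the image of the nonlinear constraint map up to exact terms and have vanishing charges $\Charge$.

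The approach is to restrict to the region $|\vec x|\ge101$, where $\KSdata(z)$ is an exact solution by \ref{item:IntroKSConstraints}, and to use \ref{item:IntroKSCharges}, $\Charge(\MCinf(\KSdata(z)))=z$, to fix $z$. Define the map
\[
\FP(\cdata,z)=\big(-\mathcal{R}\,[\text{nonlinear remainder}],\; z_*+\text{(charge correction)}\big),
\]
where $\mathcal{R}$ is the right inverse. The new $z$ is chosen so that the total charge of the source vanishes, i.e.\ $z=\Charge(\MCinf(\ulinsol+\KSdata(z)+\cdata))-\Charge(\MCinf(\KSdata(z)))+z$ rearranged. The leading term is $z_*=\Charge(\pSHS(-\tfrac12[u,u]))$. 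The corrections are cross terms $[\ulinsol,\KSdata]$, $[\ulinsol,\cdata]$ and cubic terms, which are $O(|z_*|^{3/2})$ by the assumption $\|\ulinsol\|\le b|z_*|^{1/2}$ together with $|z|\sim|z_*|$. Since $z_*\in\Ccone_{\eps/2,2^{-7}}$ and the perturbation is $\lesssim|z_*|^{3/2}\ll 2^{-7}|z_*|$ for small $\eps$, the new $z$ stays in $\Ccone_{\eps,2^{-6}}$. This gives $|z-z_*|\le C|z_*|^{3/2}$.

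The fixed-point space is
\[
\{(\cdata,z):\ \|\cdata\|_{\Hb^{N,\delta+1}}\le C|z_*|,\ |z-z_*|\le C|z_*|^{3/2}\}.
\]
The key points for showing that $\FP$ maps this space to itself are the following:
\begin{itemize}
\item The quadratic source $[\ulinsol,\ulinsol]$ has decay rate $2\delta\ge\delta+1$ with a gain of one power in the b-weight. The product estimate $\Cb^{2,1}\times\Hb^{N,\delta}\to\Hb^{N-1,\delta+2}$ holds (the $\xvecjap^{-1}$ weights of the basis supply the extra power), and it is tame, which gives the bilinear bounds in Part 1.
\item The Kerr cross terms decay like $\xvecjap^{-1}$ by \eqref{eq:KSest12}, so $[\ulinsol,\KSdata]$ also lands in weight $\delta+2$.
\item Contraction uses the Lipschitz bound in \eqref{eq:KSest12} and smallness of $\eps$.
\end{itemize}
The $\Cb^{N-2,1}$ smallness bound then follows by Sobolev embedding, choosing $\eps$ small relative to $C$.

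Part 2 follows from tame estimates. Because the iteration and the right inverse do not depend on $(N,\delta)$, we can rerun the bounds in $\Hb^{N',\delta'+1}$, using that the right inverse preserves every weight. The higher norm enters linearly, multiplied by low norms that are already known to be small, so induction on $N'$ is not needed beyond the tame structure. Part 3 follows if the right inverse from \cite{HomotopyPaper} preserves support outside balls of radius $r\ge101$ (a radial, Bogovskii-free homotopy integrating from infinity). In that case, where $\ulinsol=0$ and $|\vec x|\ge101$, the source equals $\Pconstraints(\KSdata(z)+\cdata)$ with $\KSdata$ exact, so $\cdata=0$ is consistent and the iteration preserves vanishing there.

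The main obstacle is verifying the integrability conditions. Beyond the charges, the source must lie in the kernel of the next differential (a Bianchi-type identity) up to terms controlled by the nonlinear equation itself. This has to be handled as in \cite{MaoOhTao,Thesis}: apply the right inverse to a projected source, and show a posteriori that the fixed point solves $\Pconstraints=0$ exactly, using the nonlinear compatibility identity (derived via homotopy transfer) and the vanishing of the charges.
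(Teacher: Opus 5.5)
Your architecture matches the paper's: a fixed point in $(\cdata,z)$ where $z$ renormalizes the charges through $\Charge(\MCinf(\KSdata(z)))=z$, the homotopy of Proposition \ref{prop:Gxhomotopy} as right inverse, a posteriori recovery of the closedness condition from the $L_\infty$ Jacobi identities, and the support property \ref{item:support} for Part 3.

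\textbf{The gap is Part 2.} You cannot just rerun the argument in $\Hb^{N',\delta'+1}$. The constant $\eps$ is fixed by $(N,\delta,b)$, whereas $\|\ulinsol\|_{\Hb^{N',\delta'}}$ is only bounded by $b'$. So in the high norm the Lipschitz constant is not small: it contains $\|\ulinsol\|_{\Hb^{N',1}}$ (cf.\ $\RFP$ in \eqref{eq:phicestcontraction}), with a constant depending on $N'$. Moreover, the multilinear bounds \eqref{eq:uBestimates} put several factors in the high norm, not one high norm times small low norms. Your assertion that ``the higher norm enters linearly, multiplied by low norms that are already known to be small'' is exactly what must be proved. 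You also need to know that the solution lies in the higher space before you can estimate it there.

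The paper handles both points with the same iterates $\cc_\ell$:
\begin{itemize}
\item It applies $(\xvecjap\p_{\vec{x}})^{\alpha}$ with $|\alpha|=s-N$ and estimates in $\Hb^{N,\delta+1}$. The term carrying all derivatives on one factor then has coefficient $\|\ulinsol\|_{\Hb^{N,1}}\lesssim_b\eps^{1/2}$.
\item The commutators $[(\xvecjap\p_{\vec{x}})^{\alpha},\Gc]$ (see \eqref{eq:Gccomest}) and $\CBdata_{n,\alpha}$ (Lemma \ref{lem:uCBestimates}) gain a derivative and involve only $\Hb^{s-1}$-norms. This forces an induction on $s$.
\item A companion difference estimate shows that $(\cc_\ell)$ is Cauchy in $\Hb^{N',\delta+1}$, so the limit is the same $\cdata$.
\item Decay beyond $\delta$ is then gained one unit at a time from $\cc=\FP_{\cx}(\cc,\tilde z)$ and \eqref{eq:phicest}, whose right side carries one less weight than the left. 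This needs no smallness. A fresh contraction in weight $\delta'$ would instead make $\eps$ depend on $\delta'$.
\end{itemize}

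\textbf{Three smaller points.}
\begin{itemize}
\item Your weight bookkeeping is off by one, twice. In the paper's norms $\Gc$ preserves the weight, \eqref{eq:Gcest}, and the bilinear bound merely adds weights, \eqref{eq:B2est}. So $\Cb^{2,1}\times\Hb^{N,\delta}$ lands in $\Hb^{N-1,\delta+1}$, not $\delta+2$; the basis supplies no extra power. The two errors cancel. But the correct accounting, exactly one weight gained from the factor measured with weight $1$, is what yields the bilinear bounds in Part 1 and the unit-step decay bootstrap.
\item The $\cdata$-component of your map depends on $z$ with an $O(1)$ Lipschitz constant, through the compactly supported term $\Bdata_1(\KSdata(z))$. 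Smallness of $\eps$ alone therefore gives no contraction in the product metric. The paper uses $\|\cc-\cc'\|_{\Hb^{N,\delta+1}}+a|\tilde z-\tilde z'|$ with $a\sim|z_*|^{-1/2}$.
\item The a posteriori step should be stated explicitly. Set $E=\MCinf(\ulinsol+\KSdata(z)+\cdata)$. The fixed-point equation, the vanishing charges and $\one-\Pic=\dc\Gc+\Gc\dc$ give $E=\Gc\dc E$. Then \eqref{eq:MCid} writes $\dc E$ as a sum over $m\ge2$ that is linear in $E$ with coefficients $O(|z_*|^{1/2})$. Hence $\|E\|\lesssim|z_*|^{1/2}\|E\|$, which forces $E=0$.
\end{itemize}
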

The proof is in Section \ref{sec:ProofThm1}, and will 
only use the properties
\ref{item:IntroKSConstraints}, \ref{item:IntroKSest}, \ref{item:IntroKSCharges}
of $\KSdata$.
It will be proven as a corollary of Theorem \ref{thm:mainL-infinity},
which does not make explicit reference to the Kerr-Schild family $\KSdata$.
\begin{remark}\label{rem:v0u0exist}
It is easy to see that there exist linearized solutions $\ulinsol$
that satisfy the assumptions of Theorem \ref{thm:P=0}:
In fact, given $N$, $\delta$ one can construct 
$\vlinsol\in\Hb^{N,\delta}(\diamonddata,\gxdata^1)$
with $D\Pconstraints(\vlinsol)=0$
and such that the charge vector $z_*^v = (m_*^v,q_*^v)\in\R^{10}$
generated by $\vlinsol$ satisfies $m_*^v>0$, $q_*^v=0$.
This is obtained using a Fourier integral over the
momentum space light cone \cite[Section 4.2.4]{Thesis}, for which 
the charges \eqref{eq:CPuu} 
are computed in \cite[Section 4.5, 4.6]{Thesis}\footnote{
The calculations in \cite{Thesis} are for
linearized solutions that decay rapidly (Schwartz class)
towards spacelike infinity, but they may be generalized
to inverse polynomial decay.}.
To get $q_*^v=0$, use any nonzero data on the momentum space light cone
that is invariant under an appropriate 
finite subgroup of the rotation group on $\diamonddata=\R^3$,
e.g.~the tetrahedral group.
Set \smash{$b=\|\vlinsol\|_{\Hb^{N,\delta}(\diamonddata)}/|z_*^v|^{\frac12}+1$}.
Then, for all sufficiently small $\lambda>0$, 
the element $\ulinsol = \lambda\vlinsol$ (observe $z_* = \lambda^2 z_*^v$) 
satisfies the assumptions of Theorem \ref{thm:P=0}.
\end{remark}
The next corollary shows that Theorem \ref{thm:P=0} 
produces initial data for which 
\cite[Theorem 3]{MinkowskiPaper} applies.
Therefore the hyperbolic evolution of this data
admits a regular conformal compactification along null and
timelike infinity.
\begin{corollary}\label{cor:ApplicationMinkowskiPaper}
Let 
\begin{equation}\label{eq:inputT3}
N\in\Z_{\ge7}
\qquad\gamma\in(0,1]
\qquad\MinkThmsfix\in \smash{(0,\tfrac{1}{101}]}
\qquad b>0
\end{equation}
Let $\vlinsol\in\Hb^{N+6,\delta_N}(\diamonddata,\gxdata^1)$
with $\delta_N = \frac52+\gamma+N+3$ be a solution
of the linearized constraints, $D\Pconstraints(\vlinsol)=0$,
and such that the charge vector $z_*^v = (m_*^v,q_*^v)$ generated by $\vlinsol$
satisfies $|q_*^v|<\conenb{2^{-7}}m_*^v$
(such elements exist, see Remark \ref{rem:v0u0exist}).
Then there exists $\lambda_0\in(0,1]$ such that
for all $\lambda\in(0,\lambda_0]$: 
\begin{itemize}
\item 
The assumptions of Theorem \ref{thm:P=0} hold
for the parameters \eqref{eq:P=0_parameters} given by 
$N+6$, $\delta_N$,
$\|\vlinsol\|_{\Hb^{N+6,\delta_N}(\diamonddata)}/|z_*^v|^{\frac12}+1$
and for $\ulinsol=\lambda\vlinsol$.
Let $\cdata$, $z$ be as in \eqref{eq:Pu0Kv}.
\item
The assumptions of \cite[Theorem 3]{MinkowskiPaper}
hold for the parameters \cite[(32)]{MinkowskiPaper}
given by \eqref{eq:inputT3}, and 
\cite[(33)]{MinkowskiPaper} given by 
\smash{$\kerr = \KS(z)|_{\Dspcl_{\le\MinkThmsfix}}$}, 
$\udata = \ulinsol + \KSdata(z)+\cdata$.
\end{itemize}
Further, for all $N'\in\Z_{\ge N}$, if 
\smash{$\vlinsol\in\Hb^{N'+6,\delta_{N'}}(\diamonddata,\gxdata^1)$}
with $\delta_{N'} = \frac52+\gamma+N'+3$
then there exists $b'>0$ such that the assumptions of 
\cite[Part 2 of Theorem 3]{MinkowskiPaper} hold
for the parameters $k=N'$, $b'$.
\end{corollary}
The proof of Corollary \ref{cor:ApplicationMinkowskiPaper} 
is in Section \ref{sec:ProofThm1}.
\step
\newcommand{\RHS}{\tc{magenta}{\mathfrak{R}}}
\textbf{Proof sketch of Theorem \ref{thm:P=0}.} 
We use a complex\footnote{
The complex \eqref{eq:cintro} coincides with the complex
$(\gxdata,d_{\gxdata})$ in \cite[Lemma 85]{Thesis}.
Beware that in the present paper
(and \cite{MinkowskiPaper})
the notation $\gxdata$ is used for the pullback bundle of 
$\gx$ along $\diamonddata\hookrightarrow\diamond$.}
\begin{align}\label{eq:cintro}
\begin{aligned}
\begin{tikzpicture}[node distance = 10mm and 8mm, auto]
  \node (a0) {$\Omega^0(\diamonddata)\otimesRR\Kil$};
  \node (a1) [right=of a0] {$\Omega^1(\diamonddata)\otimesRR\Kil$};
  \node (a2) [right=of a1] {$\Omega^2(\diamonddata)\otimesRR\Kil$};
  \node (a3) [right=of a2] {$\Omega^3(\diamonddata)\otimesRR\Kil$};
  \node (b1) [below=of a1] {$\ix^2(\diamonddata)$};
  \node (b2) [below=of a2] {$\ix^3(\diamonddata)$};
  \draw[->] (a0) to node[above] {\footnotesize $\ddR\otimes\one$}   (a1);
  \draw[->] (a1) to node[above] {\footnotesize $\ddR\otimes\one$} (a2);
  \draw[->] (a2) to node[above] {\footnotesize $\ddR\otimes\one$} (a3);
  \draw[->] (b1) to node[above,yshift=0mm,xshift=-1mm] {\footnotesize $\di$} (b2);
  \draw[->] (b1) to node[midway,yshift=-5pt,black] 
  {\footnotesize $-\uinj$} (a2);
  \draw[->] (b2) to node[midway,yshift=-5pt,black] 
  {\footnotesize $+\uinj$} (a3);
\end{tikzpicture}
\end{aligned}
\end{align}
defined in Section \ref{sec:cdef}.
The map  
$\ddR$ is the de Rham differential, 
and, in a basis, the map $\di$ is given by two copies
of the divergence acting on symmetric traceless 3-by-3 matrices.
We denote the spaces in the four columns by 
$\cx^0(\diamonddata),\dots,\cx^3(\diamonddata)$,
e.g.~$\cx^1(\diamonddata)=
	(\Omega^1(\diamonddata)\otimesRR\Kil)\oplus\ix^2(\diamonddata)$,
and the differential by $\dc:\cx^k(\diamonddata)\to\cx^{k+1}(\diamonddata)$.

The map 
\begin{equation}\label{eq:dc12}
\dc:\cx^1(\diamonddata) \to \cx^2(\diamonddata)
\end{equation}
is related to 
$D\Pconstraints$ in \eqref{eq:linconstraints}:
relative to the basis in Lemma \ref{lem:basisgdata},
$D\Pconstraints$ is a strictly lower triangular 2-by-2 block matrix,
and \eqref{eq:dc12} is its nonzero block, see \eqref{eq:DPdc}.
The map $\dc$ loses one derivative and preserves decay
relative to the $\Hb^{N,\delta}$-norms,
see \eqref{eq:dcest}.

The space $\cx^1(\diamonddata)$ is isomorphic 
to a gauge subspace of $\gxdata^1(\diamonddata)$
via an injection $\idata$, see \eqref{eq:idatadef}. 
The element $\cdata$ in Theorem \ref{thm:P=0} is constructed 
as $\cdata = \idata \cc$ with $\cc\in\cx^1(\diamonddata)$.

In Section \ref{sec:homotopytransfer}
we equivalently rewrite \eqref{eq:Pu0Kv} 
as an equation with values in $\cx^2(\diamonddata)$.
This equation takes the form (absolutely converging sum)
\begin{equation}\label{eq:MCinfintro}
\dc \cc = - \left(
\Bdata_1(\KSdata(z))
+\textstyle \sum_{n\ge2}
\frac{1}{n!} \Bdata_n\big( (\ulinsol+\KSdata(z)+\idata\cc)^{\otimes n}\big)
\right)
\end{equation}
where each $\Bdata_n:\gxdata^1(\diamonddata)^{\otimes n} \to \cx^2(\diamonddata)$ is an $\R$-multilinear first 
order differential operator,
whose coefficients relative
to the basis that we will use are bounded.
The maps $\Bdata_n$ are defined in Proposition \ref{prop:Bproperties}.
Denote the right hand side of \eqref{eq:MCinfintro} by $\RHS$.
Since $\KSdata(z)$ solves the constraints near infinity,
$\RHS$ decays one order faster than $\ulinsol$.
To solve \eqref{eq:MCinfintro} for $\cc$ that decays
one order faster than $\ulinsol$, 
the right hand side $\RHS$ 
must satisfy two types of integrability conditions:
\begin{equation}\label{eq:Rint}
\dc \RHS = 0
\qquad
\Charge(\RHS) = 0
\end{equation}
where $\Charge$, 
which already appeared in \eqref{eq:CPuu},
is an isomorphism from the homology of \eqref{eq:cintro} 
in degree two (restricted to sections that decay) to $\R^{10}$.

The equation \eqref{eq:MCinfintro} is solved in 
Theorem \ref{thm:mainL-infinity}.
The main tool is a chain homotopy $\Gc$ for \eqref{eq:cintro}
that gains one derivative and preserves decay
relative to the $\Hb^{N,\delta}$-norms,
and further preserves compact support on large balls around the origin
(Proposition \ref{prop:Gxhomotopy}, based on \cite{HomotopyPaper}).
Applied to $\RHS$ it satisfies
\[ 
\dc(\Gc \RHS) = \RHS - I \Charge(\RHS)- \Gc(\dc \RHS)
\]
with $I:\R^{10}\to\cx^2(\diamonddata)$ injective.
In the proof of Theorem \ref{thm:mainL-infinity} we 
then solve
\begin{equation}\label{eq:FPintro}
\cc = \Gc \RHS
\qquad
\Charge(\RHS) = 0
\end{equation}
which is effectively a fixed point problem for 
$\cc$ and the Kerr parameters $z$.

A fixed point of \eqref{eq:FPintro} solves \eqref{eq:MCinfintro}
iff $\RHS$ satisfies the two integrability conditions \eqref{eq:Rint}.
The second integrability condition is contained in
\eqref{eq:FPintro}.
The first is obtained using the fact that the 
multilinear operators $\Bdata_n$ satisfy (higher) Jacobi identities,
which is the main motivation for rewriting \eqref{eq:Pu0Kv}
in the form \eqref{eq:MCinfintro}.

The equation \eqref{eq:MCinfintro} is derived as follows.
Linear symmetric hyperbolic gauge fixing yields a contraction from the complex
$(\gx(\diamond),\dg)$ to $(\cx(\diamonddata),\dc)$, 
see Section \ref{sec:SHScontraction} (based on \cite{Thesis,RTgLa1}).
The multilinear operators $\Bdata_n$ are then obtained
using homotopy transfer of the bracket $[\cdot,\cdot]$ along this contraction,
see Section \ref{sec:Bndef}.
Homotopy transfer is a standard homological algebra tool,
see e.g.~\cite{Vallette}, and the $\Bdata_n$ satisfy the higher
Jacobi identities used to show \eqref{eq:Rint} by standard theory.
The equivalence of \eqref{eq:Pu0Kv} and \eqref{eq:MCinfintro} 
is proven in Lemma \ref{lem:MCinfproperties}.

On an algebraic level, this paper thus shows that the constraint equations, 
which are normally derived using the 
(geometric) Gauss and Codazzi equations, 
can be derived independently using the (algebraic) homotopy transfer theorem,
see Remark \ref{rem:HPT}.
The constraint equations are then the Maurer-Cartan equation
in an L-infinity algebra \cite[Definition 4.1]{getzler}, 
see Remark \ref{rem:MCinLinf}.

\step
\textbf{Acknowledgement.}
Part of this work was done during my stay at EPFL in the Fall 2025.
I thank Michael Reiterer for discussions related to this project.
I thank Piotr Chru\'sciel and Erwann Delay for discussions about a related project that concerns the constraint equations on hyperbolic space.


\section{Bases and norms for initial data}
\label{sec:basisID}

We fix $C^\infty$-bases of  
$\gx^k(\diamond)$, $\lx^k(\diamond)$, $\I^{k}(\diamond)$ and of 
$\gxdata^k(\diamonddata)$, $\lxdata^k(\diamonddata)$, $\Idata^{k}(\diamonddata)$;
define the norms that we use on $\diamonddata$;
and show mapping properties of
$\dg$ and $[\cdot,\cdot]$.

We need some preliminaries from \cite{MinkowskiPaper},
see also Remark \ref{rem:refdefs}:
\begin{itemize}
\item 
Recall $\lx^k(\diamond) = \Omega^k(\diamond)\otimesRR\Kil$
with $\Kil$ the ten-dimensional Lie algebra of Killing fields
for the Minkowski metric.
We decompose $\Kil = \boosts\oplus\transl$ where 
\begin{align}\label{eq:BT}
\begin{aligned}
\boosts &= 
\SPAN_{\R}\{ B^{01},B^{02},B^{03},B^{23},B^{31},B^{12} \}\\
\transl &=
\SPAN_{\R}\{ T_0,T_1,T_2,T_3\}
\end{aligned}
\end{align}
using the boosts and translations 
$B^{\mu\nu} =  
	(x^\mu \eta^{\nu\sigma}-x^\nu \eta^{\mu \sigma})\p_{x^\sigma}$, 
	$T_\mu = \p_{x^\mu}$.
\item 
Recall the module multiplication maps
\cite[(49a), Definition 3, (55a)]{MinkowskiPaper}
\begin{align}\label{eq:modmult}
\begin{aligned}
\Omega^q(\diamond)\times\lx^k(\diamond)&\to\lx^{k+q}(\diamond)\\
\Omega^q(\diamond)\times\I^k(\diamond)&\to\I^{k+q}(\diamond)\\
\Omega^q(\diamond)\times\gx^k(\diamond)&\to\gx^{k+q}(\diamond)
\end{aligned}
\end{align}
essentially given by the wedge product.
Recall the Leibniz rules \cite[(56)]{MinkowskiPaper}%
\begin{subequations}\label{eq:Leibniz}
\begin{align}
\dl(\omega \uo) 
	&= (\ddR\omega)\uo +(-1)^{q}\omega (\dl\uI)
	\label{eq:dlLeib}\\
\dI(\omega \uI) 
	&= (\ddR\omega)\uI +(-1)^{q}\omega (\dI\uI) 
	\label{eq:dILeib}\\
\dg(\omega u) 
	&= (\ddR\omega)u +(-1)^{q}\omega (\dg u)
	\label{eq:dgLeib}\\
[u,\omega u'] 
	&= \anchorg(u)(\omega)u' + (-1)^{qk}\omega[u,u']
	\label{eq:anchorleib}
\end{align}
\end{subequations}
where 
$\omega\in\Omega^q(\diamond)$,
$\uo\in\lx^k(\diamond)$,
$\uI\in\I^k(\diamond)$,
$u\in\gx^k(\diamond)$,
$u'\in\gx^{k'}(\diamond)$,
and where $\ddR$ is the de Rham differential.
\item 	
Recall $\I^k(\diamond) 
= (\I^k_+(\diamond)\oplus\I^k_-(\diamond))_\R$
in \cite[Definition 2]{MinkowskiPaper}.
Let $m,n$ be any two symmetric traceless 
3-by-3 matrices 
with entries in $C^\infty(\diamond)$,
and let $v,w\in C^\infty(\diamond,\R^3)$ be any two column vectors.
Then 
\begin{equation}\label{eq:isoIdef}
\isoI{2}(m\oplus n) \in \I^{2}(\diamond)
\qquad
\isoI{3}(v\oplus w) \in \I^{3}(\diamond)
\end{equation}
where we define
\begin{align*}
\isoI{2}(m\oplus n) 
&= 
-4(m_{jk}+in_{jk})
	\mu_{\gmink}^{-1}\otimes
	\theta_+^j
	\otimes 
	\theta_+^k
	\ \oplus
	\ \text{cc}
	\\
\isoI{3}(v\oplus w) &=
-4i(v_j + iw_j) 
	\mu_{\gmink}^{-1}\otimes
	\big(dx^1\wedge dx^2\wedge dx^3\otimes \theta_+^j \\
	&\qquad\qquad\qquad\qquad- \tfrac{i}{2} 
	\tsum_{\substack{b=1,2,3 \\ b\neq j}}
	dx^0\wedge dx^j\wedge dx^b\otimes \theta_+^b\big)
\ \oplus \ \text{cc}
\end{align*}
The notation $u\oplus \text{cc}$ means $u\oplus\overline{u}$;
we implicitly sum over $j,k=1,2,3$;
$\theta_+^1 = dx^0\wedge dx^1 + i dx^2\wedge dx^3$
and analogously for cyclic permutations of $123$; $\mu_{\gmink}^{-1}\in\dens{-1}(\diamond)$ 
is the $-1$ density associated to $\gmink$;
and $m_{jk},n_{jk}$ are the components of the matrices $m,n$
and $v_j,w_j$ the components of $v,w$.
\item 
Define 
$e_1=(1,0,0)$, $e_2=(0,1,0)$, $e_3=(0,0,1)$
and define 
\begin{align}\label{eq:hdef}
\begin{aligned}
h_1 &= 
\left(\begin{smallmatrix}
0&1&0\\
1&0&0\\
0&0&0
\end{smallmatrix}\right) &
h_2 &= 
\left(\begin{smallmatrix}
0&0&0\\
0&0&1\\
0&1&0
\end{smallmatrix}\right) &
h_3 &= 
\left(\begin{smallmatrix}
0&0&1\\
0&0&0\\
1&0&0
\end{smallmatrix}\right)  \\
h_4 &= 
\left(\begin{smallmatrix}
1&0&0\\
0&-1&0\\
0&0&0
\end{smallmatrix}\right) &
h_5 &= 
\tfrac{1}{\sqrt{3}}\left(\begin{smallmatrix}
1&0&0\\
0&1&0\\
0&0&-2
\end{smallmatrix}\right)
\end{aligned}
\end{align}
\item Define the numbers:
\begin{equation}
\label{eq:nm}
\renewcommand{\arraystretch}{1.1}
\begin{array}{c|ccccc}
k & 0 & 1 & 2 & 3 & 4 \\
\hline
\ngG_k^{\Omega} & 1&3&3&1&0\\
\ngG_k^{\I} &0&0&10&6&0\\
\ngG_k & 10 & 40 & 36 & 10 & 0\\
\end{array} 
\qquad
\begin{array}{c|ccccc}
k & 0 & 1 & 2 & 3 & 4 \\
\hline
\ng_k^{\Omega} & 1&4&6&4&1\\
\ng_k^{\I} &0&0&10&16&6\\
\ng_k & 10 & 50 & 76 & 46 & 10\\
\end{array} 
\end{equation}
e.g.~$\ng_2=76$.
Observe
$\ng_k^{\Omega}=\ngG_k^{\Omega}+\ngG_{k-1}^{\Omega}$,
$\ng_k^{\I}=\ngG_k^{\I}+\ngG_{k-1}^{\I}$,
$\ng_k=\ngG_k+\ngG_{k-1}$.
\end{itemize}

Define the submodules 
\begin{equation}\label{eq:GaugeSpaces1}
\OmegaG^k(\diamond)\subset\Omega^k(\diamond)
\qquad
\IG^k(\diamond)\subset\I^k(\diamond)
\qquad
\gxG^k(\diamond)\subset\gx^k(\diamond)
\end{equation}
where $\OmegaG^k(\diamond)$ is the $C^\infty(\diamond)$-span
of all $dx^{i_1}\wedge\cdots\wedge dx^{i_k}$, $1\le i_1<\cdots<i_k\le3$;
further $\IG^k(\diamond)=\image(\isoI{k})$ for $k=2,3$
and $\IG^4(\diamond)=\{0\}$; further 
\begin{equation}\label{eq:gGsum}
\gxG^k(\diamond)
= (\OmegaG^k(\diamond)\otimesRR\Kil)\oplus\IG^{k+1}(\diamond)
\end{equation}
The definition of these modules is motivated by gauge fixing,
see Lemma \ref{lem:pih}.

The following basis elements are
homogeneous to leading order near spacelike infinity,
relative to the $\R_+$-action in \cite[Section 2.3]{MinkowskiPaper}.
Recall $\xvecjap=\sqrt{1+|\vec{x}|^2}$.

\newcommand\Tstrut{\rule{0pt}{3ex}}         
\newcommand\Bstrut{\rule[-1.8ex]{0pt}{0pt}}   
\begin{table}[t]
\def\arraystretch{1.3}
\centering
\footnotesize{
\setlength{\tabcolsep}{2pt}
\begin{tabular}{c|cl|c}
Space & Basis elements & Range of indices & \#  \\
\hline\hline
$\OmegaG^k(\diamond)$
& 
$
\tfrac{dx^{i_1}}{\xvecjap} \wedge \cdots\wedge \tfrac{dx^{i_k}}{\xvecjap}$
\Tstrut\Bstrut
&
$1\le i_1<\cdots<i_k \le 3$
& $\ngG^{\Omega}_k$\\
\hline
$\OmegaG^k(\diamond)\otimesRR\boosts$
& 
$
\tfrac{dx^{i_1}}{\xvecjap} \wedge \cdots\wedge \tfrac{dx^{i_k}}{\xvecjap}
\otimes B^{\mu\nu}$
&
$\begin{aligned}
&1\le i_1<\cdots<i_k \le 3\\[-1mm]
&0\le \mu < \nu \le 3
\end{aligned}$
& $6\ngG^{\Omega}_k$\\
\hline
$\OmegaG^k(\diamond)\otimesRR\transl$
&
$\xvecjap
\tfrac{dx^{i_1}}{\xvecjap} \wedge \cdots\wedge \tfrac{dx^{i_k}}{\xvecjap}
\otimes T_\mu$
&
$\begin{aligned}
&1\le i_1<\cdots<i_k \le 3\\[-1mm]
&0\le \mu \le 3
\end{aligned}$
& $4\ngG^{\Omega}_k$\\
\hline
$\IG^2(\diamond)$
&
$\isoI{2}(\tfrac{1}{\xvecjap^2} h_\ell \oplus 0)$, 
$\isoI{2}(0\oplus \tfrac{1}{\xvecjap^2} h_\ell)$
&
$1\le \ell \le 5$
& $\ngG^{\I}_2$\\
$\IG^3(\diamond)$
&
$\isoI{3}(\tfrac{1}{\xvecjap^3} e_\ell\oplus 0)$,
$\isoI{3}(0\oplus \tfrac{1}{\xvecjap^3} e_\ell)$
&
$1\le \ell \le 3$
& $\ngG^{\I}_3$\\
$\IG^4(\diamond)$
& - &&
$0$\\
\hline
$\gxG^k(\diamond)$
&
\multicolumn{2}{c|}{combine basis elements in 
previous three rows, using \eqref{eq:gGsum}}
&
$\ngG_k$\\
\hline\hline
\multirow{2}{*}{$\Omega^k(\diamond)$}
&
\multicolumn{2}{c|}{basis elements of $\OmegaG^k(\diamond)$,}
& \multirow{2}{*}{$\ng^{\Omega}_k$} \\[-1mm]
&  
\multicolumn{2}{c|}{$dx^0/\xvecjap$ times
basis elements of $\OmegaG^{k-1}(\diamond)$}
&\\
\hline
%
\multirow{2}{*}{$\Omega^k(\diamond)\otimesRR\boosts$}
&
\multicolumn{2}{c|}{basis elements of $\OmegaG^k(\diamond)\otimesRR\boosts$,}
& \multirow{2}{*}{$6\ng^{\Omega}_k$} \\[-1mm]
&  
\multicolumn{2}{c|}{$dx^0/\xvecjap$ times
basis elements of $\OmegaG^{k-1}(\diamond)\otimesRR\boosts$}
&\\
\hline
%
\multirow{2}{*}{$\Omega^k(\diamond)\otimesRR\transl$}
& 
\multicolumn{2}{c|}{basis elements of $\OmegaG^k(\diamond)\otimesRR\transl$,}
& \multirow{2}{*}{$4\ng^{\Omega}_k$} \\[-1mm]
& 
\multicolumn{2}{c|}{$dx^0/\xvecjap$ times
basis elements of $\OmegaG^{k-1}(\diamond)\otimesRR\transl$}
& \\
\hline
$\lx^k(\diamond)$
&
\multicolumn{2}{c|}{combine basis elements in previous two rows}
& $10\ng^{\Omega}_k$\\
\hline
%
\multirow{2}{*}{$\I^k(\diamond)$}
& 
\multicolumn{2}{c|}{basis elements of $\IG^k(\diamond)$,}
&\multirow{2}{*}{$\ng^{\I}_k$}\\[-1mm]
& 
\multicolumn{2}{c|}{$dx^0/\xvecjap$ times
basis elements of $\IG^{k-1}(\diamond)$}&\\
\hline
%
\multirow{2}{*}{$\gx^k(\diamond)$}
& 
\multicolumn{2}{c|}{basis elements of $\gxG^k(\diamond)$,}
&\multirow{2}{*}{$\ng_k$}\\[-1mm]
& 
\multicolumn{2}{c|}{$dx^0/\xvecjap$ times
basis elements of $\gxG^{k-1}(\diamond)$}
&
\end{tabular}}
\captionsetup{width=115mm}
\caption{
Here $k=0\dots4$.
For example, the first row means that
$dx^{i_1}/\xvecjap \wedge 
\cdots\wedge dx^{i_k}/\xvecjap$
with $1\le i_1<\cdots<i_k \le 3$
are a $C^\infty(\diamond)$-basis of $\OmegaG^k(\diamond)$, 
and that there are $\ngG^{\Omega}_k$ basis elements.
In the second part of the table the
multiplication \eqref{eq:modmult} is used.}
\label{tab:Dbases}
\end{table}
\begin{lemma}[Bases on $\diamond$]
\label{lem:basisg}
The elements in Table \ref{tab:Dbases} are $C^\infty(\diamond)$-bases.
\end{lemma}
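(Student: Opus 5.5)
The plan is to reduce everything to the observation that the elements in Table \ref{tab:Dbases} differ from standard constant-coefficient bases only by multiplication with nowhere-vanishing smooth functions, namely powers of $\xvecjap=\sqrt{1+|\vec{x}|^2}$, which is smooth and $\ge1$ on $\diamond=\R^4$. So I will argue row by row, starting from known global trivializations, and then assemble the composite rows using the direct sum decompositions.

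\textbf{Forms and $\lx^k$.} On $\diamond=\R^4$ the wedge products $dx^{\mu_1}\wedge\cdots\wedge dx^{\mu_k}$, $\mu_1<\dots<\mu_k$, are a $C^\infty(\diamond)$-basis of $\Omega^k(\diamond)$. Those with all indices in $\{1,2,3\}$ span $\OmegaG^k(\diamond)$ by definition, and they are linearly independent, so they form a basis of it. Dividing each by $\xvecjap^k$, a unit in $C^\infty(\diamond)$, preserves the basis property, which gives the first row. The remaining basis elements of $\Omega^k(\diamond)$ are those containing $dx^0$, i.e.\ $dx^0$ wedged with a basis element of $\OmegaG^{k-1}(\diamond)$; again rescale by $\xvecjap^{-1}$. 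Since $\lx^k(\diamond)=\Omega^k(\diamond)\otimesRR\Kil$ and $\Kil=\boosts\oplus\transl$ with $\{B^{\mu\nu}\}$, $\{T_\mu\}$ real bases by \eqref{eq:BT}, tensoring a $C^\infty$-basis of $\Omega^k$ with an $\R$-basis of $\Kil$ gives a $C^\infty$-basis. For the translation rows there is an extra factor $\xvecjap$, which is again a unit. The counts follow directly: $\binom3k=\ngG^\Omega_k$ and $\binom4k=\ng^\Omega_k$.

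\textbf{Weyl parts.} This is the main step, because it uses the concrete definition of $\I^k$. I will show that $\isoI{2}$, viewed as a map from pairs $(m,n)$ of symmetric traceless matrices with $C^\infty$ entries to $\I^2(\diamond)$, is a $C^\infty$-linear isomorphism. Injectivity holds because $\theta_+^1,\theta_+^2,\theta_+^3$ are pointwise linearly independent, so $\theta^j_+\otimes\theta^k_+$ for $j\le k$ are independent and $m+in$ is recovered. Surjectivity holds because $\I^2$ has rank $10$ by \eqref{eq:nm}, which equals $2\cdot5$. Equivalently, sections of $\I^2_+$ are exactly $\mu_{\gmink}^{-1}\otimes(\text{symmetric traceless combination of }\theta_+^j\otimes\theta_+^k)$, where tracelessness encodes the Weyl trace condition; this follows from the definition in \cite[Definition 2]{MinkowskiPaper}. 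Since $h_1,\dots,h_5$ is an $\R$-basis of symmetric traceless matrices and $\xvecjap^{-2}$ is a unit, the listed elements form a basis of $\I^2(\diamond)=\IG^2(\diamond)$.

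For $\I^3$ of rank $16$, I will first check that $\isoI{3}$ is injective into $\I^3(\diamond)$: the coefficient of $dx^1\wedge dx^2\wedge dx^3\otimes\theta_+^j$ determines $v+iw$. Its image is therefore a rank $6$ submodule $\IG^3(\diamond)$, and the basis $\xvecjap^{-3}e_\ell$ follows as before. It then remains to show that $\I^3(\diamond)=\IG^3(\diamond)\oplus dx^0\,\IG^2(\diamond)$. The module multiplication $\omega\mapsto dx^0\omega$ on $\IG^2$ produces terms of the form $dx^0\wedge\theta\otimes\theta$, which have no $dx^1\wedge dx^2\wedge dx^3$ component. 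So after restricting to that component I get a triangular system, and the sum is direct; this amounts to checking injectivity of multiplication by $dx^0$ on $\I^2$. Counting ranks, $6+10=16$, and because a pointwise-injective map between bundles of equal rank is an isomorphism, the sum is all of $\I^3$. The same argument gives $\I^4=dx^0\,\IG^3$ with rank $6$ and $\IG^4=0$. I expect the hardest step to be verifying injectivity of $dx^0\cdot$ on $\I^2$ and the independence claimed above; I would check both pointwise at a single point, using the explicit self-dual basis $\theta_+^j$.

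\textbf{Assembly.} Finally $\gxG^k(\diamond)$ and $\gx^k(\diamond)=\lx^k(\diamond)\oplus\I^{k+1}(\diamond)$ are direct sums, so the union of the bases of the summands is a basis. The rank identities $\ng_k=\ngG_k+\ngG_{k-1}$ in \eqref{eq:nm} confirm that nothing is missing.
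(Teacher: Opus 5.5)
Your argument is correct and is essentially the paper's proof, which just says the claim holds by inspection using the definition of $\I(\diamond)$ in \cite[Definition 2]{MinkowskiPaper}. You have spelled that inspection out: rescaling by the units $\xvecjap^{\pm k}$, and for the $\I^k$ rows pointwise injectivity of $\isoI{2}$, $\isoI{3}$ and of multiplication by $dx^0$, combined with a rank count. The one adjustment is that the ranks $10,16,6$ of $\I^2,\I^3,\I^4$ must come from that definition (as recalled in the introduction), not from \eqref{eq:nm}, which only defines numbers and would make the count circular.
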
%
\begin{proof}
By inspection, using the definition of
$\I(\diamond)$ in \cite[Definition 2]{MinkowskiPaper}.
\qed
\end{proof}

The module $\gxG^k(\diamond)$ is the space of sections
of a (trivial) bundle over $\diamond$, 
that we denote by $\gxG^k$.
We denote the pullback bundle of \smash{$\gxG^k$}
under $\diamonddata\hookrightarrow\diamond$
by \smash{$\gxdataG^k$}.
Recall that \smash{$\gxdata^k$, $\lxdata^k$, $\Idata^k$} 
are the pullback bundles of $\gx^k$, $\lx^k$, $\I^k$ respectively.

\begin{lemma}[Bases on $\diamonddata$]\label{lem:basisgdata}
For the modules of sections over $\diamonddata$
we use the same bases that we used for the modules over $\diamond$,
but restricted to $\diamonddata$.
In this sense, the following are $C^\infty(\diamonddata)$-bases:
For 
$$
\gxdataG^k(\diamonddata),\ 
\lxdata^k(\diamonddata),\ 
\Idata^k(\diamonddata),\ 
\gxdata^k(\diamonddata)
$$
use the same basis elements as for 
$\gxG^k(\diamond)$ 
$\lx^k(\diamond)$, 
$\I^k(\diamond)$, 
$\gx^k(\diamond)$ 
in Table \ref{tab:Dbases}.
\end{lemma}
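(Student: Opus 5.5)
The plan is to reduce the statement to Lemma \ref{lem:basisg} together with the definition of the pullback bundles. The claim is local and algebraic: we must show that the restrictions to $\diamonddata$ of the elements in Table \ref{tab:Dbases} are pointwise linearly independent and span the fibers of $\gxdataG^k$, $\lxdata^k$, $\Idata^k$, $\gxdata^k$. No analysis is involved; the only content is the general fact that a $C^\infty$-basis of sections restricts to a $C^\infty$-basis of the pullback.

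\textbf{Step 1: general principle.} Let $E\to\diamond$ be a vector bundle and $s_1,\dots,s_r\in\Gamma(E)$ a $C^\infty(\diamond)$-basis of $\Gamma(E)$. Then at every point $p$ the values $s_1(p),\dots,s_r(p)$ form a basis of the fiber $E_p$.
\begin{itemize}
\item For spanning, take any $e\in E_p$ and extend it to a section $\tilde e$ using a local trivialization and a cutoff. Writing $\tilde e=\sum f_i s_i$ and evaluating at $p$ shows that $e$ lies in the span of the $s_i(p)$.
\item Linear independence then follows from a count. The table records that the number of elements equals the rank: $\ngG_k$ for $\gxG^k$, $10\ng^\Omega_k$ for $\lx^k$, $\ng^\I_k$ for $\I^k$, and $\ng_k$ for $\gx^k$.
\end{itemize}
It follows that $s_1,\dots,s_r$ is a global frame of $E$.

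\textbf{Step 2: restriction to $\diamonddata$.} For the pullback bundle $\underline E$ along $\iota:\diamonddata\hookrightarrow\diamond$ we have $\underline E_{\vec x}=E_{(0,\vec x)}$. Hence the restrictions $s_i|_{\diamonddata}$ are a global frame of $\underline E$. Any smooth section of $\underline E$ then has a unique expansion in this frame, and the coefficients are smooth because the frame is smooth. This gives a $C^\infty(\diamonddata)$-basis.

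\textbf{Step 3: apply to each case.} We apply Steps 1 and 2 with $E=\gxG^k,\lx^k,\I^k,\gx^k$, whose bases are given by Lemma \ref{lem:basisg}.
\begin{itemize}
\item The factors $\xvecjap^{\pm1}$ and $dx^0/\xvecjap$ in the table are smooth and nonvanishing, so they cause no degeneration on restriction.
\item The elements involving $dx^0$ are restricted as sections of the pulled-back bundle, not pulled back as forms. They therefore stay nonzero on $\diamonddata$, and this is the sense in which the statement says ``restricted''.
\end{itemize}

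The only point needing any care is this interpretation: restriction of sections of the pullback bundle versus pullback of forms. Once that is fixed, the argument is routine.
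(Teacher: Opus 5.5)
Your proposal is correct. It is simply a spelled-out version of what the paper dismisses as ``Clear'': the bases of Lemma \ref{lem:basisg} are global frames, and restricting a global frame to $\diamonddata$ gives a global frame, hence a $C^\infty(\diamonddata)$-basis, of the pullback bundle. Your reading of ``restricted'' as restriction of sections of the pullback bundle, so that elements like $dx^0\otimes T_\mu$ survive, is the intended one; it matches the description of the basis of $\lxdata^1(\diamonddata)$ in the introduction.
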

\begin{proof}
Clear.\qed
\end{proof}
Let $C^\infty_c(\diamonddata)$ be the smooth 
compactly supported functions on $\diamonddata=\R^3$.
\begin{definition}[Norms on $\diamonddata$]\label{def:normsdata}
For $\ss\in\Z_{\ge0}$, $\delta\in\R$ and 
$f\in C^\infty_c(\diamonddata)$ define
\begin{align}
\|f\|_{\Hb^{\ss,\delta}(\diamonddata)}
&=
\big(\tsum_{\substack{\alpha\in\N_0^3 \\ |\alpha|\le \ss}}
\int_{\diamonddata}
|\xvecjap^{\delta} (\xvecjap \p_{\vec{x}})^{\alpha} f|^2
\frac{dx^1dx^2dx^3}{\xvecjap^3}
\big)^{\frac12} 
\label{eq:Hbnorms} \\
\|f\|_{\Cb^{\ss,\delta}(\diamonddata)}
&=
\textstyle
\sum_{\substack{\alpha\in\N_0^3 \\ |\alpha|\le \ss}}
\sup_{p\in\diamonddata}
| \big(\xvecjap^{\delta} (\xvecjap \p_{\vec{x}})^{\alpha} f\big)(p)|
\label{eq:Cbnorms}
\end{align}
where 
$
(\xvecjap \p_{\vec{x}})^{\alpha} f
=
(\xvecjap \p_{x^1})^{\alpha_1} 
(\xvecjap \p_{x^2})^{\alpha_2} 
(\xvecjap \p_{x^3})^{\alpha_3} f
$.
Let $\Hb^{\ss,\delta}(\diamonddata)$ and 
$\Cb^{\ss,\delta}(\diamonddata)$ be the completion of 
$C^\infty_c(\diamonddata)$ with respect to these norms.
Let 
$\Hb^{\ss,\delta}(\diamonddata,\lxdata^k)$ and
$\Hb^{\ss,\delta}(\diamonddata,\Idata^k)$
be the spaces of sections of $\lxdata^k$ respectively $\Idata^k$
whose components relative to the bases in Lemma \ref{lem:basisgdata}
are in $\Hb^{\ss,\delta}(\diamonddata)$, 
with norm $\|{\cdot}\|_{\Hb^{\ss,\delta}(\diamonddata)}$
given by applying \eqref{eq:Hbnorms} componentwise
and then taking the $\ell^2$-sum.
For $\ss\in\Z_{\ge1}$ define
\begin{align}\label{eq:normsoffset}
\begin{aligned}
\Hb^{\ss,\delta}(\diamonddata,\gxdata^k)
&=
\Hb^{\ss,\delta}(\diamonddata,\lxdata^k)
\oplus 
\Hb^{\ss-1,\delta}(\diamonddata,\Idata^{k+1})
\\
\|\uodata\oplus\uIdata\|_{\Hb^{\ss,\delta}(\diamonddata)}
&=
\|\uodata\|_{\Hb^{\ss,\delta}(\diamonddata)}
+
\|\uIdata\|_{\Hb^{\ss-1,\delta}(\diamonddata)}
\end{aligned}
\end{align}
Define 
$\Cb^{\ss,\delta}(\diamonddata,\lxdata^k),
\Cb^{\ss,\delta}(\diamonddata,\Idata^k),
\Cb^{\ss,\delta}(\diamonddata,\gxdata^k)$
analogously.
\end{definition}
We note that the norms are defined so that they are monotone in 
$\ss,\delta$, in the sense that if 
$\ss_1\le\ss_2$ and $\delta_1\le\delta_2$ then 
(analogously for the \smash{$\Cb^{\ss,\delta}$}-norms)
\begin{align}\label{eq:monotone}
\|\udata\|_{\Hb^{\ss_1,\delta_1}(\diamonddata)}
\le 
\|\udata\|_{\Hb^{\ss_2,\delta_2}(\diamonddata)}
\end{align}

We recall a Sobolev estimate, see e.g.~\cite[Lemma 35]{MinkowskiPaper}.
For all $\ss\in\Z_{\ge1}$ and $\delta\ge0$
there exists a constant $\Csob{\ss,\delta}\ge1$ such that for 
all $\udata\in\gxdata(\diamonddata)$:
\begin{align}\label{eq:sobolev}
\|\udata\|_{\Cb^{\ss,\delta}(\diamonddata)} 
\le
\Csob{\ss,\delta}\|\udata\|_{\Hb^{\ss+2,\delta}(\diamonddata)}
\end{align}
and where the dependency of the constant
\smash{$\Csob{\ss,\delta}$} on $\ss,\delta$ is increasing.

Set
\begin{equation}\label{eq:quotients}
\mathcal{A}=\textstyle
\SPAN_{\R}\big\{ 1,\frac{x^\mu}{\xvecjap},\frac{x^\mu x^\nu}{\xvecjap^2}
\mid \mu,\nu=0\dots3\big\}
\end{equation}
\begin{lemma}\label{lem:dglabasisNEW}
Let $b\in\gx^k(\diamond),b'\in\gx^{k'}(\diamond)$, $\theta\in\Omega(\diamond)$
be basis elements in Lemma \ref{lem:basisg},
and let $f,f'\in C^\infty(\diamond)$.
Then: 
\begin{itemize}
\item 
$\theta b$ is in the $\R$-span of the 
basis elements in Lemma \ref{lem:basisg}.
\item 
$\dg(fb) = 
		(\xvecjap\p_{x^\mu}f) (\frac{dx^\mu}{\xvecjap} b) 
		+ \dg b $.\\
Further $\dg b$ is in the 
$\mathcal{A}$-span of the basis elements in Lemma \ref{lem:basisg}.
\item 
$[fb,f'b'] = f\anchorg(b)(f')b'-(-1)^{kk'}f'\anchorg(b')(f)b + ff'[b,b']$.\\
Further $[b,b']$ is in the 
$\mathcal{A}$-span of the basis elements in Lemma \ref{lem:basisg},
and 
\begin{equation}\label{eq:dglabasisanchorNEW}
\anchorg(b)(f)
=
\tsum_{\mu=0}^3\theta_{\mu}(\xvecjap \p_{x^\mu} f)
\end{equation}
where $\theta_\mu\in\Omega^k(\diamond)$
are in the $\mathcal{A}$-span of the basis elements in Lemma \ref{lem:basisg}.
\end{itemize}
\end{lemma}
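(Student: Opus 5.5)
The three items reduce to the Leibniz rules \eqref{eq:Leibniz} together with an explicit inspection of the basis elements in Table \ref{tab:Dbases}. Each basis element is a product of three kinds of factors: a one-form factor $\frac{dx^{i_1}}{\xvecjap}\wedge\cdots$, possibly multiplied by $dx^0/\xvecjap$; a scalar weight $\xvecjap$ (for translations) or $\xvecjap^{-2},\xvecjap^{-3}$ (for $\I$); and a fixed element $B^{\mu\nu}$, $T_\mu$, or $\isoI{k}(h_\ell\oplus0)$, $\isoI{k}(e_\ell\oplus 0)$ and so on. The key observation is that $\xvecjap$ depends only on $\vec x$, and that its logarithmic derivatives $\xvecjap\p_{x^\mu}\xvecjap^{a}=a\,\xvecjap^{a}\,x^\mu/\xvecjap$ (with $x^0$-derivative zero) bring in only elements of $\mathcal{A}$.

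For the first item, wedging two weighted forms gives another weighted form or zero. The translation weight $\xvecjap$ passes through unchanged. For $\I$, the products $\Omega^q\times\I^k$ are fixed algebraic maps given essentially by the wedge product. One checks them on the basis: $dx^0\wedge$ or $dx^i\wedge$ acting on $\isoI{2}$ lands in the span of $\isoI{3}$ and $dx^0/\xvecjap\cdot\isoI{2}$, with constant coefficients after accounting for the $\xvecjap$ powers. This is a finite computation using \cite[Definition 2]{MinkowskiPaper}, and I expect it to be the most tedious part. Real coefficients come from the ``$\oplus$ cc'' structure.

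For the second item, I would apply \eqref{eq:dgLeib} with $q=0$. This gives $\dg(fb)=(\ddR f)b+f\dg b$, where $\ddR f=(\xvecjap\p_{x^\mu}f)\frac{dx^\mu}{\xvecjap}$. The displayed formula omits the factor $f$ in front of $\dg b$, which I read as a typo for $f\,\dg b$. To show $\dg b\in\mathcal{A}$-span, write $b=\xvecjap^{a}\theta\, e$, with $e$ a constant element ($B^{\mu\nu}\otimes$ or $T_\mu\otimes$ a constant form, or an $\isoI{}$ of constant matrices). Leibniz again gives
\[
\dg b=\ddR(\xvecjap^{a})\wedge(\cdots)+\xvecjap^{a}\,\dg(\text{constant element}).
\]
The first term lies in the $\mathcal{A}$-span by the logarithmic derivative observation. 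For the second term, I would use the explicit form of $\dg$ from \cite[Section 2.2]{MinkowskiPaper}: on $\lx$ it is $\ddR\otimes\one$ plus the algebraic map to $\I$ and the Lie algebra action, and on $\I$ it is the exterior derivative. Applied to constant-coefficient elements, these produce polynomial coefficients of degree at most $1$ in $x$, coming from the $x$-linear Killing fields $B^{\mu\nu}$. After dividing by the correct powers of $\xvecjap$ to express the result in the basis, these become entries $x^\mu/\xvecjap$ or $x^\mu x^\nu/\xvecjap^2$ of $\mathcal{A}$. The main obstacle is exactly this degree counting: each $B^{\mu\nu}$ contributes one power of $x$, which must be matched by the normalisation $\xvecjap^{\pm1}$ that distinguishes boost from translation basis elements.

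For the third item, the bracket expansion follows by applying \eqref{eq:anchorleib} twice, together with graded antisymmetry of $[\cdot,\cdot]$. For \eqref{eq:dglabasisanchorNEW}, $\anchorg(b)$ is $C^\infty$-linear and maps $b$ to a $k$-form-valued vector field. This vector field is built from $\omega\otimes\zeta\mapsto\omega\,\zeta$ with $\zeta\in\Kil$, and it vanishes on $\I$. Writing $\zeta=B^{\mu\nu}$ or $T_\mu$ as $\xvecjap$ times a vector field with coefficients in $\mathcal{A}$ times $\xvecjap^{-1}\p_{x^\sigma}$, the $\xvecjap$ powers cancel against the normalisation of the basis. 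This yields $\theta_\mu\in\mathcal{A}$-span. Finally, $[b,b']$ is handled like $\dg b$: split off the weights $\xvecjap^{a}$ via the same Leibniz-type rule, which introduces anchor terms of the form just computed, and reduce to brackets of constant elements. Those brackets are given by the Lie bracket on $\Kil$ and the action of $\Kil$ on $\I$, which have polynomial coefficients of bounded degree. The same degree count as in the second item then places them in the $\mathcal{A}$-span.
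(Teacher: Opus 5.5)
Your proposal is correct and follows essentially the paper's route: strip the $\xvecjap$-weights off the basis elements with the Leibniz rules \eqref{eq:Leibniz}, whose derivative terms only produce coefficients in $\mathcal{A}$. Then read off $\dg$, the bracket and the anchor on the remaining constant-coefficient elements from \cite{MinkowskiPaper}; you are also right that the second item should read $f\,\dg b$.

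One correction to your recollection of $\dg$: by \cite[(55b)]{MinkowskiPaper}, $\dg(b_0\oplus b_{\I})=(\dl b_0-(-1)^{k+1}\sigma(b_{\I}))\oplus\dI b_{\I}$. So $\dg$ contains no Lie-algebra action, and $\dl=\ddR\otimes\one$ and $\dI$ annihilate the constant elements. The only contribution with non-constant coefficients (affine in $x$, sitting on the translations) is the algebraic map $\sigma$. This map goes from $\I$ to $\lx$, not the other way, and handling it requires the explicit formula \cite[(54)]{MinkowskiPaper}.
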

\begin{proof}
\textit{First item:} Immediate from the definitions.
\textit{Second item:}
The formula follows from \eqref{eq:dgLeib}.
Consider $\dg b$.
By definition of $\dg$, see \cite[(55b)]{MinkowskiPaper},
for $b = b_0\oplus b_{\I}$ with $b_0\in\lx^k(\diamond)$ 
and $b_{\I}\in\I^{k+1}(\diamond)$ we have
$
\dg(b_0\oplus b_{\I}) 
	= (\dl b_0 - (-1)^{k+1}\sigma(b_{\I}))\oplus\dI b_{\I}$.
Thus it suffices to show separately that 
$\dl b_0$, $\sigma(b_{\I})$, $\dI b_{\I}$
are in the $\mathcal{A}$-span of the basis elements in Lemma \ref{lem:basisg}.
For $\dl$ this follows from \eqref{eq:dlLeib},
$\dl(\xvecjap^k b_0)=0$ when $b_0\in\Omega^k(\diamond)\otimes\boosts$
respectively 
$\dl(\xvecjap^{k-1} b_0)=0$ when $b_0\in\Omega^k(\diamond)\otimes\transl$,
and $\p_{x^{\mu}}\xvecjap\in\mathcal{A}$.
Similarly for $\dI$ using \eqref{eq:dILeib} and 
$\dI (\xvecjap^{k+1}b_{\I})=0$
by \cite[(51), Remark 8]{MinkowskiPaper}.
For $\sigma$ use \cite[(54)]{MinkowskiPaper}.
\textit{Third item.}
The first formula follows from \eqref{eq:anchorleib}.
Consider $[b,b']$. First assume
$b=(\omega\otimes B^{\mu\nu})\oplus0$, 
$b'=(\omega'\otimes B^{\alpha\beta})\oplus0$
with $\omega\in\Omega^k(\diamond)$, $\omega'\in\Omega^{k'}(\diamond)$.
Then $[b,b']=U_0\oplus0$, where
\begin{align*}
U_0 
= (\omega\wedge \omega'\otimes [B^{\mu\nu},B^{\alpha\beta}]
 +\omega\wedge(\Lie_{B^{\mu\nu}}\omega')\otimes B^{\alpha\beta}
 -(\Lie_{B^{\alpha\beta}}\omega)\wedge \omega'\otimes B^{\mu\nu})
\end{align*}
by \cite[(49c)]{MinkowskiPaper}.
Each of the three terms is in the $\mathcal{A}$-span of the basis elements
in Lemma \ref{lem:basisg}:
For the first this is immediate; for the second this uses
the fact that
$\Lie_{B^{\mu\nu}}(dx^\gamma/\xvecjap)$ is 
a linear combination of $dx^0/\xvecjap,\dots,dx^3/\xvecjap$
with coefficients in $\mathcal{A}$,
and here we need the quadratic terms;
analogously for the third term.
The statement is checked analogously for 
$b=(\omega\otimes B^{\mu\nu})\oplus0$, 
$b'=(\omega'\otimes T_\alpha)\oplus0$ 
using $[B^{\mu\nu},T_\alpha]\in\transl$,
and for 
$b=(\omega\otimes T_{\mu})\oplus0$, 
$b'=(\omega'\otimes T_\alpha)\oplus0$ using $[T_\mu,T_\alpha]=0$. 
Now assume that $b=\omega\otimes B^{\mu\nu}\oplus0$ and 
$b'=0\oplus (b_{+}'\oplus b_{-}')$ 
where $b_{\pm}'\in \I_{\pm}^{k'+1}(\diamond)$.
Then
$[b,b']
=
0\oplus 
\omega(
\Lie_{B^{\mu\nu}}b_{+}'
\oplus
\Lie_{B^{\mu\nu}}b_{-}'
)$ by \cite[Definition 5]{MinkowskiPaper}.
Now use the Leibniz rule
for the Lie derivative and the formulas for $\isoI{2}$, $\isoI{3}$.
\eqref{eq:dglabasisanchorNEW}:
Write $b = (\omega\otimes\zeta)\oplus b_{\I}$
with $\omega\in\Omega^k(\diamond)$, $\zeta\in\Kil$.
Then 
$\anchorg(b)(f)=\omega\zeta(f)$,
thus the statement follows from the formulas 
for the Killing fields \eqref{eq:BT}.
\qed
\end{proof}
\begin{remark}\label{rem:lesssim}
We use the standard $\lesssim$ notation from \cite[p.~xiv]{Tao}:
If $X$ and $Y$ are two quantities then the notation $X\lesssim Y$
means that there exists a constant $C>0$ such that $X\le CY$.
If, in addition, $a_1,\dots,a_k$ are parameters then 
$X\lesssim_{a_1,\dots,a_k}Y$ means that there exists a constant
$C(a_1,\dots,a_k)>0$, that depends only on the parameters $a_1,\dots,a_k$,
such that $X\le C(a_1,\dots,a_k)Y$.
\end{remark}
The estimates in the next lemma 
are adapted to the application in Section \ref{sec:multilinearestimates}.
The proof uses, in particular, the fact that the elements in 
$\mathcal{A}|_{\diamonddata}
= \SPAN_{\R}\big\{ 1,\frac{x^i}{\xvecjap},\frac{x^i x^j}{\xvecjap^2}
\mid i,j=0\dots3\big\}$ are bounded in all 
$\Cb^{\ss,0}(\diamonddata)$-norms.
\begin{lemma}\label{lem:estpdbrdata}
Let $\ss\in\Z_{\ge1}$, $\delta,\delta_1,\delta_2\ge0$
with $\delta=\delta_1+\delta_2$ and $\udata,\udata'\in\gx(\diamond)$.
Let $u,u'\in\gx^1(\diamond)$ be the unique elements 
that satisfy $u|_{\diamonddata}=\udata$, $u'|_{\diamonddata}=\udata'$
and whose components relative to the basis in Lemma \ref{lem:basisg}
are constant in $\p_{x^0}$. Then:
\begin{subequations}\label{eq:estpdbrdata}
\begin{align}
\|\dg (u)|_{\diamonddata}\|_{\Hb^{\ss,\delta}}
&\lesssim_{\ss,\delta}
\|\udata\|_{\Hb^{\ss+1,\delta}} \label{eq:ddataest}
\\
\|\dg (u)|_{\diamonddata}\|_{\Cb^{3,\delta}}
&\lesssim_{\delta}
\|\udata\|_{\Cb^{4,\delta}} \label{eq:d4dataest}
\\
\|[u,u']|_{\diamonddata}\|_{\Hb^{\ss,\delta}}
&\lesssim_{\ss,\delta}
\|\udata\|_{\Hb^{\max\{5,\ss\},\delta_1}}
\|\udata'\|_{\Hb^{\ss+1,\delta_2}} \nonumber \\
&\qquad\qquad+
\|\udata\|_{\Hb^{\ss+1,\delta_1}}
\|\udata'\|_{\Hb^{\max\{5,\ss\},\delta_2}} \label{eq:brdataest}
\\
\|[u,u']|_{\diamonddata}\|_{\Cb^{3,\delta}}
&\lesssim_{\delta}
\|\udata\|_{\Cb^{4,\delta_1}}
\|\udata'\|_{\Cb^{4,\delta_2}} \label{eq:br3est}
\\
\|[u,u']|_{\diamonddata}\|_{\Hb^{\ss,\delta}}
&\lesssim_{\ss,\delta}
\|\udata\|_{\Cb^{\ss+1,\smash{\delta_1}}}
\|\udata'\|_{\Hb^{\ss+1,\smash{\delta_2}}} \label{eq:br4est}
\end{align}
\end{subequations}
where we abbreviate 
\smash{$\|{\cdot}\|_{\Hb^{\ss,\delta}(\diamonddata)}
	= \|{\cdot}\|_{\Hb^{\ss,\delta}}$}
	and analogously for \smash{$\Cb^{\ss,\delta}$}.
\end{lemma}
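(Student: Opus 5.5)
We write $\udata=\sum_j f_j b_j|_{\diamonddata}$, $\udata'=\sum_l f'_l b'_l|_{\diamonddata}$, where $b_j,b'_l$ are the basis elements of Lemma \ref{lem:basisg} and the $f_j,f'_l$ are the component functions. Since the components of $u,u'$ are constant in $\p_{x^0}$, every $\p_{x^0}$-derivative of a coefficient vanishes. Lemma \ref{lem:dglabasisNEW} then gives the pointwise identities
\[
\dg u=\tsum_j\big((\xvecjap\p_{x^i}f_j)\,\tfrac{dx^i}{\xvecjap}b_j+f_j\,\dg b_j\big),
\]
\[
[u,u']=\tsum_{j,l}\big(f_j\,\anchorg(b_j)(f'_l)\,b'_l\pm f'_l\,\anchorg(b'_l)(f_j)\,b_j+f_jf'_l\,[b_j,b'_l]\big).
\]
By \eqref{eq:dglabasisanchorNEW}, $\anchorg(b_j)(f'_l)=\sum_i\theta_i(\xvecjap\p_{x^i}f'_l)$ restricted to $\diamonddata$. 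Every coefficient that appears ($\theta_i$, the expansions of $\dg b_j$ and $[b_j,b'_l]$, and $\frac{dx^i}{\xvecjap}b_j$) is an $\mathcal A$-linear combination of basis elements. Restricted to $\diamonddata$ these coefficients are bounded in every $\Cb^{\ss,0}$-norm, since the vector fields $\xvecjap\p_{x^i}$ map $\mathcal A|_{\diamonddata}$ into functions that are bounded with all b-derivatives. So each component of $\dg(u)|_{\diamonddata}$ is a sum of terms $a\,(\xvecjap\p)^{\le1}f_j$, and each component of $[u,u']|_{\diamonddata}$ is a sum of terms $a\,(\xvecjap\p)^{\le1}f_j\cdot(\xvecjap\p)^{\le1}f'_l$ with at most one derivative in total, where $a\in\Cb^{\infty,0}$.

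\emph{Bookkeeping of the $\I$-components.} The norm \eqref{eq:normsoffset} measures the $\I$-part with one derivative fewer. We check which components of $\udata$ feed into which output components. The $\lx$-part of $\dg u$ contains $\sigma(b_{\I})$, which is zeroth order in the $\I$-coefficients. Derivatives of $\I$-coefficients appear only through $\dI$, which lands in the $\I$-output, where again one derivative fewer is measured. For the bracket, the $\lx$-output involves only $\lx$-inputs (cf.~the proof of Lemma \ref{lem:dglabasisNEW}), and the $\I$-output is $\lx$ times $\I$. Hence in every case the input regularity needed is $\ss+1$ in the offset norm of $\udata$.

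\emph{The estimates.}
\begin{itemize}
\item[(i)] For \eqref{eq:ddataest} and \eqref{eq:d4dataest}, apply $(\xvecjap\p)^{\alpha}$ with $|\alpha|\le\ss$ and use the Leibniz rule. The weight $\xvecjap^\delta$ commutes with $\xvecjap\p$ up to bounded factors, because $\xvecjap\p_{x^i}\xvecjap^\delta=\delta\,\xvecjap^{\delta}\,x^i/\xvecjap$. This gives the bound by $\|\udata\|_{\Hb^{\ss+1,\delta}}$, respectively by $\|\udata\|_{\Cb^{4,\delta}}$.
\item[(ii)] For the bilinear estimates, split $\xvecjap^\delta=\xvecjap^{\delta_1}\xvecjap^{\delta_2}$ between the two factors. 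After applying $(\xvecjap\p)^\alpha$ we get products $(\xvecjap\p)^{\beta}f\,(\xvecjap\p)^{\gamma}f'$ with $|\beta|+|\gamma|\le\ss+1$.
\begin{itemize}
\item \eqref{eq:br3est} is immediate in sup norms.
\item For \eqref{eq:br4est}, put the $\Cb$-norm (weight $\delta_1$) on the first factor and the $L^2$-norm (weight $\delta_2$) on the second, using the measure $dx/\xvecjap^3$.
\item For \eqref{eq:brdataest}, split by which factor carries at most $\lfloor(\ss+1)/2\rfloor$ derivatives. That factor is put in $L^\infty$ and bounded by the Sobolev estimate \eqref{eq:sobolev}, which requires $\Hb^{\lfloor(\ss+1)/2\rfloor+2}$ and this is $\le\Hb^{\max\{5,\ss\}}$. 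One checks $\lfloor(\ss+1)/2\rfloor+2\le\max\{5,\ss\}$ for all $\ss\ge1$. The other factor is kept in $L^2$ with at most $\ss+1$ derivatives.
\end{itemize}
In all these estimates the $\I$-components again lose one derivative in both input and output norm, consistently with the offset.
\end{itemize}

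The main obstacle is this derivative accounting. We must verify that no term of $\dg u$ or $[u,u']$ places a derivative on an $\I$-coefficient while landing in the $\lx$-output. If such a term existed, one extra $\I$-derivative would be needed and the offset in the norm would break. We handle this with the explicit description of $\dg$ and of the bracket from \cite{MinkowskiPaper}, as recorded in the proof of Lemma \ref{lem:dglabasisNEW}.
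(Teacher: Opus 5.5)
Your proposal is correct and follows the paper's argument: expand $\dg u$ and $[u,u']$ via Lemma \ref{lem:dglabasisNEW}, use that the $\mathcal{A}|_{\diamonddata}$-coefficients are bounded with all b-derivatives, and check via block diagonality of multiplication by $dx^\mu/\xvecjap$ and $\anchorg(0\oplus b_{\I})=0$ that derivatives of $\I$-coefficients only land in the $\I$-output, so the offset in \eqref{eq:normsoffset} is respected; then use \eqref{eq:sobolev} for \eqref{eq:brdataest}. Your explicit derivative split for \eqref{eq:brdataest}, with the check $\lfloor(\ss+1)/2\rfloor+2\le\max\{5,\ss\}$ done in offset-normalized derivative counts, fills in the details the paper leaves as ``easily checked''.
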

\begin{proof}
\eqref{eq:ddataest}, \eqref{eq:d4dataest}:
This follows from the first two items in Lemma \ref{lem:dglabasisNEW},
the fact that elements in $\mathcal{A}|_{\diamonddata}$
are bounded in $\Cb^{\ss+1,0}(\diamonddata)$, and the fact that, 
relative to the decomposition
$\gx^k(\diamond)=\lx^k(\diamond)\oplus\I^{k+1}(\diamond)$,
multiplication with $\frac{dx^\mu}{\xvecjap}$ is block diagonal.
\eqref{eq:brdataest}, \eqref{eq:br3est}, \eqref{eq:br4est}: 
This is easily checked using the
first and third items in Lemma \ref{lem:dglabasisNEW},
again using boundedness of elements in $\mathcal{A}|_{\diamonddata}$
and the block diagonal structure of 
multiplication with $\frac{dx^\mu}{\xvecjap}$,
and the fact that $\anchorg(0\oplus b_{\I})=0$ 
for all $b_{\I}\in\I(\diamond)$.
For \eqref{eq:brdataest} we also use the Sobolev estimate \eqref{eq:sobolev}.
\qed
\end{proof}

\section{Chain homotopy for the linearized constraints}
This section contains the linear theory used to solve \eqref{eq:MCinfintro}.
In Section \ref{sec:cdef} we recall the definition of 
the complex $(\cx(\diamonddata),\dc)$ in \eqref{eq:cintro}
from \cite[Lemma 85]{Thesis}\footnote{
Beware that in \cite[Lemma 85]{Thesis} this complex is called
$(\gxdata,d_{\gxdata})$.
In the present paper the notation 
$\gxdata$ is used for the pullback bundle of 
$\gx$ along $\diamonddata\hookrightarrow\diamond$.}.
In Section \ref{sec:cbasis} we fix a basis and norms for $\cx(\diamonddata)$.
In Section \ref{sec:Gc} we use \cite{HomotopyPaper} to construct a  
chain homotopy for $(\cx(\diamonddata),\dc)$,
which in particular provides a right inverse (up to integrability
conditions) of the linear operator in \eqref{eq:MCinfintro}.

\subsection{A complex for the linearized constraints}
\label{sec:cdef}
We define the complex $(\cx(\diamonddata),\dc)$.
The definitions will be made on $\diamonddata=\R^3$, 
we however note that the bundle $\cx$ and the 
and differential operator $\dc$ are in fact the 
restrictions of a smooth bundle respectively 
differential operator on the three-sphere $S^3$. 
See \cite[Lemma 85]{Thesis} for definitions on $S^3$. 

Let $\Omega^k(\diamonddata)$ and $\OmegaC^k(\diamonddata)$ 
be the smooth real respectively complex differential 
$k$-forms on $\diamonddata$.
Recall that $\Kil$ is the 
ten-dimensional Lie algebra of Killing fields for the Minkowski
metric.
Let $\dens{s}(\diamonddata)$ be the 
module of sections of the $s$-density bundle on $\diamonddata$,
where we use the convention that on 
$\diamonddata$ one integrates $3$-densities.
We denote by $\mu^{s}_{\gminkdata}$
the $s$-density associated to $\gminkdata$, 
defined by $\mu_{\gminkdata}^s(\p_{x^1}\wedge\p_{x^2}\wedge\p_{x^3})=1$.
\begin{definition}\label{def:lxdef}
Define $\lxc(\diamonddata) = \oplus_{k=0}^3\lxc^k(\diamonddata)$
where 
$$\lxc^k(\diamonddata)=\Omega^k(\diamonddata)\otimesRR\Kil$$
Define $\dlxc:\lxc^k(\diamonddata)\to\lxc^{k+1}(\diamonddata)$
by 
$\dlxc(\omega\otimes\zeta) = \ddR\omega\otimes\zeta$
with $\ddR$ the de Rham differential.
\end{definition}
\begin{definition}\label{def:icdef}
Define the following $C^\infty(\diamonddata,\C)$-modules:\footnote{
In \cite{Thesis} we use the notation
$\underline{\I}_{\pm}^k$ instead of $\iC^k$,
see \cite[Definition 22]{Thesis}.}
\begin{itemize}
\item 
$\iC^2(\diamonddata)
  \subset \dens{-1}(\diamonddata)\otimesCinf S^2(\OmegaC^2(\diamonddata))$,
where $S^2$ denotes the symmetric tensor product over $C^\infty$,
is given by all 
traceless elements, that is all 
$u$ with
\[ 
\textstyle\sum_{a,b,i,j=1}^3 
\eta^{ab}\eta^{ij} u(\p_{x^a},\p_{x^i},\p_{x^b},\p_{x^j}) = 0
\]
\item 
$\iC^3(\diamonddata) 
  = \dens{-1}(\diamonddata)\otimesCinf \OmegaC^3(\diamonddata)\otimesCinf\OmegaC^2(\diamonddata)$.
\end{itemize}
For $k=2,3$ let
$ 
\ix^k(\diamonddata) = \big(\iC^k(\diamonddata)\oplus\iC^k(\diamonddata)\big)_{\R}
$
be given by all elements of the form 
$\up\oplus\um$ with $\upm\in\iC^k(\diamonddata)$ and $\um=\upbar$,
where the bar is complex conjugation.
\end{definition}
\begin{remark}[Sweedler's notation]\label{rem:Sweedler}
Every element in $u \in \iC^k(\diamonddata)$ can be written 
as a finite sum of product elements, that is,
$$u = \textstyle\sum_{i=1}^n \mu_i \otimes \omega_i\otimes\omega_i'$$
for some $n\in\N$ and elements
$\mu_i\in\dens{-1}(\diamonddata)$,
$\omega_i\in\OmegaC^k(\diamonddata)$, $\omega'_i\in\OmegaC^2(\diamonddata)$.
We will abbreviate this sum by $u=\mu\otimes\omega\otimes\omega'$,
which is known as Sweedler's notation.
\end{remark}
We will use the multiplication 
$\OmegaC^1(\diamonddata)\times\iC^2(\diamonddata)\to\iC^3(\diamonddata)$
given by 
\begin{equation}
\label{eq:OmegaIcmult}
\nu u = \mu \otimes(\nu\wedge\omega)\otimes\omega'
\end{equation}
where we write $u = \mu\otimes\omega\otimes\omega'$
using Sweedler's notation in Remark \ref{rem:Sweedler}.

\begin{definition}
Define the map $\diC:\iC^2(\diamonddata)\to\iC^3(\diamonddata)$ by\footnote{%
In \cite{Thesis} we use the notation $d_{\underline{\I}}$
instead of $\diC$, see \cite[(333)]{Thesis}.
} 
\begin{equation}\label{eq:diCdef}
\diC(u)
=
	\textstyle\sum_{i=1}^3 dx^i
	\Big( 
	\mu_{\gminkdata}^{-1}\otimes \nabla^{\gminkdata}_{\p_{x^i}}(\mu_{\gminkdata}^1\otimes u) \Big)
\end{equation}
where $\nabla^{\gminkdata}$ 
is the Levi-Civita connection of $\gminkdata$,
where $\mu_{\gminkdata}^1\otimes u
\in S^2(\OmegaC^2(\diamonddata))$ using 
the canonical isomorphism
$\dens{1}(\diamonddata)\otimesCinf\dens{-1}(\diamonddata)
\simeq C^\infty(\diamonddata)$,
and where we use the multiplication \eqref{eq:OmegaIcmult}.
Define $\di:\ix^2(\diamonddata)\to\ix^3(\diamonddata)$ by
$\di(\up\oplus\um) = \diC(\up)\oplus\diC(\um)$.
\end{definition}

We relate $\di$ to the divergence
on symmetric traceless matrices.
Let 
\begin{align*}
\STMat_0&\simeq C^\infty(\diamonddata,\R^5)
	&&
	\text{symmetric traceless 3-by-3 matrices}\\
\VEC_0&= C^\infty(\diamonddata,\R^3)
	&&
	\text{column vectors with three entries}
\end{align*}
Let $\divST:\STMat_0\to\VEC_0$
be the symmetric traceless divergence given by 
\[ 
\divST(m)_j = \textstyle\sum_{i=1}^3\p_{x^i}m_{ij}
\]
\begin{lemma}\label{lem:diandDIV}
Define 
$\isoDIV_2:\STMat_0\oplus\STMat_0 \to \ix^2(\diamonddata)$
and
$\isoDIV_3:\VEC_0\oplus \VEC_0 \to \ix^3(\diamonddata)$
by 
\begin{align*}
\isoDIV_2(m\oplus n) &=
(m_{jk}+in_{jk})\eps_{jab}\eps_{kpq} 
	\mu_{\gminkdata}^{-1}\otimes dx^a\wedge dx^b \otimes dx^p\wedge dx^q
	\oplus\text{cc}\\
\isoDIV_3(v\oplus w) &=
2(v_j + iw_j) \eps_{jpq} 
	\mu_{\gminkdata}^{-1}\otimes dx^1\wedge dx^2\wedge dx^3 \otimes dx^p\wedge dx^q
	\oplus\text{cc}
\end{align*}
where the notation $u\oplus\text{cc}$ means $u\oplus\overline{u}$;
repeated indices are implicitly summed over $1,2,3$;
and $\eps_{jab}$ is the Levi-Civita symbol with $\eps_{123}=1$.
Then $\isoDIV_2$, $\isoDIV_3$ are isomorphisms of vector spaces, 
and for all $m,n\in\smash{\STMat_0\oplus\STMat_0}$:
\begin{align}\label{eq:isointertw}
\di\big(\isoDIV_2(m\oplus n)\big)
	\;=\; \isoDIV_3 \big(\divST(m)\oplus \divST(n)\big)
\end{align}
\end{lemma}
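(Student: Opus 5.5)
The lemma is a pointwise linear-algebra statement plus a first order differential identity, so I will argue in two steps: first bijectivity of $\isoDIV_2$, $\isoDIV_3$ by a rank count and injectivity, then the intertwining relation \eqref{eq:isointertw} by a direct computation in the flat coordinates $\vec{x}$. Since $\di$ acts componentwise on $\up\oplus\um$, and both $\isoDIV_k$ are defined as ``$u\oplus\text{cc}$'' with complex coefficients $m+in$ (resp.\ $v+iw$), it suffices throughout to work with the $\iC$-component. I will treat a complex symmetric traceless matrix $M=m+in$ and a complex vector $V=v+iw$; the real structure is then automatic.

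\textbf{Isomorphism.} For $\iC^3(\diamonddata)$ note that $\dens{-1}\otimes\OmegaC^3\otimes\OmegaC^2$ has complex rank $3$. The forms $\eps_{jpq}dx^p\wedge dx^q = 2\star dx^j$ for $j=1,2,3$ form a basis of $\OmegaC^2$, so $V\mapsto 2V_j\eps_{jpq}\mu^{-1}\otimes dx^{123}\otimes dx^p\wedge dx^q$ is a $C^\infty(\diamonddata,\C)$-linear bijection $C^\infty(\diamonddata,\C^3)\to\iC^3(\diamonddata)$. Restricting to real pairs $(v,w)$ and adding the conjugate gives the bijection onto $\ix^3$. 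For $\iC^2$, write $\theta^j=\eps_{jab}dx^a\wedge dx^b=2\star dx^j$. Then $S^2(\OmegaC^2)$ is identified with complex symmetric $3\times3$ matrices via $M\mapsto M_{jk}\theta^j\otimes\theta^k$. I will check that the trace condition in Definition \ref{def:icdef} is, up to a nonzero factor, $\eta^{jk}M_{jk}$. This reduces to the contraction $\eta^{ab}\eta^{ij}\theta^{j'}(\p_a,\p_i)\theta^{k'}(\p_b,\p_j)$, which equals $4\eps_{j'ai}\eps_{k'ai}=8\delta_{j'k'}$. So traceless symmetric $M$ corresponds exactly to $\iC^2$, which has rank $5$, and $\isoDIV_2$ is bijective.

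\textbf{Intertwining.} In Cartesian coordinates, $\nabla^{\gminkdata}_{\p_{x^i}}$ kills $dx^a$, so $\mu^1\otimes\isoDIV_2(M)=M_{jk}\theta^j\otimes\theta^k$ has covariant derivative $(\p_iM_{jk})\theta^j\otimes\theta^k$. The definition \eqref{eq:diCdef} together with the multiplication \eqref{eq:OmegaIcmult} then gives
\[
\diC\isoDIV_2(M)=\p_iM_{jk}\,\mu^{-1}\otimes(dx^i\wedge\theta^j)\otimes\theta^k .
\]
Next I use $dx^i\wedge\theta^j=\eps_{jab}\,dx^i\wedge dx^a\wedge dx^b=2\delta_{ij}\,dx^1\wedge dx^2\wedge dx^3$. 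The result is $2(\p_jM_{jk})\mu^{-1}\otimes dx^{123}\otimes\eps_{kpq}dx^p\wedge dx^q=\isoDIV_3(\divST M)$, using the symmetry of $M$ to match the index convention of $\divST$.

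\textbf{Main obstacle.} The only delicate point is bookkeeping of conventions: the precise meaning of the trace contraction on $S^2(\OmegaC^2)$ (evaluating 2-forms on pairs of vectors), and that $\mu^{-1}\otimes\mu^1$ cancels without a coordinate-dependent factor. Both are harmless in Cartesian coordinates where $\mu_{\gminkdata}$ is constant. I will verify the identity $dx^i\wedge\eps_{jab}dx^a\wedge dx^b=2\delta_{ij}dx^{123}$ explicitly for one index, e.g.\ $j=1$, which yields the factor $2$ in $\isoDIV_3$.
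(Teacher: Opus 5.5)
Your proposal is correct and follows the paper's approach: the paper states that bijectivity is immediate from Definition~\ref{def:icdef} and that \eqref{eq:isointertw} holds by direct calculation, and your computations supply exactly those details, namely the trace contraction giving $8\,\mathrm{tr}(M)$ and the identity $dx^i\wedge\eps_{jab}dx^a\wedge dx^b=2\delta_{ij}\,dx^1\wedge dx^2\wedge dx^3$. One small remark: you do not need the symmetry of $M$ at the last step, because $\p_jM_{jk}$ already contracts the derivative with the first index, exactly as in $\divST$.
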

\begin{proof}
The fact that $\isoDIV_k$ indeed map to $\ix^k(\diamonddata)$,
and are isomorphisms, is immediate from Definition \ref{def:icdef}.
The identity \eqref{eq:isointertw} holds by direct calculation.
\qed
\end{proof}
We define the maps $\uinj$ in \eqref{eq:cintro}.
Let 
\begin{align*}
\confKilS \;=\; \{ \underline{\zeta} \in \Gamma(T\diamonddata) \mid
	\exists f\in C^\infty(\diamonddata):\ 
	\Lie_{\underline{\zeta}}\gminkdata = f \gminkdata \}
\end{align*}
be the ten-dimensional Lie algebra of conformal Killing fields for $\gminkdata$.
A basis is
\begin{align}\label{eq:confKilS3}
\begin{aligned}
\underline{T}_a 
	&= \p_{x^a} \\
\underline{S}   
	&= \textstyle \sum_{i=1}^3 x^i \p_{x^i}\\
\underline{L}^a 
	&= x^b \p_{x^c} - x^c \p_{x^b}\\
\underline{K}^a 
	&= \textstyle 2 x^a \sum_{i=1}^3 x^i \p_{x^i} - |\vec{x}|^2\p_{x^a}
\end{aligned}
\end{align}
where $(abc)\in\{(123),(231),(312)\}$ is a cyclic index.
They correspond to translations, conformal scaling, 
rotations, and special conformal transformations.

\begin{definition}\label{def:Hodge}
Define the $C^\infty$-linear map 
$\ast_{\iC}:\iC^k(\diamonddata) \to \OmegaC^k(\diamonddata)\otimesCinf\OmegaC^1(\diamonddata)$,
\[ 
\ast_{\iC} u
=
\omega \otimes (\ast_{\gminkdata}\omega')
\]
where \smash{$u = \mu_{\gminkdata}^{-1}\otimes \omega \otimes\omega'$}
using Sweedler's notation in Remark \ref{rem:Sweedler},
and 
where \smash{$\ast_{\gminkdata}$} is the Hodge star operator\footnote{
While $\ast_{\gminkdata}$ depends on $\gminkdata$,
the map $\ast_{\iC}$ 
only depends on the conformal metric $[\gminkdata]$.
We will not use this here, it is however important when
extending the definitions to $S^3$.} 
associated to \smash{$\gminkdata$}.
\end{definition}
For $u\in\iC^k(\diamonddata)$ and a vector field $X$ on $\diamonddata$,
we will denote by 
\begin{equation} \label{eq:insertnotation}
(\ast_{\iC} u)(X) \in \OmegaC^k(\diamonddata)
\end{equation}
the form given by inserting $X$ into the second factor of
$\ast_{\iC} u\in \OmegaC^k(\diamonddata)\otimesCinf\OmegaC^1(\diamonddata)$.
\begin{remark}\label{rem:hodgebasis}
Let $m,n\in\STMat_0$ and $v,w\in\VEC_0$.
Denote
\begin{align*}
\isoDIV_2(m\oplus n) =\up\oplus\um && 
\isoDIV_3(v\oplus w) =\up'\oplus\um'
\end{align*}
with $\upm\in\iC^2(\diamonddata)$ and $\upm'\in\iC^3(\diamonddata)$.
For every vector field $X$ on $\diamonddata$:
\begin{align*}
\begin{aligned}
(\ast_{\iC}\up)(X) 
&=
2\big((m\cdot X)_j+i(n\cdot X)_{j}\big)
	\eps_{jab}dx^a\wedge dx^b  \\
(\ast_{\iC}\up')(X) 
&=
4\big((v\cdot X) +i(w\cdot X\big))dx^1 \wedge dx^2\wedge dx^3 
\end{aligned}
\end{align*}
where we sum over $j,a,b=1,2,3$,
on the left hand sides we use the notation \eqref{eq:insertnotation},
and on the right hand sides we view $X=X^\ell \p_{x^\ell}$ as a column vector.
\end{remark}
Let $\Kil_\C=\Kil\otimesRR\C$ be the complexification of $\Kil$.
Let $\lxc_\C(\diamonddata)=\Omega_\C(\diamonddata)\otimes_{\C}\Kil_\C$.
\begin{definition}\label{def:uinj}
Define $\uinj_{\pm}: \iC^k(\diamonddata) \to
\lxc_{\C}^k(\diamonddata)$
by 
\[
\uinj_{\pm}(u) 
	= 
	\pm i (\ast_{\iC} u)(\underline{S})\otimes T_0
	+
	(\ast_{\iC} u)(\underline{L}^a)\otimes T_a
	\mp i
	(\ast_{\iC} u)(\underline{T}_a)\otimes B_{\pm}^a
\]
where we sum over $a=1,2,3$, use the notation \eqref{eq:insertnotation},
and the boosts and translations \eqref{eq:BT}
with $B^{a}_{\pm} = B^{0a} \pm i B^{bc}$
for $(abc)\in\{(123),(231),(312)\}$. 

Define $\uinj: \ix^k(\diamonddata) \to \lxc^k(\diamonddata)$ by 
$\uinj(\up\oplus\um) = \uinj_{+}(\up)+\uinj_{-}(\um)$.
\end{definition}
The maps $\uinj_{\pm}$ in the previous definition
agree with those in \cite[Lemma 84]{Thesis},
by the first item in \cite[Lemma 89]{Thesis}.
In particular, $\uinj_{\pm}$ are chain maps,
i.e.~for all $u\in\iC^2(\diamonddata)$ one has
$\uinj_{\pm}(\diC u) = (\ddR\otimes\one)(\uinj_{\pm}u)$
with $\ddR$ the de Rham differential.

We can now define the complex \eqref{eq:cintro}.
\begin{definition}\label{eq:defc}
Define $\cx(\diamonddata) = \oplus_{k=0}^3\cx^k(\diamonddata)$ where
\begin{equation}
\cx^k(\diamonddata) 
	= \lxc^k(\diamonddata) \oplus \ix^{k+1}(\diamonddata)
	\label{eq:ckdirsum}
\end{equation}
Define the $\R$-linear map
	$\dc:\cx^k(\diamonddata)\to\cx^{k+1}(\diamonddata)$ by
\begin{equation*}
\dc(\co\oplus\ci)
	=
	\left(\dlxc(\co) -(-1)^{k+1}\uinj(\ci)\right)
	\oplus
	(\di\ci)
\end{equation*}
\end{definition}

Each of
$\lxc^k(\diamonddata)$,
$\ix^k(\diamonddata)$,
$\cx^k(\diamonddata)$
is the module of sections, over $\diamonddata$, 
of a trivial vector bundle which we denote by 
$\lxc^k$, $\ix^k$, $\cx^k$ respectively.

The map $\dc$ is a differential \cite[Lemma 85]{Thesis},
$$
\dc^2=0
$$

We note that one also has a module multiplication
$\Omega^q(\diamonddata) \times \cx^k(\diamonddata)
	\to \cx^{q+k}(\diamonddata)$, 
and $\dc$ satisfies a Leibniz rule relative to this multiplication.

We explain the relation between 
$(\cx(\diamonddata),\dc)$ and $(\gx(\diamond),\dg)$,
see also Lemma \ref{lem:pih}.
See \eqref{eq:DPdc} for the 
relation between $\dc$ and $D\Pconstraints$.
\begin{definition}\label{def:pdef}
Define the $\R$-linear map 
$\pSHS: \gx^k(\diamond) \to \cx^k(\diamonddata)$ 
given by 
\begin{align}\label{eq:pullbackg}
\pSHS((\omega\otimes\zeta)\oplus(\up\oplus\um)) 
	= 
	((p^*\omega)\otimes\zeta)\oplus((p^*\up)\oplus(p^*\um)) 
\end{align}
where $\omega\otimes\zeta\in\Omega^k(\diamond)\otimesRR\Kil$,
 $\upm\in\I^{k+1}_{\pm}(\diamond)$,
and $p^*$ denotes the pullback of forms and densities\footnote{
The pullback $\dens{s}(\diamond)\to\dens{s}(\diamonddata)$
is given by $f \mu_{\gmink}^{s} \mapsto (p^*f) \mu_{\gminkdata}^{s}$,
see also \cite[Remark 8]{Thesis}.} along $\diamonddata\hookrightarrow\diamond$.
One indeed has $p^*\upm\in\iC^{k+1}(\diamonddata)$
by \cite[Lem.~83]{Thesis}.
\end{definition}
By \cite[Lemma 83, 86]{Thesis}
the map $\pSHS$ is surjective and a chain map,
\begin{equation}\label{eq:dppd}
\pSHS \dg = \dc \pSHS
\end{equation}
Since $\pSHS$ is given by pullback, it only depends
on the restriction of the input to $\diamonddata$.
Thus it restricts to a surjective map defined $\diamonddata$,
that we denote by 
\begin{equation}\label{eq:pdatadef}
\pdata:\; \gxdata^k(\diamonddata)\to\cx^k(\diamonddata)
\end{equation}
Recall 
$\gxG^k(\diamond)\subset\gx^k(\diamond)$ in \eqref{eq:gGsum}.
The map $\pdata$ restricts to a (fiberwise) isomorphism
$\gxdataG^k(\diamonddata)\xrightarrow{\sim}\cx^k(\diamonddata)$,
we denote its (fiberwise) inverse
by 
\begin{equation}\label{eq:idatadef}
\idata:\; \cx^k(\diamonddata)\xrightarrow{\sim}\gxdataG^k(\diamonddata)
\end{equation}

\subsection{Basis and norms}\label{sec:cbasis}
We fix a basis and norms for the space $\cx(\diamonddata)$.
\begin{table}[t]%
\def\arraystretch{1.3}
\centering
\footnotesize{
\begin{tabular}{c|cl|c}
Space & Basis elements & Range of indices & \smash{\#} \\
\hline
$\Omega^k(\diamonddata)\otimesRR\boosts$
& 
$
\tfrac{dx^{i_1}}{\xvecjap} \wedge \cdots\wedge \tfrac{dx^{i_k}}{\xvecjap}
\otimes B^{\mu\nu}$ 
&
$\begin{aligned}
&1\le i_1<\cdots<i_k \le 3\\[-1mm]
&0\le \mu < \nu \le 3
\end{aligned}$
& $6\ngG^{\Omega}_k$
\\
\hline
$\Omega^k(\diamonddata)\otimesRR\transl$
&
$\xvecjap
\tfrac{dx^{i_1}}{\xvecjap} \wedge \cdots\wedge \tfrac{dx^{i_k}}{\xvecjap}
\otimes T_\mu$
&
$\begin{aligned}
&1\le i_1<\cdots<i_k \le 3\\[-1mm]
&0\le \mu \le 3
\end{aligned}$
& $4\ngG^{\Omega}_k$\\
\hline
$\lxc^k(\diamonddata)$
&
\multicolumn{2}{c|}{combine basis elements in previous two rows}
& $10\ngG^{\Omega}_k$
\\
\hline
$\ix^2(\diamonddata)$
&
$\isoDIV_2(\tfrac{1}{\xvecjap^2} h_\ell \oplus 0)$, 
$\isoDIV_2(0\oplus \tfrac{1}{\xvecjap^2} h_\ell)$
&
$1\le \ell \le 5$
& 
$\ngG^{\I}_2$\\
$\ix^3(\diamonddata)$
&
$\isoDIV_3(\tfrac{1}{\xvecjap^3} e_\ell\oplus 0)$,
$\isoDIV_3(0\oplus \tfrac{1}{\xvecjap^3} e_\ell)$
&
$1\le \ell \le 3$
&
$\ngG^{\I}_3$\\
\hline
$\cx^k(\diamonddata)$
&
\multicolumn{2}{c|}{combine basis elements in previous two rows,
using \eqref{eq:ckdirsum}}
& $\ngG_k$
\end{tabular}
}
\captionsetup{width=115mm}
\caption{
Here $k=0\dots3$.
The maps $\isoDIV_2$, $\isoDIV_3$ are defined in Lemma \ref{lem:diandDIV};
$h_\ell$, $e_\ell$ are defined in \eqref{eq:hdef};
and the numbers in the last column 
are defined in \eqref{eq:nm}.}
\label{tab:cbas}
\end{table}%
\begin{lemma}\label{lem:cbasis}
The elements in Table \ref{tab:cbas} are $C^\infty(\diamonddata)$-bases.
The basis elements for $\cx^k(\diamonddata)$
are equal to the 
elements obtained by applying $\pSHS$
to the basis elements of $\gxG(\diamond)$ in Lemma \ref{lem:basisg}.
Further, relative to the basis in Lemma \ref{lem:basisgdata} and the
basis in this lemma, the map $\pdata:
\gxdata^k(\diamonddata)\to\cx^k(\diamonddata)$ in \eqref{eq:pdatadef}
is given by the matrix
\begin{equation}\label{eq:pmatrix}
\begin{pmatrix}
\one_{\ngG_k\times \ngG_k} & 0_{\ngG_k \times \ngG_{k-1}}
\end{pmatrix}
\end{equation}
and the map $\idata$ in \eqref{eq:idatadef} is given by the identity matrix.
\end{lemma}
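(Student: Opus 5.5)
The proof is essentially by direct inspection, organized in three steps.

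\emph{Step 1: the elements in Table \ref{tab:cbas} are $C^\infty(\diamonddata)$-bases.} For $\lxc^k(\diamonddata)=\Omega^k(\diamonddata)\otimesRR\Kil$ the forms $dx^{i_1}\wedge\cdots\wedge dx^{i_k}$, $i_1<\cdots<i_k$, are a $C^\infty$-basis of $\Omega^k(\diamonddata)$. Rescaling them by the nowhere vanishing smooth functions $\xvecjap^{-k}$ and $\xvecjap^{1-k}$ preserves this property. Tensoring with the basis $B^{\mu\nu}$, $T_\mu$ of $\Kil$ then gives a basis with $10\ngG^\Omega_k$ elements.

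For $\ix^2(\diamonddata)$ and $\ix^3(\diamonddata)$, Lemma \ref{lem:diandDIV} says $\isoDIV_2$, $\isoDIV_3$ are isomorphisms, and they are $C^\infty(\diamonddata)$-linear by their formulas. The $h_\ell$ form a basis of symmetric traceless matrices and the $e_\ell$ a basis of $\R^3$. After rescaling by $\xvecjap^{-2}$ respectively $\xvecjap^{-3}$, the images are bases with $10$ respectively $6$ elements. Combining via \eqref{eq:ckdirsum} gives the basis of $\cx^k(\diamonddata)$.

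\emph{Step 2: $\pSHS$ maps the basis of $\gxG^k(\diamond)$ to these elements.} For the $\lx$ part this is immediate: the pullback $p^*$ along $\diamonddata\hookrightarrow\diamond$ sends $dx^i\mapsto dx^i$ for $i=1,2,3$, sends $\xvecjap$ to itself ($\xvecjap$ does not involve $x^0$), and leaves the factor in $\Kil$ untouched.

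The only real computation is the $\I$ part, where we must check
\[
p^*\isoI{2}(m\oplus n)=\isoDIV_2(m|_{\diamonddata}\oplus n|_{\diamonddata}),
\qquad
p^*\isoI{3}(v\oplus w)=\isoDIV_3(v|_{\diamonddata}\oplus w|_{\diamonddata}),
\]
which we verify on each summand. Pulling back $\theta_+^j$ kills $dx^0\wedge dx^j$ and leaves $i\,dx^a\wedge dx^b=\tfrac{i}{2}\eps_{jab}dx^a\wedge dx^b$. Hence $-4\,\theta_+^j\otimes\theta_+^k$ pulls back to $-4\cdot(\tfrac i2)^2\,\eps_{jab}\eps_{kpq}\,dx^a\wedge dx^b\otimes dx^p\wedge dx^q$, and $-4\cdot(\tfrac i2)^2=1$, which matches $\isoDIV_2$.

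For $\isoI{3}$, the summand containing $dx^0$ pulls back to zero. In the remaining summand, $-4i\cdot\tfrac i2\,\eps_{jpq}=2\eps_{jpq}$, which matches $\isoDIV_3$.

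The density factor $\mu_{\gmink}^{-1}$ pulls back to $\mu_{\gminkdata}^{-1}$ by the convention stated in the footnote to Definition \ref{def:pdef}. This constant-tracking is the one spot where an error could occur, so it deserves the most care.

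\emph{Step 3: the matrices of $\pdata$ and $\idata$.} By Lemma \ref{lem:basisgdata}, the basis of $\gxdata^k(\diamonddata)$ consists of two groups: the basis of $\gxdataG^k$ (the first $\ngG_k$ elements), and $dx^0/\xvecjap$ times the basis of $\gxdataG^{k-1}$. Under $\pdata$:
\begin{itemize}
\item the first group maps to the basis of $\cx^k(\diamonddata)$ by Step 2;
\item the second group maps to zero, because $p^*dx^0=0$ and $\pSHS$ is given by pullback, which respects the module multiplication.
\end{itemize}
This gives the matrix \eqref{eq:pmatrix}. Since $\idata$ is the inverse of the restriction of $\pdata$ to $\gxdataG^k$, and that restriction is the identity matrix in these bases, $\idata$ is also given by the identity matrix.
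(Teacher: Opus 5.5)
Your proposal is correct and follows the same route as the paper. The paper also gets the bases by inspection together with Lemma~\ref{lem:diandDIV}, uses the identity that the pullback of $\isoI{k}(m\oplus n)$ equals $\isoDIV_k(m|_{\diamonddata}\oplus n|_{\diamonddata})$, and reads off the matrix of $\pdata$ from $p^*dx^0=0$; the only difference is that you write out the constant checks $-4(\tfrac i2)^2=1$ and $-4i\cdot\tfrac i2=2$, which the paper leaves implicit.
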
%
\begin{proof}
The fact that the given elements are indeed bases 
is immediate from the definitions and Lemma \ref{lem:diandDIV}.
The maps $\isoI{k}$ (see \eqref{eq:isoIdef}) and $\isoDIV_k$
are defined such that the pullback of $\isoI{2}(m\oplus n)$
along $\diamonddata\hookrightarrow\diamond$ equals 
$\isoDIV_2(m|_{\diamonddata}\oplus n|_{\diamonddata})$,
and analogously for $k=3$.
Using this, the remaining statements easily follow.
\qed
\end{proof}
\begin{definition}\label{def:cnorms}
For $\ss\in\Z_{\ge0}$, $\delta\in\R$ we denote by 
$\Hb^{\ss,\delta}(\diamonddata,\lxc^k)$ and 
$\Hb^{\ss,\delta}(\diamonddata,\ix^k)$
the spaces of sections of $\lxc^k$ 
respectively $\ix^k$ whose components
relative to the bases in Lemma \ref{lem:cbasis}
are in $\Hb^{\ss,\delta}(\diamonddata)$, see Definition \ref{def:normsdata},
with norm \smash{$\|{\cdot}\|_{\Hb^{\ss,\delta}(\diamonddata)}$}
given by applying \eqref{eq:Hbnorms} componentwise
and then taking the $\ell^2$-sum.
Moreover, for $\ss\in\Z_{\ge1}$ define 
\smash{$
\Hb^{\ss,\delta}(\diamonddata,\cx^k)
=
\Hb^{\ss,\delta}(\diamonddata,\lxc^k)
\oplus 
\Hb^{\ss-1,\delta}(\diamonddata,\ix^{k+1})
$},
where the norm is given by
$$\|c\|_{\Hb^{\ss,\delta}(\diamonddata)}
=
\|\co\|_{\Hb^{\ss,\delta}(\diamonddata)}
+
\|\ci\|_{\Hb^{\ss-1,\delta}(\diamonddata)}
\qquad\quad
c=\co\oplus\ci
$$
We make analogous definitions for 
$\Cb^{\ss,\delta}(\diamonddata,\lxc^k),
\Cb^{\ss,\delta}(\diamonddata,\ix^k),
\Cb^{\ss,\delta}(\diamonddata,\cx^k)$, using \eqref{eq:Cbnorms}.
\end{definition}
Recall $\mathcal{A}$ in \eqref{eq:quotients},
note that 
\smash{$\mathcal{A}|_{\diamonddata} 
=  	\SPAN_{\R}\big\{ 1,\frac{x^i}{\xvecjap},\frac{x^i x^j}{\xvecjap^2}
	\mid i,j=0\dots3\big\}$}.
\begin{lemma}\label{lem:diffbop}
Relative to the basis in Lemma \ref{lem:cbasis},
the map $\dc$ is a matrix differential operator whose entries
are in the $\mathcal{A}|_{\diamonddata}$-span of 
$1, \xvecjap\p_{x^1},\xvecjap\p_{x^2},\xvecjap\p_{x^3}$.
Moreover, for all $\ss\in\Z_{\ge1}$, $\delta\in\R$ 
and all $c\in\cx(\diamonddata)$:
\begin{align}\label{eq:dcest}
\|\dc c\|_{\Hb^{\ss,\delta}(\diamonddata)} 
\lesssim_{\ss,\delta} \|c\|_{\Hb^{\ss+1,\delta}(\diamonddata)}
\end{align}
\end{lemma}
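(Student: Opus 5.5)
The approach is to use the facts that $\pSHS$ is a surjective chain map, $\pSHS\dg=\dc\pSHS$ by \eqref{eq:dppd}, and that relative to the chosen bases $\pdata$ is the block matrix \eqref{eq:pmatrix} while $\idata$ is the identity (Lemma \ref{lem:cbasis}). Together these let me read off $\dc$ as a block of $\dg$. Concretely, given $c\in\cx^k(\diamonddata)$, write $c=\sum_j f_j\,\underline{b}_j$ with $f_j\in C^\infty(\diamonddata)$ and $\underline{b}_j$ the basis elements of Table \ref{tab:cbas}. By Lemma \ref{lem:cbasis}, $\underline{b}_j=\pSHS(b_j)$ for the corresponding basis elements $b_j$ of $\gxG^k(\diamond)$ from Lemma \ref{lem:basisg}. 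Extend $f_j$ to $\diamond$ constantly in $x^0$ and set $u=\sum_j f_j b_j\in\gx^k(\diamond)$. Then $\pSHS u=c$, so $\dc c=\pSHS(\dg u)$.

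Next I will apply the second item of Lemma \ref{lem:dglabasisNEW}:
\[
\dg u=\tsum_j\big((\xvecjap\p_{x^\mu}f_j)\,(\tfrac{dx^\mu}{\xvecjap}b_j)+f_j\,\dg b_j\big).
\]
Here $\p_{x^0}f_j=0$, so only $\mu=1,2,3$ contribute. By the first item of that lemma, $\frac{dx^\mu}{\xvecjap}b_j$ lies in the $\R$-span of basis elements of Lemma \ref{lem:basisg}. By its second item, $\dg b_j$ lies in the $\mathcal{A}$-span. Applying $\pSHS$, which depends only on the restriction to $\diamonddata$ and acts by \eqref{eq:pmatrix} on the basis (it sends basis elements of $\gxG$ to those of Table \ref{tab:cbas} and kills those carrying a factor $dx^0/\xvecjap$), the components of $\dc c$ are $\mathcal{A}|_{\diamonddata}$-linear combinations of $f_j$ and $\xvecjap\p_{x^i}f_j$. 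This proves the first claim. The only point to check carefully is that $\pSHS$ restricted to $\diamonddata$ indeed maps each basis element of $\gx^{k+1}$ either to a basis element of $\cx^{k+1}$ or to zero. This is exactly the matrix form \eqref{eq:pmatrix}, read with $k+1$ in place of $k$.

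For \eqref{eq:dcest}, I will estimate the entries separately. Elements of $\mathcal{A}|_{\diamonddata}$ are bounded in every $\Cb^{\ss,0}$-norm, and by the Leibniz rule together with boundedness of commutators $[\xvecjap\p_{x^i},\xvecjap\p_{x^j}]$ in the $\mathcal{A}|_{\diamonddata}$-span of $\xvecjap\p_{x^\ell}$, one gets $\|a\,\xvecjap\p_{x^i}f\|_{\Hb^{\ss,\delta}}\lesssim_{\ss,\delta}\|f\|_{\Hb^{\ss+1,\delta}}$ for $a\in\mathcal{A}|_{\diamonddata}$, and similarly $\|af\|_{\Hb^{\ss,\delta}}\lesssim\|f\|_{\Hb^{\ss,\delta}}$. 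The weight $\xvecjap^\delta$ commutes with $\xvecjap\p_{x^i}$ up to a bounded factor $\delta x^i/\xvecjap$, which is harmless.

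It remains to match the derivative offsets in Definition \ref{def:cnorms}, and I expect this to be the main obstacle. The $\lxc$-component is measured in $\Hb^{\ss}$ and the $\ix$-component in $\Hb^{\ss-1}$, so I need $\dc$ to be block-triangular in the right way. The $\lxc^{k+1}$-output is $\dlxc\co\pm\uinj\ci$, where $\dlxc\co$ costs one derivative of $\co\in\Hb^{\ss+1}$. The map $\uinj$ is zeroth order ($C^\infty$-linear), so by the above $\|\uinj\ci\|_{\Hb^{\ss}}\lesssim\|\ci\|_{\Hb^{\ss}}$, which matches $\ci$ being measured in $\Hb^{\ss+1-1}$. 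The $\ix^{k+2}$-output $\di\ci$ is measured in $\Hb^{\ss-1}$ and costs one derivative of $\ci\in\Hb^{\ss}$. To see that $\uinj$ is zeroth order with entries in the $\mathcal{A}|_{\diamonddata}$-span, I use that it corresponds under \eqref{eq:dppd} to the $\sigma$-part of $\dg$, which is $C^\infty$-linear and handled in the proof of Lemma \ref{lem:dglabasisNEW}. Alternatively, this can be checked directly from Remark \ref{rem:hodgebasis} and \eqref{eq:confKilS3} with the weights $\xvecjap^{-2},\xvecjap^{-3}$ and the translation basis weight $\xvecjap$.
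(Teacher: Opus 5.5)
Your proposal is correct and follows the paper's proof. You lift $c$ to $u$ with components constant in $x^0$, use $\dc c=\pSHS\dg u$ from \eqref{eq:dppd}, and read off the first claim from the first two items of Lemma \ref{lem:dglabasisNEW} together with the matrix form \eqref{eq:pmatrix} of $\pdata$. The one difference is that the paper obtains \eqref{eq:dcest} by citing \eqref{eq:ddataest}, where the derivative offset between the $\lx$- and $\I$-components comes from block-diagonality of multiplication with $dx^\mu/\xvecjap$. You instead redo this bookkeeping by hand, using the block-triangular form of $\dc$ and the $C^\infty$-linearity of $\uinj$; this works equally well and gives the estimate directly for all $\delta\in\R$.
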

\begin{proof}
Let $u\in\gx^1(\diamond)$ with
$u|_{\diamonddata}=\idata c$ and such that components
of $u$ relative to the basis in Lemma \ref{lem:basisg}
are constant in $\p_{x^0}$.
Then $\pSHS u = c$, thus \eqref{eq:dppd} yields 
$$
\dc c 
= \dc \pSHS u
= \pSHS \dg u
= \pdata ((\dg u)|_{\diamonddata})
$$
Relative to the bases in Lemma \ref{lem:basisgdata} and \ref{lem:cbasis},
the map $\pdata$ is given by \eqref{eq:pmatrix} 
and the map $\idata$ is the identity matrix. 
Thus the first part of the lemma holds by the 
first two items in Lemma \ref{lem:dglabasisNEW},
the estimate \eqref{eq:dcest} holds by \eqref{eq:ddataest}.
\qed
\end{proof}

\subsection{Chain homotopy}
\label{sec:Gc}

We construct a chain homotopy for the complex $(\cx(\diamonddata),\dc)$
that has optimal mapping properties relative to the weighted
b-Sobolev norms in Definition \ref{def:cnorms}.

We denote by 
\begin{equation}\label{eq:compactsec}
\lxc^k_{\cpt}(\diamonddata)
\qquad
\ix^k_{\cpt}(\diamonddata)
\qquad
\cx^k_{\cpt}(\diamonddata)
\end{equation}
the submodules of
$\lxc^k(\diamonddata),\ix^k(\diamonddata),\cx^k(\diamonddata)$
given by all sections that have compact support on $\diamonddata=\R^3$.
The maps $\dlxc,\di,\uinj,\dc$ are local, hence
they restrict to the spaces \eqref{eq:compactsec}.
Recall that \eqref{eq:compactsec} are dense in the b-Sobolev spaces
in Definition \ref{def:cnorms}.
\begin{definition}\label{def:MLCA}
Define the $\R$-linear map 
$\Charge : \cx^2_{\cpt}(\diamonddata)\to\R^{10}$
by 
\begin{align*}
\Charge(c) 
=
\big(
&\Re \big( \tint_{\diamonddata} (\ast_{\iC}\cp)(\underline{S}) \big),\ 
-\Im \big( \tint_{\diamonddata} (\ast_{\iC}\cp)(\underline{L}^j) \big)_{j=1,2,3}, \\
&\quad\tfrac12\Re \big( \tint_{\diamonddata} (\ast_{\iC}\cp)(\underline{K}^j)  \big)_{j=1,2,3},\ 
-\tfrac12\Im \big( \tint_{\diamonddata} (\ast_{\iC}\cp)(\underline{K}^j) \big)_{j=1,2,3}
\big)
\end{align*}
where we write $c=\co\oplus(\cp\oplus\cm)$
with $\co\in\lxc_{\cpt}^2(\diamonddata)$, 
$\cpm\in\iC^3_{\cpt}(\diamonddata)$,
and use the notation \eqref{eq:insertnotation},
the conformal Killing fields \eqref{eq:confKilS3},
the map $\ast_{\iC}$ in Definition \ref{def:Hodge}.
\end{definition}
\begin{lemma}\label{lem:ChargeEst}
For all $\delta>2$ and all $c\in\cx_{\cpt}^{2}(\diamonddata)$ one has
$|\Charge(c)| \lesssim_{\delta} \|c\|_{\Hb^{1,\delta}(\diamonddata)}$.%
\end{lemma}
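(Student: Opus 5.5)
The charges are integrals of explicit three-forms built from the $\ix^3$-component $\cp$ of $c$. I will show that the integrands are pointwise bounded by an integrable multiple of the basis components of $c$, and then apply Cauchy--Schwarz in the weighted measure.

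\textbf{Expressing the integrands in the basis.} Write $c=\co\oplus(\cp\oplus\cm)$ with $\cp\oplus\cm\in\ix^3_{\cpt}(\diamonddata)$. Relative to Lemma \ref{lem:cbasis}, we have $\cp\oplus\cm=\isoDIV_3(\xvecjap^{-3}v\oplus\xvecjap^{-3}w)$, where $v,w$ are the component vectors; their $\Hb^{0,\delta}$-norms are bounded by $\|c\|_{\Hb^{1,\delta}(\diamonddata)}$, since the $\ix$-part carries one derivative less. By Remark \ref{rem:hodgebasis},
\[
(\ast_{\iC}\cp)(X)=4\xvecjap^{-3}\big((v\cdot X)+i(w\cdot X)\big)\,dx^1\wedge dx^2\wedge dx^3 .
\]
The conformal Killing fields \eqref{eq:confKilS3} that enter $\Charge$ satisfy $|\underline{S}|,|\underline{L}^a|\lesssim\xvecjap$ and $|\underline{K}^a|\lesssim\xvecjap^2$. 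Hence each of the ten integrands is bounded in absolute value by $\lesssim \xvecjap^{-1}(|v|+|w|)\,dx^1dx^2dx^3$. The worst case is $\underline{K}^a$.

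\textbf{Cauchy--Schwarz in the weighted measure.} Write the measure as $dx/\xvecjap^3$. Then
\[
|\Charge(c)|\lesssim \int \xvecjap^{2}(|v|+|w|)\,\frac{dx}{\xvecjap^3}
\le \Big(\int \xvecjap^{2\delta}(|v|^2+|w|^2)\frac{dx}{\xvecjap^3}\Big)^{\frac12}\Big(\int\xvecjap^{4-2\delta}\frac{dx}{\xvecjap^3}\Big)^{\frac12}.
\]
The second factor is $\int \xvecjap^{1-2\delta}\,dx$, which is finite exactly when $2\delta-1>3$, i.e.\ $\delta>2$. The first factor is $\lesssim\|c\|_{\Hb^{0,\delta}}\le\|c\|_{\Hb^{1,\delta}(\diamonddata)}$ by monotonicity \eqref{eq:monotone}. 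This gives $|\Charge(c)|\lesssim_\delta\|c\|_{\Hb^{1,\delta}(\diamonddata)}$.

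\textbf{Main obstacle.} There is no real obstacle; the argument is bookkeeping. The one point to check carefully is the growth of the conformal Killing fields against the $\xvecjap^{-3}$ normalization of the $\ix^3$ basis. Only this combination determines the threshold: the quadratic growth of $\underline{K}^a$ is what forces $\delta>2$ rather than $\delta>1$. I will also confirm that the Hodge identification in Remark \ref{rem:hodgebasis} introduces no additional $\xvecjap$ factors; it does not, because it is written with the flat $\gminkdata$.
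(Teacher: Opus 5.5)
Your proof is correct and is exactly the ``straightforward estimate'' the paper has in mind. Like the paper, you use the basis of Lemma \ref{lem:cbasis}, Remark \ref{rem:hodgebasis} and $|\underline{\zeta}|\lesssim\xvecjap^2$; you then apply Cauchy--Schwarz, and $\int\xvecjap^{1-2\delta}\,dx<\infty$ gives the threshold $\delta>2$.

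One cosmetic point: write $\|\ci\|_{\Hb^{0,\delta}}$ instead of $\|c\|_{\Hb^{0,\delta}}$, because Definition \ref{def:cnorms} only defines the $\cx$-valued norms for $\ss\ge1$. The bound $\|\ci\|_{\Hb^{0,\delta}}\le\|c\|_{\Hb^{1,\delta}}$ then holds directly by that definition, not by monotonicity.
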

\begin{proof}
This is a straightforward estimate using 
the basis in Lemma \ref{lem:cbasis},
Remark \ref{rem:hodgebasis}, 
and the fact that $|\underline\zeta|\lesssim \xvecjap^2$ 
for each conformal Killing field \eqref{eq:confKilS3}.\qed
\end{proof}
\newcommand{\Rhomotopy}{r_0}
\begin{prop}\label{prop:Gxhomotopy}
For every real number $\Rhomotopy>0$ there exists an $\R$-linear map
\[ 
\Gc:\;\; \cx_{\cpt}^k(\diamonddata) \;\to\;\cx_{\cpt}^{k-1}(\diamonddata)
\]
defined for every integer $k$, with the following properties:
\begin{enumerate}[label=\textnormal{({b\arabic*})}]
\item \label{item:homotopy}
The map $\Pic:\cx_{\cpt}^k(\diamonddata) \to\cx_{\cpt}^{k}(\diamonddata)$ defined by 
\begin{equation*}
\one - \Pic \;=\; \dc\Gc + \Gc\dc
\end{equation*}
satisfies $\Pic^2=\Pic$, $\Pic\dc=0$, $\dc\Pic=0$.
In particular, $\Pic$ is a projection onto a complement of the image
of $\dc$ in the kernel of $\dc$.
Furthermore: 
For $c\in\cx_{\cpt}^k(\diamonddata)$ with $k=0,3$ one has $\Pic(c)=0$;
the range of $\Pic:\cx_{\cpt}^1(\diamonddata)\to\cx_{\cpt}^1(\diamonddata)$ is infinite dimensional;
there exist $\mu_1,\dots,\mu_{10} \in \cx_{\cpt}^2(\diamonddata)$
such that 
$\mu_{\ell}|_{|\vec{x}|\ge \Rhomotopy}=0$ 
and 
$ \Charge(\mu_{\ell})_{\ell'}=\delta_{\ell\ell'}$ 
and for all 
$c\in\cx^2_{\cpt}(\diamonddata)$:
\begin{equation}\label{eq:Picexplicit}
\Pic(c) = \textstyle\sum_{\ell=1}^{10} \Charge(c)_{\ell}\mu_{\ell}
\end{equation}
\item \label{item:support}
For all $r\ge \Rhomotopy$ and $c\in\cx_{\cpt}(\diamonddata)$ one has:
$c|_{|\vec{x}|\ge r}=0  \Rightarrow  (\Gc c)|_{|\vec{x}|\ge r}=0$.
\item \label{item:estimates}
For all $k=0\dots3$, $i=1,2,3$, $\ss\in\Z_{\ge1}$, $\delta>2$
and all $c\in \cx^k_{\cpt}(\diamonddata)$:
\begin{subequations}
\begin{align}
\|\Gc c\|_{\Hb^{\ss+1,\delta}(\diamonddata)}
&\lesssim_{\ss,\delta,\Rhomotopy}
\|c\|_{\Hb^{\ss,\delta}(\diamonddata)} \label{eq:Gcest}\\
\|[\Gc,\xvecjap\p_{x^i}] c\|_{\Hb^{\ss+1,\delta}(\diamonddata)}
&\lesssim_{\ss,\delta,\Rhomotopy}
\|c\|_{\Hb^{\ss,\delta}(\diamonddata)} \label{eq:Gccomest}
\end{align}
\end{subequations}
where we use the norms in Definition \ref{def:cnorms},
and where in \eqref{eq:Gccomest} the derivatives are
taken componentwise with respect to the basis 
in Lemma \ref{lem:cbasis}.
\end{enumerate}
\end{prop}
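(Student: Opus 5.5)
The plan is to import the chain homotopy of \cite{HomotopyPaper} and transport it to $(\cx(\diamonddata),\dc)$. We exploit the structure of \eqref{eq:cintro}: it is the mapping cone of the chain map $\uinj:(\ix(\diamonddata),\di)\to(\lxc(\diamonddata),\dlxc)$, up to sign conventions.

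\textbf{Step 1: homotopies for the two rows.} Build homotopies for the top and bottom rows separately.
\begin{itemize}
\item Top row. \cite{HomotopyPaper} supplies a Bogovskii/Poincar\'e-type homotopy $G_{\dR}$ for the compactly supported de Rham complex on $\R^3$, tensored with $\Kil$. It has the support property \ref{item:support} for balls of radius $\ge\Rhomotopy$ and the optimal $\Hb$-estimates \eqref{eq:Gcest}, \eqref{eq:Gccomest}. Its homology projection $\Pi_{\dR}$ is nonzero only in degree $3$, onto $\mathbb R\otimes\Kil$. It is realised by a fixed bump $3$-form supported in $|\vec x|\le\Rhomotopy$ times the integral. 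Mere decay forces weights, which is why $\delta>2$ appears.
\item Bottom row. Via Lemma \ref{lem:diandDIV}, $\di$ is two copies of $\divST:\STMat_0\to\VEC_0$. \cite{HomotopyPaper} provides a right inverse $G_{\ix}$ of $\divST$ on compactly supported vector fields, modulo the ten obstructions. These obstructions are the pairings with the conformal Killing fields \eqref{eq:confKilS3}, i.e. the kernel of the adjoint, the conformal Killing operator. The result is a homotopy on $\ix$ with $\Pi_{\ix}$ of rank $20$ on $\ix^3$, given by bump sections $\nu_j$ times these integrals, again with support and estimates.
\end{itemize}
In the chosen bases (Lemma \ref{lem:cbasis}) both homotopies gain one $\xvecjap\p$ derivative and preserve the weight. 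This matches the offset norm convention for $\cx$.

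\textbf{Step 2: perturbation to the cone.} Combine the two homotopies by the homological perturbation lemma for mapping cones. Treat $\pm\uinj$ as a perturbation of the block-diagonal differential $\dlxc\oplus\di$. Since $\uinj$ strictly raises the row index, the perturbation series terminates after one step. This gives
\[ \Gc = G_{\dR}\oplus G_{\ix} \pm G_{\dR}\,\uinj\, G_{\ix} + (\text{correction built from }\Pi_{\dR}\uinj\,\Pi_{\ix}\text{-terms}). \]
The new projection $\Pic$ is governed by the induced map on homology, $H(\ix)\to H(\lxc)$, i.e. $\R^{20}\to\R^{10}$ in degree $3$.

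One must verify this map is onto. Using Remark \ref{rem:hodgebasis} and Definition \ref{def:uinj}, inserting the ten conformal Killing fields pairs them with $T_0,T_a,B^a_\pm$. Counting then leaves $10$-dimensional homology in cone degree $2$, coming from $\ker$ of this map. It is also zero in degrees $0$ and $3$; degree $0$ is trivial for compact support.

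\textbf{Step 3: the form of $\Pic$ and the range claim.} Choose the bump representatives $\mu_\ell$ so that $\Charge(\mu_\ell)_{\ell'}=\delta_{\ell\ell'}$, adjusting by $\dc$-exact terms. Verify that $\Pic$ in degree $2$ equals $\sum\Charge(c)_\ell\mu_\ell$; this needs $\Charge$ to vanish on $\dc$-exact elements, which is an integration by parts. For the infinite-dimensional range in degree $1$: the perturbation lemma normally gives a $\Pic$ with $\Pic^2=\Pic$ only after the standard side conditions are imposed. I would instead simply take $\Pic:=\one-\dc\Gc-\Gc\dc$ and check it, allowing degree $1$ to carry whatever compactly supported non-exact classes arise.

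I am unsure about this last point, since compactly supported degree-$1$ cohomology should vanish if the earlier counting is right. The statement is therefore presumably about a non-strict homotopy. If so, I would accept a $\Pic$ on $\cx^1$ of infinite rank coming from the perturbation correction terms.

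\textbf{Main obstacle.} The estimates \eqref{eq:Gccomest} for the composite $G_{\dR}\uinj G_{\ix}$ are the hardest step. $\uinj$ has coefficients growing like $\xvecjap^2$ (conformal Killing fields), and the mismatch must be absorbed by the weights built into the bases of Table \ref{tab:cbas}. I would check this using Lemma \ref{lem:diffbop}, which says $\dc$ has $\mathcal A|_{\diamonddata}$-bounded coefficients in these bases. This forces $\uinj$ to be bounded order zero there, and then the commutator estimates compose.
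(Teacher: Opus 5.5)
Your architecture matches the paper's. You take contractions for the two rows from \cite[Theorems 2, 3]{HomotopyPaper}, apply the perturbation lemma with $q=\dc-(\dlxc\oplus\di)$ (it terminates since $(q\tilde G_{\cx})^2=0$), and add a finite-dimensional step that removes the induced surjection $\kappa'_3:\R^{10}\oplus\R^{10}\to\Kil$. The estimates for the composite terms are also routine, since $q$ has bounded coefficients in the bases of Lemma \ref{lem:cbasis}.

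However, your homology count is wrong in degree $1$, and this derails your treatment of \ref{item:homotopy}. The bottom row is only $\ix^2\xrightarrow{\di}\ix^3$, with nothing in degree $1$. So its degree-$2$ homology is the whole kernel of $\di$ on $\ix^2_{\cpt}(\diamonddata)$, i.e.\ two copies of $\ker\divST$, which is infinite-dimensional (it is the range of $\CURL$ in \cite[(79)]{HomotopyPaper}). Compactly supported de Rham cohomology of $\R^3$ vanishes in degrees $1$ and $2$, so the connecting map $\kappa'_2$ is zero and this kernel survives. For $c_{\ix}\in\ker\di$, the $\Kil$-valued $2$-form $\uinj c_{\ix}$ is closed with compact support, hence equals $\ddR c_0$ with $c_0$ unique modulo $\dc(\cx^0_{\cpt}(\diamonddata))$. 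Thus $H^1(\cx_{\cpt}(\diamonddata),\dc)\cong\ker\di|_{\ix^2_{\cpt}}$. The infinite rank of $\Pic$ on $\cx^1_{\cpt}$ is therefore genuine homology, not an artifact of a non-strict homotopy or of correction terms. Your bottom-row contraction must keep $\ker\di$ as its degree-$2$ target, with projection $\one-G_{\ix}\di$ on $\ix^2$. Then $\Pic(c_0\oplus c_{\ix})=(-(G_{\dR}\otimes\one)\uinj w)\oplus w$ with $w=(\one-G_{\ix}\di)c_{\ix}$.

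Your fallback of setting $\Pic:=\one-\dc\Gc-\Gc\dc$ and ``checking'' it is not an argument. $\dc\Pic=0$ is equivalent to $\dc\Gc\dc=\dc$, and neither this nor $\Pic^2=\Pic$ holds for a general $\Gc$. Both follow from $\Pic=I_{\cx}P_{\cx}$ with $P_{\cx}I_{\cx}=\one$. The perturbation lemma gives $P_{\cx}I_{\cx}=\one$ here because the row contractions satisfy the side conditions \eqref{eq:sidecond}; the perturbed product is $\one-\tilde P_{\cx}\tilde G_{\cx}q\tilde I_{\cx}$. So impose them, as the paper does via \cite[Remark 1, Corollary 21]{HomotopyPaper}, rather than bypass them.

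Second, \eqref{eq:Picexplicit} is required for every $c\in\cx^2_{\cpt}(\diamonddata)$, not only closed $c$. It is later applied to $E-\dc\cc$ before $E$ is known to vanish. Vanishing of $\Charge$ on exact elements pins down $\Pic$ only on closed $c$, so your justification is insufficient. What gives \eqref{eq:Picexplicit} is the construction itself:
\begin{itemize}
\item On $\cx^2$, $P_{\cx}$ sees only the $\ix^3$-component, through $(\Re\Chargei,\Im\Chargei)\in\R^{20}$.
\item It then projects onto $\ker\kappa'_3$ along the coordinate complement $(\ker\kappa'_3)^{\perp}$, with $G'_{0,3}=(\kappa'_3|_{(\ker\kappa'_3)^{\perp}})^{-1}$.
\item $\ker\kappa'_3$ is spanned exactly by the $\Re\underline{S}$, $\Im\underline{L}^a$, $\Re\underline{K}^a$, $\Im\underline{K}^a$ coordinates. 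This is why these ten components make up $\Charge$.
\end{itemize}
Any other complement would mix in the remaining ten charges.
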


As preparation for the proof we recall the notion of a contraction, 
see also \cite[Section 2.1]{HomotopyPaper}.
A complex $(C,\dd)$ is a 
$\Z$-graded vector space $C=\oplus_{k\in\Z}C^k$
with a differential $\dd$, which is a 
linear map $C\to C$ of degree one with $\dd^2=0$,
where degree one means that it maps $C^{k}\to C^{k+1}$
for every $k$.

A contraction from a complex $(C,\dd)$ to a complex
$(C',\dd')$ is a triple of maps
\begin{align*}
\big(
r: C \to C'\ ,\;
\ell: C' \to C\ ,\;
h: C \to C
\big)
\end{align*}
where $r,\ell$ have degree zero and $h$ has degree negative one,
such that:
\begin{subequations}\label{eq:contractionproperties}
\begin{align}
\dd \ell &= \ell \dd'&
r \dd &= \dd' r &
1-\ell r &= \dd h + h \dd& 
1-r \ell &= 0 
\end{align}
\end{subequations}
Note that $\ell$ and $r$ induce isomorphisms in 
homology that are inverse to each other, 
in particular $(C,\dd)$ and $(C',\dd')$ have isomorphic homologies. 
Further 
$(\ell r)^2=\ell r$ and $\dd'=r\dd\ell$.
Note that if $\dd'=0$ is the zero differential, then 
the map $\Pi$ defined by $\one-\Pi = \dd h + h\dd$
satisfies $\Pi=\ell r$, $\Pi^2=\Pi$, $\Pi\dc=0$, $\dc\Pi=0$.

A contraction $(r,\ell,h)$ from $(C,\dd)$ to $(C',\dd')$
that also satisfies
\begin{align}\label{eq:sidecond}
r h = 0&&
h \ell = 0 &&
h^2=0
\end{align}
is called a contraction with side conditions
(or special deformation retract \cite{crainic}).

We first give proof outline.
Define the following auxiliary spaces:
\begin{align*}
H_{\dR}&=\oplus_{k=0}^3H_{\dR}^k 
&&\text{with}&
H_{\dR}^{0}&=0,\ 
H_{\dR}^{1}=0,\ 
H_{\dR}^{2}=0,\ 
H_{\dR}^3=\R
\\
\Hibig&=\oplus_{k=2}^3\Hibig^k
&&\text{with}&
\Hibig^2 &= \ker(\di:\ix^2_{\cpt}(\diamonddata)\to\ix^3_{\cpt}(\diamonddata))
,\ 
\Hibig^3 = \R^{10}\oplus\R^{10}
\\
\tilde{H}_{\cx} &= \oplus_{k=0}^3 \tilde{H}_{\cx}^k
&&\text{with}&
\tilde{H}_{\cx}^k &= (H_{\dR}^{k}\otimesRR\Kil)\oplus \Hibig^{k+1}
\\
H_{\cx} &= \oplus_{k=0}^3 H_{\cx}^k
&&\text{with}&
H_{\cx}^k &= 0\oplus \Hinew^{k+1} \ \ 
\text{where}\ \ 
\Hinew^{2}=\Hibig^{2},\ \Hinew^{3}=\R^{10}
\end{align*}
We will see that $H_{\dR}$ is the homology of 
$(\Omega_{\cpt}(\diamonddata),\ddR)$;
$\Hibig$ is the homology of 
$(\ix_{\cpt}(\diamonddata),\di)$;
$\tilde{H}_{\cx}$ is the homology of 
$(\cx_{\cpt}(\diamonddata),\tilde{d}_{\cx})$ with differential 
$\tilde{d}_{\cx}$ defined by 
\begin{equation}\label{eq:dctildedef}
\tilde{d}_{\cx}(\co\oplus\ci) = \dlxc\co\oplus\di\ci
\end{equation}
and $H_{\cx}$ is the homology of  $(\cx_{\cpt}(\diamonddata),d_{\cx})$.
We will successively construct:
\begin{allowdisplaybreaks}
\begin{align}
&\text{A contraction 
$(P_{\dR},I_{\dR},G_{\dR})$
from $(\Omega_{\cpt}(\diamonddata),\ddR)$
to $(H_{\dR},0)$.} \label{eq:ContrdR}\\[-1mm]
&\text{This uses \cite[Theorem 2]{HomotopyPaper}.}\nonumber\\
&\text{A contraction 
$(P_{\ix},I_{\ix},G_{\ix})$
from $(\ix_{\cpt}(\diamonddata),\di)$
to $(H_{\ix},0)$.}\label{eq:Contri}\\[-1mm]
&\text{This uses \cite[Theorem 3]{HomotopyPaper}.}\nonumber\\
&\text{A contraction 
$(\tilde{P}_{\cx},\tilde{I}_{\cx},\tilde{G}_{\cx})$
from $(\cx_{\cpt}(\diamonddata),\tilde{d}_{\cx})$
to $(\tilde{H}_{\cx},0)$.}\label{eq:Contrdirectsum}\\[-1mm]
&\text{This is given by the direct sum of the contractions 
in \eqref{eq:ContrdR}, \eqref{eq:Contri}.}\nonumber\\
&\text{A contraction 
$(\slashed{P}_{\cx},\slashed{I}_{\cx},\slashed{G}_{\cx})$
from $(\cx_{\cpt}(\diamonddata),d_{\cx})$
to $(\tilde{H}_{\cx},\kappa=\slashed{P}_{\cx} \dc \slashed{I}_{\cx})$.}
\label{eq:ContrHPL}\\[-1mm]
&\text{This uses \eqref{eq:Contrdirectsum} 
and the homological perturbation lemma \cite{crainic}.}\nonumber\\
&\text{A contraction 
$(P_{0},I_{0},G_{0})$
from $(\tilde{H}_{\cx},\kappa)$ to $(H_{\cx},0)$.}
\label{eq:ContrLinalg}\\[-1mm]
&\text{This is a linear algebra argument.}\nonumber\\
&\text{A contraction 
$(P_{\cx},I_{\cx},G_{\cx})$ 
from $(\cx_{\cpt}(\diamonddata),d_{\cx})$
to $(H_{\cx},0)$.}
\label{eq:ContrFinal}\\[-1mm]
&\text{This is obtained by composing 
\eqref{eq:ContrHPL} and \eqref{eq:ContrLinalg}.}\nonumber
\end{align}
\end{allowdisplaybreaks}
The last contraction gives the homotopy $G_{\cx}$ in Proposition \ref{prop:Gxhomotopy}.
\begin{proof}
For a degree $q$ map of graded vector spaces
$f:\oplus_{k\in\Z}V^k \to \oplus_{k\in\Z}W^k$
we denote by $f_k$ the component of $f$ that maps $V^k\to W^{k+q}$,
e.g.~$\ddR_{k}$ is the component of the de Rham differential 
that maps $\Omega^k_{\cpt}(\diamonddata)\to \Omega^{k+1}_{\cpt}(\diamonddata)$.

\textit{Step 1: the contraction \eqref{eq:ContrdR}.}
Let 
$G_{\dR}: \Omega^k_{\cpt}(\diamonddata)\to\Omega^{k-1}_{\cpt}(\diamonddata)$,
$\Pi_{\dR}: \Omega^k_{\cpt}(\diamonddata)\to\Omega^{k}_{\cpt}(\diamonddata)$,
$\mu\in\Omega^3_{\cpt}(\diamonddata)$ be 
as in \cite[Theorem 2, Corollary 21]{HomotopyPaper}\footnote{%
In the reference these maps are defined on 
the relative differential forms,
but they restrict to compactly supported forms by 
\cite[(a2) of Theorem 2]{HomotopyPaper}} applied to $\Rhomotopy>0$, $n=3$.
Define 
$I_\dR:\Omega^k_{\cpt}(\diamonddata)\to H^k_{\dR}$,
$P_\dR:H_{\dR}^k\to\Omega^k_{\cpt}(\diamonddata)$
by $I_{\dR,3}(1)=\mu$ respectively $P_{\dR,3}(\omega) = \int_{\diamonddata}\omega$,
the other components of $I_{\dR}$ and $P_{\dR}$ are zero.
By \cite[(a1) of Theorem 2 and Corollary 21]{HomotopyPaper},
the triple $(P_\dR,I_\dR,G_\dR)$ is a contraction
with side conditions, see \eqref{eq:sidecond}, 
from $(\Omega_{\cpt}(\diamonddata),\ddR)$ to $(H_{\dR},0)$.

\textit{Step 2: the contraction \eqref{eq:Contri}.}
Recall the complex $(W,D)$ in \cite[(79)]{HomotopyPaper},
and let $G_D:W^k \to W^{k-1}$, $\Pi_D:W^k \to W^{k}$,
$\nu_1,\dots,\nu_{10}$ be 
as in \cite[Theorem 3]{HomotopyPaper}
applied to $\Rhomotopy>0$,
where by \cite[Remark 1, see also Corollary 21]{HomotopyPaper}
we can assume 
\begin{align}\label{eq:Dside}
\Pi_D G_D = 0
&&
G_D \Pi_D = 0
&&
G_D^2 = 0
\end{align}
Define 
$G_{\ix}:\ix^{k}_{\cpt}(\diamonddata)\to\ix^{k-1}_{\cpt}(\diamonddata)$
by, using $\isoDIV_2,\isoDIV_3$ from Lemma \ref{lem:diandDIV},
\begin{align}\label{eq:Giisoiso}
\Gicomp{3} = \isoDIV_2 \circ (\GDcomp{3}\oplus \GDcomp{3}) \circ (\isoDIV_3)^{-1}
\end{align} 
(only nontrivial component).
Define 
$P_{\ix}: \ix_{\cpt}^k(\diamonddata) \to \Hibig^k$,
$I_{\ix}: \Hibig^k \to \ix_{\cpt}^k(\diamonddata)$ as follows.
The map $I_{\ix,2}$ is the canonical inclusion
and, for $v\oplus v'\in\R^{10}\oplus\R^{10}$,
\begin{align}\label{eq:Iidef}
I_{\ix,3}(v\oplus v') 
&= 
\tsum_{\ell=1}^{10} v_{\ell}(\tilde{\nu}_\ell\oplus\tilde{\nu}_{\ell})
+
\tsum_{\ell=1}^{10} v_{\ell}'(i\tilde{\nu}_\ell\oplus(-i\tilde{\nu}_{\ell}))
\end{align}
where $\tilde{\nu}_{\ell}\in\iC_{\cpt}^3(\diamonddata)$
is defined by $\tilde{\nu}_\ell\oplus \tilde{\nu}_\ell=\isoDIV_3(\frac14\nu_{\ell}\oplus 0)$,
in particular $\overline{\tilde{\nu}_\ell}=\tilde{\nu}_\ell$.
Further $P_{\ix,2}=\one - G_{\ix,3}d_{\ix,2}$,
$P_{\ix,3}(\cp\oplus\cm) 
= \Re(\Chargei(\cp))\oplus\Im(\Chargei(\cp))$, defining
\begin{equation}\label{eq:Csldef}
\Chargei(c_+) = \big(
\tint_{\diamonddata} (\ast_{\iC}c_+)(\uzeta_1),
\dots,
\tint_{\diamonddata} (\ast_{\iC}c_+)(\uzeta_{10})
\big)\;\in\;\C^{10}
\end{equation}
with $\uzeta_{1},\dots,\uzeta_{10}$ 
the basis \eqref{eq:confKilS3} of $\confKilS$.
We note that
\begin{align} 
\Chargei(\tilde{\nu}_\ell)_{\ell'}
	&=\delta_{\ell\ell'}
	\label{eq:ortho}\\
I_{\ix,2}P_{\ix,2} 
	&= \isoDIV_2((\one - G_{D,3}D_2)\oplus(\one -G_{D,3}D_2))(\isoDIV_2)^{-1}
	\nonumber\\
	&\overset{(1)}{=} \isoDIV_2(D_1 G_{D,2}\oplus D_1 G_{D,2})(\isoDIV_2)^{-1}
	\label{eq:Pii2}\\
I_{\ix,3}P_{\ix,3} 
	&= \isoDIV_3 (\Pi_{D,3}\oplus\Pi_{D,3}) (\isoDIV_3)^{-1}
	\nonumber
\end{align}
The first holds by the orthogonality statement for the $\nu_\ell$
in \cite[(a1) of Theorem 3]{HomotopyPaper} and Remark \ref{rem:hodgebasis};
in $(1)$ we use \cite[(a1) of Theorem 3]{HomotopyPaper};
the last line uses Remark \ref{rem:hodgebasis}.
It is now a direct calculation to check that 
$(P_{\ix},I_{\ix},G_{\ix})$ is a contraction with side conditions
from $(\ix_{\cpt}(\diamonddata),\di)$ to $(\Hibig,0)$,
using \cite[(a1) of Theorem 3]{HomotopyPaper}, \eqref{eq:Dside}.

\textit{Step 3: the contraction \eqref{eq:Contrdirectsum}.}
Recall the definition of $\tilde{d}_{\cx}$ in \eqref{eq:dctildedef}.
Define the maps
$\tilde{P}_{\cx}:\cx^{k}_{\cpt}(\diamonddata)\to \tilde{H}_{\cx}^{k}$,
$\tilde{I}_{\cx}: \tilde{H}_{\cx}^{k}\to\cx^{k}_{\cpt}(\diamonddata)$,
$\tilde{G}_{\cx}:\cx^{k}_{\cpt}(\diamonddata)
\to\cx^{k-1}_{\cpt}(\diamonddata)$ by 
\begin{align*}
\tilde{P}_{\cx,k}
&=
\left(\begin{smallmatrix}
P_{\dR,k}\otimes\one &0 \\
0 & P_{\ix,k+1}
\end{smallmatrix}\right)&
\tilde{I}_{\cx,k}
&=
\left(\begin{smallmatrix}
I_{\dR,k}\otimes\one &0 \\
0 & I_{\ix,k+1}
\end{smallmatrix}\right)&
\tilde{G}_{\cx,k}
&= 
\left(\begin{smallmatrix}
G_{\dR,k}\otimes\one &0 \\
0 & G_{\ix,k+1}
\end{smallmatrix}\right)
\end{align*}
where we use the $2\times2$ block decomposition with respect to 
$\cx_{\cpt}^k(\diamonddata) 
= \lxc_{\cpt}^k(\diamonddata)\oplus\ix_{\cpt}^{k+1}(\diamonddata)$
and
$\tilde{H}_{\cx}^k = (H_{\dR}^k\otimesRR\Kil)\oplus \Hibig^{k+1}$.
By Step 1 and 2, 
the triple $(\tilde{P}_{\cx},\tilde{I}_{\cx},\tilde{G}_{\cx})$ is a contraction
with side conditions 
from \smash{$(\cx_{\cpt}(\diamonddata),\tilde{d}_{\cx})$ to $(\tilde{H}_{\cx},0)$}.

\textit{Step 4: the contraction \eqref{eq:ContrHPL}.}
Set $q=\dc-\tilde{d}_{\cx}$.
Then the triple 
\begin{align}\label{eq:contrto20}
\big(
\slashed{P}_{\cx}
=
\tilde{P}_{\cx}
\;,\;
\slashed{I}_{\cx}
=
(\one - \tilde{G}_{\cx}q)\tilde{I}_{\cx}
\;,\;
\slashed{G}_{\cx}
=
\tilde{G}_{\cx} (\one - q \tilde{G}_{\cx})\big)
\end{align}
is a contraction from $(\cx_{\cpt}(\diamonddata),\dc)$ to 
$(\tilde{H}_{\cx}, \kappa:=\slashed{P}_{\cx} \dc \slashed{I}_{\cx} )$.
This follows from the homological perturbation lemma
\cite{crainic}, see also \cite[Lemma 4]{HomotopyPaper},
which is applicable by $\dc^2=0$ and $(q\tilde{G}_{\cx})^2=0$
and the fact that the contraction in Step 3 satisfies
the side conditions. 
The formula for $\slashed{P}_{\cx}$ is as shown because 
$\tilde{P}_{\cx} (\one - q \tilde{G}_{\cx})=\tilde{P}_{\cx}$,
using 
$\tilde{P}_{\cx} q \tilde{G}_{\cx}=0$
(the image of $q \tilde{G}_{\cx}$ is only 
nontrivial in $\lxc_{\cpt}^2(\diamonddata)$, where $\tilde{P}_{\cx}$ vanishes).

We compute $\kappa=\slashed{P}_{\cx} \dc \slashed{I}_{\cx}:\tilde{H}_{\cx}^k\to\tilde{H}_{\cx}^{k+1}$.
First note that
\[ 
\smash{
\kappa 
\overset{(1)}{=}
\tilde{P}_{\cx} \tilde{d}_{\cx} \slashed{I}_{\cx} 
+
\tilde{P}_{\cx} q \slashed{I}_{\cx} 
\overset{(2)}{=}
\tilde{P}_{\cx} q \slashed{I}_{\cx} 
\overset{(3)}{=}
\tilde{P}_{\cx} q \tilde{I}_{\cx}
-
\tilde{P}_{\cx} q \tilde{G}_{\cx}q \tilde{I}_{\cx}
\overset{(4)}{=}
\tilde{P}_{\cx} q \tilde{I}_{\cx}}
\]
where 
(1) holds by definition of $\slashed{P}_{\cx}$ and $\dc=\tilde{d}_{\cx}+q$;
(2) holds because $\tilde{P}_{\cx} \tilde{d}_{\cx}=0$;
(3) holds by definition of $\slashed{I}_{\cx}$,
(4) holds because $q \tilde{G}_{\cx}q=0$.
Thus
\begin{align*}
\kappa_k
= 
\left(\begin{smallmatrix}
0 & \kappa_{k+1}'\\
0&0
\end{smallmatrix}\right)
\quad
\text{where}
\quad
\kappa_{k+1}'
:=
-(-1)^{k+1} (P_{\dR,k+1}\otimes\one)\circ \uinj_{k+1} \circ  I_{\ix,k+1}
\end{align*}
using the 2-by-2 block decomposition relative to
$\tilde{H}_{\cx}^k = (H_{\dR}^k\otimesRR\Kil)\oplus \Hibig^{k+1}$.
Note that $\kappa_{k+1}'$ maps $\Hibig^{k+1} \to H_{\dR}^{k+1}\otimesRR\Kil$.
We have $\kappa'_{2}=0$ because $P_{\dR,2}=0$.
Let $v\oplus v'\in \Hibig^3 = \R^{10}\oplus\R^{10}$.
By direct calculation, 
using the definition of $I_{\ix,3}$ in \eqref{eq:Iidef},
of $\uinj$ in Definition \ref{def:uinj},
and \smash{$P_{\dR,3}(\omega)=\int_{\diamonddata}\omega$}, 
\begin{align*}
\kappa'_3(v\oplus v')
&=
2\tsum_{\ell=1}^{10}
\int_{\diamonddata}
v_{\ell} 
\Big(
	+(\ast_{\iC} \tilde{\nu}_\ell)(\underline{L}_a)\otimes T_a
	+(\ast_{\iC} \tilde{\nu}_\ell)(\underline{T}_a)\otimes B^{bc}
	\Big)\\
&\quad
+2\tsum_{\ell=1}^{10}
\int_{\diamonddata}
v_{\ell}' 
\Big(
	- 
	(\ast_{\iC} \tilde{\nu}_\ell)(\underline{S})\otimes T_0
	+ 
	(\ast_{\iC} \tilde{\nu}_\ell)(\underline{T}_a)\otimes B^{0a}
\Big)
\intertext{
where $(abc)\in\{(123),(231),(312)\}$ is a cyclic index
and where we sum over $a=1,2,3$.
Now the orthogonality condition \eqref{eq:ortho} yields}
\kappa'_3(v\oplus v')
&=
2 \big(
v_1 \otimes B^{23}
+
v_2 \otimes B^{31}
+
v_3 \otimes B^{12}
+
v_5\otimes T_1
+
v_6\otimes T_2
+
v_7\otimes T_3\\
&
\qquad
+
v_1' \otimes B^{01}
+
v_2' \otimes B^{02}
+
v_3' \otimes B^{03}
-
v_4' \otimes T_0
\big)
\end{align*}
This is surjective, with kernel given by 
all elements 
\begin{equation}\label{eq:kernel}
\left(\begin{smallmatrix}
(0,0,0)^T\\
v_4\\
(0,0,0)^T\\
(v_8,v_9,v_{10})^T
\end{smallmatrix}\right)
\oplus
\left(\begin{smallmatrix}
(0,0,0)^T\\
0\\
(v_5',v_6',v_7')^T\\
(v_8',v_9',v_{10}')^T
\end{smallmatrix}\right)
\;\in\;
\Hibig^3=\R^{10}\oplus\R^{10}
\end{equation}

\textit{Step 5: the contraction \eqref{eq:ContrLinalg}.}
Let $\KER=\ker(\kappa'_3)$ and let 
$\KER^\bot$ be its orthogonal complement.
We identify  
\begin{align}\label{eq:kkbot}
\Hibig^3=\KER\oplus \KER^\bot
\qquad\qquad
\Hinew^3 = \KER
\end{align}
Define maps 
$I_{0}':\Hinew^k\to \Hibig^k$, 
$P_{0}':\Hibig^k\to \Hinew^k$,
$G_{0}':H_{\dR}^k\otimes \one \to \Hibig^k$ 
as follows, using the identification \eqref{eq:kkbot}:
$I_{0,2}'$ is the identity map;
$I_{0,3}'$ is the inclusion $\KER\to\KER\oplus \KER^{\bot}$;
$P_{0,2}'$ is the identity map, 
$P_{0,3}'$ is the projection $\KER\oplus \KER^{\bot}\to \KER$;
$G_{0,0}'=0$,
$G_{0,1}'=0$,
$G_{0,2}'=0$,
$G_{0,3}'=(\kappa'_3|_{\KER^{\bot}})^{-1}$.

Define 
$P_{0}:\tilde{H}_{\cx}^k\to H_{\cx}^k$,
$I_{0}:H_{\cx}^k \to \tilde{H}_{\cx}^k$,
$G_{0}:\tilde{H}_{\cx}^k \to \tilde{H}_{\cx}^{k-1}$
by 
\begin{align}\label{eq:fdcontr}
P_{0,k}
=
\left(\begin{smallmatrix}
0 & 0 \\
0 & P_{0,k+1}'
\end{smallmatrix}\right)
&&
I_{0,k}
=
\left(\begin{smallmatrix}
0 & 0 \\
0 & I_{0,k+1}'
\end{smallmatrix}\right)
&&
G_{0,k}
&=
\left(\begin{smallmatrix}
0 & 0 \\
G_{0,k}' & 0
\end{smallmatrix}\right)
\end{align}
By construction, the triple $(P_0,I_0,G_0)$
is a contraction from $(\tilde{H}_{\cx},\kappa)$ to $(H_{\cx},0)$.

\textit{Step 6: the contraction \eqref{eq:ContrFinal}.}
Compose the contraction \eqref{eq:contrto20}
from $(\cx_{\cpt}(\diamonddata),\dc)$ to $(\tilde{H}_{\cx}, \kappa )$
with the contraction \eqref{eq:fdcontr} 
from $(\tilde{H}_{\cx},\kappa)$ to $(H_{\cx},0)$,
using \cite[Lemma 3]{HomotopyPaper}.
This yields a contraction from $(\cx_{\cpt}(\diamonddata),\dc)$ to $(H_{\cx},0)$, given by
\begin{align}\label{eq:ccontr}
\big(
P_{\cx} = P_0 \tilde{P}_{\cx}\,,\,
I_{\cx} = \slashed{I}_{\cx}I_0\,,\,
G_{\cx} = \slashed{G}_{\cx} + \slashed{I}_{\cx} G_0\tilde{P}_{\cx}
\big)
\end{align}

We check that $G_{\cx}$ satisfies the properties in the proposition.

\ref{item:homotopy}:
By \eqref{eq:contractionproperties} for 
the contraction \eqref{eq:ccontr} we have
$\Pi_{\cx} = I_{\cx}P_{\cx}$ and 
$\Pi_{\cx}^2=\Pi_{\cx}$, $\Pi_{\cx}\dc=0$, $\dc\Pi_{\cx}=0$.
%
%
Further $\Pi_{\cx,0}=0$, $\Pi_{\cx,3}=0$
since $H_{\cx}^{0}$, $H_{\cx}^{3}$ are trivial.
Further 
$\Pi_{\cx}
	= I_{\cx}P_{\cx}
	= 
	(\one - \tilde{G}_{\cx}q)\tilde{I}_{\cx}I_0P_0 \tilde{P}_{\cx}$,
thus (also using $I_{0,2}'P_{0,2}'=\one$)
\begin{equation}\label{eq:pi12}
\Pi_{\cx,1}
=
	\left(\begin{smallmatrix}
	0 & - (G_{\dR,2}\otimes\one) \uinj_{2} I_{\ix,2}  P_{\ix,2}\\
	0 & I_{\ix,2}  P_{\ix,2}
	\end{smallmatrix}\right)
\qquad
\Pi_{\cx,2}
=
	\left(\begin{smallmatrix}
	0 &  (G_{\dR,3}\otimes\one) \uinj_{3} I_{\ix,3} I_{0,3}'P_{0,3}' P_{\ix,3}\\
	0 & I_{\ix,3} I_{0,3}'P_{0,3}' P_{\ix,3}
	\end{smallmatrix}\right)
\end{equation}

We check that the range of $\Pi_{\cx,1}$ is infinite dimensional.
For this it suffices to check that the range of $I_{\ix,2}  P_{\ix,2}$
is infinite dimensional, which is equal to
$\ker(\di:\ix^2_{\cpt}(\diamonddata)\to\ix^3_{\cpt}(\diamonddata))$,
because for $c$ in the kernel we have $I_{\ix,2}  P_{\ix,2}c=c$
by \eqref{eq:contractionproperties}.
Recall \eqref{eq:isointertw}. 
By \cite[(a1) in Theorem 3]{HomotopyPaper},
the kernel of $\smash{\divST}$ 
is equal to the range of $\CURL$ in \cite[(79)]{HomotopyPaper}, 
which is easily seen to be infinite-dimensional.

We compute $I_{\ix,3} I_{0,3}'P_{0,3}' P_{\ix,3}$.
Still using the identification \eqref{eq:kkbot}, 
the map $I_{0,3}'P_{0,3}': \Hibig^3\to \Hibig^3$ is given by 
$(k,k')\mapsto (k,0)$. 
Thus for $\cp\oplus\cm\in\ix_{\cpt}^3(\diamonddata)$,
\begin{align}
I_{\ix,3} I_{0,3}'P_{0,3}' P_{\ix,3}(\cp\oplus\cm)&=
\Re(\Chargei(\cp))_{4}
(\tilde{\nu}_4\oplus\tilde{\nu}_{4})+
\textstyle
\sum_{\ell=8}^{10}\Re(\Chargei(\cp))_{\ell}
(\tilde{\nu}_\ell\oplus\tilde{\nu}_\ell)\nonumber\\
&\quad\textstyle+
\sum_{\ell=5}^{10}\Im(\Chargei(\cp))_{\ell}
(i\tilde{\nu}_\ell\oplus(-i\tilde{\nu}_{\ell}))
\label{eq:chform}
\end{align}
using \eqref{eq:kernel},
the definition of $P_{\ix,3}$ before \eqref{eq:Csldef}
and of $I_{\ix,3}$ in \eqref{eq:Iidef}.
Set 
$\tilde{\mu}_{1}=\tilde{\nu}_4\oplus\tilde{\nu}_{4}$
and set 
$\tilde{\mu}_{2+i}=-(i\tilde{\nu}_{5+i}\oplus(-i\tilde{\nu}_{5+i}))$,
$\tilde{\mu}_{5+i}=2(\tilde{\nu}_{8+i}\oplus\tilde{\nu}_{8+i})$,
$\tilde{\mu}_{8+i}=-2(i\tilde{\nu}_{8+i}\oplus(-i\tilde{\nu}_{8+i}))$ 
for $i=0,1,2$.
With this definition, and using \eqref{eq:pi12}, \eqref{eq:chform}
we obtain \eqref{eq:Picexplicit} with 
$\mu_{\ell} = ((G_{\dR,3}\otimes\one) \uinj_{3}\tilde{\mu}_\ell )\oplus \tilde{\mu}_\ell$.
By \cite[(a1) of Theorem 3]{HomotopyPaper}, the 
$\nu_{\ell}$ and hence the
$\tilde{\nu}_\ell$ and $\tilde{\mu}_\ell$ vanish on $|\vec{x}|\ge \Rhomotopy$,
thus $\mu_\ell$ vanish on $|\vec{x}|\ge \Rhomotopy$ by 
\cite[(a2) of Theorem 2]{HomotopyPaper} and locality of $\uinj$.
One checks $ \Charge(\mu_{\ell})_{\ell'}=\delta_{\ell\ell'}$
using \eqref{eq:ortho} and the relation between 
the $\tilde{\nu}_\ell$ and $\mu_\ell$.
This concludes the proof of \ref{item:homotopy}.

\ref{item:support}:
By \eqref{eq:contrto20} and \eqref{eq:ccontr},
\begin{equation}\label{eq:Gxexp}
G_{\cx} 
= \tilde{G}_{\cx} 
- \tilde{G}_{\cx} q \tilde{G}_{\cx}
+ \tilde{I}_{\cx} G_0\tilde{P}_{\cx} 
- \tilde{G}_{\cx}q\tilde{I}_{\cx} G_0\tilde{P}_{\cx}
\end{equation}
The operator $\tilde{G}_{\cx}$ satisfies \ref{item:support}
by \cite[(a2) of Theorem 2, (a2) of Theorem 3]{HomotopyPaper}.
Since $q$ is local, also $\tilde{G}_{\cx} q \tilde{G}_{\cx}$
satisfies \ref{item:support}.
The last two terms satisfy \ref{item:support}
since the range of $G_0$ is only nonzero in $\ix^3_{\cpt}(\diamonddata)$,
and the $\tilde\nu_{\ell}$
in \eqref{eq:Iidef} vanish for $|\vec{x}|\ge \Rhomotopy$.

\ref{item:estimates}, \eqref{eq:Gcest}:
For $\ss\in\Z_{\ge1}$ and $\delta>2$
and all \smash{$\co\in\lxc^k_{\cpt}(\diamonddata)$, $\ci\in\ix^{k+1}_{\cpt}(\diamonddata)$}:
\begin{align*}
\|(G_{\dR}\otimes\one)\co\|_{\Hb^{\ss+1,\delta}(\diamonddata)}
&\lesssim_{\ss,\delta,\Rhomotopy} 
\|\co\|_{\Hb^{\ss,\delta}(\diamonddata)},&
\|\Gi \ci\|_{\Hb^{\ss,\delta}(\diamonddata)}
&\lesssim_{\ss,\delta,\Rhomotopy} 
\|\ci\|_{\Hb^{\ss-1,\delta}(\diamonddata)}
\end{align*}
by \cite[(a3) of Theorem 2]{HomotopyPaper},
respectively
by the $k=3$ instance of \cite[(a3) of Theorem 3]{HomotopyPaper},
where one must carefully change bases.
Beware that in \cite{HomotopyPaper}, the norms
are defined relative to the measure $dx^1dx^2dx^3$
while the norms in Definition \ref{def:cnorms} use $dx^1dx^2dx^3/\xvecjap^3$.
Thus \eqref{eq:Gcest} holds for $\tilde{G}_{\cx}$, 
and then by Lemma \ref{lem:diffbop} 
it holds for $\tilde{G}_{\cx} q \tilde{G}_{\cx}$.
For the last two terms in \eqref{eq:Gxexp} the estimates
follow from $|(P_{\dR}\otimes\one)\co| \lesssim_{\delta} \smash{\|\co\|_{\Hb^{0,\delta}(\diamonddata)}}$.
\eqref{eq:Gccomest}:
This is checked similarly to \eqref{eq:Gcest}, 
now using \cite[(a4) of Theorem 2, 3]{HomotopyPaper}.
We omit the details.
%
\qed
\end{proof}
\begin{remark}
Proposition \ref{prop:Gxhomotopy} is stated
for the subcomplex $\cx_{\cpt}(\diamonddata)\subset\cx(\diamonddata)$.
One can make an analogous statement for the subcomplex 
given by 
\[ 
\cx_{\rel}^k(\diamonddata)
=
\big(
\xvecjap^{-2}\Omega_{\rel}^k(\diamonddata) \otimesRR \boosts
\oplus 
\xvecjap^{-1}\Omega_{\rel}^k(\diamonddata) \otimesRR \transl
\big)
\oplus
(\iC_{\rel}^{k+1}(\diamonddata) \oplus \iC_{\rel}^{k+1}(\diamonddata))_{\R}
\]
where $\Omega_{\rel}^k(\diamonddata)$
are the relative differential forms 
on the radial compactification of $\diamonddata=\R^{3}$,
and where 
$\iC_{\rel}^k(\diamonddata)
=
\{u \in \iC^k(\diamonddata) \mid
\forall \underline{\zeta}\in\confKilS:\;
(\ast_{\iC} u)(\underline{\zeta}) \in \Omega_{\C,\rel}^k(\diamonddata)
\}$.
\end{remark}

\section{Constraints via homotopy transfer}
\label{sec:homotopytransfer}

We derive the constraint equations in the form \eqref{eq:MCinfintro}.
The first step, in Section \ref{sec:SHScontraction}, 
is to obtain a contraction $(\pSHS,\iSHS,\hSHS)$
from 
the complex $(\gx(\diamond),\dg)$ defined on $\diamond$
to the complex $(\cx(\diamonddata),\dc)$ defined on 
the initial hypersurface $\diamonddata$:
\begin{equation}
\label{eq:ContractionSHS}
\begin{tikzpicture}[baseline=(current  bounding  box.center)]
  \matrix (m) [matrix of math nodes, column sep = 28mm, minimum width = 4mm ]
  {
    (\gx(\diamond),\dg) & (\cx(\diamonddata),\dc) \\
  };
  \path[-stealth]
    (m-1-1) edge [transform canvas={yshift=1.5mm}] node [above] 
    {\footnotesize $\pSHS$ chain map} (m-1-2)
            edge [out=180+25,in=180-25,min distance=8mm] node
                 [left,xshift=-1mm] {
                 \footnotesize $\hSHS$ homotopy 
                 } (m-1-1)
    (m-1-2) edge [transform canvas={yshift=-1.5mm}] node [below] 
    {\footnotesize $\iSHS$ chain map} (m-1-1)
                 ;
\end{tikzpicture}
\end{equation}
The map $\pSHS$ was introduced in \eqref{eq:pullbackg}
and is given by pullback of forms and densities.
The map $\iSHS$ will be the solution operator of a linear symmetric hyperbolic system with initial data in $\cx(\diamonddata)$.
In Section \ref{sec:Bndef} we define the 
multilinear operator $\Bdata_n$ in \eqref{eq:MCinfintro}
as a sum of trees of the form indicated in Figure \ref{fig:AutobahnTree}.
The relation to the homotopy transfer theorem
\cite{berglund,huebschmannstasheff,kontsevich,ksoibelman,Vallette}
is discussed in Remark \ref{rem:HPT},
and in Remark \ref{rem:MCinLinf} 
we explain that \eqref{eq:MCinfintro} is the Maurer-Cartan
equation in an $L_\infty$ algebra \cite[Definition 4.1]{getzler}.
We derive estimates for the $\Bdata_n$ 
in Section \ref{sec:multilinearestimates},
and prove that \eqref{eq:MCinfintro} is equivalent to the constraint equations 
\eqref{eq:Constraints_Eq} in Section \ref{sec:MCequation}.

\begin{figure}
\centering
\begin{tikzpicture}[scale=0.54]
  \coordinate (a) at (4*\dx,0*\dy);
  \coordinate (b) at (3*\dx,0*\dy);
  \coordinate (ab) at (3.5*\dx,\dy);
  \coordinate (abc) at (2.5*\dx,2*\dy);
  \coordinate (c) at (2*\dx,\dy);
  \coordinate (d) at (1*\dx,2*\dy);
  \coordinate (abcd) at (1.5*\dx,3*\dy);
  \coordinate (e) at (0*\dx,3*\dy);
  \coordinate (abcde) at (0.5*\dx,4*\dy);
  \coordinate (out) at (.5*\dx,5*\dy);
\ir{a}{ab}{\footnotesize$\one-\hSHS\dg$}  
\il{b}{ab}{\footnotesize$\one-\hSHS\dg$}
\hr{ab}{abc}{\footnotesize$\hSHS$}
\il{c}{abc}{\footnotesize$\one-\hSHS\dg$}
\il{d}{abcd}{\footnotesize$\one-\hSHS\dg$}
\il{e}{abcde}{\footnotesize$\one-\hSHS\dg$}
\hr{abc}{abcd}{\footnotesize$\hSHS$}
\hr{abcd}{abcde}{\footnotesize$\hSHS$}
\hl{abcde}{out}{\footnotesize$\pSHS$}
\br{ab}{}{}
\br{abc}{}{}
\br{abcd}{}{}
\br{abcde}{}{}
\end{tikzpicture}
\captionsetup{width=115mm}
\caption{
A trivalent tree graph representing the map 
$\pSHS[(\one-\hSHS\dg)\cdot,\hSHS[(\one-\hSHS\dg)\cdot,
	\hSHS[(\one-\hSHS\dg)\cdot,
	\hSHS[(\one-\hSHS\dg)\cdot,(\one-\hSHS\dg)\cdot]]]]
$ from $\gx(\diamond)^{\otimes5}$ to $\cx(\diamonddata)$,
so each vertex stands for an application of 
the bracket.
The map $\Bdata_5$ is given by the total symmetrization (with signs) 
of the inputs of this tree,
and only depends on the restriction of the inputs to 
$\diamonddata$.
C.f.~\cite{Vallette}.
}
\label{fig:AutobahnTree}
\end{figure}

\subsection{A contraction 
via symmetric hyperbolic gauge fixing}
\label{sec:SHScontraction}

We construct the contraction \eqref{eq:ContractionSHS}
following \cite[Theorem 9]{RTgLa2}, 
see also \cite[Section 1.4]{RTgLa1} for a longer discussion.
The construction is based on linear symmetric hyperbolic gauge fixing
and uses the gauge subspaces 
$\gxG^k(\diamond)\subset \gx^{k}(\diamond)$
in \eqref{eq:gGsum}.
We note that they coincide with the gauge spaces in 
\cite[Section 5.4.1]{MinkowskiPaper} if instead of
the vector field and metric 
\cite[(265)]{MinkowskiPaper} one uses $\p_{x^0}$ and $\gmink$.

Recall the subspaces 
$\gxG^k(\diamond)\subset \gx^{k}(\diamond)$ in \eqref{eq:gGsum},
and the module multiplication \eqref{eq:modmult}.
By Lemma \ref{lem:basisg}, multiplication with $dx^0$ is an isomorphism
\begin{equation}\label{eq:gGdx0gGiso}
e_0:\; \gxG^{k}(\diamond) \;\xrightarrow{\;\sim\;}\; dx^0 \gxG^{k}(\diamond)
\end{equation}
and further 
\begin{equation}\label{eq:ggGgG}
\gx^{k+1}(\diamond) \;=\; \gxG^{k+1}(\diamond) \oplus dx^0 \gxG^{k}(\diamond)
\end{equation}
Let $\pi_{0}:\gx^{k+1}(\diamond)\to dx^0 \gxG^{k}(\diamond)$ be the projection onto the second summand.
\newcommand{\wRinv}{\mathsf{R}}
\begin{lemma}\label{lem:pih}
Define the $\R$-linear map 
\begin{equation}\label{eq:wdef}
\smash{w : \; 
	\gxG^k(\diamond) 
	\hookrightarrow 
	\gx^k(\diamond)
	\xrightarrow{\;\dg\;}
	\gx^{k+1}(\diamond)
	\xrightarrow{\;\pi_{0}\;}
	dx^0\gxG^{k}(\diamond)}
\end{equation}
The target space may be identified with $\gxG^k(\diamond)$,
using \eqref{eq:gGdx0gGiso}.
Then $w$ is a first order linear symmetric hyperbolic operator
for which the level sets of $x^0$ are spacelike.
In particular, $w$ is surjective, 
and there exists a unique right inverse 
\begin{equation*}
\smash{\wRinv :\;dx^0\gxG^{k}(\diamond)
	\to \gxG^k(\diamond) }
\end{equation*}
such that the elements in $\image(\wRinv)$ vanish along $\diamonddata$, 
i.e.~$\image(\wRinv)\subset x^0\gxG(\diamond)$.
Further, $\dg$ restricts to a map 
$\ker(w) \to \ker(w)$, 
hence $(\ker(w),\dg|_{\ker(w)})$ is a complex.
Further, the following composition is an isomorphism of complexes:
\begin{equation}\label{eq:kerwc}
\smash{\ker(w_k) 
\ \hookrightarrow\ 
\gx^k(\diamond)
\ \xrightarrow{\ \pSHS\ }\ 
\cx^k(\diamonddata)}
\end{equation}
where $w_k$ denotes the component of $w$ that maps
degree $k$ to degree $k+1$,
and $\pSHS$ is the map \eqref{eq:pullbackg}.
Define maps
$\iSHS: \cx^k(\diamonddata) \to \gx^k(\diamond)$
and
$\hSHS: \gx^{k+1}(\diamond) \to \gx^{k}(\diamond)$
by:
\begin{itemize}
\item 
$\hSHS:\gx^{k+1}(\diamond) 
	\xrightarrow{\;\pi_0\;} dx^0\gxG^{k}(\diamond)
	\xrightarrow{\;\wRinv\;}
	\gxG^k(\diamond)
	\hookrightarrow \gx^k(\diamond)$.
\item 
$\iSHS:\cx^k(\diamonddata) 
	\to
	\ker(w_k)
	\hookrightarrow \gx^k(\diamond)$
where the first arrow is the inverse of \eqref{eq:kerwc}.
\end{itemize}
Then the triple 
\begin{equation}\label{eq:pih}
(\pSHS,\iSHS,\hSHS)
\end{equation}
is a contraction \eqref{eq:contractionproperties}
with side conditions \eqref{eq:sidecond} from 
$(\gx(\diamond),\dg)$ to $(\cx(\diamonddata),\dc)$.
\end{lemma}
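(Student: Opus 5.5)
The plan is to treat $w$ as the ``gauge-fixed'' part of $\dg$ and derive each claim from the symmetric hyperbolic structure together with the algebraic identities of $\dg$. First I verify that $w$ is symmetric hyperbolic. Using Lemma \ref{lem:dglabasisNEW} (second item), the principal symbol of $\dg$ on $fb$ is $\xvecjap\p_{x^\mu}f\,(\frac{dx^\mu}{\xvecjap}b)$. After projecting with $\pi_0$ and identifying $dx^0\gxG^k\simeq\gxG^k$ via $e_0$, the symbol of $w$ becomes $\sum_\mu A^\mu\p_{x^\mu}$, where $A^0$ is the identity. The remaining blocks $A^i$ come from the $\pi_0$-component of $dx^i\wedge(\cdot)$ on the $\lx$ part, which vanishes, and from $\sigma$ and $\dI$ on the $\I$ part, where one gets the Maxwell/Bianchi-type curl structure. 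I will check symmetry of the $A^i$ in the bases of Table \ref{tab:Dbases}, with weights adjusted if necessary, by direct computation. This is the same gauge as \cite[Section 5.4.1]{MinkowskiPaper} with $\p_{x^0}$ and $\gmink$, so I would cite that computation, or \cite[Theorem 9]{RTgLa2}. Standard well-posedness for linear symmetric hyperbolic systems with $\{x^0=0\}$ spacelike then gives the following: for every $f$ and every initial datum on $\diamonddata$ there is a unique solution on all of $\diamond=\R^4$. In particular $\wRinv f$ is defined as the solution with zero data, $w$ is surjective, and the uniqueness claim for $\wRinv$ holds.

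Next, $\dg$ preserves $\ker w$. Given $u\in\ker w_k$, write $\dg u=v\in\gxG^{k+1}$, using $\pi_0\dg u=0$. Then $w(v)=\pi_0\dg v=\pi_0\dg\dg u=0$. Hence $(\ker w,\dg)$ is a subcomplex.

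For the isomorphism \eqref{eq:kerwc}, I note that on $\gxG^k$ the pullback $\pSHS$ is the restriction to $\diamonddata$ followed by the fiberwise isomorphism $\pdata|_{\gxdataG}$ (Lemma \ref{lem:cbasis}). By uniqueness for the Cauchy problem of $w$, elements of $\ker w_k$ are determined by their restriction, and every datum in $\gxdataG^k(\diamonddata)$ is attained. This gives bijectivity. That the map is a chain map follows from \eqref{eq:dppd}.

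For the contraction identities, $\pSHS\iSHS=\one$ by construction, and $\iSHS$ is a chain map since it is the inverse of a chain isomorphism followed by an inclusion of a subcomplex. The main step is the identity $\one-\iSHS\pSHS=\dg\hSHS+\hSHS\dg$. I take $u\in\gx^k$ and set $u'=u-\dg\hSHS u-\hSHS\dg u$, and I want to show $u'=\iSHS\pSHS u$. Restricting to $\diamonddata$, $\hSHS$ has image vanishing at $x^0=0$, so $\pSHS\hSHS=0$. Also $\pSHS\dg\hSHS=\dc\pSHS\hSHS=0$. Hence $\pSHS u'=\pSHS u$. It then remains to show $u'\in\ker w$, meaning $u'\in\gxG^k$ and $\pi_0\dg u'=0$. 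This is the step I expect to be the main obstacle, because it requires the decomposition \eqref{eq:ggGgG} together with $w\wRinv=\one$. Concretely, $\pi_0(u-\hSHS\dg u-\dg\hSHS u)$ should vanish: one has $\pi_0\dg\hSHS u=w\wRinv\pi_0 u=\pi_0 u$, and $\hSHS\dg u\in\gxG$.

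For $\pi_0\dg u'$, I compute $\pi_0\dg u'=\pi_0\dg u-\pi_0\dg\hSHS\dg u$, using $\dg^2=0$. Since $\hSHS\dg u=\wRinv\pi_0\dg u\in\gxG$, we get $\pi_0\dg\hSHS\dg u=w\wRinv\pi_0\dg u=\pi_0\dg u$. So $\pi_0\dg u'=0$, and uniqueness then forces $u'=\iSHS\pSHS u$.

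Finally, the side conditions. $\pSHS\hSHS=0$ was shown above. Next, $\hSHS\iSHS=0$ because $\pi_0\iSHS c=0$, since $\iSHS c\in\gxG$. Lastly, $\hSHS^2=0$ because $\hSHS$ lands in $\gxG$, on which $\pi_0$ vanishes.
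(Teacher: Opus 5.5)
Your proposal is correct and follows essentially the paper's route: first symmetric hyperbolicity of $w$ from its principal symbol ($A^0=\one$, $A^\ell$ constant and symmetric in the basis of Table~\ref{tab:Dbases}); the paper then simply cites \cite[Lemma 76, 86]{Thesis} for everything else, whereas you verify directly the subcomplex property, the isomorphism \eqref{eq:kerwc}, the homotopy identity and the side conditions, and those verifications are sound. One small correction: $\sigma$ is $C^\infty$-linear and so does not enter the principal symbol; the nonzero $A^\ell$ come only from the module multiplication on the $\IG$ block (the symbol of $\dI$), and no reweighting of the basis is needed.
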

\begin{proof}
Symmetric hyperbolicity is shown in \cite[Lemma 42]{Thesis};
for concreteness we include a proof:
Denote the basis of $\gxG^k(\diamond)$ in Lemma \ref{lem:basisg}
by $b_1,\dots,b_{\ngG_k}$. 
For $u = f_i b_i$
(implicit sum over $i=1\dots\ngG_k$),
the second item in Lemma \ref{lem:dglabasisNEW} yields
\begin{align*}
w_k(u) 
&= \pi_0 \dg(f_i b_i)
=  (\xvecjap\p_{x^{\mu}}f_i) 
	\pi_0(\tfrac{dx^\mu}{\xvecjap}b_i) + (\pi_0 \dg b_i)f_i
\end{align*}
Only the first term is principal.
By direct calculation, with $\ell=1,2,3$,
\begin{equation}\label{eq:Amodmult}
\pi_0(\tfrac{dx^\mu}{\xvecjap}b_i) 
	= (A^{\mu}_k)_{ji} \tfrac{dx^0}{\xvecjap} b_j 
\quad
\text{where}
\quad
A^0_k=\one_{\ngG_k\times\ngG_k},\;
(A^\ell_k)^T=A^\ell_k\in\R^{\ngG_k\times\ngG_k}
\end{equation}
Concretely
$A^\ell_0=0$, 
$A^\ell_1=-a^\ell_1$, 
$A^\ell_2=-a^\ell_2$, 
$A^\ell_3=0$
with $a^\ell_1$, $a^\ell_2$ the matrices in \cite[(213), 
see also (199)]{MinkowskiPaper}.
Thus $w_k$ is linear symmetric hyperbolic.
The remaining statements follow from \cite[Lemma 76]{Thesis} 
applied to $M=\gx(\diamond)$, $M_{\textnormal{G}}=\gxG(\diamond)$
and \cite[Lemma 86]{Thesis}\footnote{%
In \cite{Thesis} the map $w$ is denoted $K$,
and instead of $\pi_0$ the canonical projection 
$\gx^{k+1}(\diamond)\to \gx^{k+1}(\diamond)/\gxG^{k+1}(\diamond)$ is used.};
the contraction $(p,i,h)$ in \cite[Lemma 76]{Thesis} 
is from $(\gx(\diamond),\dg)$ to $(\ker(w),\dg|_{\ker(w)})$,
to obtain \eqref{eq:pih} one must compose with the isomorphism
\eqref{eq:kerwc} and use \cite[(339)]{Thesis}
to conclude that $p$ composed with \eqref{eq:kerwc} is $\pSHS$.
\qed
\end{proof}
Recall the maps $\pdata$, $\idata$ 
from \eqref{eq:pdatadef}, \eqref{eq:idatadef}:
The map $\pdata$ is the restriction of 
$\pSHS$ to the initial hypersurface $\diamonddata$,
and $\idata$ is the inverse of 
$\gxdataG^k(\diamonddata)\to\cx^k(\diamonddata)$, 
$\udata\mapsto\pdata(\udata)$.
\smash{Thus for all 
$u\in\gx^k(\diamond)$ and 
all $\cc\in\cx^k(\diamonddata)$,}
\begin{align}\label{eq:idatai}
\pdata(u|_{\diamonddata}) = \pSHS(u) 
&&
\idata(\cc) = \iSHS(\cc)|_{\diamonddata}
\end{align}
\begin{remark}\label{rem:modmultfibinj}
We have stated that multiplication with 
$dx^0$ is a (fiberwise) injective map $\gxG^k(\diamond) \to \gx^k(\diamond)$.
More generally, for all $\omega\in\Omega^1(\diamond)$ 
that are timelike with respect to the Minkowski metric $\gmink$,
multiplication 
$\gxG^k(\diamond) \to \gx^{k+1}(\diamond)$, $u\mapsto \omega u$
is (fiberwise) injective.
Proof: 
It suffices to check that 
$u\mapsto \pi_{0}(\omega u)$ is injective.
This in turn follows from \eqref{eq:Amodmult}
and the fact that $\omega(\p_{x^\mu})A^\mu_k$ 
is positive definite 
when $\omega$ is future directed timelike,
c.f.~\cite[Remark 17]{MinkowskiPaper}.
\end{remark}

Define the $C^\infty(\diamond)$-linear map\footnote{
The analogous map in \cite[Section 5.4.1]{MinkowskiPaper}
was defined as the adjoint the multiplication map,
see \cite[(267)]{MinkowskiPaper}.
The definition here is equivalent.
}
\begin{equation}\label{eq:pitildedef}
\smash{\pitilde:\;
\gx^{k+1}(\diamond)
\xrightarrow{\pi_0} dx^0\gxG^k(\diamond)
\xrightarrow{e_0^{-1}} \gxG^k(\diamond)}
\end{equation}
where $\pi_0$ is defined after \eqref{eq:ggGgG} 
and 
$e_0^{-1}$ is the inverse of
$e_0$ in \eqref{eq:gGdx0gGiso}.
\begin{lemma}\label{lem:hfirstorder}
For all $u\in\gx(\diamond)$ and all $m\in\Z_{\ge0}$ one has
\begin{subequations}
\begin{align}
\hSHS((x^0)^m u) 
	&\;\in\; (x^0)^{m+1} \gx(\diamond)
	\label{eq:hx0}\\
(\hSHS- \tfrac{1}{m+1} x^0\pitilde)((x^0)^m u) 
	&\;\in\; (x^0)^{m+2} \gx(\diamond)
	\label{eq:hx0firstorder}
\end{align}
\end{subequations}
\end{lemma}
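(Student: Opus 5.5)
We need to prove: $\hSHS((x^0)^m u)\in (x^0)^{m+1}\gx(\diamond)$ and that it agrees with $\frac{1}{m+1}x^0\pitilde((x^0)^m u)$ up to $(x^0)^{m+2}\gx(\diamond)$. By definition $\hSHS=\wRinv\pi_0$, and $v:=\hSHS((x^0)^m u)=\wRinv\,\pi_0((x^0)^m u)$ is the unique element of $\gxG^k(\diamond)$ vanishing on $\diamonddata$ with $w(v)=\pi_0((x^0)^mu)=(x^0)^m\pi_0(u)$, since $\pi_0$ is $C^\infty$-linear. The plan is a Taylor expansion in $x^0$ of the symmetric hyperbolic system $w(v)=f$.

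\textbf{Structure of $w$.} Expand $v=v_ib_i$ in the basis of $\gxG^k(\diamond)$ from Lemma~\ref{lem:basisg}. Using \eqref{eq:Amodmult} and the formula for $w_k$ in the proof of Lemma~\ref{lem:pih}, and identifying $dx^0\gxG^k(\diamond)$ with $\gxG^k(\diamond)$ via $e_0$, we get
\[
e_0^{-1}w(v)=\big(\p_{x^0}v_i + \tsum_{\ell=1}^3(A^\ell_k)_{ji}\p_{x^\ell}v_i\big)b_j + (\text{zeroth order in } v).
\]
Here $\xvecjap$ factors cancel because $A^0_k=\one$ and $\tfrac{dx^0}{\xvecjap}\cdot\xvecjap\p_{x^0}$ produces $dx^0\p_{x^0}$. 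Thus $e_0^{-1}w=\p_{x^0}+L$, where $L$ is a first order operator containing only $\p_{x^\ell}$ derivatives, plus a zeroth order term, with smooth coefficients. The key point is that $L$ maps $(x^0)^j\gxG$ into $(x^0)^j\gxG$: tangential derivatives and multiplication by smooth functions preserve divisibility by $(x^0)^j$.

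\textbf{Taylor argument.} With $g:=e_0^{-1}\pi_0((x^0)^mu)=(x^0)^m\pitilde(u)\in(x^0)^m\gxG^k(\diamond)$, solve $(\p_{x^0}+L)v=g$ with $v|_{x^0=0}=0$. I would argue via formal Taylor coefficients: write $v\sim\sum_j (x^0)^j v^{(j)}$ with $v^{(j)}$ independent of $x^0$. The equation gives the recursion $(j+1)v^{(j+1)}=g^{(j)}-(Lv)^{(j)}$, and $(Lv)^{(j)}$ depends only on $v^{(0)},\dots,v^{(j)}$. Since $v^{(0)}=0$ and $g^{(j)}=0$ for $j<m$, induction gives $v^{(j)}=0$ for $j\le m$, which is \eqref{eq:hx0}. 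At the next step, $(m+1)v^{(m+1)}=g^{(m)}=\pitilde(u)|_{x^0=0}$, the $x^0$-independent extension. Hence $v-\tfrac1{m+1}(x^0)^{m+1}\pitilde(u)$ vanishes to order $m+2$ at $x^0=0$. Since $(x^0)^{m+1}\pitilde(u)=x^0\pitilde((x^0)^mu)$ by $C^\infty$-linearity of $\pitilde$, this is \eqref{eq:hx0firstorder}, once we know that vanishing to order $m+2$ implies membership in $(x^0)^{m+2}\gx(\diamond)$ (Hadamard's lemma applied componentwise).

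\textbf{Main obstacle.} The delicate step is rigor. The recursion only shows that the Taylor coefficients of the actual solution $v$ at $x^0=0$ vanish, and this requires $v$ to be smooth up to $\diamonddata$ so that we can differentiate the equation $\p_{x^0}v=g-Lv$ in $x^0$ and restrict. Smoothness holds because $\wRinv$ is the smooth solution operator of a linear symmetric hyperbolic system with smooth data (Lemma~\ref{lem:pih}). Then $\p_{x^0}^{j+1}v|_{x^0=0}$ is determined by lower-order restrictions, and the induction above becomes a statement about these restrictions. The final conversion is done by Taylor's formula with integral remainder in $x^0$, which gives smooth quotients.
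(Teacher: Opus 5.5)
Your proposal is correct. The normal form $e_0^{-1}w=\p_{x^0}+L$ follows from \eqref{eq:Amodmult} with $A^0_k=\one$; in your display the $\p_{x^0}$ term should read $\p_{x^0}v_j$, a harmless index slip. The Taylor recursion and the final Hadamard step are sound.

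The paper organizes the argument differently. It uses as a black box the following fact: for the symmetric hyperbolic operator $w$ with $\diamonddata$ spacelike, an element that vanishes on $\diamonddata$ and whose image under $w$ lies in $(x^0)^{j}\gx(\diamond)$ must itself lie in $(x^0)^{j+1}\gx(\diamond)$.
\begin{itemize}
\item With $j=m$ this gives \eqref{eq:hx0}.
\item For \eqref{eq:hx0firstorder}, the paper forms the difference $A=(\wRinv-\frac{1}{m+1}x^0e_0^{-1})((x^0)^m\pi_0 u)$. It computes $w$ of the subtracted term by the Leibniz rule \eqref{eq:dgLeib}: the $dx^0$-term reproduces $(x^0)^m\pi_0 u=w\wRinv((x^0)^m\pi_0u)$ exactly, so $wA\in(x^0)^{m+1}\gx(\diamond)$. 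The black box with $j=m+1$ then finishes.
\end{itemize}

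Your formal Taylor recursion is precisely a proof of that black box. It also shows that only non-characteristicity of $\{x^0=0\}$ is needed (invertibility of the $\p_{x^0}$ coefficient), together with smoothness of the output of $\wRinv$; symmetric hyperbolicity enters only through existence of $\wRinv$. You read off the $(m+1)$-st Taylor coefficient directly rather than through a subtraction.

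The trade-off is this: the paper's version is shorter and essentially coordinate-free, since it never needs the explicit form of $L$. Yours is fully self-contained.
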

\begin{proof}
Denote $v=\pi_0u$, then $\hSHS((x^{0})^mu)=(x^{0})^m \wRinv v$.
\textit{Proof of \eqref{eq:hx0}.}
By definition of $\wRinv$ we have
$\wRinv((x^0)^m v)|_{\diamonddata}=0$, $w \wRinv((x^0)^m v)=(x^0)^m v$.
Thus $\wRinv((x^0)^m v)\in (x^0)^{m+1}\gx(\diamond)$, 
using the fact that $w$ is linear first order
symmetric hyperbolic with $\diamonddata$ spacelike, by Lemma \ref{lem:pih}.
\textit{Proof of \eqref{eq:hx0firstorder}.}
Set 
$$
A 
:= 
(\hSHS- \tfrac{1}{m+1} x^0\pitilde)((x^0)^m u) 
=
(\wRinv- \tfrac{1}{m+1} x^0 e_0^{-1})((x^0)^m v) 
$$
By symmetric hyperbolicity of $w$ it again suffices
to show that $A|_{\diamonddata}=0$ and 
$wA\in (x^0)^{m+1} \gx(\diamond)$.
The first is clear. We show the second:
Using \eqref{eq:dgLeib},
\begin{align*}
w\big(\tfrac{1}{m+1} (x^0)^{m+1} e_0^{-1}(v)\big)
&=
\pi_0 \dg\big(\tfrac{1}{m+1} (x^0)^{m+1} e_0^{-1}(v)\big)  \\
&=
(x^0)^{m}\pi_0( dx^0 e_0^{-1} v)
+
\tfrac{1}{m+1} (x^0)^{m+1} \pi_0 (\dg e_0^{-1}(v))
\end{align*}
The first term equals $(x^0)^m v=w\wRinv((x^0)^m v)$,
thus 
$wA\in(x^0)^{m+1}\gx(\diamond)$.
\qed
\end{proof}


\subsection{Definition of multilinear maps}
	\label{sec:Bndef}
	
We define the multilinear, 
first order differential operators $\Bdata_n$
in three steps:
in Definition \ref{def:tree}
we use the contraction $(\pSHS,\iSHS,\hSHS)$ in \eqref{eq:pih}
to define maps $\tree_n$ given by trees as
illustrated in Figure \ref{fig:AutobahnTree};
in Definition \ref{def:Bn} we 
define maps $\B_n$ given by the total graded symmetrization of the $\tree_n$;
in Proposition \ref{prop:Bproperties} we show that 
the $\B_n$ only depend on the restriction of the inputs to $\diamonddata$,
which defines $\Bdata_n$.


\begin{definition}[Trees]\label{def:tree}
Define the $\R$-bilinear map
\begin{align*}
\brh:\;\gx^k(\diamond)\times \gx^{k'}(\diamond)
	&\to \gx^{k+k'-1}(\diamond)
	&\brh(u,u') &=\brh_u(u')= [u,\hSHS u']
\end{align*}
For each $n\in\Z_{\ge2}$ define the $\R$-multilinear map 
$$\tree_{n}:\;\gx^{k_1}(\diamond)\times\dots\times\gx^{k_n}(\diamond)
	\to\cx^{k_1+\dots+k_n-(n-2)}(\diamonddata)$$ 
by
\begin{align}\label{eq:Tndef}
\begin{aligned}
\tree_2(u_1,u_2) 
	&= (-1)^{k_1} \pSHS[\tilde{u}_1,\tilde{u}_2]\\
\tree_{n\ge3}(u_1,\dots,u_n)
	&=
	(-1)^{k_1+\dots+k_{n-1}}
	\pSHS  \brh_{\tilde{u}_1} 
	\cdots \brh_{\tilde{u}_{n-2}}
	[\tilde{u}_{n-1},\tilde{u}_{n}]
\end{aligned}
\end{align}
where $\tilde{u}_i = (\one-\hSHS\dg)u_i$.
\end{definition}
The map $\tree_n$ may be represented as a
trivalent tree graph as in Figure \ref{fig:AutobahnTree}.
The next lemma will show that 
$\tree_n$ is a first order differential operator.

Recall the anchor map $\anchorg$ in \cite[Theorem 4]{MinkowskiPaper}:
For $(\omega\otimes\zeta)\oplus\uI\in\gx^k(\diamond)$ 
with $\omega\otimes\zeta\in\Omega^k(\diamond)\otimesRR\Kil$,
$\uI\in\I^{k+1}(\diamond)$ 
and for $\omega'\in\Omega(\diamond)$ one has
\[ 
\anchorg((\omega\otimes\zeta)\oplus\uI)(\omega') = \omega \wedge (\Lie_{\zeta} \omega')
\]
In particular, $\anchorg$ is $C^\infty$-linear in the first argument.
\begin{lemma}\label{lem:TreeSimp}
Define the $C^\infty$-bilinear map
$\rhoh:\gx^{k'}(\diamond) \times \gx^{k'}(\diamond)
\to \gx^{k+k'-1}(\diamond)$, 
\begin{align*}
	\rhoh(u,u')=\rhoh_u(u') = \anchorg(u)(x^0) \pitilde (u')
\end{align*}
where we use the multiplication \eqref{eq:modmult}
and the map $\pitilde$ in \eqref{eq:pitildedef}.
We further denote by 
$\rhohdata:\gxdata^{k}(\diamonddata)\times\gxdata^{k'}(\diamonddata)
\to \gxdata^{k+k'}(\diamonddata)$ the 
restriction of $\rhoh$ to $\diamonddata$.
Then 
\begin{equation}\label{eq:uup}
(\brh_u-\rhoh_u)(u') \in x^0\gx(\diamond)
\end{equation}
and 
for all $n\in\smash{\Z_{\ge2}}$
and $u_1\in\gx^{k_1}(\diamond),\dots,u_n\in\gx^{k_n}(\diamond)$:
\begin{align}\label{eq:TreeFormula}
\begin{aligned}
\tree_{n}(u_1,\dots,u_n)
	&=
	(-1)^{k_1+\dots+k_{n-1}}
	\pdata\,\rhohdata_{\udata_1} \cdots \rhohdata_{\udata_{n-2}}
	\Big([u_{n-1},u_n]|_{\diamonddata}\\
	&\qquad\qquad
	-
	\rhohdata_{\udata_{n-1}}(\dg u_n)|_{\diamonddata}
	+
	(-1)^{k_{n-1}k_n}\rhohdata_{\udata_n}(\dg u_{n-1})|_{\diamonddata}\Big)
\end{aligned}
\end{align}
where $\udata_i=u_i|_{\diamonddata}$.
\end{lemma}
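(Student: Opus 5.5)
The plan is to prove \eqref{eq:uup} first, and then derive \eqref{eq:TreeFormula} by repeatedly using that $\pSHS$ only sees restrictions to $\diamonddata$ and that $x^0\gx(\diamond)$ is preserved by $C^\infty$-linear operations.

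\textbf{Proof of \eqref{eq:uup}.} The starting point is the Leibniz rule \eqref{eq:anchorleib}. By \eqref{eq:hx0} with $m=0$ we have $\hSHS u'\in x^0\gx(\diamond)$. By \eqref{eq:hx0firstorder} with $m=0$ we may write $\hSHS u' = x^0\pitilde(u') + (x^0)^2 v$ for some $v\in\gx(\diamond)$. Then
\[
[u,\hSHS u'] = \anchorg(u)(x^0)\pitilde(u') + (-1)^{k}x^0[u,\pitilde u'] + \anchorg(u)((x^0)^2)v + (-1)^{k}(x^0)^2[u,v].
\]
We have $\anchorg(u)((x^0)^2)=2x^0\anchorg(u)(x^0)$, since the anchor acts as a derivation on functions. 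Hence every term except the first lies in $x^0\gx(\diamond)$, and the first is exactly $\rhoh_u(u')$.

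\textbf{Proof of \eqref{eq:TreeFormula}.} Since $\pSHS$ is pullback, $\pSHS(x^0 w)=0$, so modulo $x^0\gx(\diamond)$ the arguments can be freely replaced. The steps are as follows.

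First, $\tilde u_i=(\one-\hSHS\dg)u_i$. By \eqref{eq:hx0} the correction $\hSHS\dg u_i$ vanishes on $\diamonddata$, so $\tilde u_i|_{\diamonddata}=\udata_i$. Moreover, \eqref{eq:hx0firstorder} gives $\hSHS\dg u_i = x^0\pitilde(\dg u_i)\bmod (x^0)^2$.

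Second, $\brh_{\tilde u}$ preserves $x^0\gx(\diamond)$. Indeed $\hSHS(x^0w)\in(x^0)^2\gx$ by \eqref{eq:hx0}, and bracketing with $(x^0)^2$-terms leaves something in $x^0\gx$, again by the Leibniz rule. By \eqref{eq:uup}, modulo $x^0\gx$ we have $\brh_{\tilde u_i}\equiv\rhoh_{\tilde u_i}$. Since $\rhoh$ is $C^\infty$-bilinear, its restriction to $\diamonddata$ is $\rhohdata_{\udata_i}$ applied to the restriction of the inner argument. Iterating over $i=1,\dots,n-2$ reduces the problem to computing $[\tilde u_{n-1},\tilde u_n]|_{\diamonddata}$.

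Third, expand $[\tilde u_{n-1},\tilde u_n]$, writing $\tilde u_n = u_n - x^0\pitilde(\dg u_n)\bmod (x^0)^2$. Applying \eqref{eq:anchorleib},
\[
[u_{n-1}, x^0\pitilde\dg u_n]|_{\diamonddata} = \rhohdata_{\udata_{n-1}}(\dg u_n)|_{\diamonddata}.
\]
The cross term of the two corrections is $O(x^0)$ and drops out. The term with the correction in the first slot is handled by graded antisymmetry of the bracket, which produces the sign $(-1)^{k_{n-1}k_n}$. One should double-check here that the degrees entering $\rhoh$ (namely $\dg u_{n-1}$ of degree $k_{n-1}+1$) produce exactly that sign.

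The overall sign $(-1)^{k_1+\dots+k_{n-1}}$ carries over unchanged, and $\pSHS$ becomes $\pdata$ by \eqref{eq:idatai}. The case $n=2$ is the same computation without the $\brh$ factors.

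The main obstacle is the sign bookkeeping in the graded antisymmetry and the Leibniz rule, together with checking that the anchor term with a function vanishing at $x^0=0$ contributes nothing on $\diamonddata$.
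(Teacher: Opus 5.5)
Your proposal is correct and follows essentially the same route as the paper: you get \eqref{eq:uup} from \eqref{eq:hx0firstorder} with $m=0$ and the Leibniz rule \eqref{eq:anchorleib}, then replace the nested $\brh$'s by $\rhoh$'s modulo $x^0\gx(\diamond)$, and finally expand $[\tilde u_{n-1},\tilde u_n]|_{\diamonddata}$ using $[\hSHS-,\hSHS-]|_{\diamonddata}=0$. The sign you flagged works out: graded antisymmetry is applied to $\hSHS\dg u_{n-1}$, whose degree is $k_{n-1}$ (not $k_{n-1}+1$), so $-[\hSHS\dg u_{n-1},u_n]=(-1)^{k_{n-1}k_n}[u_n,\hSHS\dg u_{n-1}]$; also, the factor $(-1)^k$ in your expansion for \eqref{eq:uup} should be $+1$ because $x^0$ has form degree $0$, which is harmless since that term lies in $x^0\gx(\diamond)$ anyway.
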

\begin{proof}
\eqref{eq:uup}: By 
\eqref{eq:hx0firstorder} with $m=0$ and \eqref{eq:anchorleib}.
\eqref{eq:TreeFormula}:
Using \eqref{eq:uup}, $C^\infty$-bilinearity of $\rhoh$,
$\pSHS(x^0\gx(\diamond))=0$ and \eqref{eq:hx0} with $m=0$,
we obtain
\begin{align*}
\tree_{n}(u_1,\dots,u_n)
&=
	(-1)^{k_1+\dots+k_{n-1}}
	\pSHS \rhoh_{u_1} 
	\cdots \rhoh_{u_{n-2}} [\tilde{u}_{n-1},\tilde{u}_{n}]
\intertext{where $\tilde{u}_j = (1-\hSHS\dg)u_j$.
Also using the first of \eqref{eq:idatai}, we obtain}
\tree_{n}(u_1,\dots,u_n)
&=
	(-1)^{k_1+\dots+k_{n-1}}
	\pdata \rhohdata_{\udata_1} 
	\cdots \rhohdata_{\udata_{n-2}} ([\tilde{u}_{n-1},\tilde{u}_{n}]|_{\diamonddata})
\end{align*}
Using $\R$-bilinearity of the bracket and the definition of $\brh$,
we obtain
\begin{align*}
[\tilde{u}_{n-1},\tilde{u}_{n}]|_{\diamonddata}
=
[u_{n-1},u_n]|_{\diamonddata}
-
\rhohdata_{\udata_{n-1}}(\dg u_n)|_{\diamonddata}
+
(-1)^{k_{n-1}k_n}\rhohdata_{\udata_{n}}(\dg u_{n-1})|_{\diamonddata}
\end{align*}
where we also used \eqref{eq:uup} and $[\hSHS-,\hSHS-]|_{\diamonddata}=0$
by Lemma \ref{lem:hfirstorder} and \eqref{eq:anchorleib}.
\qed
\end{proof}
Let $S_n$ be the symmetric groups on $n$ letters.
\begin{definition}[Koszul signs]\label{def:Koszsigns}
For a permutation $\pi\in S_n$ 
and integers $k_1,\dots,k_n\in\Z$ define
$\chi(\pi;k_1,\dots,k_{n})\in\{\pm1\}$ as follows.
If $\pi=(j,j+1)$ for some $j=1\dots n-1$, 
i.e.~$\pi$ is a simple transposition, then 
\begin{align}\label{eq:chisimp}
\chi((j,j+1);k_1,\dots,k_{n}) &= -(-1)^{k_j k_{j+1}}
\end{align}
If $\pi_1\circ\pi_2$ is a composition of two permutations
$\pi_1,\pi_2\in S_n$ then 
\[ 
\chi(\pi_1\circ\pi_2;k_1,\dots,k_{n})
=
\chi(\pi_1;k_1,\dots,k_{n})
\chi(\pi_2;k_{\pi_1(1)},\dots,k_{\pi_1(n)})
\]
Define $\hat\chi(\pi;k_1,\dots,k_{n})\in\{\pm1\}$
analogously, where instead of \eqref{eq:chisimp} we use
\[ 
\hat\chi((j,j+1);k_1,\dots,k_{n}) = (-1)^{(k_j+1) (k_{j+1}+1)}
\]
\end{definition}
Observe that 
$\chi(\pi;1,\dots,1)=1$ and $\hat\chi(\pi;1,\dots,1)=1$
for all $\pi\in S_n$.
\begin{definition}\label{def:Bn}
For each $n\in\Z_{\ge1}$ define the $\R$-multilinear map
\begin{align*}
\B_n:\;\; \gx^{k_1}(\diamond)\times\dots\times\gx^{k_n}(\diamond)
	&\to\cx^{k_1+\dots+k_n-(n-2)}(\diamonddata)
\end{align*}
by 
\begin{align*}
\B_{1}(u_1)
	&=
	\pSHS \dg (\one-\hSHS\dg) u_1 
	=\pSHS \dg  u_1
	=\dc\pSHS u_1\\
\B_{n\ge2}(u_1,\dots,u_n)
	&=
	\tfrac12(-1)^{\lfloor\frac{n}{2}\rfloor+1}
	(-1)^{k_{n-1}+k_{n-3}+\dots}\\
	&\qquad\quad\tsum_{\pi\in S_n}
	\hat{\chi}(\pi;k_1,\dots,k_n)
	\tree_{n}\big(u_{\pi(1)},\dots,u_{\pi(n)}\big)
\end{align*}
\end{definition}

In the definition of $\B_1$, the second and third formulas follow from 
\eqref{eq:contractionproperties}, \eqref{eq:sidecond}.

When all $u_1,\dots,u_n$ are in degree one, 
i.e.~$k_1,\dots,k_n=1$, then the signs
that appear in the definition of $\B_{n\ge2}$
and of $\tree_{n\ge2}$ simplify to $(-1)^n$.
\begin{remark}[Relation to the homotopy transfer theorem]
\label{rem:HPT}
Composing the maps $\B_n$ with $\iSHS:\cx^k(\diamonddata)\to\gx^k(\diamond)$ in each input, 
we obtain maps
\begin{equation}\label{eq:Bni}
\B_n(\iSHS-,\cdots,\iSHS-):\; 
\cx^{k_1}(\diamonddata)\times\cdots\times\cx^{k_n}(\diamonddata)
\to \cx^{k_1+\cdots+k_n-(n-2)}(\diamonddata)
\end{equation}
Note that $(\one-\hSHS\dg)\iSHS = \iSHS$ by \eqref{eq:contractionproperties}, \eqref{eq:sidecond}.
Therefore \eqref{eq:Bni} coincide with the $L_\infty$
brackets obtained by transferring the Lie bracket $[\cdot,\cdot]$
along the contraction $(\pSHS,\iSHS,\hSHS)$ using the homotopy transfer theorem, 
see \cite[Theorem 10.3.5]{Vallette}.
The general formula for the $L_\infty$ brackets
obtained from homotopy transfer is given by a sum 
of trivalent tree graphs such as, for example,
\[ 
\begin{aligned}
\begin{tikzpicture}[scale=0.6]
  \coordinate (a) at (0*\dx,0);
  \coordinate (b) at (\dx,0);
  \coordinate (c) at (2*\dx,0);
  \coordinate (d) at (3*\dx,0);
  \coordinate (ab) at (0.5*\dx,\dy);
  \coordinate (cd) at (2.5*\dx,\dy);
  \coordinate (abcd) at (1.5*\dx,2*\dy);
  \coordinate (out) at (1.5*\dx,3*\dy);
  \il{a}{ab}{\footnotesize$\iSHS$}
  \ir{b}{ab}{\footnotesize$\iSHS$}
  \il{c}{cd}{\footnotesize$\iSHS$}
  \ir{d}{cd}{\footnotesize$\iSHS$}
  \hl{ab}{abcd}{\footnotesize$\hSHS$}
  \hr{cd}{abcd}{\footnotesize$\hSHS$}
  \pl{abcd}{out}{\footnotesize$\pSHS$}
  \br{ab}{}{}
  \br{cd}{}{}
  \br{abcd}{}{}
\end{tikzpicture}
\end{aligned}
\qquad
\begin{aligned}
\begin{tikzpicture}[scale=0.5]
  \coordinate (a) at (4*\dx,0*\dy);
  \coordinate (b) at (3*\dx,0*\dy);
  \coordinate (ab) at (3.5*\dx,\dy);
  \coordinate (abc) at (2.5*\dx,2*\dy);
  \coordinate (c) at (2*\dx,\dy);
  \coordinate (d) at (1*\dx,2*\dy);
  \coordinate (abcd) at (1.5*\dx,3*\dy);
  \coordinate (out) at (1.5*\dx,4.1*\dy);
\ir{a}{ab}{\footnotesize$\iSHS$}  
\il{b}{ab}{\footnotesize$\iSHS$}
\hr{ab}{abc}{\footnotesize$\hSHS$}
\il{c}{abc}{\footnotesize$\iSHS$}
\il{d}{abcd}{\footnotesize$\iSHS$}
\hr{abc}{abcd}{\footnotesize$\hSHS$}
\pl{abcd}{out}{\footnotesize$\pSHS$}
\br{ab}{}{}
\br{abc}{}{}
\br{abcd}{}{}
\end{tikzpicture}
\end{aligned}
\]
The tree on the left vanishes identically 
because each $\hSHS$ maps to $x^0\gx(\diamond)$ by Lemma \ref{lem:hfirstorder}, 
and thus $[\hSHS-,\hSHS-]$ also maps to $x^0\gx(\diamond)$
by \eqref{eq:anchorleib}, hence $\pSHS[\hSHS-,\hSHS-]=0$.
By contrast, the tree on the right is in general nonzero,
and appears in \eqref{eq:Bni}, via Definition \ref{def:Bn}.
By this type of argument, the general formula 
for the $L_\infty$ brackets obtained from homotopy transfer
reduces to \eqref{eq:Bni}.

In particular, it follows from general theory that 
\eqref{eq:Bni} satisfy the $L_{\infty}$ Jacobi identities.
They also follow from \ref{item:Jacobi} below,
in fact replacing each input $u_j$ in \ref{item:Jacobi}
with an element in the image of $\iSHS$
yields the standard $L_{\infty}$ Jacobi identities.

In the literature on homotopy transfer and $L_\infty$
algebras, different sign conventions for the multilinear maps are used,
which affects the signs in the $L_{\infty}$ Jacobi identities.
We use the convention in \cite[Definition 4.1]{getzler}.
The $n$-multilinear maps in \cite[Section 13.2.9]{Vallette}
differ from those in \cite{getzler} by 
a multiplicative sign $(-1)^{{n+1\choose2}}$,
and thus satisfy $L_{\infty}$ Jacobi identities
where $(-1)^j$ in \ref{item:Jacobi} is replaced by $(-1)^{j(n-j)}$.
This change of convention is explained in \cite[Section 4]{getzler}.
\end{remark}
\begin{prop}\label{prop:Bproperties}
For all $n\in\Z_{\ge1}$ and 
$u_1\in\gx^{k_1}(\diamond),\dots,u_n\in\gx^{k_n}(\diamond)$:
\begin{enumerate}[label=\textnormal{({c\arabic*})}]
\item \label{item:Bngrsym}
$\B_n$ is graded symmetric:
For all $\kappa\in S_n$ and using 
$\chi$ in Definition \ref{def:Koszsigns},
\begin{align*}
\B_n(u_1,\dots,u_n)
	=
	\chi(\kappa;k_1,\dots,k_n)
	\B_n(u_{\kappa(1)},\dots,u_{\kappa(n)})
\end{align*}
\item \label{item:Jacobi}
The following higher Jacobi identities hold:
\begin{align*}
0
&=\textstyle
\sum_{j=1}^n (-1)^j \sum_{\pi\in S_{n,j}}\\
&\qquad \chi(\pi;k_1,\dots,k_n)\B_{n-j+1}\big(
\iSHS\B_{j}\big(u_{\pi(1)},\dots,u_{\pi(j)}\big),
u_{\pi(j+1)},\dots,u_{\pi(n)}\big)
\end{align*}
with $S_{n,j}$ given by all $\pi\in S_n$
with $\pi(1)<\dots<\pi(j)$, 
$\pi(j+1)<\dots<\pi(n)$.
\item \label{item:Bzero}
For $j=1\dots n$ one has 
$
\B_n(u_1,\dots, x^0 u_j,\dots,u_n) = 0
$.
Thus we can define
\[ 
\Bdata_{n}:\gxdata^{k_1}(\diamonddata)\otimes\cdots\otimes\gxdata^{k_n}(\diamonddata) 
	\to 
	\cx^{k_1+\dots+k_n-(n-2)}(\diamonddata)
\]
by 
$$
\Bdata_{n}(\udata_1,\dots,\udata_n) = \B_{n}(u_1,\dots,u_n)
$$
where each $u_j\in\gx^{k_j}(\diamond)$
is any element that satisfies
$u_j|_{\diamonddata} = \udata_j$.
\end{enumerate}
\end{prop}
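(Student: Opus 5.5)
We treat the three items separately; \ref{item:Bngrsym} and \ref{item:Bzero} are direct, while \ref{item:Jacobi} is where the homological algebra enters.

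\emph{Graded symmetry} \ref{item:Bngrsym}. $\B_1$ has one input, so only $n\ge2$ needs work. $\B_n$ is defined as a sum over $\pi\in S_n$ of $\hat\chi(\pi;k_1,\dots,k_n)\,\tree_n(u_{\pi(1)},\dots,u_{\pi(n)})$, times a prefactor. I would reindex the sum by $\pi\mapsto\kappa\circ\pi$ and use the cocycle rule in Definition \ref{def:Koszsigns}. The point to check is that the prefactor $(-1)^{k_{n-1}+k_{n-3}+\dots}$, together with the difference between $\hat\chi$ and $\chi$, produces exactly $\chi(\kappa;k)$. It suffices to verify this for simple transpositions $(j,j+1)$. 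There $\hat\chi/\chi=-(-1)^{k_j+k_{j+1}+1}$, and the parity shift of the prefactor, which depends on which positions are odd/even from the end, compensates. This is routine bookkeeping.

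\emph{Vanishing on $x^0u_j$} \ref{item:Bzero}. For $n=1$: $\B_1(x^0u)=\dc\pSHS(x^0u)=0$, since $\pSHS$ is pullback and $x^0$ vanishes on $\diamonddata$. For $n\ge2$ I would use the explicit formula \eqref{eq:TreeFormula} of Lemma \ref{lem:TreeSimp} on each tree. The maps $\rhohdata$ and the restriction $[u_{n-1},u_n]|_{\diamonddata}$ are $C^\infty$-linear in the inputs, apart from the derivative terms. Hence replacing $u_j$ by $x^0u_j$ kills every term except those where a derivative hits $x^0$:
\begin{itemize}
\item $\anchorg(u_{n-1})(x^0)u_n$ inside $[u_{n-1},x^0u_n]$, from \eqref{eq:anchorleib};
\item $dx^0u_n$ inside $\dg(x^0u_n)$, from \eqref{eq:dgLeib}.
\end{itemize}
These combine as
$$\anchorg(\udata_{n-1})(x^0)\,\udata_n-\rhohdata_{\udata_{n-1}}(dx^0\udata_n).$$
Since $\pitilde(dx^0v)=v$ for $v\in\gxG$ but not in general, I would first replace $u_n$ by its gauge component. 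More robustly, I would show $\pdata\big(\anchorg(\udata)(x^0)(\one-\pitilde(dx^0\,\cdot))\big)=0$, which holds because $\ker\pitilde$ restricted to $dx^0\gx$ sits in $dx^0\cdot dx^0\gxG=0$; the remaining piece is annihilated by $\pdata$ after one application of $\rhohdata$ or of $\pdata$ itself, since $\pdata$ kills anything containing $dx^0$. Everything is symmetric under the signs, so the sum over $\pi$ vanishes. Well-definedness of $\Bdata_n$ then follows from multilinearity: two extensions of $\udata_j$ differ by an element of $x^0\gx(\diamond)$, since $\gx$ is a trivial bundle.

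\emph{Higher Jacobi identities} \ref{item:Jacobi}, the main obstacle. I would derive them from the homotopy transfer theorem rather than by hand. Remark \ref{rem:HPT} identifies $\B_n(\iSHS-,\dots)$ with the transferred $L_\infty$ brackets $\ell_n$ on $\cx(\diamonddata)$ along the contraction with side conditions of Lemma \ref{lem:pih}, and these satisfy the $L_\infty$ relations in the convention of \cite{getzler}. To pass to general inputs $u_j\in\gx(\diamond)$, note that $\B_n(u_1,\dots,u_n)$ depends on $u_j$ through $\tilde u_j=(\one-\hSHS\dg)u_j$ and through $\pSHS\dg u_j=\B_1(u_j)$. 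Write $\tilde u_j=\iSHS\pSHS u_j+(\dg\hSHS)u_j$, using $\one-\iSHS\pSHS=\dg\hSHS+\hSHS\dg$.

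The cleanest route, I expect, is to consider the transferred structure on a larger contraction: the identity-plus-$\hSHS$ "deformation retract onto itself", i.e.\ the $L_\infty$ structure on $\gx(\diamond)$ pulled back via the $L_\infty$ quasi-isomorphism. The tree-sum $\B_n$ is then the composite $\ell_1\circ$(transferred $L_\infty$ morphism $\gx\to\cx$ components). The Jacobi identity with the $\iSHS\B_j$ insertions is the statement that $\pSHS$ composed with the tree morphism $p_\infty$ is an $L_\infty$ morphism, followed by the brackets. Concretely, the standard proof of homotopy transfer checks the identity for $\ell_n\circ p_\infty$ using $\dg\hSHS+\hSHS\dg=\one-\iSHS\pSHS$, the Jacobi identity for $[\cdot,\cdot]$, and side conditions. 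I would run exactly that tree-cancellation argument with leaves $(\one-\hSHS\dg)$ instead of $\iSHS$. The leaf identity $\dg(\one-\hSHS\dg)=\iSHS\pSHS\dg$ plays the role of $\dg\iSHS=\iSHS\dc$, and $\iSHS\pSHS\dg u_j=\iSHS\B_1(u_j)$ yields precisely the $j=1$ terms.

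The hardest part is sign tracking between conventions, using \cite[Section 4]{getzler} as described in Remark \ref{rem:HPT}. I would fix signs by checking $n=2,3$ explicitly.
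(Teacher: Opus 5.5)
Your arguments for \ref{item:Bngrsym} and \ref{item:Jacobi} are essentially the paper's. For \ref{item:Jacobi}, you should also state that the comb trees defining $\B_n$ agree with the full tree sum, since every tree containing $[\hSHS-,\hSHS-]$ is killed by $\pSHS$. The gap is in \ref{item:Bzero} for $n\ge3$.

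\textbf{Where the argument fails.} Your computation of the innermost expression is correct. Write $\udata_n=g+dx^0g'$ with $g\in\gxdataG^{k_n}(\diamonddata)$ and $g'\in\gxdataG^{k_n-1}(\diamonddata)$, as in \eqref{eq:ggGgG}. Then the bracket-level term in \eqref{eq:TreeFormula} becomes $\theta\,dx^0g'$ with $\theta=\anchorg(\udata_{n-1})(x^0)$, and for $n=2$ this is killed by $\pdata$. For $n\ge3$, however, the next operation is $\rhohdata_{\udata_{n-2}}=\anchorg(\udata_{n-2})(x^0)\,\pitilde(\cdot)$. The map $\pitilde$ removes exactly the factor $dx^0$, so nothing forces the result to vanish.

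Concretely, take $u_1=dx^1\otimes T_0$, $u_2=dx^2\otimes T_0$, $u_3=dx^0\otimes\zeta$ with $0\neq\zeta\in\Kil$. One finds $\tree_3(u_1,u_2,x^0u_3)=-\,dx^1\wedge dx^2\otimes\zeta\neq0$. So the trees do not vanish individually. $\B_3(u_1,u_2,x^0u_3)=0$ holds only because this tree cancels against $\tree_3(u_2,u_1,x^0u_3)$, by antisymmetry of the wedge product.

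Your sentence ``everything is symmetric under the signs, so the sum over $\pi$ vanishes'' does not establish such cross-permutation cancellations in general. You would need them for arbitrary $n$, arbitrary degrees $k_i$ with Koszul signs, and $\I$-components, where $\IG^3$ itself contains $dx^0$-terms, so $\pitilde$ of products is hard to track. Likewise, replacing $u_n$ by its gauge component is not a valid reduction: the surviving term is precisely the one involving $g'$.

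\textbf{The missing idea.} The paper derives \ref{item:Bzero} from the Jacobi identity rather than from the explicit tree formula.
By \ref{item:Bngrsym} it suffices to treat $x^0u_1$. Since $\pSHS(x^0u_1)=0$, the contraction identity gives $(\one-\hSHS\dg)(x^0u_1)=(\iSHS\pSHS+\dg\hSHS)(x^0u_1)=\dg w$, with $w=\hSHS(x^0u_1)\in(x^0)^2\gx(\diamond)$ by \eqref{eq:hx0}. Hence
\[
\B_n(x^0u_1,u_2,\dots,u_n)=\Baux_n(\dg w,\tilde u_2,\dots,\tilde u_n),
\]
where $\Baux_n$ is the same tree sum with leaves $\one$ instead of $\one-\hSHS\dg$.

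Your tree-cancellation argument for \ref{item:Jacobi}, run with identity leaves, gives \eqref{eq:Bauxjac}. Apply it with $v_1=w$. It expresses $\Baux_n(\dg w,\dots)$ through $\dc\Baux_n(w,\dots)$ and through terms in which $w$ is a direct input of some $\Baux_j$. All of these vanish. The brackets and $\dg$ are first order, and $\hSHS$ maps $(x^0)^m\gx(\diamond)$ into $(x^0)^{m+1}\gx(\diamond)$. So an input in $(x^0)^2\gx(\diamond)$ leaves the root of every tree in $x^0\gx(\diamond)$, which $\pSHS$ kills.

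So \ref{item:Bzero} should be proved after, and by means of, the Jacobi identity for arbitrary leaves.
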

\begin{proof}
\textit{Proof of \ref{item:Bngrsym}.}
It suffices to check this for $\kappa = (j,j+1)$ with $j=1,\dots,n-1$.
Let $\pi\in S_n$, and let 
$P$ and $Q$ be the coefficients of 
$$\tree_n(u_{\pi(1)},\dots,u_{\pi(n)})$$
in $\B_n(u_1,\dots,u_n)$
respectively in $-(-1)^{k_jk_{j+1}}\B_n(u_1,\dots,u_{j+1},u_j,\dots,u_n)$,
using Definition \ref{def:Bn}.
We show $Q=P$:
Note that $Q/P$ is equal to 
\begin{align*}
-(-1)^{k_j k_{j+1}} (-1)^{k_j+k_{j+1}}
\hat{\chi}(\pi;k_1,\dots,k_n) 
\hat{\chi}((j,j+1)\circ \pi;k_1,\dots,k_{j+1},k_j,\dots,k_n)
\end{align*}
Together with Definition \ref{def:Koszsigns}
this implies $Q/P=1$.

\textit{Proof of \ref{item:Jacobi}.}
Define
$\Baux_n: \gx^{k_1}(\diamond)\times\dots\times\gx^{k_n}(\diamond)
	\to\cx^{k_1+\dots+k_n-(n-2)}(\diamonddata)$
just like $\B_n$ in Definition \ref{def:Bn},
but replacing, 
in the definition of $\tree_n$ in \eqref{eq:Tndef}, 
each $\tilde{u}_i=(\one-\hSHS\dg)u_i$ by $u_i$.
Then 
\begin{equation}\label{eq:B'B''}
\B_{n}(u_1,\dots,u_n) = \Baux_n((\one-\hSHS\dg)u_1,\dots,(\one-\hSHS\dg)u_n)
\end{equation}
It now suffices to show that for all 
$v_1\in\gx^{k_1}(\diamond),\dots,v_n\in\gx^{k_n}(\diamond)$:
\begin{align}
(-1)^n \dc \Baux_{n}(v_1,\dots,v_n)
&=
\textstyle
\sum_{\pi\in S_{n,1}} \chi(\pi;k_1,\dots,k_n)
\Baux_{n}\big(
\dg v_{\pi(1)}, v_{\pi(2)},\dots,v_{\pi(n)}\big)\nonumber\\
&\textstyle \;-
\sum_{j=2}^{n-1} (-1)^j \sum_{\pi\in S_{n,j}}
\chi(\pi;k_1,\dots,k_n)\nonumber\\
&\;\; \Baux_{n-j+1}\big(
\iSHS\Baux_{j}\big(v_{\pi(1)},\dots,v_{\pi(j)}\big),
v_{\pi(j+1)},\dots,v_{\pi(n)}\big)\label{eq:Bauxjac}
\end{align}
This indeed implies \ref{item:Jacobi}
by setting $v_j = (\one-\hSHS\dg)u_j$,
using \eqref{eq:B'B''},
and using $\iSHS = (\one-\hSHS\dg)\iSHS$
and $\dg(\one-\hSHS\dg)=(\one-\hSHS\dg) \iSHS \B_1$
and $\dc=\B_1\iSHS$.

To prove \eqref{eq:Bauxjac}, first define
$\B_n'': \gx^{k_1}(\diamond)\times\dots\times\gx^{k_n}(\diamond)
	\to\cx^{k_1+\dots+k_n-(n-2)}(\diamonddata)$
to be given by the sum (with appropriate signs) 
over all trivalent tree graphs with with $n+1$ labeled 
leaves, where the first $n$ are inputs and the last is the output.
The output leaf stands for the map $\pSHS$, each vertex
stands for the bracket, each internal line for $\hSHS$,
and each input leaf for the identity map.
One can show that $\B_n''=\Baux_n$, since
all trees defining $\Baux_n$ appear in $\B_n''$,
and the remaining trees in the definition of $\B_n''$
vanish individually, similar to Remark \ref{rem:HPT}.
It thus suffices to prove \eqref{eq:Bauxjac}
with all $\Baux$ replaced by $\B''$.
The proof of \eqref{eq:Bauxjac} with $\Baux$ replaced by $\B''$ 
is then similar to the proof of the homotopy transfer theorem
in \cite[Section 9.4.1,9.4.3 and 10.3.1,10.3.2]{Vallette}.
Briefly, start with 
the term
$\dc \B_{n}''(u_1,\dots,u_n)$.
Use the definition of $\B_{n}''$ as a sum of 
trees, use $\dc \pSHS = \pSHS \dg$ once for each tree, 
and then repeatedly use the Leibniz rule to 
replace $\dg[-,-]$ by $ [\dg -,-]\pm[-,\dg -]$,
and \eqref{eq:contractionproperties} to
replace $\dg \hSHS$ by $\one-\iSHS\pSHS-\hSHS\dg$.
This process generates a sum of three types of trees:
\begin{itemize}
\item 
Tree graphs with $\dg$ at one input leave.
Their sum gives the first term on the right hand side
of \eqref{eq:Bauxjac}.
\item 
Tree graphs where one internal line is labeled by $\iSHS\pSHS$.
Their sum gives the second term on the right hand side
of \eqref{eq:Bauxjac}.
\item 
Tree graphs where one internal line is labeled by $\one$.
Their sum vanishes by the graded Jacobi identity 
for the bracket \cite[(56j)]{MinkowskiPaper}.
\end{itemize}

\textit{Proof of \ref{item:Bzero}.}
By \ref{item:Bngrsym} it suffices to check
$\B_n(x^0 u_1,u_2,\dots,u_n) = 0$.
Using \eqref{eq:dgLeib}:
$\B_1(x^0 u_1) = \pSHS \dg(x^0 u_1) =  \pSHS( dx^0 u_1 +x^0 \dg u_1)=0$.
For $n\ge2$
abbreviate 
$$
\tilde{u}_1=\hSHS (x^0 u_1)
\qquad
\tilde{u}_j=(\one-\hSHS\dg)u_j\;\; \text{for $j=2,\dots,n$}
$$
Using \eqref{eq:B'B''} and 
$(\one-\hSHS\dg)x^0 u_1 
	= (\iSHS\pSHS+\dg\hSHS)x^0 u_1 
	= \dg \tilde{u}_1$, we have 
\begin{align}
\B_n(x^0 u_1,u_2,\dots,u_n)
&=
\Baux_n(\dg \tilde{u}_1,\tilde{u}_2,\dots,\tilde{u}_n)
\label{eq:Bntilde}
\end{align}
By \eqref{eq:hx0} we have $\tilde{u}_1\in (x^0)^2\gx(\diamond)$.
Thus \eqref{eq:Bntilde} vanishes using \eqref{eq:Bauxjac} with 
$v_1,\dots,v_n$ given by $\tilde{u}_1,\dots,\tilde{u}_n$,
and the fact that each $\Baux$ vanishes
when at least one input is in $(x^0)^2 \gx(\diamond)$,
by a calculation analogous to \eqref{eq:TreeFormula}.
\qed
\end{proof}
\begin{remark}[%
Relation of 
$\Bdata_1$, $\Bdata_2$ 
to \eqref{eq:linconstraints}, \eqref{eq:CPuu}]
	\label{rem:B2charge}
Recall the linearized constraint operator $\linPconstraints$ 
in \eqref{eq:linconstraints}. Relative to the 
decomposition \eqref{eq:ggGgG} it is a
strictly lower triangular 2-by-2 block matrix differential operator.
The nonzero block maps 
$\gxdataG^1(\diamonddata)\to dx^0\gxdataG^2(\diamonddata)$.
The map $\idata$ in \eqref{eq:idatadef} identifies 
$\cx^1(\diamonddata)$ with the input space, 
and $\pdata\pitilde$ from \eqref{eq:pdatadef} and \eqref{eq:pitildedef} 
identifies the output space with $\cx^2(\diamonddata)$,
and then
\begin{equation}\label{eq:DPdc}
(\pdata \pitilde) \, \linPconstraints \  \idata \;=\; \dc
\end{equation}
(viewing the 
$C^\infty$-linear map $\pitilde$ as a map on $\diamonddata$).
Then, using $\linPconstraints\,\idata\,\pdata=\linPconstraints$, 
\begin{equation}\label{eq:DPB1}
(\pdata \pitilde) \linPconstraints  
\;=\; \dc \pdata 
\;=\; \Bdata_1
\end{equation}
Now suppose that $\udata \in \gxdata^1(\diamonddata)$
solves $\linPconstraints(\udata)=0$.
One obtains a solution $u\in\gx^1(\diamond)$
of the linearized Einstein equations $\dg u=0$ 
with initial data $u|_{\diamond}=\udata$
by choosing any $v\in\gx^1(\diamond)$ with $v|_{\diamonddata}=\udata$
and then setting
\begin{equation}\label{eq:d(1-hd)}
u = (\one-\hSHS\dg) v
\end{equation}
by \eqref{eq:contractionproperties}, \eqref{eq:DPB1}.
Now let $\udata_1,\udata_2\in\gxdata^1(\diamonddata)$ with 
$\linPconstraints(\udata_i)=0$ for $i=1,2$
and choose any two $u_1,u_2\in\gx^1(\diamond)$
with $\dg u_i=0$, $u_i|_{\diamonddata}=\udata_i$.
Then $(\one-\hSHS\dg)u_i = u_i$. Thus
$$ 
\pSHS [u_1,u_2]
=
\pSHS [(\one-\hSHS\dg)u_1,(\one-\hSHS\dg)u_2] 
=
\Bdata_2(\udata_1,\udata_2) 
$$
In particular, the expression \eqref{eq:CPuu}
is equivalently given by $\Charge(-\frac12\Bdata_2(\udata,\udata))$.
\end{remark}
\begin{remark}[An L$_{\infty}$ algebra on the space of initial data]
\label{rem:MCinLinf}
Recall from \eqref{eq:idatadef} that $\idata$ identifies
$\cx^k(\diamonddata)\simeq\gxdataG^k(\diamonddata)$.
Thus $\idata\Bdata_n$ is a map 
with input and output in $\gxdata(\diamonddata)$.
The identities \ref{item:Jacobi}
(after applying $\idata$ from the left)
are then the standard
$L_\infty$ Jacobi identities for the maps $\idata\Bdata_1,\idata\Bdata_2,\dots$, 
which thus define an 
$L_\infty$ algebra \cite[Definition 4.1]{getzler} on $\gxdata(\diamonddata)$.
The equation \eqref{eq:MCinfintro} is the
Maurer-Cartan equation in this L$_{\infty}$ algebra,
the equivalence with \eqref{eq:Constraints_Eq} is in 
Lemma \ref{lem:MCinfproperties} below.
\end{remark}

\subsection{Multilinear estimates}
	\label{sec:multilinearestimates}
We derive estimates for the first order differential
operators $\Bdata_n$ (Proposition \ref{prop:uBestimates}),
and for the commutators of $\Bdata_n$ with the vector fields
$\xvecjap\p_{x^i}$ (Lemma \ref{lem:uCBestimates}).

The main point of the next proposition is that 
the constant $\CB$, which enters the radius of 
convergence in \eqref{eq:MCinfintro}, 
is independent of $\ss,\delta$.

We abbreviate
\smash{$\|{\cdot}\|_{\Hb^{\ss,\delta}(\diamonddata)}=\|{\cdot}\|_{\Hb^{\ss,\delta}}$}
and analogously for \smash{$\Cb^{\ss,\delta}$}.
\begin{prop}\label{prop:uBestimates}
There exists $\CB\ge1$ such that for all 
$n\in\Z_{\ge1}$, 
$\ss\in\Z_{\ge6}$, 
$\delta,\delta_1,\dots,\delta_n\ge0$ with
$\delta_1+\dots+\delta_n=\delta$,
$Y\subset\{1,\dots,n\}$ 
and all $\udata_1,\dots,\udata_n\in\gxdata(\diamonddata)$:
\begin{align}\label{eq:uBestimates}
&\|\tfrac{1}{n!}\Bdata_{n}(\udata_1,\dots,\udata_n)\|_{\Hb^{\ss,\delta}}\\
&\quad
\lesssim_{\ss,\delta,|Y|}
	\CB^n\sum_{\substack{X\subset\{1,\dots,n\}\\ 
	|X| = \min\{n,\ss-3\}
	}}
	\Big(\prod_{i\in X\cup Y} 
	\|\udata_{i}\|_{\Hb^{\ss+1,\smash{\delta_{i}}}}\Big)
	\Big(\prod_{j\in X^c\setminus Y} 
	\|\udata_{j}\|_{\Cb^{4,\smash{\delta_{j}}}}
	\Big)\nonumber
\end{align}
using the norms in Definition \ref{def:cnorms}
on the left hand side, and the norms in Definition \ref{def:normsdata}
on the right hand side.
Further, for $n=2$ and all $\ss\in\Z_{\ge1}$,
\begin{align}\label{eq:B2est}
\|\Bdata_2(\udata_1,\udata_2)\|_{\Hb^{\ss,\delta}}
\lesssim_{\ss,\delta}
\|\udata_{1}\|_{\Cb^{\ss+1,\delta_1}}
\|\udata_{2}\|_{\Hb^{\ss+1,\delta_2}}
\end{align}
\end{prop}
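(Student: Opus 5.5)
The plan is to make the tree formula \eqref{eq:TreeFormula} of Lemma \ref{lem:TreeSimp} completely explicit and then estimate it one factor at a time. Since $\Bdata_n(\udata_1,\dots,\udata_n)=\B_n(u_1,\dots,u_n)$ for any extensions (Proposition \ref{prop:Bproperties}, \ref{item:Bzero}), we take $u_i$ to be the extension of $\udata_i$ whose components relative to the basis in Lemma \ref{lem:basisg} are constant in $\p_{x^0}$, exactly as in Lemma \ref{lem:estpdbrdata}. By Definition \ref{def:Bn}, $\tfrac1{n!}\Bdata_n$ is $\tfrac12$ times an average over $\pi\in S_n$ of $\pm\tree_n(u_{\pi(1)},\dots,u_{\pi(n)})$. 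It therefore suffices to prove \eqref{eq:uBestimates} for a single tree $\tree_n$, with all of its inputs permuted, and with a constant of the form $\CB^n$ that does not depend on the permutation. By \eqref{eq:TreeFormula}, each tree is $\pdata\rhohdata_{\udata_1}\cdots\rhohdata_{\udata_{n-2}}W$. Here $W$ is a sum of three terms: $[u_{n-1},u_n]|_{\diamonddata}$ and two terms of the form $\rhohdata_{\udata}(\dg u')|_{\diamonddata}$.

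The key structural fact is that $\rhohdata$ is $C^\infty$-bilinear. Concretely, $\rhoh_u(u')=\anchorg(u)(x^0)\pitilde(u')$, and by Lemma \ref{lem:dglabasisNEW} (formula \eqref{eq:dglabasisanchorNEW}) the one-form $\anchorg(b)(x^0)$ is, for each basis element $b$, an $\mathcal{A}$-combination of basis elements; the extra factor $\xvecjap$ from $\xvecjap\p_{x^0}x^0$ is absorbed by the homogeneous weights of the basis. Consequently, relative to the bases of Lemma \ref{lem:basisgdata}, each $\rhohdata_{\udata}$ is multiplication by a matrix whose entries are linear in the components of $\udata$, with coefficients in $\mathcal{A}|_{\diamonddata}$. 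These coefficients are bounded in every $\Cb^{\ss,0}$ norm, and the bound is uniform and contains no derivatives. The same holds for $\pdata$ and $\pitilde$. The whole tree is thus a product of $n-2$ multiplication operators applied to a first order bilinear expression $W$, which is controlled by \eqref{eq:ddataest}, \eqref{eq:brdataest}, \eqref{eq:br4est} and \eqref{eq:d4dataest}. Each multiplication costs a fixed constant $C_0$, independent of $\ss,\delta$, and this is the source of $\CB$.

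The main step, and the main obstacle, is the Moser-type product estimate for $\Hb^{\ss,\delta}$ norms of an $n$-fold product $f_1\cdots f_n$ of scalar components; weights add since $\delta=\sum\delta_i$. When we distribute $\ss$ derivatives $(\xvecjap\p)^\alpha$ by the Leibniz rule, each resulting term is of the form $\prod_i(\xvecjap\p)^{\alpha_i}f_i$ with $\sum|\alpha_i|\le\ss$. At most $\ss-3$ of the $\alpha_i$ can exceed $3$ in the sense that matters. Factors carrying at most $3$ derivatives (at most $4$ once the single derivative coming from $W$ is included) are placed in $\Cb^{4,\delta_j}$. The remaining factors, at most $\ss-3$ of them, are placed in $\Hb$ using the Sobolev estimate \eqref{eq:sobolev}: one of them goes into $L^2$ and the others into $\Cb^{\ss-2}\subset\Hb^{\ss}$. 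This is why $|X|=\min\{n,\ss-3\}$ appears, and why $\ss\ge6$ is needed. The set $Y$ is handled by the monotonicity $\|\cdot\|_{\Cb^{4,\delta}}\lesssim\|\cdot\|_{\Hb^{6,\delta}}\le\|\cdot\|_{\Hb^{\ss+1,\delta}}$, at the cost of a constant depending on $|Y|$.

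The delicate point is the combinatorics. The number of multi-index distributions is $\binom{\ss+n}{n}\lesssim_\ss n^{\ss}\le C^n$ for $\ss$ fixed. The multinomial coefficients are at most $\ss!$. I would absorb the polynomial factor $n^{\ss}$ by writing $n^{\ss}\le \ss!\,2^n$, so that only $\ss$-dependent constants enter the implicit constant and $\CB$ remains universal. The sign-averaging over $S_n$ causes no growth, because of the $1/n!$ normalization.

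Finally, \eqref{eq:B2est} is the case $n=2$ with no $\rhohdata$ prefactors: $W$ then consists of $[u_1,u_2]|_{\diamonddata}$, which is controlled by \eqref{eq:br4est}, and $\rhohdata_{\udata_1}(\dg u_2)$, $\rhohdata_{\udata_2}(\dg u_1)$, which are controlled by a product estimate with $\udata_1$ in $\Cb^{\ss+1,\delta_1}$ combined with \eqref{eq:ddataest}. We then symmetrize as in Definition \ref{def:Bn}, and for the term in which $\udata_1$ carries the derivative we bound $\Cb^{\ss}$ of $\dg u_1$ by $\|\udata_1\|_{\Cb^{\ss+1,\delta_1}}$.
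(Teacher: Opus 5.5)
Your argument is correct in substance, but it is organized differently from the paper's.

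\textbf{How the paper argues.} The paper never expands the $n$-fold product. It proves a single bilinear inequality for $\rhohdata$, namely \eqref{eq:Biluv} in Lemma \ref{lem:Yestimates}. There the universal constant $\CY$ multiplies only the two top-order terms (one factor in $\Hb^{\ss}$, the other in $\Cb^{3}$). An $(\ss,\delta)$-dependent constant multiplies only a term in which both factors lose a derivative. Iterating this along the chain (Lemma \ref{lem:MultilinEst}) gives at most $3^{n-1}$ terms. The lower-order term can occur at most $\ss-5$ times, because each occurrence lowers the Sobolev index. This yields at once the universal constant $\CYYY=3\CY$ and the bound on $|X|$. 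Afterwards $X$ is padded using \eqref{eq:sobolev} (Corollary \ref{cor:MultilinEst}), and $W$ is handled by \eqref{eq:estpdbrdata}.

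\textbf{How you argue, and what each route buys.} You perform one Leibniz/Moser expansion of the whole tree. Via $C^\infty$-bilinearity of $\rhohdata$ you view it as a product of roughly $2n$ scalar factors (components and $\mathcal{A}|_{\diamonddata}$-coefficients), and you absorb the polynomially many Leibniz terms into $C_{\ss}2^n$. This is more elementary, and it makes the origin of $\CB$ (rank and coefficient bounds) transparent. The cost is bookkeeping that the paper confines to one bilinear lemma:
\begin{itemize}
\item reordering the non-commuting fields $\xvecjap\p_{x^i}$;
\item derivatives falling on the coefficients (at most $\ss$ of them, hence harmless);
\item an explicit count of the factors that must go into $\Hb$.
\end{itemize}
For \eqref{eq:B2est}, you put $\dg u_1$ in $\Cb^{\ss}$ for the swapped term. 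This spells out the asymmetry more explicitly than the paper's one-line reference to \eqref{eq:YCHest}.

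\textbf{Points that need fixing.}
\begin{itemize}
\item The inequality $n^{\ss}\le\ss!\,2^n$ is false for large $\ss$ (e.g.\ $\ss=10$, $n=14$). Use $n^{\ss}\le\ss!\,e^n$, or simply $n^{\ss}\lesssim_{\ss}2^n$, which is all you need.
\item You must track the offset in \eqref{eq:normsoffset}. The norm $\|\udata\|_{\Cb^{4,\delta}}$ controls only three derivatives of the $\I$-component. So your rule that factors with at most four derivatives (counting the one from $W$) go into $\Cb^{4}$ fails for the $\I$-components of $u_{n-1},u_n$.
\item What rescues the scheme: $\rhohdata_{\udata}$ depends only on $\uodata$, since $\anchorg(0\oplus\uIdata)=0$. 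Like $\pitilde$ and $\pdata$, it is block diagonal for $\lx\oplus\I$. Hence the $\ix$-part of the output, measured with $\ss-1$ derivatives, is fed only by the $\I$-part of $W$.
\item Lower the threshold by one for $\I$-components. Then at most $\lfloor\ss/3\rfloor$ factors besides the $L^2$ one must go into $\Hb$. So $|X|\le\lfloor\ss/3\rfloor+1\le\ss-3$ for $\ss\ge6$, as you claimed.
\item The case $n=1$ must be treated separately; it follows from \eqref{eq:DPB1} with \eqref{eq:dcest}.
\item The Sobolev step is the embedding $\Hb^{\ss}\subset\Cb^{\ss-2}$, not the reverse.
\end{itemize}
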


The following corollary will be useful.
\begin{corollary}\label{cor:bn3}
For $\ss\in\Z_{\ge6}$, $\delta\ge0$, $b\ge1$,
$n\in\Z_{\ge1}$, $\udata_1,\dots,\udata_n\in\gxdata(\diamonddata)$:
\begin{itemize}
\item 
If $n\ge3$ and
$\|\udata_i\|_{\Hb^{\ss+1,0}} \le b$,
$\|\udata_i\|_{\Cb^{4,0}} \le \tfrac{1}{4\CB}$ 
for $i=4,\dots,n$
then 
\begin{align*}
\smash{\|\tfrac{1}{n!}\Bdata_n(\udata_1,\dots,\udata_n)
\|_{\Hb^{\ss,\delta+1}}
\lesssim_{\ss,\delta,b}
(\tfrac14)^n 
\|\udata_1\|_{\Hb^{\ss+1,\delta}}
\|\udata_2\|_{\Hb^{\ss+1,\frac12}}
\|\udata_3\|_{\Hb^{\ss+1,\frac12}}}
\end{align*}
\item 
If
$\|\udata_i\|_{\Hb^{\ss+1,0}} \le b$,
$\|\udata_i\|_{\Cb^{4,0}} \le \tfrac{1}{4\CB}$ 
for $i=2,\dots,n$ 
then
\begin{align*}
\|\tfrac{1}{n!}\Bdata_n(\udata_1,\dots,\udata_n)\|_{\Hb^{\ss,0}}
\lesssim_{\ss,b}
(\tfrac14)^n \|\udata_1\|_{\Hb^{\ss+1,0}}
\end{align*}
\end{itemize}
\end{corollary}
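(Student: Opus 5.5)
Both items follow by applying \eqref{eq:uBestimates} of Proposition \ref{prop:uBestimates} with a suitable choice of weights $\delta_i$ and of the set $Y$. After that, the norms are bounded using the hypotheses, and the number of subsets $X$ is controlled.

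\emph{First item} ($n\ge3$). Apply \eqref{eq:uBestimates} with total weight $\delta+1$. Split it as $\delta_1=\delta$, $\delta_2=\delta_3=\frac12$, and $\delta_i=0$ for $i\ge4$. Take $Y=\{1,2,3\}$, so $|Y|=3$ and the implicit constant depends only on $\ss,\delta$. For each $X$ with $|X|=\min\{n,\ss-3\}$, the product splits into three parts:
\begin{itemize}
\item the factors with index $1,2,3$, which always appear in $\Hb^{\ss+1}$ norms and give exactly $\|\udata_1\|_{\Hb^{\ss+1,\delta}}\|\udata_2\|_{\Hb^{\ss+1,\frac12}}\|\udata_3\|_{\Hb^{\ss+1,\frac12}}$;
\item the factors with $i\in X\setminus Y$, each bounded by $b$;
\item the factors with $j\in X^c\setminus Y$, each bounded by $\frac{1}{4\CB}$.
\end{itemize}
Since $|X|\le\ss-3$, at most $\ss-3$ factors equal $b$. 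Hence each term is at most
\[
\CB^n\, b^{\ss-3}(4\CB)^{-(n-3-|X\setminus Y|)}\cdots\;\lesssim_{\ss,b}\;(\tfrac14)^n\,\CB^{\ss}\cdots
\]
absorbing finitely many powers of $4\CB$ and $b$ into the constant (recall $b,\CB\ge1$).

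The main obstacle is the number of sets $X$, namely $\binom{n}{\min\{n,\ss-3\}}\le n^{\ss-3}$. This grows polynomially in $n$, whereas the gain is geometric. To handle it, I would run the same argument with $\frac{1}{4\CB}$ replaced by the per-factor bound $\frac{1}{4\CB}$ and compare against $(\frac14)^n$ only after accepting the polynomial loss. Concretely, I would bound $\CB^n\cdot n^{\ss-3}\cdot(4\CB)^{-(n-\ss)}$ directly by $\lesssim_{\ss}(\frac14)^n\cdot$const? This fails by the factor $n^{\ss-3}$. The fix I would try first is to use the smallness more efficiently: the $\Cb^4$ factors carry $(4\CB)^{-1}$ while $\CB^n$ only cancels $\CB^{-1}$ per factor. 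That leaves $4^{-(n-\ss)}$, and $n^{\ss-3}4^{-n}$ is not $\lesssim (\frac14)^n$. So if the statement is meant literally, one should exploit slack, for instance by interpreting the estimate with $\CB$ chosen so that \eqref{eq:uBestimates} already includes the combinatorial count. Otherwise the count $\binom{n}{|X|}$ can be absorbed because \eqref{eq:uBestimates} is stated with a sum over $X$ that I expect is already dominated by $2^n$ choices. In that case I would increase $\CB$ in Proposition \ref{prop:uBestimates} by a factor $2$ in advance, i.e.\ bound $\sum_X\le 2^n\max_X$ and use $\CB^n2^n$. This is the delicate bookkeeping point, and I would check it against the proof of Proposition \ref{prop:uBestimates}, where $\CB$ is fixed.

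\emph{Second item}. The argument is the same, with all weights zero except $\delta_1=0$, total weight $0$, and $Y=\{1\}$. The factor $\|\udata_1\|_{\Hb^{\ss+1,0}}$ appears, and every other factor is bounded by $b$ or by $\frac1{4\CB}$ exactly as before. For $n=1$, the claim is $\|\Bdata_1\udata_1\|_{\Hb^{\ss,0}}\lesssim\|\udata_1\|_{\Hb^{\ss+1,0}}$, which is the $n=1$ case of \eqref{eq:uBestimates} (or \eqref{eq:ddataest} with the matrix \eqref{eq:pmatrix}).
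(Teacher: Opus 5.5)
Your argument is the paper's: the same choices $Y=\{1,2,3\}$, $\delta_1=\delta$, $\delta_2=\delta_3=\tfrac12$, $\delta_{i\ge4}=0$ (respectively $Y=\{1\}$ with all weights zero) in \eqref{eq:uBestimates}, and the same per-summand bound $\CB^n b^{\ss}(4\CB)^{-(n-|X\cup Y|)}\le 4^{-n}(4\CB)^{\ss}b^{\ss}$. Your concern about the number of sets $X$ is justified, and here you are more careful than the paper. Its proof simply asserts that there are fewer than $2^{\ss}$ summands, but for $n>\ss$ there are $\binom{n}{\ss-3}$ of them. So \eqref{eq:uBestimates}, as literally stated, only yields $n^{\ss-3}(\tfrac14)^n$.

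The repair you end up with is correct, and it does not require re-examining the proof of Proposition~\ref{prop:uBestimates}. Since $\binom{n}{k}\le 2^n$ and the implicit constant in \eqref{eq:uBestimates} does not depend on $n$, the proposition with some constant $C$ implies the same estimate with $\sum_X$ replaced by $\max_X$ and $C$ replaced by $2C$. The proposition only asserts the existence of a constant independent of $n,\ss,\delta$, so you may take $\CB:=2C$ from the outset; the later choice of $\rhoB$ is made relative to this $\CB$. With that choice, the hypothesis $\|\udata_i\|_{\Cb^{4,0}}\le\frac{1}{4\CB}$ gives exactly $(\tfrac14)^n$ in both items. Alternatively, $(\tfrac13)^n$ would suffice for every later use of the corollary, since only summability against factors like $2^n$ or $n2^n$ is needed.

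State this fix decisively rather than as something to be checked, and drop the detours that you yourself show to fail.
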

\begin{proof}
\textit{First item.}
We apply Proposition \ref{prop:uBestimates} with 
$Y=\{1,2,3\}$, 
$\delta$ there given by $\delta+1$ here,
and 
$\delta_1=\delta,
\delta_2=\frac12,\delta_3=\frac12$,
$\delta_{i\ge4}=0$.
In each summand on the right of \eqref{eq:uBestimates},
the first product is bounded by
$\|\udata_1\|_{\Hb^{\ss+1,\delta}}
\|\udata_2\|_{\Hb^{\ss+1,\smash{\frac12}}}
\|\udata_3\|_{\Hb^{\ss+1,\smash{\frac12}}} b^{|X\setminus Y|}$,
where \smash{$b^{|X\setminus Y|} \le b^{\ss} \lesssim_{\ss,b}1$};
the second product is bounded by
$\smash{(4\CB)^{-|X^c\setminus Y|}}=\smash{(4\CB)^{-n+|X\cup Y|}}
\lesssim_{\ss} \smash{(4\CB)^{-n}}$.
There are less than $2^{\ss}$ summands, thus the claim follows.
\textit{Second item.} 
Use Proposition \ref{prop:uBestimates} with 
$Y=\{1\}$ and all $\delta,\delta_i=0$.
\qed
\end{proof}
In the following we prove Proposition \ref{prop:uBestimates}.
Recall the map $\rhohdata$ in Lemma \ref{lem:TreeSimp}.
%
\newcommand{\CY}{\tc{cyan}{C}_0}
\newcommand{\CYYY}{\tc{parisgreen}{C}_1}
\begin{lemma}\label{lem:Yestimates}
There exists $\CY\ge1$ such that 
for all $\ss\in\Z_{\ge6}$ 
and $\delta\ge0$
there exists $C_{\ss,\delta}\ge1$ 
such that for all $\delta_1,\delta_2\ge0$
with $\delta_1+\delta_2=\delta$
and all $\udata,\udata'\in\gxdata(\diamonddata)$:%
\begin{subequations}\label{eq:Yestimates}
\begin{align}
\|\rhohdata(\udata, \udata')\|_{\Hb^{\ss,\delta}}
&\le
	\CY\big(\|\udata\|_{\Hb^{\ss,\delta_1}}
	\|\udata'\|_{\Cb^{3,\delta_2}}
	+
	\|\udata\|_{\Cb^{3,\delta_1}}
	\|\udata'\|_{\Hb^{\ss,\delta_2}}\big)
	\nonumber\\
&\quad+
	C_{\ss,\delta}\|\udata\|_{\Hb^{\ss-1,\delta_1}}
	\|\udata'\|_{\Hb^{\ss-1,\delta_2}}
	\label{eq:Biluv}\\
\|\rhohdata(\udata, \udata')\|_{\Hb^{5,\delta}}
	&\le
	\CY (
	\|\udata\|_{\Hb^{5,\delta_1}}
	\|\udata'\|_{\Cb^{3,\delta_2}}
	+
	\|\udata\|_{\Cb^{3,\delta_1}}
	\|\udata'\|_{\Hb^{5,\delta_2}}
	)
\label{eq:H5est}
\\
\|\rhohdata(\udata, \udata')\|_{\Cb^{3,\delta}}
	&\le
	\CY\|\udata\|_{\Cb^{3,\delta_1}}
	\| \udata'\|_{\Cb^{3,\delta_2}}
	\label{eq:C3est}
\end{align}
\end{subequations}
The dependency of
$C_{\ss,\delta}$ on $\ss$ and $\delta$ can be chosen
to be increasing.
Further,
\begin{align}\label{eq:YCHest}
\|\rhohdata(\udata, \udata')\|_{\Hb^{\ss,\delta}}
	\lesssim_{\ss,\delta} \|\udata\|_{\Cb^{\ss,\delta_1}}\|\udata'\|_{\Hb^{\ss,\delta_2}}
\end{align}
\end{lemma}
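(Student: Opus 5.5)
Since $\rhohdata$ is $C^\infty$-bilinear and of order zero, the plan is to write it in components with bounded coefficients and then reduce all four estimates to standard product (Moser-type) estimates in the b-Sobolev spaces.

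\emph{Step 1 (component formula).} Write $\udata = f_i b_i$ and $\udata' = f'_j b'_j$, using the bases of Lemma \ref{lem:basisgdata}. By bilinearity,
\[
\rhohdata(\udata,\udata') = f_i f'_j\, \rhohdata(b_i,b'_j),
\qquad
\rhohdata(b_i,b'_j) = \big(\anchorg(b_i)(x^0)\,\pitilde(b'_j)\big)|_{\diamonddata}.
\]
By \eqref{eq:dglabasisanchorNEW} with $f=x^0$, we have $\anchorg(b_i)(x^0)=\theta_0\xvecjap$, where $\theta_0$ lies in the $\mathcal{A}$-span of basis one-forms (the Killing field coefficients have at most linear growth, which is compensated by the basis normalization). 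Next, $\pitilde(b'_j)$ is in the $\R$-span of basis elements: this uses \eqref{eq:Amodmult} together with the fact that $e_0^{-1}$ maps $\frac{dx^0}{\xvecjap}b$ to $\xvecjap^{-1}b$. The factor $\xvecjap$ from the anchor then cancels this $\xvecjap^{-1}$. The product of basis elements is handled by the first item of Lemma \ref{lem:dglabasisNEW}. Altogether,
\[
\rhohdata(\udata,\udata') = \sum a_{ijk} f_i f'_j\, b''_k,
\qquad a_{ijk}\in\mathcal{A}|_{\diamonddata},
\]
and these coefficients are bounded in every $\Cb^{\ss,0}$. One must also check the $\I$-offset in the norms \eqref{eq:normsoffset}. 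Since $\anchorg$ vanishes on $\I$-components, $\udata$ enters only through its $\lx$-part. Multiplication by a one-form is block diagonal, so an $\I$-component of the output comes only from the $\I$-component of $\udata'$, and there is no loss of one derivative.

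\emph{Step 2 (product estimates).} Apply the Leibniz rule for $(\xvecjap\p_{\vec x})^\alpha$ to $a f f'$, with $|\alpha|\le\ss$, and split the weight $\xvecjap^{\delta}=\xvecjap^{\delta_1}\xvecjap^{\delta_2}$.
\begin{itemize}
\item Terms with all derivatives on $f$ are bounded by $\|f\|_{\Hb^{\ss,\delta_1}}\|f'\|_{\Cb^{0,\delta_2}}$. Terms with all derivatives on $f'$ are bounded symmetrically, and so are terms where at most $3$ derivatives fall on one factor.
\item Terms where derivatives hit $a$ lose one order on $f,f'$.
\item For genuinely mixed terms with at least $4$ derivatives on each factor, put the lower-order factor in $L^\infty$ via the Sobolev estimate \eqref{eq:sobolev}. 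The number of derivatives on that factor is at most $\ss-4$, so after the two derivatives lost in \eqref{eq:sobolev} it is controlled by $\Hb^{\ss-2}\subset\Hb^{\ss-1}$. This yields $C_{\ss,\delta}\|\udata\|_{\Hb^{\ss-1,\delta_1}}\|\udata'\|_{\Hb^{\ss-1,\delta_2}}$.
\end{itemize}

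\emph{Step 3 (constants).} The main obstacle is to make the leading constant $\CY$ independent of $\ss$ and $\delta$ in \eqref{eq:Biluv}. Only the two top-order terms carry $\CY$, namely $\alpha$ falling entirely on $f$ or entirely on $f'$, each with coefficient $1$ times $a$. For these, $\CY$ needs only $\sup|a_{ijk}|$ times the number of basis indices, which is a fixed finite constant. Everything with binomial coefficients, derivatives of $a$, or the Sobolev constant goes into $C_{\ss,\delta}$. Taking suprema over $\ss'\le\ss$ and $\delta'\le\delta$ makes $C_{\ss,\delta}$ increasing.

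\emph{Step 4 (remaining estimates).} \eqref{eq:H5est} is the case $\ss=5$ with a direct splitting: with at most $5$ derivatives total, one factor always carries at most $2\le3$ derivatives and goes into $\Cb^{3}$, so no $\Hb\times\Hb$ term is needed. This gives a fixed constant $\CY$, enlarged if necessary. \eqref{eq:C3est} is immediate from the Leibniz rule in sup norms. \eqref{eq:YCHest} follows by putting every derivative of $f$ in $\Cb^{\ss,\delta_1}$ and every derivative of $f'$ in $\Hb^{\ss,\delta_2}$, with constants depending on $\ss,\delta$.
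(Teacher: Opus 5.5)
Your proposal is correct and follows essentially the paper's route: $\mathcal{A}|_{\diamonddata}$-coefficients for $\rhohdata$, splitting off the $\I$-part via $\anchorg(0\oplus\uIdata)=0$, the universal constant $\CY$ only on the two pure top-order Leibniz terms, and every other term put into $C_{\ss,\delta}\|\udata\|_{\Hb^{\ss-1,\delta_1}}\|\udata'\|_{\Hb^{\ss-1,\delta_2}}$ by placing the factor with fewer derivatives in $L^\infty$ via \eqref{eq:sobolev} (this is where $\ss\ge6$ is used). Two small points to tidy: the mixed terms with $1$ to $3$ derivatives on one factor, which Step 2 lists next to the top-order terms, carry $\ss$-dependent binomial coefficients and must go into the $C_{\ss,\delta}$-term as Step 3 says (this is legitimate because the other factor then carries at most $\ss-1$ derivatives); and the supremum over $\delta'\le\delta$ in Step 3 is finite because the constants in \eqref{eq:sobolev} are increasing in $\delta$.
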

\begin{proof}
Relative to the basis in Lemma \ref{lem:basisgdata},
the map $\rhohdata$ is given by an array
whose entries are in $\mathcal{A}|_{\diamonddata}$ (see \eqref{eq:quotients}).
This follows from Lemma \ref{lem:dglabasisNEW} and the definition of $\pitilde$ in \eqref{eq:pitildedef}.
In particular, the entries are bounded in all $\Cb^{\ss,0}$-norms.

We conclude \eqref{eq:Biluv}. 
Write $\udata=\uodata\oplus \uIdata$, $\udata'=\uodata'\oplus\uIdata'$.
Using $\anchorg(0\oplus\uIdata)=0$,
\begin{align*}
\rhohdata(\udata,\udata') 
= 
\rhohdata(\uodata\oplus0)(\uodata'\oplus0) + \rhohdata(\uodata\oplus0)(0\oplus\uIdata')
\end{align*}
The two terms on the right are of the
form $\Ydata_0\oplus 0$ respectively $0\oplus \Ydata_{\I}$
with 
$\Ydata_0\in\lxdata(\diamonddata)$, 
$\Ydata_{\I}\in\Idata(\diamonddata)$.
Then 
$\|\rhohdata(\udata,\udata')\|_{\Hb^{\ss,\delta}}
=
\|\Ydata_0\|_{\Hb^{\ss,\delta}} + \|\Ydata_{\I}\|_{\Hb^{\ss-1,\delta}}$
by \eqref{eq:normsoffset}.
Since the entries of $\rhohdata$ 
are bounded in all \smash{$\Cb^{\ss,0}$}-norms,
it is easy to check that 
\begin{align*}
\|\Ydata_0\|_{\Hb^{\ss,\delta}} 
&\lesssim
\|\uodata\|_{\Cb^{0,\delta_1}}
\|\uodata'\|_{\Hb^{\ss,\delta_2}}
+
\|\uodata\|_{\Hb^{\ss,\delta_1}}
\|\uodata'\|_{\Cb^{0,\delta_2}}\\
&\qquad+
C_{\ss,\delta}
 \|\uodata\|_{\Hb^{\ss-1,\delta_1}} 
 \|\uodata'\|_{\Hb^{\ss-1,\delta_2}} \\
\|\Ydata_{\I}\|_{\Hb^{\ss-1,\delta}} 
&\lesssim
\|\uodata\|_{\Cb^{0,\delta_1}}
\|\uIdata'\|_{\Hb^{\ss-1,\delta_2}}
+
\|\uodata\|_{\Hb^{\ss-1,\delta_1}}
\|\uIdata'\|_{\Cb^{0,\delta_2}}\\
&\qquad+
C_{\ss,\delta}
\|\uodata\|_{\Hb^{\ss-2,\delta_1}} 
\|\uIdata'\|_{\Hb^{\ss-2,\delta_2}} 
\end{align*}
where $C_{\ss,\delta}>0$ only depends on $\ss,\delta$.
To obtain this estimate we also used \eqref{eq:sobolev} and 
$\lfloor\frac{\ss}{2}\rfloor+2\le \ss-1$, 
$\lfloor\frac{\ss-1}{2}\rfloor+2\le \ss-2$ 
requiring $\ss\ge6$.
This implies \eqref{eq:Biluv}.
The other estimates in the lemma are checked similarly.
It is easy to see that the dependency of $C_{\ss,\delta}$
on $\ss,\delta$ can be chosen to be increasing,
using the fact that $\ss$ runs over the integers,
and that the constant in \eqref{eq:sobolev}
is increasing.\qed
%
%
%
\end{proof}
\begin{lemma}\label{lem:MultilinEst}
There exists $\CYYY\ge1$ such that for 
all $n\in\Z_{\ge2}$, 
all $\ss\in\Z_{\ge6}$, 
all $\delta,\delta_1,\dots,\delta_n\ge0$ with $\delta_1+\dots+\delta_n=\delta$, 
and all $\vdata_1,\dots,\vdata_n\in\gxdata(\diamonddata)$: 
\begin{align}\label{eq:YYY}
\hspace{-1mm}
\|\rhohdata_{\vdata_1}\cdots\rhohdata_{\vdata_{n-1}} \vdata_n\|_{\Hb^{\ss,\delta}}
	\lesssim_{\ss,\delta}\CYYY^n
	\sum_{\substack{X\subset\{1,\dots,n\}\\ 
	|X| \le \ss-4}}
	\big(\prod_{i\in X} 
	\|\vdata_{i}\|_{\Hb^{\ss,\smash{\delta_{i}}}}\big)
	\big(\prod_{j\in X^c}
	\|\vdata_{j}\|_{\Cb^{3,\smash{\delta_{j}}}}\big)
\end{align}
\end{lemma}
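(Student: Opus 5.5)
The plan is induction on $n$, using Lemma \ref{lem:Yestimates} as the only input. Write $W_m=\rhohdata_{\vdata_m}\cdots\rhohdata_{\vdata_{n-1}}\vdata_n$ for $m=1,\dots,n$ (so $W_n=\vdata_n$ and $W_m=\rhohdata(\vdata_m,W_{m+1})$), with weight $\delta^{(m)}=\delta_m+\dots+\delta_n$. I would carry three bounds along the induction, from the innermost factor outward.

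The first is a pointwise bound. Iterating \eqref{eq:C3est} gives $\|W_m\|_{\Cb^{3,\delta^{(m)}}}\le \CY^{n-m}\prod_{j\ge m}\|\vdata_j\|_{\Cb^{3,\delta_j}}$.

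The second is a low Sobolev bound. Iterating \eqref{eq:H5est} gives
\[
\|W_m\|_{\Hb^{5,\delta^{(m)}}}\le \CY^{n-m}\sum_{i\ge m}\|\vdata_i\|_{\Hb^{5,\delta_i}}\prod_{j\ge m,\,j\neq i}\|\vdata_j\|_{\Cb^{3,\delta_j}} .
\]
The factor $n$ from the number of terms is absorbed into the geometric constant.

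The third, and main, step is the $\Hb^{\ss,\delta}$ bound, proved by induction on $\ss\ge6$ with $\ss=5$ as the base case. Apply \eqref{eq:Biluv} to $\rhohdata(\vdata_m,W_{m+1})$. This produces two principal terms with constant $\CY$: one with $\vdata_m$ in $\Hb^{\ss}$ and $W_{m+1}$ in $\Cb^3$, the other with $\vdata_m$ in $\Cb^3$ and $W_{m+1}$ in $\Hb^{\ss}$. It also produces a lower order term $C_{\ss,\delta}\|\vdata_m\|_{\Hb^{\ss-1}}\|W_{m+1}\|_{\Hb^{\ss-1}}$.

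\begin{itemize}
\item The principal terms are unrolled recursively in $m$. Each level picks at most one new $\Hb^{\ss}$ factor, and the unrolling stops once one factor has been placed in $\Hb^{\ss}$. The result is a sum over single indices $i$, with constant $\CY^n$.
\item In the lower order term, $\Hb^{\ss-1}\le\Hb^{\ss}$ by \eqref{eq:monotone}, and $\|W_{m+1}\|_{\Hb^{\ss-1}}$ is bounded by the induction hypothesis at regularity $\ss-1$. That hypothesis involves sets $X$ with $|X|\le \ss-5$; adding the index $m$ gives $|X|\le\ss-4$, which is exactly the constraint in \eqref{eq:YYY}.
\end{itemize}

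The main obstacle is the lower order term, because $C_{\ss,\delta}$ is not of the form $C^n$. Summing it over the $n$ levels $m$ naively produces constants like $(C_{\ss,\delta}\CY)^{n}$, or worse, factorial growth. The remedy is to unroll the recursion explicitly as a sum over the levels $m_1<m_2<\dots<m_r$ at which the lower order term is used. Regularity drops by one at each such use, so $r\le \ss-5$, and the constant is at most $C_{\ss,\delta}^{\ss}\CY^{n}$, which is allowed in $\lesssim_{\ss,\delta}$. Between these levels only principal terms occur, each with constant $\CY$. The number of choices of $(m_1,\dots,m_r)$ together with the final $\Hb$-index is bounded by $\binom{n}{\le \ss-4}\le 2^n$. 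Hence $\CYYY=2\CY$, or $4\CY$ to leave a margin, works uniformly in $\ss$ and $\delta$.

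Throughout, weights are distributed additively, as permitted by the splitting $\delta_1+\delta_2=\delta$ in Lemma \ref{lem:Yestimates}. Because $C_{\ss,\delta}$ is increasing in $\ss$ and $\delta$, the constants at the lower regularities $\ss-1,\ss-2,\dots$ and lower weights $\delta^{(m)}\le\delta$ are dominated by $C_{\ss,\delta}$.
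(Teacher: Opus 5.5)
Your proposal is correct and is essentially the paper's argument: apply Lemma~\ref{lem:Yestimates} recursively from the innermost factor outward, and index each resulting term by the set $X$ of factors placed in an $\Hb$-norm. The bound $|X|\le\ss-4$ comes from the lower-order term of \eqref{eq:Biluv} being usable at most $\ss-5$ times before the regularity reaches $5$, where \eqref{eq:H5est} takes over, and monotonicity of $C_{\ss,\delta}$ keeps the constants uniform. The only difference is bookkeeping: you note that distinct choice sequences yield distinct sets $X$, so there are at most $2^n$ terms and $\CYYY=2\CY$ suffices, whereas the paper bounds the number of terms by $3^{n-1}$ and takes $\CYYY=3\CY$.
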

\begin{proof}
Let $\CY$, $C_{\ss,\delta}$
be the constants in Lemma \ref{lem:Yestimates},
where the dependency of $C_{\ss,\delta}$ on $\ss,\delta$ is increasing.
It is instructive to consider first the example $n=3$, $\ss=7$.
We apply \eqref{eq:Yestimates} recursively:
First apply \eqref{eq:Biluv}, which yields 
\begin{align*}
\|\rhohdata_{\vdata_1} \rhohdata_{\vdata_{2}}\vdata_3\|_{\Hb^{7,\delta}}
&\le
\CY(\|\vdata_1\|_{\Hb^{7,\delta_1}}
\|\rhohdata_{\vdata_{2}}\vdata_3\|_{\Cb^{3,\delta_2+\delta_3}}
+
\|\vdata_1\|_{\Cb^{3,\delta_1}}
\|\rhohdata_{\vdata_{2}}\vdata_3\|_{\Hb^{7,\delta_2+\delta_3}})\\
&\quad+
C_{7,\delta}\|\vdata_1\|_{\Hb^{6,\delta_1}}
   \|\rhohdata_{\vdata_{2}}\vdata_3\|_{\Hb^{6,\delta_2+\delta_3}}
\intertext{
Now apply \eqref{eq:C3est} in the first term,
apply \eqref{eq:Biluv} in the second term
and use $C_{7,\delta_2+\delta_3}\le C_{7,\delta}$,
apply \eqref{eq:Biluv} in the third term and use
$C_{6,\delta_2+\delta_3}\le C_{7,\delta}$.
Then}
\|\rhohdata_{\vdata_1} \rhohdata_{\vdata_{2}}\vdata_3\|_{\Hb^{7,\delta}}
&\le
\CY^2\|\vdata_1\|_{\Hb^{7,\delta_1}}
\|\vdata_2\|_{\Cb^{3,\delta_2}}
\|\vdata_3\|_{\Cb^{3,\delta_3}}\\
&\hspace{-10mm}+
\CY^2(\|\vdata_1\|_{\Cb^{3,\delta_1}}
\|\vdata_2\|_{\Hb^{7,\delta_2}}
\|\vdata_3\|_{\Cb^{3,\delta_3}}
+
\|\vdata_1\|_{\Cb^{3,\delta_1}}
\|\vdata_2\|_{\Cb^{3,\delta_2}}
\|\vdata_3\|_{\Hb^{7,\delta_3}})\\
&\hspace{-10mm}+
\CY C_{7,\delta}\|\vdata_1\|_{\Cb^{3,\delta_1}}
\|\vdata_2\|_{\Hb^{6,\delta_2}}
\|\vdata_3\|_{\Hb^{6,\delta_3}}\\
&\hspace{-10mm}+
\CY C_{7,\delta}(\|\vdata_1\|_{\Hb^{6,\delta_1}}
\|\vdata_2\|_{\Hb^{6,\delta_2}}
\|\vdata_3\|_{\Cb^{3,\delta_3}}
+
\|\vdata_1\|_{\Hb^{6,\delta_1}}
\|\vdata_2\|_{\Cb^{3,\delta_2}}
\|\vdata_3\|_{\Hb^{6,\delta_3}})\\
&\hspace{-10mm}+
C_{7,\delta}^2\|\vdata_1\|_{\Hb^{6,\delta_1}}
\|\vdata_2\|_{\Hb^{5,\delta_2}}
\|\vdata_3\|_{\Hb^{5,\delta_3}}
\end{align*}
Now apply 
$\|\vdata_i\|_{\Hb^{\ell,\delta_i}}
\le\|\vdata_i\|_{\Hb^{7,\delta_i}}$
for $\ell\le 7$ by \eqref{eq:monotone},
and observe that the resulting term is bounded by the right
hand side of \eqref{eq:YYY},
using 
terms with $|X|=1$ for lines 1-2;
terms with $|X|=2$ for lines 3-4;
terms with $|X|=3$ for the last line.

Consider general $n$, $\ss$.
Similarly to the example,
we first apply \eqref{eq:Yestimates} recursively,
and then estimate \smash{$\|\vdata_i\|_{\Hb^{\ell,\delta_i}}
\le\|\vdata_i\|_{\Hb^{\ss,\delta_i}}$}
for $\ell\le \ss$.
In this way one obtains a sum of at most $3^{n-1}$ terms,
where each term is of the form 
\begin{equation}\label{eq:Xaux}
\textstyle
\CY^{|X^c|}
C_{\ss,\delta}^{|X|-1}
\Big(\prod_{i\in X} 
	\|\vdata_{i}\|_{\Hb^{\ss,\smash{\delta_{i}}}}\Big)
	\Big(\prod_{j\in X^c} 
	\|\vdata_{j}\|_{\Cb^{3,\smash{\delta_{j}}}}\Big)
\end{equation}
for some $X\subset\{1,\dots,n\}$.
In each term one has $|X|\le \ss-4$, 
because $|X|-1$ corresponds to the number of times
the second line of \eqref{eq:Biluv} is used,
and on the other hand 
the second line of \eqref{eq:Biluv}
is used at most $\ss-5$ times, since it 
reduces the number of derivatives in the norm,
hence $|X|-1\le \ss-5$.
Thus the left hand side of \eqref{eq:YYY} is bounded by 
$3^{n-1}$ times the sum of \eqref{eq:Xaux}
over the subsets $X\subset\{1,\dots,n\}$ with $|X|\le \ss-4$.
Setting $\CYYY=3\CY$ the claim follows.
\qed
\end{proof}
\begin{corollary}\label{cor:MultilinEst}
Lemma \ref{lem:MultilinEst} holds verbatim if in 
\eqref{eq:YYY} one replaces  
$$
\tsum_{\substack{X\subset\{1,\dots,n\}\\ 
	|X| \le \ss-4}}
	\qquad\text{by}\qquad
\tsum_{\substack{X\subset\{1,\dots,n\}\\ 
	|X| = \min\{\ss-4,n\}}}$$
\end{corollary}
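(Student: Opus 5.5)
The plan is to derive the corollary directly from Lemma \ref{lem:MultilinEst}. Every summand indexed by a small set $X$ will be bounded by a summand indexed by a superset $Z\supset X$ with $|Z|=\min\{\ss-4,n\}$. The tool is the Sobolev estimate \eqref{eq:sobolev}, which converts a $\Cb^{3}$-norm into an $\Hb^{\ss}$-norm at a cost depending only on $\ss,\delta$.

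\emph{Step 1: trade a $\Cb^3$-factor for an $\Hb^\ss$-factor.} Fix $X\subset\{1,\dots,n\}$ with $|X|\le\ss-4$, and note $|X|\le n$. Choose any $Z$ with $X\subset Z\subset\{1,\dots,n\}$ and $|Z|=\min\{\ss-4,n\}$; this is possible since $|X|\le\min\{\ss-4,n\}$. For each $j\in Z\setminus X$ we have $\delta_j\ge0$ and $\ss\ge6\ge5$. So \eqref{eq:sobolev} with $\ss=3$, followed by monotonicity \eqref{eq:monotone}, gives
\[
\|\vdata_j\|_{\Cb^{3,\delta_j}}\le \Csob{3,\delta_j}\|\vdata_j\|_{\Hb^{5,\delta_j}}\le \Csob{3,\delta}\|\vdata_j\|_{\Hb^{\ss,\delta_j}}.
\]
In the last inequality we used that $\Csob{3,\cdot}$ is increasing and that $\delta_j\le\delta$. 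Since $|Z\setminus X|\le\ss-4$, the $X$-summand of \eqref{eq:YYY} is bounded by $(\Csob{3,\delta})^{\ss-4}$ times the $Z$-summand. The factor $(\Csob{3,\delta})^{\ss-4}$ depends only on $\ss,\delta$, so it is absorbed into $\lesssim_{\ss,\delta}$. This absorption matters: it keeps the $n$-dependence entirely in the factor $\CYYY^n$.

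\emph{Step 2: count the multiplicities.} Assign to each admissible $X$ one such $Z(X)$. A fixed $Z$ receives at most $2^{|Z|}\le 2^{\ss}$ sets $X$, namely subsets of $Z$. Hence the sum over $|X|\le\ss-4$ is bounded by $2^{\ss}(\Csob{3,\delta})^{\ss-4}$ times the sum over $|Z|=\min\{\ss-4,n\}$. This constant again depends only on $\ss,\delta$, and the factor $\CYYY^n$ is left unchanged. That gives the corollary.

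There is no real obstacle; the argument is bookkeeping. The only points needing care are that the extra constants must be independent of $n$, which holds because at most $\ss-4$ factors are converted and the multiplicity is at most $2^\ss$, and that the Sobolev constant must be uniform over the weights $\delta_j\in[0,\delta]$, which follows from its monotonicity in $\delta$.
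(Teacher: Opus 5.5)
Your proof is correct and follows the paper's argument: the paper also enlarges each $X$ to a set of size $\min\{\ss-4,n\}$, choosing the added indices canonically as the smallest elements of $X^c$. It then converts the corresponding $\Cb^{3}$-factors via \eqref{eq:sobolev} at a cost depending only on $\ss,\delta$ and counts multiplicities, and your bound of at most $2^{|Z|}\le 2^{\ss}$ sets $X$ per target set $Z$ is the precise form of the paper's remark that fewer than $2^{\ss}$ summands are involved.
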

\begin{proof}
The summation over $|X|\le \ss-4$
is the same as the summation over $|X|\le\min\{\ss-4,n\}$.
For each summand in \eqref{eq:YYY}
decompose $X^c = Y_1\cup Y_2$, where $|Y_1| = \min\{\ss-4,n\}-|X|$,
and where each element in $Y_1$ is smaller than each element in $Y_2$.
For each $j\in Y_1$ we apply \eqref{eq:sobolev}, 
which contributes a constant that only depends on $\ss,\delta$,
not on $n$.
In the right hand side of \eqref{eq:YYY} 
there are less than $2^{\ss}$ summands, 
thus the claim follows.\qed
\end{proof}
\begin{proof}[of Proposition \ref{prop:uBestimates}]
It suffices to check the proposition for $Y=\{\}$,
the statement for general $Y$ then follows using \eqref{eq:sobolev}.
$n=1$: By \eqref{eq:DPB1} and Lemma \ref{lem:cbasis}.
$n\ge2$: 
We have
\begin{align}\label{eq:BsumofTproof}
\|\tfrac{1}{n!}
\Bdata_{n}(\udata_1,\dots,\udata_n)\|_{\Hb^{\ss,\delta}}
\le
\tfrac{1}{n!}\tsum_{\pi\in S_n}
\|\tree_{n}(u_{\pi(1)},\dots,u_{\pi(n)})
\|_{\Hb^{\ss,\delta}}
\end{align}
where we choose $u_i\in\gx^{k_i}(\diamond)$ such that
$u_i|_{\diamonddata}=\udata_i$ and such that its components
relative to the basis in Lemma \ref{lem:basisg} are constant in $\p_{x^0}$.
Write each $\tree_n$ as a sum of three terms using \eqref{eq:TreeFormula}.
It then suffices to show that each such term is bounded 
by the right hand side of \eqref{eq:uBestimates},
which follows from Corollary \ref{cor:MultilinEst}, 
\eqref{eq:estpdbrdata} and \eqref{eq:sobolev}.
To check \eqref{eq:B2est} again use \eqref{eq:BsumofTproof}
and \eqref{eq:TreeFormula}, but now estimate 
$\pdata[u_{\pi(1)},u_{\pi(2)}]|_{\diamonddata}$ using \eqref{eq:br4est}
and estimate $\pdata \rhohdata_{\udata_{\pi(1)}}(\dg u_{\pi(2)})|_{\diamonddata}$
using \eqref{eq:YCHest} and \eqref{eq:ddataest}.
\qed
\end{proof}
We now consider the commutator of $\Bdata_n$ with 
the vector fields $\xvecjap\p_{x^i}$. 
\begin{definition}\label{def:Bcommutator}
For $n\in\Z_{\ge1}$ and $\alpha\in\N_0^3\setminus\{0\}$ define the commutator
\begin{align*}
\CBdata_{n,\alpha}(\udata_1,\dots,\udata_n)
&=
(\xvecjap\p_{\vec{x}})^{\alpha}\Bdata_n(\udata_1,\dots,\udata_n)
-
\sum_{i=1}^{n} 
\Bdata_n(\udata_1,\dots,(\xvecjap\p_{\vec{x}})^{\alpha}\udata_i,\dots,\udata_n)
\end{align*}
where we differentiate componentwise relative 
to the basis in Lemma \ref{lem:basisgdata}.
\end{definition}
\newcommand{\CBcom}{\tilde\CB}
\begin{lemma}\label{lem:uCBestimates}
There exists $\CBcom\ge1$ such that for all 
$n\in\Z_{\ge1}$, 
$\ss\in\Z_{\ge6}$,
$\delta,\delta_1,\dots,\delta_n\ge0$ with 
$\delta_1+\dots+\delta_n=\delta$, 
$Y\subset \{1,\dots,n\}$,
$\alpha\in\N_0^3\setminus\{0\}$,
and all $\udata_1,\dots,\udata_n\in\gxdata(\diamonddata)$:
\begin{align}
\label{eq:uCBestimates}
\begin{aligned}
&\|\tfrac{1}{n!}
\CBdata_{n,\alpha}(\udata_1,\dots,\udata_n)\|_{\Hb^{\ss,\delta}}
\lesssim_{\ss,\delta,\alpha,|Y|}
	\CBcom^n\tsum_{\substack{X\subset\{1,\dots,n\}\\ 
	|X| = \min\{n,\ss+|\alpha|-2\}
	}}
\\
&\textstyle
	\qquad\qquad\qquad\qquad\qquad\Big(\prod_{i\in X\cup Y} 
	\|\udata_{i}\|_{\Hb^{\ss+|\alpha|,\smash{\delta_{i}}}}\Big)
	\Big(\prod_{j\in X^c \setminus Y} 
	\|\udata_{j}\|_{\Cb^{4,\smash{\delta_{j}}}}
	\Big)
\end{aligned}
\end{align}
Furthermore, for $n=2$ and all $\ss\in\Z_{\ge1}$,
\begin{align}\label{eq:CB12est}
\|\CBdata_{n,\alpha}(\udata_1,\udata_2)\|_{\Hb^{\ss,\delta}}
\lesssim_{\ss,\delta,\alpha}
\|\udata_1\|_{\Cb^{\ss+|\alpha|,\delta_1}}\|\udata_2\|_{\Hb^{\ss+|\alpha|,\delta_2}}
\end{align}
\end{lemma}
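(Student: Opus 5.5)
We will mirror the proof of Proposition \ref{prop:uBestimates}, now tracking commutators. As there, it suffices to take $Y=\{\}$, since the general case follows from \eqref{eq:sobolev}. We choose extensions $u_i$ whose components relative to the basis in Lemma \ref{lem:basisg} are constant in $\p_{x^0}$. Then $(\xvecjap\p_{\vec x})^\alpha$ commutes with restriction to $\diamonddata$, and we may extend $(\xvecjap\p_{\vec x})^\alpha\udata_i$ by applying the same componentwise derivative to $u_i$. By \eqref{eq:BsumofTproof} and \eqref{eq:TreeFormula}, $\tfrac{1}{n!}\CBdata_{n,\alpha}$ is bounded by $\tfrac1{n!}\sum_{\pi\in S_n}$ of commutators of the three terms in \eqref{eq:TreeFormula}. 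Each term has the form $\pdata\,\rhohdata_{\udata_1}\cdots\rhohdata_{\udata_{n-2}}(Z)$, where $Z$ is one of $[u_{n-1},u_n]|_{\diamonddata}$, $\rhohdata_{\udata_{n-1}}(\dg u_n)|_{\diamonddata}$ or $\rhohdata_{\udata_n}(\dg u_{n-1})|_{\diamonddata}$. For $n=1$ the commutator is $[\dc\pdata,(\xvecjap\p_{\vec x})^\alpha]$. By Lemma \ref{lem:diffbop} and \eqref{eq:pmatrix}, this is a differential operator of order $|\alpha|$ with coefficients in $\Cb^{\ss,0}$, which gives the claim directly.

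The key structural input is that, in the fixed bases, $\pdata$ is a constant matrix by \eqref{eq:pmatrix}, so it commutes with derivatives. Further, $\rhohdata$ is a bilinear array with entries in $\mathcal{A}|_{\diamonddata}$ (proof of Lemma \ref{lem:Yestimates}). By Lemma \ref{lem:dglabasisNEW}, $\dg$ and $[\cdot,\cdot]$ are first order, with coefficients in $\mathcal{A}$ and with principal parts involving only $\xvecjap\p_{x^\mu}$. We expand $(\xvecjap\p_{\vec x})^\alpha$ of a product using the Leibniz rule. Note that $[\xvecjap\p_{x^i},\xvecjap\p_{x^j}]$ is again a combination of the $\xvecjap\p_{x^k}$ with coefficients in $\mathcal{A}$. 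The terms in which all $|\alpha|$ derivatives fall on a single input $\udata_i$ cancel exactly against the subtracted sum in Definition \ref{def:Bcommutator}, up to reordering of derivatives. Every remaining term is one of two kinds. In the first kind, derivatives fall on a coefficient in $\mathcal{A}$ or on the commutator of $\xvecjap\p$ with itself. In the second kind, the derivatives are distributed over at least two distinct inputs. In either case each input carries at most $\ss+|\alpha|$ derivatives in total.

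The combinatorial bookkeeping then follows Lemma \ref{lem:MultilinEst} and Corollary \ref{cor:MultilinEst}. We apply the Leibniz expansion, which produces at most $n^{|\alpha|}$ terms times constants depending on $\alpha$. The number of terms grows polynomially in $n$, so it is absorbed into $\CBcom^n$ together with the $3^{n-1}$ factor. We then estimate each resulting multilinear expression recursively with \eqref{eq:Yestimates}, \eqref{eq:estpdbrdata} and \eqref{eq:sobolev}. Inputs carrying many derivatives are measured in $\Hb^{\ss+|\alpha|,\delta_i}$, and all others in $\Cb^{4,\delta_j}$. As in Lemma \ref{lem:MultilinEst}, the low-order branch of \eqref{eq:Biluv} can be used only boundedly often, at most $\ss+|\alpha|-O(1)$ times. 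Hence at most $\min\{n,\ss+|\alpha|-2\}$ factors appear in Sobolev norms. We pad up to equality with \eqref{eq:sobolev} exactly as in Corollary \ref{cor:MultilinEst}.

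The main obstacle is making the constant $\CBcom$ independent of $\ss,\delta,\alpha$ while keeping the count of $\Hb$-factors bounded by $\ss+|\alpha|-2$. Every constant that is exponential in $n$ must come only from $\CY$ and the number of trees, not from $\ss$. We handle this by first distributing derivatives in the Leibniz expansion, and only then running the recursion of Lemma \ref{lem:MultilinEst}, with the top-order term controlled by $\CY$. For \eqref{eq:CB12est} with $n=2$, the same Leibniz argument suffices. We use \eqref{eq:br4est} for the bracket part and \eqref{eq:YCHest} with \eqref{eq:ddataest} for the $\rhohdata$ part, placing $\udata_1$ in $\Cb^{\ss+|\alpha|}$.
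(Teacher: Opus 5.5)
Your argument is correct, but it organizes the commutator differently from the paper. You distribute $(\xvecjap\p_{\vec x})^{\alpha}$ by the Leibniz rule over all factors of each tree, meaning the inputs and the $\mathcal{A}$-coefficients. You then cancel the terms where all derivatives land on one input against the subtracted sum in Definition~\ref{def:Bcommutator}, and run the recursion of Lemma~\ref{lem:MultilinEst} at regularity $\ss$. This produces $O(n^{|\alpha|})$ terms with up to $|\alpha|$ modified nodes. You correctly absorb this count into $\CBcom^n$, since the implicit constant may depend on $\alpha$.

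The paper instead telescopes $\Ctree_{n,\alpha}$ into $O(n)$ terms of the form \eqref{eq:TreeFormula}. In each term exactly one node (a $\rhohdata$, the bracket, or $\dg$) is replaced by its commutator with $(\xvecjap\p_{\vec x})^{\alpha}$, e.g.\ $\Crhohdata_\alpha$. For that single node the paper proves a bilinear bound that gains one derivative, $\|\Crhohdata_{\alpha}(\udata,\udata')\|_{\Hb^{\ss,\delta}}\lesssim_{\ss,\delta,\alpha}\|\udata\|_{\Hb^{\ss+|\alpha|-1,\delta_1'}}\|\udata'\|_{\Hb^{\ss+|\alpha|-1,\delta_2'}}$. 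It then reruns Lemma~\ref{lem:MultilinEst} with one modified node, the gained derivative playing the role of the low-order branch of \eqref{eq:Biluv}.

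What each route buys:
\begin{itemize}
\item The paper's route avoids polynomial-in-$n$ counting and any bookkeeping of how $\alpha$ splits among inputs.
\item Your route makes the cancellation completely explicit and needs no separate commutator estimates for $\rhohdata$, $[\cdot,\cdot]$, $\dg$.
\end{itemize}

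One point in your count should be stated correctly. In your scheme the recursion runs at regularity $\ss$, so the low-order branch of \eqref{eq:Biluv} is used at most $\ss-5$ times, as in Lemma~\ref{lem:MultilinEst}, not $\ss+|\alpha|-O(1)$ times. The extra Sobolev factors come instead from applying \eqref{eq:sobolev} to the inputs that absorbed some of the $\alpha$-derivatives and sit in $\Cb$-norms of order up to $3+|\alpha|$. There are at most $|\alpha|$ such inputs, and $\Cb^{3+|\alpha|}\lesssim\Hb^{5+|\alpha|}\subset\Hb^{\ss+|\alpha|}$ since $\ss\ge6$. This gives at most $\ss-3+|\alpha|\le\ss+|\alpha|-2$ factors in $\Hb^{\ss+|\alpha|}$, and you then pad up to equality as in Corollary~\ref{cor:MultilinEst}.
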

\begin{proof}
This is similar to the proof of Proposition \ref{prop:uBestimates},
hence we are brief. 
It suffices to check the case $Y=\{\}$,
the general case then follows from \eqref{eq:sobolev}.
Write 
$
\CBdata_{n,\alpha}(\udata_1,\dots,\udata_n)
=
\tsum_{\pi\in S_n}
\pm\Ctree_{n,\alpha}(u_{\pi(1)},\dots,u_{\pi(n)})
$, using the same signs as in Definition \ref{def:Bn}, 
and where
\begin{align*}
\Ctree_{n,\alpha}(u_1,\dots,u_n)
&=
(\xvecjap\p_{\vec{x}})^{\alpha}\tree_{n}(u_1,\dots,u_n)\\
&\qquad\textstyle
-
\sum_{j=1}^n
\tree_{n}(
u_1,\dots,(\xvecjap\p_{\vec{x}})^{\alpha}u_j,\dots,u_n)
\end{align*}
The $u_i$ are chosen such that 
$u_i|_{\diamonddata}=\udata_i$ and 
such that the components are constant in $\p_{x^0}$.
Each $\smash{\Ctree}_{n,\alpha}(u_1,\dots,u_n)$ can be 
written as a sum of terms of the form \eqref{eq:TreeFormula},
where either one of the $\rhohdata$ is replaced by $\Crhohdata_{\alpha}$
defined by 
\begin{align*}
\Crhohdata_{\alpha}(\udata,\udata')
&=
(\xvecjap\p_{\vec{x}})^\alpha \rhohdata(\udata,\udata')
-
\rhohdata((\xvecjap\p_{\vec{x}})^\alpha \udata,\udata')
-
\rhohdata(\udata,(\xvecjap\p_{\vec{x}})^\alpha \udata')
\end{align*}
or the terms $[\cdot,\cdot]$ and $\dg$ are replaced
by their respective commutators with $(\xvecjap\p_{\vec{x}})^{\alpha}$.
These commutators can be estimated similarly to 
Lemma \ref{lem:Yestimates} respectively Lemma \ref{lem:estpdbrdata}
and they gain one derivative compared to the estimates without a commutator.
For example
$\|\Crhohdata_{\alpha}(\udata,\udata')\|_{\Hb^{\ss,\delta}}
\lesssim_{\ss,\delta,\alpha}
\|\udata\|_{\Hb^{\ss+|\alpha|-1,\smash{\delta_1'}}}
\|\udata'\|_{\Hb^{\ss+|\alpha|-1,\smash{\delta_2'}}}$
with \smash{$\delta_1'+\delta_2'=\delta$},
which gains one derivative compared to \eqref{eq:Biluv}.
One then derives an estimate similar to Lemma \ref{lem:MultilinEst}
with one of the $\rhohdata$ replaced by 
$\smash{\Crhohdata_{\alpha}}$,
from which Lemma \ref{lem:uCBestimates} follows.
\qed
\end{proof}

\subsection{Constraints as Maurer-Cartan equation}
\label{sec:MCequation}

We introduce \eqref{eq:MCinfintro} and show 
that it is equivalent to \eqref{eq:Constraints_Eq}.

Fix $\rhoB\in(0,1]$ such that:
\begin{itemize}
\item 
$\rhoB \le 1/(4 \max\{\CB,C_1,\CBcom\})$
where $\CB,C_1,\CBcom\ge1$ are the constants 
in Proposition \ref{prop:uBestimates},
Lemma \ref{lem:MultilinEst}, Lemma \ref{lem:uCBestimates} respectively.
\item 
For all $\udata=\udata_0\oplus\udata_{\I}\in \gxdata^1(\diamonddata)$,
if \smash{$\|\udata_0\|_{\Cb^{0,0}(\diamonddata)}\le \rhoB$} then
$\omega:=dx^0 + \anchorg(\udata)(x^0)$ is timelike
relative to $\gmink$ at every point on $\diamonddata$.
Such an $\rhoB$ exists
because $|(-1)-\gmink^{-1}(\omega,\omega)|
\lesssim\|\udata_0\|_{\Cb^{0,0}(\diamonddata)}(1+\|\udata_0\|_{\Cb^{0,0}(\diamonddata)})$
on $\diamonddata$, using \eqref{eq:dglabasisanchorNEW}.
\end{itemize}
\begin{definition}\label{def:MCinf}
For $\ss\in\Z_{\ge6}$ define the space
$\Wconv^{\ss} = \{\udata\in\Hb^{\ss,0}(\diamonddata,\gxdata^1)
\mid \|\udata\|_{\Cb^{4,0}(\diamonddata)} \le \rhoB
\}$
and define the map 
\begin{subequations}
\begin{equation}\label{eq:MCinfmap}
\MCinf:\; 
\Wconv^{\ss+1}
\;\to\; \Hb^{\ss,0}(\diamonddata,\cx^2)
\end{equation}
given by the absolutely converging sum (in $\Hb^{\ss,0}$,
by Corollary \ref{cor:bn3} and $\rhoB \le \tfrac{1}{4\CB}$)
\begin{equation}\label{eq:MCinf}
\MCinf(\udata) \;=\; 
\textstyle \sum_{n\ge1} \frac{1}{n!} \Bdata_n(\udata,\dots,\udata)
\end{equation}
\end{subequations}
\end{definition}
The expression \eqref{eq:MCinf} is the so-called curvature in the 
$L_\infty$ algebra described in Remark \ref{rem:MCinLinf},
the equation $\MCinf(\udata)=0$ is the Maurer-Cartan equation,
see e.g.~\cite{getzler}.
\begin{lemma}\label{rem:MCinfY}
Let $\ss\in\Z_{\ge6}$ and 
$\udata\in \Wconv^{\ss+1}$.
Then 
\begin{equation}\label{eq:MCinfnice}
\MCinf(\udata)
=
\pdata \left(\big(\tsum_{n\ge0}
	(-\rhohdata_{\udata})^{n}\big) 
\big(\dg u + \tfrac12[u,u]\big)|_{\diamonddata}\right)
\end{equation}
where $u\in\gx^1(\diamond)$ is any element 
with $u|_{\diamonddata}=\udata$.
Here 
$\wdata \mapsto \tsum_{n\ge0}(-\rhohdata_{\udata})^{n} \wdata$
is a bounded linear map $\Hb^{\ss,\delta}(\diamonddata,\gxdata^k) \to \Hb^{\ss,\delta}(\diamonddata,\gxdata^k)$,
absolutely converging in $\Hb^{\ss,\delta}$.
\end{lemma}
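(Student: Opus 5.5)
The plan is to compute $\Bdata_n(\udata,\dots,\udata)$ explicitly from the tree formula \eqref{eq:TreeFormula} and then resum the series \eqref{eq:MCinf}. Choose $u\in\gx^1(\diamond)$ with $u|_{\diamonddata}=\udata$; by \ref{item:Bzero} the result does not depend on this choice. With all inputs equal to $u$ and all degrees $k_i=1$, the signs in Definition~\ref{def:Bn} and Definition~\ref{def:tree} reduce to $(-1)^n$ (as remarked after Definition~\ref{def:Bn}). Moreover every permutation $\pi\in S_n$ contributes the same tree, and $\hat\chi(\pi;1,\dots,1)=1$. Hence, for $n\ge2$,
\[
\tfrac{1}{n!}\Bdata_n(\udata,\dots,\udata)=\pm\tfrac12\,\tree_n(u,\dots,u),
\]
and I will pin down the overall sign by tracking $\tfrac12(-1)^{\lfloor n/2\rfloor+1}(-1)^{k_{n-1}+k_{n-3}+\cdots}(-1)^{n-1}$ at $k_i=1$.

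Next I insert \eqref{eq:TreeFormula} with $u_i=u$. The two $\rhohdata$-terms combine, since $(-1)^{k_{n-1}k_n}=-1$, to give $-2\rhohdata_{\udata}(\dg u)|_{\diamonddata}$. This yields
\[
\tree_n(u,\dots,u)=\pm\pdata\,\rhohdata_{\udata}^{\,n-2}\big([u,u]|_{\diamonddata}-2\rhohdata_{\udata}(\dg u)|_{\diamonddata}\big).
\]
For $n=1$ we have $\Bdata_1(\udata)=\pdata(\dg u|_{\diamonddata})$. Collecting terms by powers of $\rhohdata_{\udata}$, the coefficient of $\pdata(-\rhohdata_{\udata})^m\dg u|_{\diamonddata}$ receives contributions from $n=1$ (when $m=0$) and from $n=m+1$ (when $m\ge1$, via the $-2\rhohdata$ term times $\tfrac12$). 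The coefficient of $\pdata(-\rhohdata_{\udata})^m\tfrac12[u,u]|_{\diamonddata}$ comes from $n=m+2$. The target identity \eqref{eq:MCinfnice} requires each of these coefficients to equal $1$, so the check reduces to confirming that the sign pattern from Definition~\ref{def:Bn} is exactly $(-1)^{n}$ times the factors needed. I expect this sign bookkeeping to be the only delicate point. I would first verify it by hand for $n=2$, where $\Bdata_2(\udata,\udata)=\pdata[u,u]$ must hold on linearized solutions by Remark~\ref{rem:B2charge}, and for $n=3$, and then argue that the pattern is $2$-periodic in the floor exponent.

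For convergence, the proof of Lemma~\ref{lem:Yestimates} shows that $\rhohdata$ has bounded coefficients and acts as a $C^\infty$-bilinear map with no derivatives. Estimate \eqref{eq:Biluv}, or more directly \eqref{eq:YCHest} together with the iteration in Lemma~\ref{lem:MultilinEst}, gives the bound
\[
\|\rhohdata_{\udata}^{\,m}\wdata\|_{\Hb^{\ss,\delta}}\lesssim_{\ss,\delta}\CYYY^{m}\,m^{\ss}\,\max\{\|\udata\|_{\Cb^{4,0}},\ \cdots\}^{m-\ss}\,\|\wdata\|_{\Hb^{\ss,\delta}}
\]
with polynomially many $\Hb^{\ss,0}$ factors of $\udata$. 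Since $\|\udata\|_{\Cb^{4,0}}\le\rhoB\le 1/(4\CYYY)$, this implies geometric convergence of the Neumann series. This also justifies rearranging the absolutely convergent series \eqref{eq:MCinf}, so that the regrouping by powers of $\rhohdata_{\udata}$ in the previous step is legitimate. Note that $(\dg u+\tfrac12[u,u])|_{\diamonddata}\in\Hb^{\ss,0}$ by Lemma~\ref{lem:estpdbrdata}.
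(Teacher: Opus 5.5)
Your proposal is correct and is essentially the paper's proof: with all degrees equal to one the signs combine to $\tfrac12(-1)^n$ (the sum $k_{n-1}+k_{n-3}+\cdots$ has $\lfloor n/2\rfloor$ terms, and $\tree_n$ contributes $(-1)^{n-1}$), the two $\rhohdata$-terms in \eqref{eq:TreeFormula} merge into $-2\rhohdata_{\udata}(\dg u)|_{\diamonddata}$, and regrouping by powers of $\rhohdata_{\udata}$ gives \eqref{eq:MCinfnice}, with convergence from Lemma~\ref{lem:MultilinEst} and \eqref{eq:sobolev}. One caution: the geometric convergence must come from Lemma~\ref{lem:MultilinEst}, where only the small $\Cb^{3,0}$-norm of $\udata$ appears to unbounded powers and $\rhoB\le 1/(4C_1)$ applies, and not from iterating \eqref{eq:YCHest}, because its $\ss$-dependent constant multiplies $\|\udata\|_{\Cb^{\ss,0}}$, which need not be small (or even finite for $\udata\in\Hb^{\ss+1,0}$).
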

\begin{proof}
The statement about the map 
$\wdata \mapsto \tsum_{n\ge0}(-\rhohdata_{\udata})^{n} \wdata$
follows from Lemma \ref{lem:MultilinEst} and \eqref{eq:sobolev}.
Writing out \eqref{eq:MCinf} using Definition \ref{def:Bn}
and \eqref{eq:TreeFormula}, we obtain
\begin{align*}
\MCinf(\udata)
=
\pdata\left(
(\dg u)|_{\diamonddata}
+
\tsum_{n\ge2}
\tfrac12 (-1)^n \rhohdata_{\udata}^{n-2}
	\big([u,u]|_{\diamonddata}
	-
	2\rhohdata_{\udata}(\dg u)|_{\diamonddata}\big)\right)
\end{align*}
which is equal to \eqref{eq:MCinfnice}.\qed
\end{proof}
\begin{lemma}\label{lem:MCinfproperties}
The map $\MCinf$ satisfies:
\begin{itemize}
\item 
For all $\udata\in \smash{\Wconv^{7}}$:
$$
(dx^0 + \anchorg(\udata)(x^0))\,\idata\,\MCinf(\udata) =
\Pconstraints(\udata)
$$
where we use the multiplication \eqref{eq:modmult} restricted to $\diamonddata$.
The map $\idata$ and multiplication with $dx^0 + \anchorg(\udata)(x^0)$
are fiberwise injective.
In particular
\begin{equation}\label{eq:equivMCP}
\MCinf(\udata)|_q=0
\;\;\Leftrightarrow\;\;
\Pconstraints(\udata)|_q=0
\qquad\qquad
\forall q\in \diamonddata
\end{equation}
\item 
For all $\udata\in \smash{\Wconv^{8}}$
(the sum converges absolutely in $\Hb^{6,0}$):
\begin{equation}\label{eq:MCid}
\textstyle
\sum_{m\ge1} (-1)^{m} \frac{1}{(m-1)!} \Bdata_m(\idata\MCinf(\udata),\udata,\dots,\udata)
=0
\end{equation}
\end{itemize}
\end{lemma}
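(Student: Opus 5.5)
For the first item we start from the closed formula \eqref{eq:MCinfnice} of Lemma \ref{rem:MCinfY}. Pick $u\in\gx^1(\diamond)$ with $u|_{\diamonddata}=\udata$ and set $W=(\dg u+\frac12[u,u])|_{\diamonddata}\in\gxdata^2(\diamonddata)$ and $\omega=dx^0+\anchorg(\udata)(x^0)$. By \eqref{eq:Constraints_Pdef} we have $\Pconstraints(\udata)=\omega W$.

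Write $\Sigma=\sum_{n\ge0}(-\rhohdata_{\udata})^n$, so that $\MCinf(\udata)=\pdata\Sigma W$. The key algebraic observations are the following.
\begin{enumerate}[label=(\roman*)]
\item $\rhohdata_{\udata}(\cdot)=\anchorg(\udata)(x^0)\,\pitilde(\cdot)$, and $\pitilde$ takes values in $\gxdataG$.
\item For $v\in\gxdataG^k$, $\idata\pdata v=v$.
\item $\omega\cdot(\cdot)$ kills $dx^0\gxdataG^k$ up to the $\anchorg$ term. More precisely, for $y\in\gxdata^{k+1}$ we decompose by \eqref{eq:ggGgG}: $y=y_G+dx^0 e_0^{-1}\pi_0 y=y_G+dx^0\pitilde y$.
\end{enumerate}
Using (iii), $\omega y=dx^0y_G+\anchorg(\udata)(x^0)y_G+\anchorg(\udata)(x^0)\,dx^0\pitilde y$, where $dx^0dx^0=0$ since the multiplication is wedge-like.

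We then aim to show $\omega\,\idata\pdata\Sigma W=\omega W$ by a telescoping argument. Set $Z=\Sigma W$, so that $Z+\rhohdata_{\udata}Z=W$. Since $\pdata$ is the projection onto the $\gxdataG$-summand, $\idata\pdata Z=Z_G=Z-dx^0\pitilde Z$. Hence
\[
\omega\,\idata\pdata Z=\omega Z-\omega\,dx^0\pitilde Z
=\omega Z-\anchorg(\udata)(x^0)\,dx^0\,\pitilde Z .
\]
On the other hand, $\omega\rhohdata_{\udata}Z=\omega\,\anchorg(\udata)(x^0)\pitilde Z=dx^0\anchorg(\udata)(x^0)\pitilde Z$, since $\anchorg(\udata)(x^0)^2=0$ for a one-form. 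Up to the sign from commuting one-forms, this is exactly $-\anchorg(\udata)(x^0)dx^0\pitilde Z$. Therefore $\omega\idata\pdata Z=\omega(Z+\rhohdata_{\udata}Z)=\omega W$. The sign bookkeeping, i.e.\ checking that the module multiplication is graded commutative with one-forms anticommuting, is the delicate point; we would verify it with the explicit bases of Lemma \ref{lem:basisg}.

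Fiberwise injectivity of $\idata$ is Lemma \ref{lem:cbasis}. Injectivity of multiplication by $\omega$ on $\gxdataG^2$ follows from Remark \ref{rem:modmultfibinj}, because $\omega$ is timelike by the choice of $\rhoB$, given $\|\udata_0\|_{\Cb^{0,0}}\le\rhoB$. Together these give \eqref{eq:equivMCP}.

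For the second item, \eqref{eq:MCid} is the standard Bianchi identity for the curvature of an $L_\infty$ algebra. We derive it from \ref{item:Jacobi} by inserting $u_1=\dots=u_n=u$ in degree one, where all signs $\chi=1$, and summing over $n$ with weight $1/n!$. The count $|S_{n,j}|=\binom nj$ gives
\[
\sum_{n}\sum_j(-1)^j\frac{1}{j!(n-j)!}\,\B_{n-j+1}\bigl(\iSHS\B_j(u^{\otimes j}),u^{\otimes(n-j)}\bigr)=0 .
\]
Regrouping with $m=n-j+1$ yields $\sum_m\frac{1}{(m-1)!}\Bdata_m(\iSHS\MCinf,\udata,\dots)$ with sign $(-1)^j$. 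We then need to convert $(-1)^j$ to $(-1)^m$; this should work because the total degree forces a parity relation, or by absorbing the sign via \ref{item:Bngrsym} with the degree-two entry. We would check this carefully in the smallest case $n=2$. We also replace $\iSHS\,\cdot$ by $\idata\,\cdot$ on $\diamonddata$ using \ref{item:Bzero} and \eqref{eq:idatai}.

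The main obstacle is justifying the rearrangement of the double series. For this we use Corollary \ref{cor:bn3}, with $\|\udata\|_{\Cb^{4,0}}\le\rhoB\le1/(4\CB)$ and the input $\idata\MCinf(\udata)\in\Hb^{7,0}$ for $\udata\in\Wconv^8$. This gives geometric bounds $(1/4)^n$ times products of norms, hence absolute convergence in $\Hb^{6,0}$ of the double sum, which permits the reordering.
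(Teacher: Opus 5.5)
Your proposal is correct and follows the paper's proof. For the first item you use the same ingredients ($\idata\pdata=\one-\pi_0$, $\pi_0=dx^0\pitilde$, $dx^0\wedge dx^0=0$, $\anchorg(\udata)(x^0)\wedge\anchorg(\udata)(x^0)=0$, and anticommutation of one-forms), packaged through the resolvent identity $(\one+\rhohdata_{\udata})\Sigma=\one$ rather than the paper's termwise cancellation of the two sums in $Q$; for the second item you carry out the same Getzler-type resummation of \ref{item:Jacobi}, with absolute convergence from Corollary \ref{cor:bn3}. The sign issue you flag is settled by your first suggestion, and \ref{item:Bngrsym} is not needed: since $j+m=n+1$, multiplying the degree-$n$ identity by $(-1)^{n+1}$ turns $(-1)^j$ into $(-1)^m$ before you sum over $n$.
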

\begin{proof}
\textit{First item:}
Fix $u\in\gx^1(\diamond)$ with $u|_{\diamonddata}=\udata$.
Set $A=(\dg u + \tfrac12[u,u])|_{\diamonddata}$.
Note that $\idata \pdata = \one - \pi_0$ (see e.g.~\eqref{eq:pmatrix}), 
where $\pi_0$ is the projection onto the
second summand of \eqref{eq:ggGgG},
here viewed as a map on $\diamonddata$.
Together with \eqref{eq:MCinfnice} we obtain
\begin{align*}
&(dx^0+\anchorg(\udata)(x^0))\idata\MCinf(\udata)
=
(dx^0+\anchorg(\udata)(x^0)) 
(\one - \pi_0) \big(\tsum_{n\ge0}
(-\rhohdata_{\udata})^{n}\big) A\\
&\qquad\qquad=
\Pconstraints(\udata)
+
(dx^0+\anchorg(\udata)(x^0))\big(
-\pi_0  A + (\one - \pi_0) \big(\tsum_{n\ge1}
(-\rhohdata_{\udata})^{n}\big) A\big)
\end{align*}
Parse the second line as $\Pconstraints(\udata)+Q$.
Since $dx^0\pi_0=0$
and $\anchorg(\udata)(x^0)\rhohdata_{\udata}=0$,
\begin{align}\label{eq:Q}
Q
=
dx^0\big(\tsum_{n\ge1} (-\rhohdata_{\udata})^{n}\big) A)
-
\anchorg(\udata)(x^0)\pi_0
\big(\tsum_{n\ge0} (-\rhohdata_{\udata})^{n}\big) A
\end{align}
We have $\pi_0 = dx^0 \pitilde$, 
thus $\anchorg(\udata)(x^0) \pi_0 =- dx^0 \rhohdata_{\udata}$,
thus the two terms in \eqref{eq:Q} cancel, i.e.~$Q=0$.
This proves the identity in the first item.
Clearly $\idata$ is injective.
The definition of $\rhoB$ (second item) ensures that the one-form
$dx^0 + \anchorg(\udata)(x^0)$ is timelike, 
thus multiplication with this one-form is injective 
by Remark \ref{rem:modmultfibinj}.

\textit{Second item:}
Absolute convergence follows from Corollary \ref{cor:bn3}.
The identity follows from \ref{item:Jacobi}
and a combinatorial argument identical to \cite[Lemma 4.5]{getzler}.
%
\qed
\end{proof}

For $n_0\in\Z_{\ge1}$ we denote
$\MC_{\ge n_0}(\udata)=
\textstyle \sum_{n\ge n_0} \frac{1}{n!} \Bdata_n(\udata,\dots,\udata)$.

\smash{We abbreviate 
$\|{\cdot}\|_{\Hb^{\ss,\delta}(\diamonddata)}=\|{\cdot}\|_{\Hb^{\ss,\delta}}$
and analogously for $\Cb^{\ss,\delta}$}.
\begin{lemma}\label{lem:MCinf}
For 
$\ss\in\Z_{\ge7}$,
$\delta\ge1$,
$b\ge1$,
$\rfix\ge1$
and for all
$\udata,\udata'\in\Hb^{\ss,\delta}(\diamonddata,\gxdata^1)$
and 
$\kappa,\kappa'\in\gxdata^1(\diamonddata)$,
if 
\begin{subequations}\label{eq:ukappaassp}
\begin{align}
\|\udata\|_{\Cb^{4,0}}  &\le  \tfrac23\rhoB
	&\|\udata\|_{\Hb^{\ss,\delta}} &\le b 
	\label{eq:uassp}\\
 \|\kappa\|_{\Cb^{4,0}} &\le \tfrac13 \rhoB
 	&\|\kappa\|_{\Cb^{\ss,1}} &\le b 
	& \MCinf(\kappa)|_{|\vec{x}|\ge\rfix} &= 0
	\label{eq:kappaassp}
\end{align}
\end{subequations}
then 
$\udata+\kappa\in \Wconv^{\ss}$
and
(the estimate \eqref{eq:MC3} does not need \eqref{eq:uassp})
\begin{subequations}\label{eq:MCestimates}
\begin{align}
\|\MC_{\ge2}(\udata+\kappa)-\MC_{\ge2}(\kappa)\|_{\Hb^{\ss-1,\delta+1}}
	&\lesssim_{\ss,\delta,b}
	\|\udata\|_{\Hb^{\ss,\delta}}(\|\udata\|_{\Hb^{\ss,1}} + \|\kappa\|_{\Cb^{\ss,1}})
	\label{eq:MC1}\\
\|\MC_{\ge3}(\udata+\kappa)-\MC_{\ge3}(\kappa)\|_{\Hb^{\ss-1,\delta+1}}
	&\lesssim_{\ss,\delta,b}
	\smash{\|\udata\|_{\Hb^{\ss,\delta}}
	(\|\udata\|_{\Hb^{\ss,\frac12}}+\|\kappa\|_{\Cb^{\ss,1}})^{2}}
	\label{eq:MC2}\\
\|\MCinf(\kappa)\|_{\Hb^{\ss-1,\delta+1}} 
	&\lesssim_{\ss,\delta,b,\rfix}
	\|\kappa\|_{\Cb^{\ss,1}}
	\label{eq:MC3}
\end{align}
Moreover, if 
$\udata'$, $\kappa'$ also satisfy \eqref{eq:ukappaassp}
then $\udata'+\kappa'\in \Wconv^{\ss}$ and
\begin{align}
&
\|\MCinf(\kappa)-\MCinf(\kappa')\|_{\Hb^{\ss-1,\delta+1}}
	\lesssim_{\ss,\delta,b,\rfix}
	\|\kappa-\kappa'\|_{\Cb^{\ss,1}}
	\label{eq:MC4}\\
&\|\big(
	\MC_{\ge2}(\udata+\kappa)
	-\MC_{\ge2}(\kappa)\big)
	-
	\big(
	\MC_{\ge2}(\udata'+\kappa')
	-\MC_{\ge2}(\kappa')
	\big)\|_{\Hb^{\ss-1,\delta+1}} \nonumber \\
	&
	\qquad\qquad\qquad\qquad\lesssim_{\ss,\delta,b,\rfix}
	\RFP_0\|\udata-\udata'\|_{\Hb^{\ss,\delta}}+
	\|\udata'\|_{\Hb^{\ss,\delta}}\|\kappa-\kappa'\|_{\Cb^{\ss,1}}
	\label{eq:MC5}
\end{align}
\end{subequations}
where
$\RFP_0=\max\{
	\|\udata\|_{\Hb^{\ss,1}},\|\udata'\|_{\Hb^{\ss,1}},
	\|\kappa\|_{\Cb^{\ss,1}},\|\kappa'\|_{\Cb^{\ss,1}}
	\}$.
\end{lemma}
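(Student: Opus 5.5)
The plan is to expand $\MC_{\ge n_0}(\udata+\kappa)-\MC_{\ge n_0}(\kappa)$ by multilinearity and graded symmetry \ref{item:Bngrsym}:
\[
\textstyle\frac{1}{n!}\Bdata_n(\udata+\kappa,\dots,\udata+\kappa)-\frac{1}{n!}\Bdata_n(\kappa,\dots,\kappa)=\sum_{j=1}^n\binom{n}{j}\frac{1}{n!}\Bdata_n(\udata^{\otimes j},\kappa^{\otimes(n-j)}).
\]
Every term therefore contains at least one factor $\udata$. Membership $\udata+\kappa\in\Wconv^{\ss}$ is immediate from the triangle inequality in $\Cb^{4,0}$ and $\|\kappa\|_{\Hb^{\ss,0}}\lesssim\|\kappa\|_{\Cb^{\ss,1}}$, since $\int\xvecjap^{-2}\xvecjap^{-3}dx<\infty$.

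To estimate each term we apply Proposition \ref{prop:uBestimates} with $\ss-1$ in place of $\ss$ (so we need $\ss\ge7$) and target weight $\delta+1$. We distribute the weights so that one $\udata$ factor carries $\delta$ and the remaining weight $1$ is taken by a second factor. If that second factor is a $\udata$, it carries weight $1$ in $\Hb^{\ss,1}$. If it is a $\kappa$, it carries weight $1$, and we bound its $\Hb^{\ss,1}$ and $\Cb^{4,1}$ norms by $\|\kappa\|_{\Cb^{\ss,1}}$ (an $\Hb$ norm with weight $1$ would diverge, so in the $\Hb$ slots we use a $\Cb$ bound with slightly lower weight via \eqref{eq:sobolev}; more precisely we place $\kappa$ only in $\Cb$ slots, using \eqref{eq:B2est}-type splittings, or give it weight $1-\epsilon$ in $\Hb$ and absorb the rest). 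All other factors get weight $0$. Choosing $Y$ to consist of those two factors, the remaining factors appear either in $\Hb^{\ss,0}$ norms, bounded by $b$ up to at most $\ss$ of them, or in $\Cb^{4,0}$ norms bounded by $\rhoB\le 1/(4\CB)$. Exactly as in Corollary \ref{cor:bn3}, this produces a geometric factor $(1/4)^n$, and $\binom{n}{j}\le 2^n$ keeps the sum over $n$ convergent with a $(1/2)^n$ tail. This gives \eqref{eq:MC1}. For \eqref{eq:MC2} we have $n\ge3$, so we use three $Y$ factors with weights $\delta,\frac12,\frac12$, as in the first item of Corollary \ref{cor:bn3}; when the second or third factor is $\kappa$ we bound it by $\|\kappa\|_{\Cb^{\ss,1}}$ after the Sobolev step.

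For \eqref{eq:MC3} and \eqref{eq:MC4} we use the support condition. Since $\MCinf(\kappa)$ vanishes on $|\vec x|\ge\rfix$, its $\Hb^{\ss-1,\delta+1}$ norm equals its norm restricted to the ball $|\vec x|\le\rfix$. There the weight is $\lesssim_{\rfix,\delta}1$, so we bound by $\|\MCinf(\kappa)\|_{\Hb^{\ss-1,0}}\lesssim\|\kappa\|_{\Hb^{\ss,0}}\lesssim\|\kappa\|_{\Cb^{\ss,1}}$ using the second item of Corollary \ref{cor:bn3}. For differences we first write $\MCinf(\kappa)-\MCinf(\kappa')$ as a telescoping multilinear sum with one factor $\kappa-\kappa'$. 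Both $\MCinf(\kappa)$ and $\MCinf(\kappa')$ vanish outside the ball, so the difference does too, and the same localized bound applies.

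The main obstacle is \eqref{eq:MC5}. We expand the double difference by telescoping into terms that are multilinear with either one factor $\udata-\udata'$ or one factor $\kappa-\kappa'$ together with at least one $\udata'$ factor. Terms with $\udata-\udata'$ are handled as in \eqref{eq:MC1}: $\udata-\udata'$ carries weight $\delta$, and a second factor carries weight $1$ and is bounded by $\RFP_0$. Terms with $\kappa-\kappa'$ are handled by putting weight $\delta$ on $\udata'$ and weight $1$ on $\kappa-\kappa'$ in a $\Cb^{\ss,1}$ norm. The delicate point, which I would check carefully, is that each surviving term genuinely contains a $\udata$-type factor: terms built purely from $\kappa,\kappa'$ cancel by construction of the double difference. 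A second point to check is that the weighted $\Cb$ placements of $\kappa$ are compatible with the $\Hb$-slots forced by Proposition \ref{prop:uBestimates}. For the latter I would use \eqref{eq:B2est} in the $n=2$ case and Sobolev \eqref{eq:sobolev} for $n\ge3$.
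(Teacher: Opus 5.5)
Your overall architecture matches the paper's proof. You expand $\MC_{\ge n_0}(\udata+\kappa)-\MC_{\ge n_0}(\kappa)$ binomially, with at least one $\udata$ in each term, and use \eqref{eq:B2est} for the cross term $\Bdata_2(\udata,\kappa)$. For \eqref{eq:MC3} and \eqref{eq:MC4} you localize to $|\vec x|\le\rfix$ and then apply the second item of Corollary \ref{cor:bn3}. For \eqref{eq:MC5} you split into a part linear in $\udata-\udata'$ and a part linear in $\kappa-\kappa'$ containing a factor $\udata'$.

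\textbf{The gap.} The point you flag as delicate is resolved incorrectly for $n\ge3$. You cannot keep $\kappa$ (or $\kappa-\kappa'$) in a $\Cb$ slot there. Proposition \ref{prop:uBestimates}, applied with $\ss-1$ in place of $\ss$, sums over all $X$ with $|X|=\min\{n,\ss-4\}$. So every factor occurs in an $\Hb^{\ss,\delta_i}$ norm in some summand, and in every summand if $n\le\ss-4$. If that factor carries $\delta_i=1$, its $\Hb^{\ss,1}$ norm is infinite for data decaying like $\xvecjap^{-1}$, such as Kerr data: the integrand is $\xvecjap^{-3}$, which is log-divergent on $\R^3$. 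Sobolev \eqref{eq:sobolev} cannot repair this, because it bounds $\Cb$ norms by $\Hb$ norms, not conversely. So two of your steps fail as stated: the weight-$1$ placement of $\kappa$ in \eqref{eq:MC1}, and your plan for the $\kappa-\kappa'$ terms of \eqref{eq:MC5} (weight $1$ on $\kappa-\kappa'$ in $\Cb^{\ss,1}$, ``Sobolev for $n\ge3$'').

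\textbf{The repair.} Use the split you already use for \eqref{eq:MC2}; the paper uses it in all three estimates.
\begin{itemize}
\item For $n\ge3$, give weights $\delta,\tfrac12,\tfrac12$ to three factors, as in the first item of Corollary \ref{cor:bn3}. Put $\delta$ on a $\udata$-type factor: on $\udata$ in \eqref{eq:MC1}, on $\udata-\udata'$ in the first part of \eqref{eq:MC5}, and on $\udata'$ in the second part.
\item Then use the elementary weighted bound $\|\kappa\|_{\Hb^{\ss,\frac12}}\lesssim_{\ss}\|\kappa\|_{\Cb^{\ss,1}}$, whose integrand is now $\xvecjap^{-4}$. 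Apply it also to $\kappa-\kappa'$.
\end{itemize}
Your alternative ``weight $1-\epsilon$ and absorb the rest'' works only if the remainder goes to a third factor, which is this same split.

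With this split you get $\|\udata\|_{\Hb^{\ss,\delta}}\,q^2(\tfrac14)^n$, where $q=\max\{\|\udata\|_{\Hb^{\ss,\frac12}},\|\kappa\|_{\Hb^{\ss,\frac12}}\}\lesssim_{\ss}b$ (this uses $\delta\ge1$). Since $q^2\lesssim_{\ss,b}q$, this yields \eqref{eq:MC1}. The analogous bounds give the terms $\RFP_0\|\udata-\udata'\|_{\Hb^{\ss,\delta}}$ and $\|\udata'\|_{\Hb^{\ss,\delta}}\|\kappa-\kappa'\|_{\Cb^{\ss,1}}$ in \eqref{eq:MC5}. A genuine $\Cb^{\ss,1}$ slot with weight $1$ is available only at $n=2$, through \eqref{eq:B2est}.
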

\begin{proof}
We will use the notation
\begin{equation}\label{eq:tensorprnot}
\Bdata_n(\underbrace{\vdata,\dots,\vdata}_{\text{$j$ times}},
	\underbrace{\wdata,\dots,\wdata}_{\text{$n-j$ times}})
=
\Bdata_n(\vdata^{\otimes j}, \wdata^{\otimes n-j})
\end{equation}
From the definition of the norms it is easy to see that 
\begin{align} \label{eq:kappaH}
\|\kappa\|_{\Hb^{\ss,\frac12}} 
	\lesssim_{\ss} \|\kappa\|_{\Cb^{\ss,1}} \le b
\end{align}
also using \eqref{eq:kappaassp}.
Then the triangle inequality yields $\udata+\kappa\in \Wconv^{\ss}$.

\textit{Proof of \eqref{eq:MC1}, \eqref{eq:MC2}:} 
Using multilinearity of the $\Bdata_n$ and the fact that
they are totally symmetric when all inputs are in degree one, 
see \ref{item:Bngrsym},
we obtain
$\MC_{\ge2}(\udata+\kappa)-\MC_{\ge2}(\kappa)
=
\textstyle
\sum_{n\ge2}
A_n$ where
\begin{align}\label{eq:An}
A_n:=\tsum_{j=1}^{n}
{n\choose j}
\tfrac{1}{n!}\Bdata_n (\udata^{\otimes j}, \kappa^{\otimes n-j})
\end{align}
Since $j\ge1$, 
there is always at least one decaying term $\udata$.
By Proposition \ref{prop:uBestimates}, 
\begin{align*}
\|A_2\|_{\Hb^{\ss-1,\delta+1}}
\lesssim_{\ss,\delta}
(
\|\udata\|_{\Hb^{\ss,1}}
+
\|\kappa\|_{\Cb^{\ss,1}})
\|\udata\|_{\Hb^{\ss,\delta}}
\end{align*}
where we estimate $\Bdata_2(\udata,\udata)$
using \eqref{eq:uBestimates} and
$\Bdata_2(\udata,\kappa)$ using \eqref{eq:B2est}.
Set
\begin{align}\label{eq:qdef}
q 
&:= 
\smash{\max\{ \|\udata\|_{\Hb^{\ss,\frac12}} , \|\kappa\|_{\Hb^{\ss,\frac12}}\}
\lesssim_{\ss}
\max\{ \|\udata\|_{\Hb^{\ss,\frac12}} , \|\kappa\|_{\Cb^{\ss,1}}\}
\le b}
\end{align}
where we use 
\eqref{eq:kappaH}, 
\eqref{eq:ukappaassp}, 
\eqref{eq:monotone},
$\delta\ge1$.
By Corollary \ref{cor:bn3} (first item),
for $n\ge3$:
\begin{align*}
\smash{\|\tfrac{1}{n!}\Bdata_n (\udata^{\otimes j}, \kappa^{\otimes n-j})\|_{\Hb^{\ss-1,\delta+1}}
\lesssim_{\ss,\delta,b}
\|\udata\|_{\Hb^{\ss,\delta}} q^2 (\tfrac14)^n}
\end{align*}
The sum in \eqref{eq:An} contributes an additional factor $2^n$
to $A_{n\ge3}$.
Thus $\sum_{n\ge2}A_n$ converges absolutely in $\Hb^{\ss-1,\delta+1}$.
Using \eqref{eq:qdef} this proves \eqref{eq:MC1}, \eqref{eq:MC2}.

\textit{Proof of \eqref{eq:MC3}:}
The term $\MCinf(\kappa)$ vanishes on $|\vec{x}|\ge\rfix$
by \eqref{eq:kappaassp}, 
thus
\begin{align}\label{eq:deltato0}
\|\MCinf(\kappa)\|_{\Hb^{\ss-1,\delta+1}}
&\lesssim_{\ss,\delta,\rfix}
\|\MCinf(\kappa)\|_{\Hb^{\ss-1,0}}
\end{align}
Now \eqref{eq:MC3} follows from 
Corollary \ref{cor:bn3} (second item) and 
\eqref{eq:kappaH}, \eqref{eq:kappaassp}.

\textit{Proof of \eqref{eq:MC4}:}
Like in \eqref{eq:deltato0}, the norm with weight 
$\delta+1$ is bounded by the norm with weight $0$,
up to a constant that depends on $\ss,\delta,\rfix$.
Adding and subtracting, and using the symmetry \ref{item:Bngrsym}, 
we obtain
\begin{align}\label{eq:MCMCkkp}
\MCinf(\kappa)-\MCinf(\kappa')
&= 
\textstyle
\sum_{n\ge1} \tsum_{j=0}^{n-1} 
\frac{1}{n!}\Bdata_n(\kappa-\kappa',
	\kappa^{\otimes j},{\kappa'}^{\otimes n-1-j})
\end{align}
By Corollary \ref{cor:bn3} (second item)
and using \eqref{eq:kappaH}, \eqref{eq:kappaassp}, \eqref{eq:monotone},
\[ 
\textstyle
\|\frac{1}{n!}\Bdata_n(\kappa-\kappa',
	\kappa^{\otimes j},{\kappa'}^{\otimes n-1-j})\|_{\Hb^{\ss-1,0}}
\lesssim_{\ss,b}
(\frac14)^{n}\|\kappa-\kappa'\|_{\Hb^{\ss,0}}
\lesssim_{\ss}
(\frac14)^{n}\|\kappa-\kappa'\|_{\Cb^{\ss,1}}
\]
The sum over $j$ in \eqref{eq:MCMCkkp} contributes 
an additional factor $n$.
Thus \eqref{eq:MC4} follows.

\textit{Proof of \eqref{eq:MC5}:}
The left side of \eqref{eq:MC5} equals
\smash{$\|Z^{(1)}+Z^{(2)}\|_{\Hb^{\ss-1,\delta+1}}$}
with
\begin{align*}
Z^{(1)}
&=
\MC_{\ge2}(\udata+\kappa)-\MC_{\ge2}(\udata'+\kappa)\\
Z^{(2)}
&=
\big(\MC_{\ge2}(\udata'+\kappa)-\MC_{\ge2}(\kappa)\big)
-
\big(\MC_{\ge2}(\udata'+\kappa')-\MC_{\ge2}(\kappa')\big)
\end{align*}
We bound the two terms separately.
Define $q'$ analogously to $q$ in \eqref{eq:qdef}
with $\udata$, $\kappa$ replaced by $\udata'$, $\kappa'$,
and set $\tilde{q}=\max\{q,q'\}$. Using  \eqref{eq:kappaH}, \eqref{eq:ukappaassp}, \eqref{eq:monotone}, $\delta\ge1$:
\begin{align}\label{eq:RRQQNEW}
\tilde{q} \lesssim_{\ss} \RFP_0 \le b
\end{align}
Using \ref{item:Bngrsym} we obtain
$\smash{Z^{(1)}} =\sum_{n\ge2} Z^{(1)}_{n}(\udata-\udata')$
where $\smash{Z^{(1)}_{n}}$ is the linear operator
\begin{align}
\textstyle
Z^{(1)}_{n}(W)
=
	\sum_{i=1}^{n}\sum_{a=0}^{i-1}
	{n\choose i} 
	\frac{1}{n!}\Bdata_n(W, \udata^{\otimes a},{\udata'}^{\otimes i-1-a},\kappa^{\otimes n-i})
	\label{eq:Z1n}
\end{align}
We bound this
using \eqref{eq:uBestimates}, \eqref{eq:B2est} in the case $n=2$
and using Corollary \ref{cor:bn3} (first item)
and \eqref{eq:RRQQNEW}, \eqref{eq:ukappaassp} in the case $n\ge3$.
This yields
\begin{align}\label{eq:Z12n}
\|Z^{(1)}_{n}(W)\|_{\Hb^{\ss-1,\delta+1}}
&\lesssim_{\ss,\delta,b}
n (\tfrac12)^n \RFP_0\|W\|_{\Hb^{\ss,\delta}} 
\end{align}
Thus
$\|Z^{(1)}\|_{\Hb^{\ss-1,\delta+1}}$
is bounded by the right hand side of \eqref{eq:MC5}.

Similarly, \smash{$Z^{(2)}=\tsum_{n\ge2}Z^{(2)}_{n}(\kappa-\kappa')$} 
where \smash{$Z^{(2)}_{n}$} is the linear operator
\begin{align*}
\smash{Z^{(2)}_{n}(W)}
&=
\textstyle
\sum_{i=1}^{n-1} 
\sum_{a=0}^{i-1}
{n\choose i}\frac{1}{n!}
\Bdata_n\big( 
W,\udata',
\kappa^{\otimes a},
{\kappa'}^{\otimes i-1-a},
{\udata'}^{\otimes n-1-i}
\big)
\end{align*}
%
We bound this using 
\eqref{eq:B2est} in the case $n=2$
and using Corollary \ref{cor:bn3} (first item)
in the case $n\ge3$, which yields
$
\|Z^{(2)}_{n}(W)\|_{\Hb^{\ss-1,\delta+1}}
\lesssim_{\ss,\delta,b}
n(\tfrac12)^n
\|\udata'\|_{\Hb^{\ss,\delta}}\|W\|_{\Cb^{\ss,1}} 
$.
Thus \smash{$\|Z^{(2)}\|_{\Hb^{\ss-1,\delta+1}}$}
is bounded by the right hand side of \eqref{eq:MC5}.\qed
\end{proof}
\section{Construction of solutions of the constraints}
We solve the constraint equations in the form \eqref{eq:MCinfintro},
see Theorem \ref{thm:mainL-infinity}.
This theorem implies Theorem \ref{thm:P=0} in the 
introduction, shown in Section \ref{sec:ProofThm1}. 

\subsection{Main theorem}

Recall the cone $\Ccone_{r_1,r_2}\subset\R^{10}$ in \eqref{eq:Ccone}
and $\Wconv^{\ss}$ and $\MCinf$ in Definition \ref{def:MCinf}.
Recall that $\MCinf(\udata)=0$ and $\Pconstraints(\udata)=0$
are equivalent, see Lemma \ref{lem:MCinfproperties}.
We use the norms in Definition \ref{def:normsdata} and \ref{def:cnorms}, 
and abbreviate
\smash{$\|{\cdot}\|_{\Hb^{\ss,\delta}(\diamonddata)}
	=\|{\cdot}\|_{\Hb^{\ss,\delta}}$},
analogously for $\Cb^{\ss,\delta}$.
\newcommand{\Cmain}{\tc{cyan(process)}{C}}
\begin{theorem}\label{thm:mainL-infinity}
For all 
$N\in\Z_{\ge8}$, 
$\delta>1$, 
$b\ge1$, 
$\rfix\ge1$
there exists 
$\eps_0\in (0,1]$ and $\Cmain\ge1$
such that for all $\eps\in(0,\eps_0]$ and all 
\begin{equation}\label{eq:u0Kmain}
\ulinsol\in \Hb^{N,\delta}(\diamonddata,\gxdata^1)
\qquad\qquad
\kerrdata:\; \Ccone_{\eps,\conenb{2^{-6}}} \;\to\; \gxdata^1(\diamonddata)
\end{equation}
the following holds.
If
\begin{enumerate}[label=\textnormal{({e\arabic*})}]
\item \label{item:thmu0sol}
	$\Bdata_1(\ulinsol) = 0$
\item \label{item:thmu0small}
\smash{$\|\ulinsol\|_{\Hb^{N,\delta}} \le b$
and $\|\ulinsol\|_{\Cb^{4,0}} \le \frac13\rhoB$}
\item \label{item:thmz*}
$z_* := \Charge(-\frac12\Bdata_2(\ulinsol,\ulinsol))$
satisfies
$z_* \in \Ccone_{\frac\eps2,\conenb{2^{-7}}}$ 
and
\smash{$\|\ulinsol\|_{\Hb^{N,\delta}} \le b |z_*|^{\frac12}$}
%
\item \label{item:thmkerr}
\smash{For all $z_1,z_2\in \Ccone_{\eps,\conenb{2^{-6}}}$:}
\begin{subequations}\label{eq:kerr12345}
\begin{align}
\|\kerrdata(z_1)\|_{\Cb^{4,0}}
	&\le \tfrac13\rhoB
		\label{eq:thmkerr1}\\
\|\kerrdata(z_1)\|_{\Cb^{N,1}}
	&\le b |z_1| 
		\label{eq:thmkerr2}\\
\|\kerrdata(z_1)-\kerrdata(z_2)\|_{\Cb^{N,1}}
	&\le b |z_1-z_2| 
		\label{eq:thmkerr3}\\
\MCinf(\kerrdata(z_1))|_{|\vec{x}|\ge\rfix}
	&= 0 \label{eq:thmkerrsol}\\
\Charge(\MCinf(\kerrdata(z_1))) 
	&= z_1
		\label{eq:thmkerr4}
\end{align}
\end{subequations}
\end{enumerate}
Then there exist
$\cdata\in \Hb^{N,\delta+1}(\diamonddata,\gxdata^1)$,
$z\in \Ccone_{\eps,\conenb{2^{-6}}}$
such that 
$\ulinsol + \kerrdata(z) + \cdata \in \Wconv^{N}$,
\begin{equation}\label{eq:MCu0Kv}
\MCinf\big( \ulinsol + \kerrdata(z) + \cdata \big) = 0
\end{equation}
and
$\|\cdata\|_{\Cb^{N-2,1}}
+ \|\kerrdata(z)\|_{\Cb^{N-2,1}} 
\le
\tfrac12 \|\ulinsol\|_{\Cb^{2,1}}$,
and such that:
\begin{itemize}
\item 
\textbf{Part 1.}
$\cdata$ and $\kerrdata(z)$ are quadratically small in $\ulinsol$,
more precisely:
\begin{subequations}\label{eq:thmvkz}
\begin{align}
\|\cdata\|_{\Hb^{N,\delta+1}}&\le \Cmain|z_*|
&
\|\cdata\|_{\Hb^{N,\delta+1}}
&\le \Cmain \|\ulinsol\|_{\Cb^{2,1}}\|\ulinsol\|_{\Hb^{2,\delta}}
\label{eq:thmvH}\\
|z-z_*|&\le \Cmain |z_*|^{\frac32}
&
\|\kerrdata(z)\|_{\Cb^{N,1}}
&\le \Cmain \|\ulinsol\|_{\Cb^{2,1}}\|\ulinsol\|_{\Hb^{2,\delta}}
\label{eq:thmz}
\end{align}
\end{subequations}
\item 
\textbf{Part 2.}
For all $N'\in\Z_{\ge N}$ and $\delta'\ge\delta$ and $b'\ge 1$,
if
\begin{enumerate}[label=\textnormal{({e\arabic*})},resume]
\item \label{item:ubound'}
$\|\ulinsol\|_{\Hb^{N',\delta'}} \le b'$
\item \label{item:kerrbound'}
For all \smash{$z_1,z_2\in\smash{\Ccone_{\eps,\conenb{2^{-6}}}}$}:
\begin{align}
\smash{\|\kerrdata(z_1)\|_{\Cb^{N',1}}
	 \le b'|z_1| \, ,
	 \quad
\|\kerrdata(z_1)-\kerrdata(z_2)\|_{\Cb^{N',1}}
	 \le b'|z_1-z_2|}\label{eq:thmkerr12'}
\end{align}
\end{enumerate}
then 
$\|\cdata\|_{\Hb^{N',\delta'+1}}
	\lesssim_{N',\delta',b,b',\rfix} 
	\|\ulinsol\|_{\Hb^{N',1}}
	\|\ulinsol\|_{\Hb^{N',\delta'}}$.
\item 
\textbf{Part 3.}
For all $r\ge\rfix$, if $\ulinsol|_{|\vec{x}|\ge r}=0$
then $\cdata|_{|\vec{x}|\ge r}=0$.
\end{itemize}
\end{theorem}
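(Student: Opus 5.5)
The plan is to set up a Banach fixed point for the pair $(\cc,z)\in\Hb^{N,\delta+1}(\diamonddata,\cx^1)\times\Ccone_{\eps,\conenb{2^{-6}}}$, with $\cdata=\idata\cc$. Write $\udata(\cc,z)=\ulinsol+\kerrdata(z)+\idata\cc$. Since $\Bdata_1(\ulinsol)=0$ and $\Bdata_1\idata=\dc$, we have
\[
\MCinf(\udata(\cc,z))=\dc\cc+\RHS(\cc,z),
\]
where
\[
\RHS(\cc,z)=\MCinf(\kerrdata(z))+\big(\MC_{\ge2}(\ulinsol+\idata\cc+\kerrdata(z))-\MC_{\ge2}(\kerrdata(z))\big).
\]
Here $\MCinf(\kerrdata(z))$ contains $\Bdata_1(\kerrdata(z))$. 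We take $\Gc$ from Proposition \ref{prop:Gxhomotopy} with $\Rhomotopy=\rfix$ and define the map
\[
\FP(\cc,z)=\Big(-\Gc\RHS(\cc,z),\; z-\Charge(\RHS(\cc,z))\Big).
\]
At a fixed point $\Charge(\RHS)=0$ and $\cc=-\Gc\RHS$. To get a genuine contraction in $z$, I would rewrite the second component using \eqref{eq:thmkerr4}: $\Charge(\RHS)=z+\Charge(\MC_{\ge2}\text{-difference})$, so the $z$-update is $z\mapsto -\Charge(\MC_{\ge2}(\ulinsol+\idata\cc+\kerrdata(z))-\MC_{\ge2}(\kerrdata(z)))$. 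Its leading term is $\Charge(-\tfrac12\Bdata_2(\ulinsol,\ulinsol))=z_*$, and the remainder is $O(|z_*|^{3/2})$. This uses \eqref{eq:MC1}, \eqref{eq:MC2}, \eqref{eq:MC5}, Lemma \ref{lem:ChargeEst} (valid since $\delta+1>2$), and the assumption $\|\ulinsol\|\le b|z_*|^{1/2}$. The cross terms $\Bdata_2(\ulinsol,\kerrdata(z))$ are bounded by $|z_*|^{1/2}|z|$.

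The first step is to choose a ball $X=\{\|\cc\|_{\Hb^{N,\delta+1}}\le A|z_*|,\ |z-z_*|\le A|z_*|^{3/2}\}$ and check that $\FP$ maps $X$ into itself. For the $\cc$-component we apply \eqref{eq:Gcest} (with $s=N-1$) to $\RHS\in\Hb^{N-1,\delta+1}$, which is bounded via \eqref{eq:MC1}, \eqref{eq:MC3}, \eqref{eq:thmkerr2} and $\|\idata\cc\|_{\Hb^{N,\delta}}\lesssim|z_*|$. The term $\MCinf(\kerrdata(z))$ has compact support by \eqref{eq:thmkerrsol}, so \eqref{eq:MC3} gives $O(|z|)=O(|z_*|)$. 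The smallness conditions $\|\cdot\|_{\Cb^{4,0}}\le\rhoB$ needed for Lemma \ref{lem:MCinf} follow from Sobolev \eqref{eq:sobolev} once $\eps_0$ is small. Membership $z\in\Ccone_{\eps,2^{-6}}$ follows from $z_*\in\Ccone_{\eps/2,2^{-7}}$ and $|z-z_*|\le A|z_*|^{3/2}\ll m_*$ (note $|z_*|\lesssim m_*$ in the cone). Contraction follows from \eqref{eq:MC4}, \eqref{eq:MC5} and \eqref{eq:thmkerr3}, with small factor $\RFP_0\lesssim|z_*|^{1/2}$.

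Next, show that the fixed point solves \eqref{eq:MCu0Kv}. By \ref{item:homotopy}, $\dc\Gc\RHS=\RHS-\Pic\RHS-\Gc\dc\RHS$. Here $\Pic\RHS=\sum_\ell\Charge(\RHS)_\ell\mu_\ell=0$, so $\MCinf(\udata)=-\Gc\dc\RHS$. The key step, and I expect the main obstacle, is proving $\dc\RHS=0$. I would use \eqref{eq:MCid}, isolating its $m=1$ term $-\Bdata_1\idata\MCinf(\udata)=-\dc\MCinf(\udata)$. This gives $\dc\MCinf(\udata)=\sum_{m\ge2}(\pm)\tfrac1{(m-1)!}\Bdata_m(\idata\MCinf(\udata),\udata,\ldots)$, and $\dc\RHS=\dc\MCinf(\udata)$ since $\dc^2=0$. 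Writing $F=\MCinf(\udata)=-\Gc\dc\RHS$, we obtain the linear equation $F=-\Gc L_{\udata}(F)$, where $L_{\udata}(F)=\sum_{m\ge2}(\pm)\frac1{(m-1)!}\Bdata_m(\idata F,\udata,\ldots)$ is small: by Corollary \ref{cor:bn3} it is bounded by $\lesssim\rhoB$-small times $\|F\|$ with one derivative loss, which $\Gc$ recovers by \eqref{eq:Gcest}. Hence $F=0$ by a Neumann argument in a weight-$0$ space. One must check the spaces, since $\MCinf$ loses a derivative; working at regularity $N-1$ and $N-2$ with $N\ge8$ should suffice. Compact-support issues (Proposition \ref{prop:Gxhomotopy} is stated on $\cx_{\cpt}$) are handled by density and continuity of the estimates.

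Finally, Part 2 follows by bootstrapping the fixed point identity $\cc=-\Gc\RHS$ in higher norms $\Hb^{N',\delta'+1}$. This uses the tame estimates \eqref{eq:uBestimates}, which are linear in the top norm, and the commutator bounds \eqref{eq:Gccomest} and Lemma \ref{lem:uCBestimates} to raise $N'$ one step at a time; the $\delta'$ increase is immediate since $\Gc$ preserves all weights $>2$. Part 3 follows from \ref{item:support}: if $\ulinsol$ vanishes for $|\vec{x}|\ge r$, every iterate keeps support in $|\vec{x}|\le r$, because there $\ulinsol=\cc=0$ and $\MCinf(\kerrdata(z))=0$, so $\RHS=0$ there. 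The $\Cb^{N-2,1}$ bound and \eqref{eq:thmvkz} then follow from Sobolev and the ball definition, using $|z_*|\lesssim\|\ulinsol\|_{\Cb^{2,1}}\|\ulinsol\|_{\Hb^{2,\delta}}$ from \eqref{eq:B2est} and Lemma \ref{lem:ChargeEst}.
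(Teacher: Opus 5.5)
Your architecture matches the paper's. Your $z$-update is exactly $z_*+\FP_{\Charge}$ from Lemma \ref{lem:phi}, the fixed point is shown to solve \eqref{eq:MCu0Kv} by the same combination of \ref{item:homotopy}, $\Charge(\RHS)=0$ and \eqref{eq:MCid}, and Part 3 is argued the same way. But three steps fail as written.

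(i) The map is not a contraction with a uniformly small factor. The $\cc$-component depends on $z$ through $\MCinf(\kerrdata(z))$, and \eqref{eq:MC4} with \eqref{eq:thmkerr3} gives only an $O(1)$ Lipschitz constant there (cf.\ \eqref{eq:phicestcontraction}). Only the couplings $\cc\to z$ and $z\to z$ carry a factor $|z_*|^{1/2}$, so an unweighted product metric gives no contraction. The paper uses the metric \eqref{eq:dmetric}, which weights $|\tilde z-\tilde z'|$ by $a=(4\CFP b|z_*|^{1/2})^{-1}$. This makes the $z\to\cc$ coupling $O(|z_*|^{1/2})$ and the $\cc\to z$ coupling at most $\frac14$. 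You also need two different radii. The $z$-estimate contains $\Charge(\Bdata_2(\ulinsol,\idata\cc))$, of size $\sim b|z_*|^{1/2}\|\cc\|$, so the $z$-radius must be chosen after, and much larger than, the $\cc$-radius.

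(ii) In the step $F=0$, the smallness of $\Gc L_U$, with $U=\ulinsol+\kerrdata(z)+\idata\cc$, cannot come from $\rhoB$. Corollary \ref{cor:bn3} yields $(\frac14)^m$ times a constant depending on $s,b$, and that constant is not small. The smallness must come from $\|U\|_{\Hb^{N-1,0}}\lesssim|z_*|^{1/2}$, extracted by Proposition \ref{prop:uBestimates} with $Y=\{1,2\}$, plus a smallness assumption on $\eps_0$. Nor can you work at weight $0$, since \eqref{eq:Gcest} holds only for weights $>2$. The argument has to be run in $\Hb^{N-1,\delta+1}$, where $F$ lies by \eqref{eq:MCukv} and \eqref{eq:dcest}, with weight $\delta+1$ on the slot $\idata F$.

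(iii) The most substantive gap is Part 2: regularity cannot be bootstrapped on the fixed point itself. $\Bdata_n$ loses exactly the derivative that $\Gc$ gains. So estimating $(\xvecjap\p_{\vec{x}})^{\alpha}\cc$, $|\alpha|=s-N$, from $\cc=-\Gc\RHS$ produces a term $\frac12\|(\xvecjap\p_{\vec{x}})^{\alpha}\cc\|_{\Hb^{N,\delta}}$ on the right. Absorbing it presupposes that this norm is finite, which is exactly what you want to prove.

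The paper avoids the circularity by running the induction on $s$ over the Picard iterates $\cc_\ell$. Each $\cc_\ell$ lies in $\Hb^{N',\delta+1}$ by Lemma \ref{lem:phi} applied at regularity $N'$; this is a qualitative statement with $\ell$-dependent constants. The commutator estimates, taken at level $N$, then give bounds uniform in $\ell$ and summable difference bounds (the statements $P_s$, $P_s'$). All smallness in these bounds comes from $\|\ulinsol\|_{\Hb^{N,1}}\lesssim_b\eps_0^{1/2}$, so $\eps_0$ stays independent of $N'$, and the iterates converge to $\cc$ in $\Hb^{N',\delta+1}$.

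Only afterwards is the weight raised, $\delta_0\to\delta_0+1$, directly on the fixed point as you propose. That step needs no absorption, because the right-hand side at weight $\delta_0+1$ involves $\cc$ only at weight $\delta_0$.
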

We note that the element $\kerrdata$ is not required to be
the initial data $\KSdata$ of the Kerr-Schild element \eqref{eq:KS};
only to conclude Theorem \ref{thm:P=0} we will choose $\kerrdata=\KSdata$.

The element $\cdata$ will be constructed in the gauge subspace
$\gxdataG^1(\diamonddata)$, thus 
$\cdata=\idata\cc$ with $\cc\in\cx^1(\diamonddata)$.
To prove Theorem \ref{thm:mainL-infinity} we use
Proposition \ref{prop:Gxhomotopy} to formulate \eqref{eq:MCu0Kv}
as a fixed point equation for $\cc$ and $\tilde{z}=z-z_*$, of the form
\[ 
\cc = \FP_{\cx}(\cc,\tilde{z})
\qquad\qquad
\tilde{z} = \FP_{\Charge}(\cc,\tilde{z})
\]
where $\FP_{\cx}$ and $\FP_{\Charge}$ are defined in the next lemma.
We will use \eqref{eq:MCid}
(based on the higher Jacobi identities \ref{item:Jacobi})
to show that the fixed point solves \eqref{eq:MCu0Kv}.

\begin{lemma}\label{lem:phi}
For all
$\ss\in\Z_{\ge7}$, 
$\delta>1$, 
$b\ge1$, 
$\rfix\ge1$
there exists a constant $\CFP\ge1$ such that 
for all $\rr\in (0,1]$ and 
$z_* \in \Ccone_{\frac{\rr}{2},\conenb{2^{-7}}}$ and
$$
\smash{\ulinsol\in 
\Hb^{\ss,\delta}(\diamonddata,\gxdata^1)
\qquad\qquad
\kerrdata:\;\; 
	\Ccone_{\rr,\conenb{2^{-6}}} 
	\;\to\;
	\gxdata^1(\diamonddata)}
$$
if these parameters satisfy  
\ref{item:thmu0sol};
\ref{item:thmu0small} with $N$ replaced by $\ss$;
and \eqref{eq:thmkerr1}, \eqref{eq:thmkerr2}, 
\eqref{eq:thmkerr3}, \eqref{eq:thmkerrsol} of \ref{item:thmkerr}
with $N$ replaced by $\ss$,
then the following holds:
Define
\begin{subequations}\label{eq:FPspacedef}
\begin{align}
\FPspace^{\ss,\delta}
= 
\smash{\big\{}
	(\cc,\tilde{z})
	\in \Hb^{\ss,\delta}(\diamonddata,\cx^1) \times \R^{10}  
	\mid 
	&\ \|\cc\|_{\Cb^{4,0}}\le \tfrac13\rhoB,\  
	\|\cc\|_{\Hb^{\ss,\delta}} \le b , \label{eq:FPspacedefc} \\
	&\;\; |\tilde{z}|\le \conenb{2^{-8}}|z_*| 
	\smash{\big\}}\label{eq:FPspacedefz}
\end{align}
\end{subequations}
\begin{itemize}
\item 
For all $(\cc,\tilde{z})\in \FPspace^{\ss,\delta}$ one has
$z_*+\tilde{z}\in\Ccone_{\rr,\conenb{2^{-6}}}$, 
$\ulinsol+\kerrdata(z_*+\tilde{z})+\idata \cc \in \Wconv^{\ss}$ and
\begin{align}
\MCinf\big(\ulinsol+\kerrdata(z_*+\tilde{z})+\idata \cc\big) - \dc \cc
	&\;\in\;
\Hb^{\ss-1,\delta+1}(\diamonddata,\cx^2)
	\label{eq:MCukv}
\end{align}
\item 
Define \smash{$\FP_{\cx}
	: 
	\FPspace^{\ss,\delta}
	\to 
	\Hb^{\ss,\delta+1}(\diamonddata,\cx^1)$}
by
\begin{align*}
\FP_{\cx}(\cc,\tilde{z})
	=
	-\Gc\left(\MCinf\big(\ulinsol+\kerrdata(z_*+\tilde{z})+\idata \cc\big) - \dc \cc \right)
\end{align*}
using (the continuous extension of)
$\Gc$ in Proposition \ref{prop:Gxhomotopy} with $\Rhomotopy=1$;
the map $\Gc$ can indeed be applied by \eqref{eq:MCukv}.
Then
\begin{align}
&\|\FP_{\cx}(\cc,\tilde{z})\|_{\Hb^{\ss,\delta+1}}
\le
	\CFP|z_*|  \nonumber\\
	&\qquad\qquad\quad
	+\CFP
	\big(\|\ulinsol\|_{\Hb^{\ss,1}}
		+
		\|\cc\|_{\Hb^{\ss,1}}
		+
		|z_*|\big)
	\big(\|\ulinsol\|_{\Hb^{\ss,\delta}}+\|\cc\|_{\Hb^{\ss,\delta}}\big)
		\label{eq:phicest}\\
&\|\FP_{\cx}(\cc,\tilde{z})-\FP_{\cx}(\cc',\tilde{z}')\|_{\Hb^{\ss,\delta+1}}
\le
	\CFP
	\big(|\tilde{z}-\tilde{z}'|
	+
	\smash{\RFP\|\cc-\cc'\|_{\Hb^{\ss,\delta}}}\big)%
	\label{eq:phicestcontraction}
\end{align}
where 
\smash{$\RFP=\max\big\{
	\|\ulinsol\|_{\Hb^{\ss,1}},
	\|\cc\|_{\Hb^{\ss,1}},
	\|\cc'\|_{\Hb^{\ss,1}},
	|z_*|
	\big\}$}.

\item 
Define $\FP_{\Charge}:\FPspace^{\ss,\delta}\to \R^{10}$ by 
\begin{align*}
	\FP_{\Charge}(\cc,\tilde{z})
	&=
	-\Charge\smash{\Big(}
	\MCinf(\ulinsol+\kerrdata(z_*+\tilde{z})+\idata \cc)-\dc \cc
	\\
	&\qquad\quad
	-\MCinf(\kerrdata(z_*+\tilde{z}))
	-\tfrac12\Bdata_2(\ulinsol,\ulinsol)\smash{\Big)}
\end{align*}
using (the continuous extension of)
$\Charge$ in Definition \ref{def:MLCA}.
Then
\begin{align}
&|\FP_{\Charge}(\cc,\tilde{z})|
\le
\CFP (\|\ulinsol\|_{\Hb^{\ss,\frac12}}^2+\|\cc\|_{\Hb^{\ss,1}}+|z_*|)
(\|\ulinsol\|_{\Hb^{\ss,\delta}}+\|\cc\|_{\Hb^{\ss,\delta}})
	\label{eq:phiChest}\\
&|\FP_{\Charge}(\cc,\tilde{z})-\FP_{\Charge}(\cc',\tilde{z}')|
\le
\CFP\smash{\big(}\RFP\|\cc-\cc'\|_{\Hb^{\ss,\delta}}\nonumber\\
	&\hspace{4.3cm}+ 
	(\|\ulinsol\|_{\Hb^{\ss,\delta}}+\|\cc'\|_{\Hb^{\ss,\delta}})
	|\tilde{z}-\tilde{z}'|\smash{\big)}
	\label{eq:phiChdiff}
\end{align}
\end{itemize}
\end{lemma}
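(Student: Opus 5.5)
The plan is to reduce all three items to Lemma \ref{lem:MCinf}, applied with $\udata=\ulinsol+\idata\cc$ and $\kappa=\kerrdata(z_*+\tilde z)$, combined with the mapping properties of $\Gc$ and $\Charge$.

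\textbf{First item.} Since $|\tilde z|\le 2^{-8}|z_*|$ and $z_*\in\Ccone_{\rr/2,2^{-7}}$, writing $z_*=(m_*,q_*)$ gives
\[
|z_*+\tilde z|\le (1+2^{-8})|z_*| .
\]
The mass component stays positive and below $\rr$, and the ratio $|q|/m$ stays below $2^{-6}$. This is elementary, using $|q_*|<2^{-7}m_*$, hence $|z_*|\le(1+2^{-7})m_*$. Next, $\idata$ is the identity matrix in the bases (Lemma \ref{lem:cbasis}). Hence
\[
\|\ulinsol+\idata\cc\|_{\Cb^{4,0}}\le\tfrac23\rhoB ,
\]
so \eqref{eq:uassp} holds with $b$ replaced by $2b$. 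Also \eqref{eq:kappaassp} holds by \eqref{eq:thmkerr1}, \eqref{eq:thmkerr2}, \eqref{eq:thmkerrsol} and $|z|\le1$. Lemma \ref{lem:MCinf} then gives membership in $\Wconv^{\ss}$. For \eqref{eq:MCukv}, decompose
\[
\MCinf(\udata+\kappa)-\dc\cc=\Bdata_1(\ulinsol)+\big(\Bdata_1(\idata\cc)-\dc\cc\big)+\MCinf(\kappa)+\big(\MC_{\ge2}(\udata+\kappa)-\MC_{\ge2}(\kappa)\big).
\]
The first term vanishes by \ref{item:thmu0sol}. The second vanishes by \eqref{eq:DPB1}, since $\pdata\idata=\one$. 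The last two terms lie in $\Hb^{\ss-1,\delta+1}$ by \eqref{eq:MC3} and \eqref{eq:MC1}.

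\textbf{Second item.} Apply \eqref{eq:Gcest} with $\delta+1>2$; this gains a derivative, so $\Hb^{\ss-1,\delta+1}\to\Hb^{\ss,\delta+1}$. Bound the decomposed right-hand side using \eqref{eq:MC3} and \eqref{eq:MC1}, together with $\|\kappa\|_{\Cb^{\ss,1}}\le b|z_*+\tilde z|\lesssim|z_*|$. This yields \eqref{eq:phicest}, using $\|\udata\|_{\Hb^{\ss,1}}\le\|\ulinsol\|_{\Hb^{\ss,1}}+\|\cc\|_{\Hb^{\ss,1}}$.

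For the difference \eqref{eq:phicestcontraction}, the linear terms $\Bdata_1(\ulinsol)$ and $\Bdata_1(\idata\cc)-\dc\cc$ cancel identically. Use \eqref{eq:MC4} for $\MCinf(\kappa)-\MCinf(\kappa')$, and combine it with \eqref{eq:thmkerr3} to get a bound $\lesssim|\tilde z-\tilde z'|$. Use \eqref{eq:MC5} for the remainder. Here $\RFP_0\lesssim\RFP$, because $\|\kappa\|_{\Cb^{\ss,1}}\lesssim|z_*|$. The cross term $\|\udata'\|\,\|\kappa-\kappa'\|$ is bounded by $b\,|\tilde z-\tilde z'|$.

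\textbf{Third item.} By construction, $\FP_{\Charge}$ is minus the charge of
\[
\big(\MC_{\ge2}(\udata+\kappa)-\MC_{\ge2}(\kappa)\big)-\tfrac12\Bdata_2(\ulinsol,\ulinsol).
\]
This is the main point of the argument. By Lemma \ref{lem:ChargeEst} with $\delta+1>2$, it suffices to bound this quantity in $\Hb^{1,\delta+1}$. Split it into three pieces:
\begin{itemize}
\item $\MC_{\ge3}(\udata+\kappa)-\MC_{\ge3}(\kappa)$, controlled by \eqref{eq:MC2}. It gives $(\|\udata\|_{\Hb^{\ss,1/2}}+|z_*|)^2\|\udata\|_{\Hb^{\ss,\delta}}$, which is absorbed into the right side of \eqref{eq:phiChest} since $\|\cc\|\le b$ and $|z_*|\le1$.
\item $\tfrac12\Bdata_2(\udata,\udata)-\tfrac12\Bdata_2(\ulinsol,\ulinsol)$, which is linear in $\cc$. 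It is bounded via \eqref{eq:uBestimates} by $\|\cc\|_{\Hb^{\ss,1}}\|\udata\|_{\Hb^{\ss,\delta}}$ plus the symmetric term.
\item $\Bdata_2(\udata,\kappa)$, bounded by \eqref{eq:B2est} as $|z_*|\,\|\udata\|_{\Hb^{\ss,\delta}}$.
\end{itemize}
The expected obstacle is exactly that the $\ulinsol$-quadratic term is not small, which is why it is subtracted in the definition of $\FP_{\Charge}$. The Lipschitz estimate \eqref{eq:phiChdiff} follows the same splitting, reusing the $Z^{(1)},Z^{(2)}$ decomposition from the proof of \eqref{eq:MC5}. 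The $Z^{(2)}$-type terms carry a factor $\|\udata'\|_{\Hb^{\ss,\delta}}$, which produces the $(\|\ulinsol\|+\|\cc'\|)|\tilde z-\tilde z'|$ term. The $\MCinf(\kappa)$ contributions cancel exactly in $\FP_{\Charge}$, so no term of size $|\tilde z-\tilde z'|$ alone appears.
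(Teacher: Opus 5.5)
Your proposal is correct and follows the paper's proof essentially step by step. The paper likewise applies Lemma~\ref{lem:MCinf} with $\udata=\ulinsol+\idata\cc$ (bound $2b$) and $\kappa=\kerrdata(z_*+\tilde z)$, uses $\Bdata_1(\udata)=\dc\cc$, then \eqref{eq:Gcest} respectively Lemma~\ref{lem:ChargeEst}, and splits $\MC_{\ge2}$ into the same $\Bdata_2$-terms plus the $\MC_{\ge3}$-difference; for \eqref{eq:phiChdiff} it simply applies \eqref{eq:MC5} directly, since the $\MCinf(\kappa)$ and $\tfrac12\Bdata_2(\ulinsol,\ulinsol)$ terms cancel. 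One small slip: $\tfrac12\Bdata_2(\udata,\udata)-\tfrac12\Bdata_2(\ulinsol,\ulinsol)=\Bdata_2(\ulinsol,\idata\cc)+\tfrac12\Bdata_2(\idata\cc,\idata\cc)$ is not linear in $\cc$, but your bound $\|\cc\|_{\Hb^{\ss,1}}(\|\ulinsol\|_{\Hb^{\ss,\delta}}+\|\cc\|_{\Hb^{\ss,\delta}})$ via \eqref{eq:uBestimates} covers both terms.
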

\begin{proof}
When referring to 
\ref{item:thmu0sol},
\ref{item:thmu0small}, \eqref{eq:thmkerr1}, 
\eqref{eq:thmkerr2}, \eqref{eq:thmkerr3}, 
\eqref{eq:thmkerrsol} it will be understood that
$N$ is replaced by $\ss$.
Instead of specifying $\CFP$ up front, we will make 
finitely many admissible largeness assumptions on $\CFP$,
where admissible means that they depend only on 
$\ss,\delta,b,\rfix$.
We will use the monotonicity \eqref{eq:monotone} without further reference.
Let $(\cc,\tilde{z}),(\cc',\tilde{z}')\in\FPspace^{\ss,\delta}$.
Abbreviate
\begin{align*}
\udata &= \ulinsol + \idata \cc 
&
\udata' &= \ulinsol + \idata \cc' 
&
z &= z_* + \tilde{z}
&
z' &= z_* + \tilde{z}'
\end{align*}
By Lemma \ref{lem:cbasis} the map $\idata$ is given,
relative to the bases in Lemma \ref{lem:cbasis} and \ref{lem:basisgdata},
by the identity matrix.
Thus the norms of $\idata \cc$ and $\cc$ 
respectively of $\idata \cc'$ and $\cc'$ are equal,
and this will be used without further reference.
By \ref{item:thmu0sol}, \eqref{eq:DPB1}
and by $\pdata\idata=\one$, 
\begin{equation}\label{eq:dvBv}
\Bdata_1(\udata)  = \Bdata_1(\idata \cc)  = \dc \cc
\end{equation}
and analogously for $\udata',\cc'$. 
Using $z_*\in\smash{\Ccone_{\frac{\rr}{2},\conenb{2^{-7}}}}$
and \eqref{eq:FPspacedefz}, one checks that
\begin{equation}\label{eq:zz'small}
|z|,|z'| \le \smash{\tfrac32}|z_*|\le 1
\qquad
z,z' \in \Ccone_{\rr,\conenb{2^{-6}}}
\end{equation}

We will repeatedly use Lemma \ref{lem:MCinf}
with parameters $\ss,\delta,b,\rfix$
and $\udata,\kappa$ and $\udata',\kappa'$ there
given by 
$\ss,\delta,2b,\rfix$ and 
$\udata,\kerrdata(z)$ and $\udata',\kerrdata(z')$ here
(use $\ss\ge7$).
The assumptions \eqref{eq:ukappaassp} are satisfied
both for $\udata,\kerrdata(z)$ and for $\udata',\kerrdata(z')$:
for \eqref{eq:uassp} use the triangle inequality, 
\ref{item:thmu0small}, \eqref{eq:FPspacedefc};
for \eqref{eq:kappaassp} 
use 
\eqref{eq:zz'small}, \eqref{eq:thmkerr1}, \eqref{eq:thmkerr2}, \eqref{eq:thmkerrsol}.

\textit{Proof of first item.}
Use \eqref{eq:zz'small}
and $\udata+\kerrdata(z)\in \Wconv^{\ss}$
by Lemma \ref{lem:MCinf}.
Further
\begin{align}
\MCinf(\udata+\kerrdata(z)) - \dc \cc 
=
\big(\MC_{\ge2}(\udata+\kerrdata(z)) - \MC_{\ge2}(\kerrdata(z))\big)
+\MCinf(\kerrdata(z))
\label{eq:MCinfu+kerr}
\end{align}
using \eqref{eq:dvBv},
thus \eqref{eq:MCukv} follows from \eqref{eq:MC1} and \eqref{eq:MC3}.

\textit{Proof of second item.}
\eqref{eq:phicest}: 
By Proposition \ref{prop:Gxhomotopy} and then using \eqref{eq:MCinfu+kerr},
\begin{align*}
\|\FP_{\cx}(\cc,\tilde{z})\|_{\Hb^{\ss,\delta+1}}
&\lesssim_{\ss,\delta}
\|\MCinf(\udata+\kerrdata(z))-\dc \cc\|_{\Hb^{\ss-1,\delta+1}}\\
&\le
\|\MC_{\ge2}(\udata+\kerrdata(z)) 
- \MC_{\ge2}(\kerrdata(z))\|_{\Hb^{\ss-1,\delta+1}}
+
\|\MCinf(\kerrdata(z))\|_{\Hb^{\ss-1,\delta+1}}
\end{align*}
We bound the two terms using \eqref{eq:MC1} 
respectively \eqref{eq:MC3} and \eqref{eq:thmkerr2},
obtaining
\begin{align*}
\|\FP_{\cx}(\cc,\tilde{z})\|_{\Hb^{\ss,\delta+1}}
&\lesssim_{\ss,\delta,b,\rfix}
\|\udata\|_{\Hb^{\ss,\delta}}(\|\udata\|_{\Hb^{\ss,1}}+b|z|)
+b|z|
\end{align*}
This implies \eqref{eq:phicest} by 
the triangle inequality for $\udata=\ulinsol+\idata\cc$, \eqref{eq:zz'small},
an admissible largeness assumption on $\CFP$.
\eqref{eq:phicestcontraction}:
By linearity of $\Gc$ and Proposition \ref{prop:Gxhomotopy},%
\begin{align*}
&\|\FP_{\cx}(\cc,\tilde{z})-
	\FP_{\cx}(\cc',\tilde{z}')\|_{\Hb^{\ss,\delta+1}}
	\\
&\quad\lesssim_{\ss,\delta}
\|\big(\MCinf(\udata+\kerrdata(z)) - \dc \cc\big)
-
\big(\MCinf(\udata'+\kerrdata(z')) - \dc \cc'\big)\|_{\Hb^{\ss-1,\delta+1}}
\end{align*}
We estimate this using \eqref{eq:MC4} and \eqref{eq:MC5}.
For this purpose we add and subtract 
$\MCinf(\kerrdata(z))-\MCinf(\kerrdata(z'))$,
and use 
$\MCinf(\udata+\kerrdata(z))-\dc\cc-\MCinf(\kerrdata(z))
= \MC_{\ge2}(\udata+\kerrdata(z))-\MC_{\ge2}(\kerrdata(z))$
by \eqref{eq:dvBv} and analogously with $\udata',z'$.
Then
\begin{align*}
\|\FP_{\cx}(\cc,\tilde{z})-\FP_{\cx}(\cc',\tilde{z}')\|_{\Hb^{\ss,\delta+1}}
\lesssim_{\ss,\delta,b,\rfix}
	\RFP\|\cc-\cc'\|_{\Hb^{\ss,\delta}}
	+
	\|\kerrdata(z)-\kerrdata(z')\|_{\Cb^{\ss,1}}
\end{align*}
where we also use
the fact that $\RFP_0$ in \eqref{eq:MC5} satisfies 
$\RFP_{0}\lesssim_{b}\RFP$ by \eqref{eq:thmkerr2}, \eqref{eq:zz'small};
and use 
\smash{$\|\udata'\|_{\Hb^{\ss,\delta}}\le 2b$} 
by \ref{item:thmu0small}, \eqref{eq:FPspacedefc}.
Now \eqref{eq:thmkerr3}
and an admissible largeness assumption on $\CFP$
yield \eqref{eq:phicestcontraction}.

\textit{Proof of third item.}
\eqref{eq:phiChest}:
By Lemma \ref{lem:ChargeEst} and $\delta+1>2$ and $\ss-1\ge1$,
\begin{align*}
\smash{|\FP_{\Charge}(\cc,\tilde{z})|
\lesssim_{\delta}
\|\MCinf(\udata+\kerrdata(z)) - \dc \cc - \MCinf(\kerrdata(z)) 
- \tfrac12\Bdata_2(\ulinsol,\ulinsol)
\|_{\Hb^{\ss-1,\delta+1}}}
\end{align*}
The linear terms cancel using \eqref{eq:dvBv}.
Using the triangle inequality,
\begin{align*}
|\FP_{\Charge}(\cc,\tilde{z})|
	&\lesssim_{\delta}
	\|\Bdata_2(\idata \cc,\idata \cc)\|_{\Hb^{\ss-1,\delta+1}}
	+
	\|\Bdata_2(\ulinsol,\idata \cc)\|_{\Hb^{\ss-1,\delta+1}}
	+
	\|\Bdata_2(\udata,\kerrdata(z))\|_{\Hb^{\ss-1,\delta+1}} \\
	&
	\qquad+
	\smash{\|\MC_{\ge3}(\udata+\kerrdata(z))
	-\MC_{\ge3}(\kerrdata(z))\|_{\Hb^{\ss-1,\delta+1}}}
\end{align*}
We bound the quadratic terms using Proposition \ref{prop:uBestimates}
(use \eqref{eq:uBestimates} for the first two terms and
\eqref{eq:B2est} for the third), and the remaining term
using \eqref{eq:MC2}, obtaining
\begin{align*}
|\FP_{\Charge}(\cc,\tilde{z})|
	&\lesssim_{\ss,\delta,b}
	(\|\cc\|_{\Hb^{\ss,1}}+\|\kerrdata(z)\|_{\Cb^{\ss,1}})
	(\|\ulinsol\|_{\Hb^{\ss,\delta}}+\|\cc\|_{\Hb^{\ss,\delta}})
	\\
	&\qquad\quad+
	\smash{(\|\ulinsol\|_{\Hb^{\ss,\frac12}}
	+\|\cc\|_{\Hb^{\ss,\frac12}}
	+\|\kerrdata(z)\|_{\Cb^{\ss,1}})^{2}
	(\|\ulinsol\|_{\Hb^{\ss,\delta}}+\|\cc\|_{\Hb^{\ss,\delta}})}
\end{align*}
This implies \eqref{eq:phiChest}
using \eqref{eq:thmkerr2}, \eqref{eq:zz'small};
using 
\smash{$\|\ulinsol\|_{\Hb^{\ss,1/2}},
\|\cc\|_{\Hb^{\ss,1/2}},
|z_*|\lesssim b$} by \ref{item:thmu0small}, \eqref{eq:FPspacedefc}, \eqref{eq:zz'small}; 
and by an admissible largeness assumption on $\CFP$.
\eqref{eq:phiChdiff}:
By linearity of $\Charge$,
Lemma \ref{lem:ChargeEst} and $\delta+1>2$ and $\ss-1\ge1$, 
and also using \eqref{eq:dvBv},
\begin{align*}
|\FP_{\Charge}(\cc,\tilde{z})-\FP_{\Charge}(\cc',\tilde{z}')|
	&\lesssim_\delta
	\|\big(
	\MC_{\ge2}(\udata+\kerrdata(z))-\MC_{\ge2}(\kerrdata(z))
	\big)\\
	&\smash{\qquad-\big(
	\MC_{\ge2}(\udata'+\kerrdata(z'))-\MC_{\ge2}(\kerrdata(z'))
	\big)\|_{\Hb^{\ss-1,\delta+1}}}
\end{align*}
This is bounded by the right hand side of \eqref{eq:phiChdiff},
using \eqref{eq:MC5}
and then using \eqref{eq:thmkerr3}
and again $\RFP_{0}\lesssim_{b}\RFP$,
and by an admissible largeness assumption on $\CFP$.
\qed
\end{proof}

\renewcommand{\aa}{\tc{applegreen}{a}}

\begin{proof}[of Theorem \ref{thm:mainL-infinity}]
Instead of specifying $\eps_0$, $\Cmain$ up front, 
we will make finitely many admissible smallness assumptions
on $\eps_0$ and largeness assumptions on $\Cmain$, 
where admissible means that they only depend on 
$N$, $\delta$, $b$, $\rfix$.
Recall the monotonicity of the norms \eqref{eq:monotone},
we will use this without further reference.

\begin{table}
\centering
\footnotesize{
\begin{tabular}{c|c|ccc}
	Parameters 
	&
	\multicolumn{3}{c}{Parameters used to invoke Lemma \ref{lem:phi}}
	\\
	in Lemma \ref{lem:phi}
	&
	\textit{Existence}
	&
	\multicolumn{3}{c}{Part 2}
	\\
	&
	\textit{and Part 1}
	&
	\textit{regularity}
	&
	\textit{decay, base}
	&
	\textit{decay, step}
	\\
	\hline
	$\ss$, $\delta$
	&
	$N$, $\delta$
	&
	$N'$, $\delta$
	&
	$N'$, $\slashed\delta$
	&
	$N'$, $\delta_0$
	\\
	$b$, $\rfix$
	&
	$b$, $\rfix$
	&
	$\max\{b',b'_\ell\}$, $\rfix$
	&
	$\max\{b',b''\}$, $\rfix$
	&
	$\max\{b',b'''\}$, $\rfix$
	\\
	$\ulinsol$, $\kerrdata$, $\rr$, $z_*$
	&$\ulinsol$, $\kerrdata$, $\eps$, $z_*$ 
	&$\ulinsol$, $\kerrdata$, $\eps$, $z_*$ 
	&$\ulinsol$, $\kerrdata$, $\eps$, $z_*$ 
	&$\ulinsol$, $\kerrdata$, $\eps$, $z_*$ 
\end{tabular}
}
\captionsetup{width=115mm}
\caption{
The first column lists the input parameters of Lemma \ref{lem:phi}. 
The second column specifies the choice of input parameters used to invoke Lemma \ref{lem:phi} in the proof of existence and Part 1 of Theorem \ref{thm:mainL-infinity}. Analogously for the other columns.}
\label{tab:LemmaParameters}
\end{table}
%

\textit{Proof of existence and Part 1.}
Note that \ref{item:thmz*} implies
\begin{equation}\label{eq:z*eps}
|z_*| \le \eps\le\eps_0
\end{equation}

We use Lemma \ref{lem:phi} with parameters 
in the 'Existence and Part 1' column of Table \ref{tab:LemmaParameters};
clearly the assumptions of the lemma are satisfied.
Let $\CFPapply\ge1$ be the constant produced by Lemma \ref{lem:phi},
which depends only on $N,\delta,b,\rfix$,
in particular $\eps_0$ and $\Cmain$ are allowed to depend on $\CFPapply$.
Let $\FP_{\cx}$, $\FP_{\Charge}$, $\FPspace^{N,\delta}$ be as in the lemma.

Set $C_1 = \CFPapply\big(1+(b+1)(b+2)\big)$ and 
$C_2 = \CFPapply(b^2+C_1 +1) (b+1) $ and define
\[ 
\tildeFPspace
=
\big\{(\cc,\tilde{z}) \in \Hb^{N,\delta+1}(\diamonddata,\cx^1)\times\R^{10}
\mid
\|\cc\|_{\Hb^{N,\delta+1}}\le C_1 |z_*|,\  
	|\tilde{z}|\le C_2 |z_*|^{\frac32}\big\}
\]
On this space define the following metric, 
where $\aa = (4 \CFPapply b |z_*|^{\frac12})^{-1}$:
\begin{equation}\label{eq:dmetric}
d\big((\cc,\tilde{z}),(\cc',\tilde{z}')\big)
=
\|\cc-\cc'\|_{\Hb^{N,\delta+1}} + \aa |\tilde{z}-\tilde{z}'|
\end{equation}
We make admissible smallness assumptions on $\eps_0$:
$C_1\eps_0 \le \min\{(\Csob{4,0})^{-1}\tfrac13\rhoB,b\}$, 
$C_2 \eps_0^{1/2} \le \conenb{2^{-8}}$
with $\smash{\Csob{4,0}}\ge1$ the constant in \eqref{eq:sobolev}.
Then 
$
\tildeFPspace
\subset\FPspace^{N,\delta}$
by \eqref{eq:z*eps}, \eqref{eq:monotone}, \eqref{eq:sobolev}. 
We show that
for all $(\cc,\tilde{z}), (\cc',\tilde{z}')\in \smash{\tildeFPspace}$:
\begin{subequations}\label{eq:phicCall}
\begin{align}
(\FP_{\cx}(\cc,\tilde{z}),\FP_{\Charge}(\cc,\tilde{z}))
&\in \smash{\tildeFPspace} \label{eq:phicCFPest}\\
d\big(
\big(\FP_{\cx}(\cc,\tilde{z}),\FP_{\Charge}(\cc,\tilde{z})\big),
\big(\FP_{\cx}(\cc',\tilde{z}'),\FP_{\Charge}(\cc',\tilde{z}')\big)
\big)
&\le
\tfrac12d\big((\cc,\tilde{z}),(\cc',\tilde{z}')\big)
\label{eq:phicCdiff}
\end{align}
\end{subequations}
Proof of \eqref{eq:phicCFPest}:
By 
\eqref{eq:phicest}, \eqref{eq:phicestcontraction}, 
\ref{item:thmz*}, \eqref{eq:monotone} we have
\begin{align*}
\|\FP_{\cx}(\cc,\tilde{z})\|_{\Hb^{N,\delta+1}}
&\le
	|z_*|\CFPapply
	\big(1
	+
	(
	b
	+
	C_1 |z_*|^{\frac12})
	(
	b
	+
	 C_1 |z_*|^{\frac12}
	+
	|z_*|^{\frac12})\big)\\
|\FP_{\Charge}(\cc,\tilde{z})|
	&\le
	|z_*|^{\frac32}\CFPapply(b^2 +C_1+1)(b +C_1 |z_*|^{\frac12})
\intertext{
which implies \eqref{eq:phicCFPest} by  
\eqref{eq:z*eps} and by making the admissible 
smallness assumption $C_1\eps_0^{1/2} \le1$ on $\eps_0$.
Proof of \eqref{eq:phicCdiff}:
By 
\eqref{eq:phiChest}, \eqref{eq:phiChdiff}, \ref{item:thmz*}, \eqref{eq:monotone}
we have}
\|\FP_{\cx}(\cc,\tilde{z})-\FP_{\cx}(\cc',\tilde{z}')\|_{\Hb^{N,\delta+1}}
	&\le
	\CFPapply
	\big(
	\RFP\|\cc-\cc'\|_{\Hb^{N,\delta}}
	+
	\aa^{-1}\aa|\tilde{z}-\tilde{z}'|
	\big)\\
\aa|\FP_{\Charge}(\cc,\tilde{z})-\FP_{\Charge}(\cc',\tilde{z}')|
	&\le
	\CFPapply
	\big(
	\aa \RFP \|\cc-\cc'\|_{\Hb^{N,\delta}}
	+
	|z_*|^{\frac12}(b+C_1|z_*|^{\frac12}) \aa |\tilde{z}-\tilde{z}'|
	\big)
\end{align*}
with $\RFP$ from \eqref{eq:phiChdiff},
which satisfies
$\RFP\le |z_*|^{\frac12}
	\max\{b,C_1|z_*|^{\frac12},|z_*|^{\frac12}\}
	\le |z_*|^{\frac12}b$.
Thus $\CFPapply\aa\RFP \le \smash{\frac14}$.
Further 
$\CFPapply\max\{\RFP,\aa^{-1},|z_*|^{\frac12}(b+C_1|z_*|^{\frac12})\}
\le \smash{\frac14}$ using $b,\CFPapply\ge1$ and 
the admissible smallness assumption $4\CFPapply^2 b\eps_0^{1/2}\le\smash{\frac14}$.
So \eqref{eq:phicCdiff} holds, using \eqref{eq:monotone}.

For $\ell\in\Z_{\ge0}$ define $(\cc_\ell,\tilde{z}_\ell)$ by 
\begin{align}\label{eq:vlzldef}
(\cc_0,\tilde{z}_0) = (0,0) \in \smash{\tildeFPspace}
&&
(\cc_{\ell+1},\tilde{z}_{\ell+1})
=
\big(\FP_{\cx}(\cc_{\ell},\tilde{z}_{\ell}),
	\FP_{\Charge}(\cc_{\ell},\tilde{z}_{\ell})\big)
	\in \smash{\tildeFPspace}
\end{align}
using \eqref{eq:phicCFPest}.
By \eqref{eq:phicCdiff} the sequence 
$(\cc_\ell,\tilde{z}_\ell)$ is Cauchy with respect to the metric
\eqref{eq:dmetric}. Let
$(\cc,\tilde{z}) \in \smash{\tildeFPspace}$ be the limit. 
This satisfies the fixed point equation
\begin{equation} \label{eq:vzFP}
(\cc,\tilde{z}) = \big(\FP_{\cx}(\cc,\tilde{z}),\FP_{\Charge}(\cc,\tilde{z})\big)
\end{equation}

Set 
\[ 
\cdata = \idata \cc
\qquad
z=z_*+\tilde{z}
\]
We check that they satisfy the properties in the theorem.
The map $\idata$ is given,
relative to the bases in Lemma \ref{lem:cbasis} and \ref{lem:basisgdata},
by the identity matrix.
Thus the norms of $\cdata$ and $\cc$ are equal.
We have
$\cdata\in \Hb^{N,\delta+1}(\diamonddata,\gxdata^1)$,
$z\in \Ccone_{\eps,\conenb{2^{-6}}}$,
$\ulinsol + \kerrdata(z) + \cdata \in \Wconv^{N}$
as claimed, by 
$(\cc,\tilde{z})\in\tildeFPspace\subset\FPspace^{N,\delta}$
and by the first item of Lemma \ref{lem:phi}.

\eqref{eq:thmvH}, \eqref{eq:thmz}:
The estimates on the left follow from 
$(\cc,\tilde{z})\in\tildeFPspace$ and
the admissible largeness assumption $\Cmain\ge \max\{C_1,C_2\}$.
Note that 
\begin{align}\label{eq:z*est}
\smash{|z_*| 
=   |\Charge(-\tfrac12\Bdata_2(\ulinsol,\ulinsol))|
\lesssim_{\delta} \|\Bdata_2(\ulinsol,\ulinsol)\|_{\Hb^{1,\delta+1}}
\lesssim_{\delta} 
\|\ulinsol\|_{\Cb^{2,1}}
\|\ulinsol\|_{\Hb^{2,\delta}}}
\end{align}
by \ref{item:thmz*}, by Lemma \ref{lem:ChargeEst} and $\delta+1>2$, and by \eqref{eq:B2est}.
Thus the estimates on the right of \eqref{eq:thmvkz} follow from
\eqref{eq:thmvH} (left estimate), from
$\smash{\|\kerrdata(z)\|_{\Cb^{N,1}}} \le b |z_*+\tilde{z}| \le 2b|z_*|$
by \eqref{eq:thmkerr2}, 
and from an admissible largeness assumption on $\Cmain$.

Note that $\|\ulinsol\|_{\Hb^{2,\delta}}\le b\smash{\eps_0^{1/2}}$ 
by \ref{item:thmz*}, \eqref{eq:z*eps}.
Thus $\smash{\|\cdata\|_{\Cb^{N-2,1}}}
+ \smash{\|\kerrdata(z)\|_{\Cb^{N-2,1}}} 
\le
\smash{\tfrac12} \|\ulinsol\|_{\Cb^{2,1}}$
holds by \eqref{eq:sobolev}, \eqref{eq:thmvkz}
and an admissible smallness assumption on $\eps_0$.

\eqref{eq:MCu0Kv}:
We have $
\Charge(\MCinf(\kerrdata(z)))
+ \Charge(\frac12\Bdata_2(\ulinsol,\ulinsol)) = \tilde{z}$
by \ref{item:thmz*} and \eqref{eq:thmkerr4}. Thus%
\begin{subequations}
\begin{align}
\cc &\;=\; - \Gc\left(\MCinf\big(\ulinsol+\kerrdata(z)+\idata \cc\big) - \dc \cc \right)
	\label{eq:vfix1}\\
0 &\;=\; \phantom{+}\;\, \Charge\left(\MCinf\big(\ulinsol+\kerrdata(z)+\idata \cc\big)-\dc \cc\right)
	 \label{eq:zfix1}
\end{align}
\end{subequations}
by \eqref{eq:vzFP}.
Abbreviate $U = \ulinsol + \kerrdata(z) + \idata \cc$
and $E = \MCinf(U)$.
Our goal is to show that $E=0$.
Applying $\dc$ to \eqref{eq:vfix1}
and then using Proposition \ref{prop:Gxhomotopy} yields
\begin{align*}
\dc \cc 
	&= 
	- \dc \Gc(E - \dc \cc )
	=
	- (\one-\Pic-\Gc\dc)(E - \dc \cc )
	= 
	\dc \cc
	-E 
	+\Gc\dc E
\end{align*}
where the last step uses
$\Pic\left(E - \dc \cc \right)=0$ by \eqref{eq:zfix1}, \eqref{eq:Picexplicit} and $\dc^2=0$. 
Thus $E = \Gc \dc E = \Gc\Bdata_1(\idata E)$
where the second identity uses \eqref{eq:DPB1}, $\pdata\idata=\one$.
Note that $-\Bdata_1(\idata E)$ is the $m=1$ summand in \eqref{eq:MCid}
with $\udata=U$ (use $N\ge8$),
thus
\begin{align}
\label{eq:EFP}
E 
= 
\Gc \big(\tsum_{m\ge2} (-1)^{m} \frac{1}{(m-1)!} \Bdata_m(\idata E,U,\dots,U)\big)
\end{align}
By Proposition \ref{prop:uBestimates} with 
$Y = \{1,2\}$,
$\udata_1=E$, $\udata_{j\ge2}=U$,
$\delta_1 = \delta+1$, $\delta_{j\ge2}=0$:
\begin{align*}
\|\tfrac{1}{(m-1)!} \Bdata_m(\idata E,U,\dots,U)\|_{\Hb^{N-2,\delta+1}}
\lesssim_{N,\delta,b,\rfix}
|z_*|^{\frac12}\|E\|_{\Hb^{N-1,\delta+1}} m (\tfrac14)^m
\end{align*}
also using
$\|U\|_{\Cb^{4,0}} \le \rhoB$ and using
$\|U\|_{\Hb^{N-1,0}}
\lesssim_{N,\delta,b,\rfix} |z_*|^{\frac12} \le1$
by the triangle inequality, 
\ref{item:thmz*},
\eqref{eq:thmvH},
$\|\kerrdata(z)\|_{\Hb^{N-1,0}}
\lesssim_{N} \|\kerrdata(z)\|_{\Cb^{N-1,1}}\le 2b|z_*|$
by \eqref{eq:thmkerr2}.
Together with Proposition \ref{prop:Gxhomotopy} this yields
$
\|E\|_{\Hb^{N-1,\delta+1}} 
\lesssim_{N,\delta,b,\rfix}
|z_*|^{\frac12}\|E\|_{\Hb^{N-1,\delta+1}}
$.
By \eqref{eq:MCukv} the norm of $E$ is finite,
thus \eqref{eq:z*eps} and an admissible smallness assumption on $\eps_0$ 
yield $E=0$. 
This proves \eqref{eq:MCu0Kv}, and thus  
concludes the proof of existence and Part 1.

\textit{Proof of Part 2, for $\delta'=\delta$.}
We show inductively that $\cc_{\ell}\in \Hb^{N',\delta+1}(\diamonddata,\cx^1)$:
It holds for $\cc_0=0$.
If it holds for $\cc_{\ell}$ then
\smash{$\|\cc_{\ell}\|_{\Hb^{N',\delta+1}}\le b'_\ell$}
for some $b'_{\ell}\ge1$.
We now apply Lemma \ref{lem:phi} with the parameters in the 
'regularity column' of Table \ref{tab:LemmaParameters};
the assumptions of the lemma hold by
\ref{item:thmu0sol}, \ref{item:thmu0small},  \ref{item:ubound'}, 
\eqref{eq:thmkerr1}, \eqref{eq:thmkerrsol}, \ref{item:kerrbound'}.
We have $(\cc_{\ell},\tilde{z}_{\ell})\in\FPspace^{N',\delta}$,
thus the lemma implies 
$\cc_{\ell+1}
= \FP_{\cx}(\cc_{\ell},\tilde{z}_{\ell})
\in\Hb^{N',\delta+1}(\diamonddata,\cx^1)$.

For $\ss\in\{N,N+1,N+2,\dots,N'\}$
let $P_{\ss}$, $P_{\ss}'$ be the statements:
\begin{align*}
P_{\ss}: &&& 
\forall \ell\in\Z_{\ge0},
\forall \alpha\in\N_0^3\ \text{with}\ |\alpha|=\ss-N:\;\;\\
&&&		\|(\xvecjap\p_{\vec{x}})^{\alpha} \cc_{\ell}\|_{\Hb^{N,\delta+1}} 
		\lesssim_{N',\delta,b,b',\rfix} 
		\|\ulinsol\|_{\Hb^{\ss,1}}
		\|\ulinsol\|_{\Hb^{\ss,\delta}}
		\\
P_{\ss}': &&&
\begin{aligned}[t]
&\text{There exists a polynomial $p$ whose coefficients
only depend on}\\[-1mm]
&\text{$N',\delta,b,b',\rfix$, such that $\forall\ell\in\Z_{\ge1}$, 
$\forall \alpha\in\N_0^3$ with $|\alpha|=\ss-N$:}\\[-1mm]	
&\|(\xvecjap\p_{\vec{x}})^{\alpha} (\cc_{\ell+1}-\cc_{\ell})\|_{\Hb^{N,\delta+1}} 
		\le \tfrac12 
		\|(\xvecjap\p_{\vec{x}})^{\alpha} (\cc_{\ell}-\cc_{\ell-1})\|_{\Hb^{N,\delta}} 
		+ p(\ell) 2^{-\ell}
\end{aligned}
\end{align*}
We show by induction that all $P_{\ss}$, $P_{\ss}'$ hold.
We abbreviate 
$\lesssim_{N',\delta,b,b',\rfix}$ by $\lesssim_{*}$.
\textit{Induction base:}
We check $P_N$.
Using $(\cc_\ell,\tilde{z}_\ell)\in \tildeFPspace$
and then \eqref{eq:z*est}, \eqref{eq:monotone}, \eqref{eq:sobolev},
\begin{align}\label{eq:z*HH}
\|\cc_\ell\|_{\Hb^{N,\delta+1}}
\lesssim_{N,\delta,b,\rfix}
|z_*| 
\lesssim_{\delta}
\|\ulinsol\|_{\Hb^{N,1}}
\|\ulinsol\|_{\Hb^{N,\delta}}
\end{align}
We check $P_N'$. 
Using $(\cc_0,\tilde{z}_{0})=(0,0)$ and \eqref{eq:phicCdiff}
and $(\cc_1,\tilde{z}_1)\in\tildeFPspace$:
\begin{align}
\|\cc_{\ell+1}-\cc_{\ell}\|_{\Hb^{N,\delta+1}} 
+
\aa |\tilde{z}_{\ell+1}-\tilde{z}_{\ell}|
\le
2^{-\ell}(\|\cc_1\|_{\Hb^{N,\delta+1}}+\aa|\tilde{z}_1|)
\lesssim_{N,\delta,b,\rfix}
2^{-\ell}
\label{eq:dvNNEW}
\end{align}
also using $\aa |\tilde{z}_1| \lesssim_{N,\delta,b,\rfix} |z_*|\le1$.
\textit{Induction step:}
Let $\ss\in \{N+1,\dots,N'\}$ and assume that 
$P_{\ss'}, P_{\ss'}'$ hold for all $N\le\ss'\le\ss-1$.
For $\ell\in\Z_{\ge0}$ abbreviate
$$
\cdata_\ell = \idata \cc_\ell
\qquad
\udata_\ell = \ulinsol+\cdata_\ell
\qquad
z_\ell = z_* + \tilde{z}_\ell
\qquad
\kerrdata_\ell = \kerrdata(z_\ell)
$$
Recall that in the basis that we use, $\idata$ is the identity matrix.
We state some preliminary estimates.
Using $(\cc_\ell,\tilde{z}_\ell)\in\tildeFPspace$ 
and using \eqref{eq:z*HH} and \ref{item:thmu0small}:
\begin{align}
\|\udata_\ell\|_{\Cb^{4,0}} 
	&\le \tfrac23 \rhoB
&
\|\udata_{\ell}\|_{\Hb^{N,1}}
	&\lesssim_{N,\delta,b,\rfix}
	\|\ulinsol\|_{\Hb^{N,1}} 
	\le b 
	\label{eq:ulCbHb1}	
\intertext{Further
$z_\ell\in\Ccone_{\rr,\conenb{2^{-6}}}$
and $|z_\ell| \le 2|z_*|$, thus \eqref{eq:thmkerr1}
and \eqref{eq:thmkerr2}, \eqref{eq:z*HH}, \ref{item:thmu0small} imply}
\|\kerrdata_\ell\|_{\Cb^{4,0}} &\le \tfrac13 \rhoB
&
\|\kerrdata_\ell\|_{\Hb^{N,\frac12}}
&\lesssim_N
\|\kerrdata_\ell\|_{\Cb^{N,1}}
	\lesssim_{N,\delta,b,\rfix} 
	\|\ulinsol\|_{\Hb^{N,1}}
	\le b
	\label{eq:KCbHbN}
\\ 
&&\smash{\|\kerrdata_\ell\|_{\Hb^{\ss,\frac12}}}
	&\lesssim_{*}
	\|\kerrdata_\ell\|_{\Cb^{\ss,1}}
	\lesssim_{*}
	\|\ulinsol\|_{\Hb^{N,1}}\le b
\label{eq:kerrlh12NEW}
\end{align}
also using \eqref{eq:thmkerr12'} in the last line.
The induction hypothesis $P_{N},\dots,P_{\ss-1}$ implies 
\begin{align}
\|\cc_{\ell}\|_{\Hb^{\ss-1,\delta+1}} 
\lesssim_{*}
\|\ulinsol\|_{\Hb^{\ss-1,1}}
		\|\ulinsol\|_{\Hb^{\ss-1,\delta}}\label{eq:Ps-1X}
\end{align}
In particular, also using \ref{item:ubound'},
\begin{align}\label{eq:susimena_s1delta}
	\|\udata_{\ell}\|_{\Hb^{\ss-1,\delta}}
	&\lesssim_{*}
	\|\ulinsol\|_{\Hb^{\ss-1,\delta}} \le b'
	&
	\|\udata_{\ell}\|_{\Hb^{\ss-1,1}}
	&\lesssim_{*}
	\|\ulinsol\|_{\Hb^{\ss-1,1}} \le b'
\end{align}

We check $P_{\ss}$. 
Claim: Under an admissible smallness assumption on $\eps_0$,
\begin{align}\label{eq:GoalPs}
\|(\xvecjap\p_{\vec{x}})^{\alpha} \cc_{\ell+1}\|_{\Hb^{N,\delta+1}}
-
\tfrac12 \|(\xvecjap\p_{\vec{x}})^{\alpha} \cc_{\ell}\|_{\Hb^{N,\delta+1}}
\lesssim_*
\|\ulinsol\|_{\Hb^{\ss,1}}\|\ulinsol\|_{\Hb^{\ss,\delta}}
\end{align}
Proof of \eqref{eq:GoalPs}:
Decompose 
\begin{align*}
(\xvecjap\p_{\vec{x}})^{\alpha} \cc_{\ell+1}
&=
(\xvecjap\p_{\vec{x}})^{\alpha}\FP_{\cx}(\cc_{\ell},\tilde{z}_{\ell})
=
A+Y + Q
\end{align*}
where, 
using $\dc \cc_{\ell}=\Bdata_1(\udata_\ell)$,
\begin{align*}
A
&=
-\Gc(\xvecjap\p_{\vec{x}})^{\alpha}\big(\MC_{\ge2}(\udata_\ell+\kerrdata_\ell)
-\MC_{\ge2}(\kerrdata_\ell)\big) \\
Y
&=
-[(\xvecjap\p_{\vec{x}})^{\alpha},\Gc]
(\MC_{\ge2}(\udata_\ell+\kerrdata_\ell)
-\MC_{\ge2}(\kerrdata_\ell))\\
Q
&=
-(\xvecjap\p_{\vec{x}})^{\alpha}\Gc(\MCinf(\kerrdata_\ell) )
\end{align*}
We estimate the tree terms:
\begin{itemize}
\item 
$A$:
For $n\ge2$ define $A_n$ as in \eqref{eq:An},
but with $\udata,\kappa$ there replaced by 
$\udata_\ell,\kerrdata_\ell$ here.
Then by Proposition \ref{prop:Gxhomotopy},
$$\textstyle\|A\|_{\Hb^{N,\delta+1}} 
	\lesssim_{N,\delta} 
	\|(\xvecjap\p_{\vec{x}})^{\alpha}
	\sum_{n\ge2}A_n\|_{\Hb^{N-1,\delta+1}} $$
By Definition \ref{def:Bcommutator} and \ref{item:Bngrsym}
we have
\smash{$(\xvecjap\p_{\vec{x}})^{\alpha}A_n
= A_{n}^{(1)}+A_{n}^{(2)}+A_{n}^{(3)}$} where
\begin{align*}
A^{(1)}_n 
&:= 
\tsum_{j=1}^n {n \choose j} \tfrac{1}{n!}
j \Bdata_n((\xvecjap\p_{\vec{x}})^{\alpha}\udata_\ell, \udata_\ell^{\otimes j-1},\kerrdata_\ell^{\otimes n-j})\\
A^{(2)}_n 
&:=
\tsum_{j=1}^n {n \choose j} \tfrac{1}{n!}(n-j) 
\Bdata_n(
\udata_\ell,
(\xvecjap\p_{\vec{x}})^{\alpha}\kerrdata_\ell,
\udata_\ell^{\otimes j-1},
\kerrdata_\ell^{\otimes n-j-1})\\
A^{(3)}_n 
&:= 
\tsum_{j=1}^n {n \choose j} \tfrac{1}{n!}
\CBdata_{n,\alpha}(\udata_\ell^{\otimes j},\kerrdata_\ell^{\otimes n-j})
\end{align*}
using the notation \eqref{eq:tensorprnot}.
We estimate \smash{$A^{(1)}_n$}.
Using \eqref{eq:ulCbHb1}, \eqref{eq:KCbHbN}
together with \eqref{eq:uBestimates}, \eqref{eq:B2est}
for $n=2$ respectively Corollary \ref{cor:bn3} (first item)
for $n\ge3$,
\begin{align*}
\|\smash{A_{n}^{(1)}}\|_{\Hb^{N-1,\delta+1}}
&\lesssim_{N,\delta,b,\rfix}
\smash{\|\ulinsol\|_{\Hb^{N,1}}
\|(\xvecjap\p_{\vec{x}})^{\alpha} \udata_\ell\|_{\Hb^{N,\delta}} 
n(\tfrac12)^n }
\end{align*}
We use $\udata_\ell=\ulinsol+\cdata_\ell$ and the triangle inequality.
Then \smash{$\|\ulinsol\|_{\Hb^{N,1}}\lesssim_{b}\eps_0^{1/2}$}
(by \ref{item:thmz*}, \eqref{eq:z*eps})
and an admissible smallness assumption on $\eps_0$ yield
\begin{equation*}
\smash{\textstyle
\sum_{n\ge2} \|A_{n}^{(1)}\|_{\Hb^{N-1,\delta+1}}}
-\tfrac12 \|(\xvecjap\p_{\vec{x}})^{\alpha} \cc_\ell\|_{\Hb^{N,\delta}}
\le 
\|\ulinsol\|_{\Hb^{N,1}}\|\ulinsol\|_{\Hb^{\ss,\delta}}
\end{equation*}
We estimate \smash{$A^{(2)}_n$}:
Again by \eqref{eq:B2est} and Corollary \ref{cor:bn3} (first item),
\begin{align*}
\|\smash{A_{n}^{(2)}}\|_{\Hb^{N-1,\delta+1}}
	&\lesssim_{N,\delta,b,\rfix}
	\smash{\|\udata_\ell\|_{\Hb^{N,\delta}}
	\|(\xvecjap\p_{\vec{x}})^{\alpha}\kerrdata_\ell\|_{\Cb^{N,1}}
	n(\tfrac12)^n}\\
	&\lesssim_{*}
	\smash{
	\|\ulinsol\|_{\Hb^{N,\delta}}
	\|\ulinsol\|_{\Hb^{N,1}}
		n(\tfrac12)^n}
\end{align*}
using \eqref{eq:z*HH}, \ref{item:thmu0small} to bound the
$\udata_\ell$-term and $N+|\alpha|=\ss$, \eqref{eq:kerrlh12NEW}
for the $\kerrdata_\ell$-term.
Thus 
\smash{$\sum_{n\ge2}\|\smash{A_{n}^{(2)}}\|_{\Hb^{N-1,\delta+1}}$}
is bounded by the right side of \eqref{eq:GoalPs}.

We estimate \smash{$A^{(3)}_n$}:
For $n=2$ use \eqref{eq:uCBestimates} with $n=2$ and \eqref{eq:CB12est}.
For $n\ge3$ use Lemma \ref{lem:uCBestimates}
with $Y=\{1,2,3\}$,
$\delta_1=\delta,\delta_2=\frac12,\delta_3=\frac12,\delta_{j\ge4}=0$.
With the $\Cb^{4,0}$-bounds 
in \eqref{eq:ulCbHb1}, \eqref{eq:KCbHbN} and
$\rhoB\le 1/(4\CBcom)$, \eqref{eq:kerrlh12NEW},
\eqref{eq:susimena_s1delta},
this yields
\begin{align*}
\smash{\|A^{(3)}_2\|_{\Hb^{N-1,\delta+1}}}
&\lesssim_{*}
(\|\kerrdata_\ell\|_{\Cb^{\ss-1,1}}+\|\udata_\ell\|_{\Hb^{\ss-1,1}})
\|\udata_\ell\|_{\Hb^{\ss-1,\delta}}\\
\smash{\|A^{(3)}_{n\ge3}\|_{\Hb^{N-1,\delta+1}}}
&\lesssim_{*}
\smash{\|\udata_\ell\|_{\Hb^{\ss-1,\delta}} 
\|\ulinsol\|_{\Hb^{\ss-1,1}}^2 (\tfrac12)^n}
\end{align*}
With 
\eqref{eq:kerrlh12NEW},
\eqref{eq:susimena_s1delta}
this implies that 
\smash{$\sum_{n\ge2}\|\smash{A_{n}^{(3)}}\|_{\Hb^{N-1,\delta+1}}$}
is bounded by the right hand side of \eqref{eq:GoalPs}.
Collecting terms, we have shown that 
\smash{$\|A\|_{\Hb^{N,\delta+1}}
-
\tfrac12 \|(\xvecjap\p_{\vec{x}})^{\alpha} \cc_\ell\|_{\Hb^{N,\delta}}
$} is bounded by the right hand side of \eqref{eq:GoalPs}.
\item 
$Y$:
By Proposition \ref{prop:Gxhomotopy}, 
\begin{align}
\|Y\|_{\Hb^{N,\delta+1}}
&\lesssim_{*}
\|\MC_{\ge2}(\udata_\ell+\kerrdata_\ell)
- \MC_{\ge2}(\kerrdata_\ell)\|_{\Hb^{\ss-2,\delta+1}}\nonumber\\
&\lesssim_{*}
\smash{\|\udata_\ell\|_{\Hb^{\ss-1,\delta}}
(\|\udata_\ell\|_{\Hb^{\ss-1,1}} + \|\kerrdata_\ell\|_{\Cb^{\ss-1,1}})}
\label{eq:y}
\end{align}
where the second estimate holds by 
\eqref{eq:MC1} in Lemma \ref{lem:MCinf}
with $\ss$ there given by $\ss-1$ here;
the assumptions \eqref{eq:ukappaassp} hold
(for a suitable choice of $b$ there),
using 
the $\Cb^{4,0}$ estimates in \eqref{eq:ulCbHb1}, \eqref{eq:KCbHbN}
and 
\eqref{eq:kerrlh12NEW}, 
\eqref{eq:susimena_s1delta}, \eqref{eq:thmkerrsol}.
By 
\eqref{eq:kerrlh12NEW},
\eqref{eq:susimena_s1delta}
the term \eqref{eq:y} is bounded by the right side of \eqref{eq:GoalPs}.
\item 
$Q$:
Using Proposition \ref{prop:Gxhomotopy}, 
\eqref{eq:MC3} and \eqref{eq:thmkerr12'}
one obtains 
$\smash{\|Q\|_{\Hb^{N,\delta+1}}} \lesssim_{*}|z_*|$,
which is bounded by the right hand side of \eqref{eq:GoalPs}
by \eqref{eq:z*HH}.
\end{itemize}
This proves \eqref{eq:GoalPs}, which implies 
$P_{\ss}$ using $\cc_0=0$.

We check $P_{\ss}'$.
By 
\eqref{eq:kerrlh12NEW},
\eqref{eq:susimena_s1delta},
$P_{\ss}$ and \ref{item:ubound'}:
\begin{equation}\label{eq:b2}
\|\kerrdata_\ell\|_{\Cb^{\ss,1}},\ 
\|\udata_\ell\|_{\Hb^{\ss,\delta}} \lesssim_{*} 1
\end{equation}
For $\ell\ge1$ set $\Delta_\ell = \cdata_{\ell}-\cdata_{\ell-1}$.
Decompose
$(\xvecjap\p_{\vec{x}})^{\alpha}\Delta_{\ell+1} = Z+Z'+Q+Y$
with
\begin{align*}
Z&:=
-\Gc
(\xvecjap\p_{\vec{x}})^{\alpha}
\big(
F_{\ell,\ell}
-
F_{\ell-1,\ell}
\big)\\
Z'&:=
-[(\xvecjap\p_{\vec{x}})^{\alpha}, \Gc]
\big(
F_{\ell,\ell}
-
F_{\ell-1,\ell}
\big)\\
Q&:=
-(\xvecjap\p_{\vec{x}})^{\alpha}
\Gc\big(\big(
F_{\ell-1,\ell} 
- \MC_{\ge2}(\kerrdata_\ell)
\big)
-
\big(
F_{\ell-1,\ell-1}
 - \MC_{\ge2}(\kerrdata_{\ell-1})
\big)\big)\\
Y&:=
-(\xvecjap\p_{\vec{x}})^{\alpha}
\Gc\big(\MCinf(\kerrdata_\ell)-\MCinf(\kerrdata_{\ell-1})\big)
\end{align*}
where we abbreviate 
$F_{\ell,\ell'} = \MC_{\ge2}(\udata_\ell+\kerrdata_{\ell'})$.
We estimate the four terms:
\begin{itemize}
\item 
$Z$: 
For $n\ge2$ let $Z_{n}^{(1)}$ be the linear operator \eqref{eq:Z1n}
with $\udata,\udata',\kappa$ there replaced by 
$\udata_\ell,\udata_{\ell-1},\kerrdata_\ell$ here.
Then, using Proposition \ref{prop:Gxhomotopy}, 
\begin{align*}
\textstyle
\|Z\|_{\Hb^{N,\delta+1}}
\lesssim_{N,\delta}
\|(\xvecjap\p_{\vec{x}})^{\alpha}
\sum_{n\ge2} Z_n^{(1)}(\Delta_{\ell})
\|_{\Hb^{N-1,\delta+1}}
\end{align*}
By Definition \ref{def:Bcommutator}, \ref{item:Bngrsym}:
$(\xvecjap\p_{\vec{x}})^{\alpha} Z_n^{(1)}(\Delta_{\ell})
=Z_n^{(1)}((\xvecjap\p_{\vec{x}})^{\alpha}\Delta_{\ell})
+ \tsum_{i=2}^5 Z_{n}^{(i)}$,
\begin{align*}
	Z_n^{(2)}
	&:=
	\textstyle
	\sum_{i=1}^n \sum_{a=0}^{i-1} {n\choose i} 
	\tfrac{a}{n!}	\Bdata_n\big(
		\Delta_\ell,
		(\xvecjap\p_{\vec{x}})^{\alpha}\udata_\ell,
		\udata_\ell^{\otimes a-1},
		\udata_{\ell-1}^{\otimes i-1-a},
		\kerrdata_\ell^{\otimes n-i}\big)\\
	Z_n^{(3)}
	&:=
	\textstyle
	\sum_{i=1}^n \sum_{a=0}^{i-1} {n\choose i} 				
	\tfrac{i-1-a}{n!}\Bdata_n\big(
		\Delta_\ell,
		(\xvecjap\p_{\vec{x}})^{\alpha}\udata_{\ell-1},
		\udata_\ell^{\otimes a},
		\udata_{\ell-1}^{\otimes i-2-a},
		\kerrdata_\ell^{\otimes n-i}\big)\\
	Z_n^{(4)}
	&:=
	\textstyle
	\sum_{i=1}^n \sum_{a=0}^{i-1} {n\choose i} 				
	\tfrac{n-i}{n!}\Bdata_n\big(
		\Delta_\ell,
		(\xvecjap\p_{\vec{x}})^{\alpha}\kerrdata_\ell,
		\udata_\ell^{\otimes a},
		\udata_{\ell-1}^{\otimes i-1-a},
		\kerrdata_\ell^{\otimes n-i-1}\big)\\
	Z_n^{(5)}
	&:=
	\textstyle
	\sum_{i=1}^n \sum_{a=0}^{i-1} {n\choose i} 
	\tfrac{1}{n!}\CBdata_{n,\alpha}
	\big(
	\Delta_{\ell},
		\udata_\ell^{\otimes a},
		\udata_{\ell-1}^{\otimes i-1-a},
		\kerrdata_\ell^{\otimes n-i}\big)
\end{align*}
By \eqref{eq:Z12n} and \eqref{eq:ulCbHb1}, \eqref{eq:KCbHbN}:
\[ 
\textstyle\sum_{n\ge2}
\|Z_n^{(1)}((\xvecjap\p_{\vec{x}})^{\alpha}\Delta_{\ell})\|_{\Hb^{N-1,\delta+1}}
\lesssim_{N,\delta,b}
\|\ulinsol\|_{\Hb^{N,1}}
\|(\xvecjap\p_{\vec{x}})^{\alpha}\Delta_\ell\|_{\Hb^{N,\delta}}
\]
Using \smash{$\|\ulinsol\|_{\Hb^{N,1}}\lesssim_b\eps_0^{1/2}$},
an admissible smallness assumption on $\eps_0$ yields
\[ 
\textstyle
\sum_{n\ge2}
\|Z_n^{(1)}((\xvecjap\p_{\vec{x}})^{\alpha}\Delta_{\ell})\|_{\Hb^{N-1,\delta+1}}
\le
\frac12
\|(\xvecjap\p_{\vec{x}})^{\alpha}\Delta_\ell\|_{\Hb^{N,\delta}}
\]
Using \eqref{eq:uBestimates} with $n=2$, \eqref{eq:B2est}
and Corollary \ref{cor:bn3} (first item), one obtains
\begin{align*}
\tsum_{n\ge2}\smash{\|Z_{n}^{(2)}\|_{\Hb^{N-1,\delta+1}}}
	&\lesssim_{*}
	\|\Delta_\ell\|_{\Hb^{N,\delta}}
	\|u_\ell\|_{\Hb^{\ss,1}}\\
\tsum_{n\ge2}\|Z_{n}^{(3)}\|_{\Hb^{N-1,\delta+1}}
	&\lesssim_{*}
	\|\Delta_\ell\|_{\Hb^{N,\delta}}
	\|u_{\ell-1}\|_{\Hb^{\ss,1}}\\
\tsum_{n\ge2}\|Z_{n}^{(4)}\|_{\Hb^{N-1,\delta+1}}
	&\lesssim_{*}
	\|\Delta_\ell\|_{\Hb^{N,\delta}}
	\|\kerrdata_{\ell}\|_{\Cb^{\ss,1}}
\end{align*}
By \eqref{eq:dvNNEW} and \eqref{eq:b2},
each of these terms is bounded by 
\smash{$\lesssim_{*}2^{-\ell}$}.
By Lemma \ref{lem:uCBestimates}
and using the $\Cb^{4,0}$-bounds in \eqref{eq:ulCbHb1}, \eqref{eq:KCbHbN} 
and $\rhoB\le 1/(4\CBcom)$
and \eqref{eq:b2},
\begin{align*}
\tsum_{n\ge2}\|Z_{n}^{(5)}\|_{\Hb^{N-1,\delta+1}}
	&\lesssim_{*}
	\|\Delta_\ell\|_{\Hb^{\ss-1,\delta}}
\end{align*}
By the induction hypothesis $P_{N}',\dots,P_{\ss-1}'$, 
\begin{align}\label{eq:EstQ5_2}
\|\Delta_\ell\|_{\Hb^{\ss-1,\delta}}
\le 2^{-\ell} \big(\|\Delta_1\|_{\Hb^{\ss-1,\delta}} + p_0(\ell)\big)
\lesssim_{*}
2^{-\ell} \big(1 + p_0(\ell)\big)
\end{align}
with $p_0$ a polynomial whose coefficients
only depend on $N',\delta,b,b',\rfix$.
The last estimate uses $\Delta_1=\cdata_1$
and \eqref{eq:Ps-1X}, \ref{item:ubound'}.
Collecting terms, we obtain
\begin{align*}
\|Z\|_{\Hb^{N,\delta+1}}
\le 
\tfrac12\|(\xvecjap\p_{\vec{x}})^{\alpha}\Delta_\ell\|_{\Hb^{N,\delta}}
+2^{-\ell} p_1(\ell)
\end{align*} 
with $p_1$ a polynomial whose coefficients
only depend on $N',\delta,b,b',\rfix$.
\item 
$Z'$:
Using Proposition \ref{prop:Gxhomotopy},
\begin{align*}
\|Z'\|_{\Hb^{N,\delta+1}}
&\lesssim_{*}
\|\MC_{\ge2}(\udata_\ell + \kerrdata_\ell) -
\MC_{\ge2}(\udata_{\ell-1} + \kerrdata_{\ell})
\|_{\Hb^{\ss-2,\delta+1}}
\end{align*}
Now use \eqref{eq:MC5} in Lemma \ref{lem:MCinf}
with $\kappa=\kappa'$ and $\ss$ there given by $\ss-1$ here;
the assumptions \eqref{eq:ukappaassp} hold
(for a suitable choice of $b$ there)
using the $\Cb^{4,0}$-bounds in \eqref{eq:ulCbHb1}, \eqref{eq:KCbHbN}
and using \eqref{eq:thmkerrsol}, \eqref{eq:b2}.
This yields 
\begin{align*}
\|Z'\|_{\Hb^{N,\delta+1}}
&\lesssim_{*}
\|\Delta_{\ell}\|_{\Hb^{\ss-1,\delta}}
\lesssim_{*}
2^{-\ell} \big(1 + p_0(\ell)\big)
\end{align*}
where we use \eqref{eq:b2} to bound $\RFP_{0}$ in \eqref{eq:MC5}, 
and use \eqref{eq:EstQ5_2} in the last step.
\item 
$Q$, $Y$:
Using Proposition \ref{prop:Gxhomotopy}
and then \eqref{eq:MC5}, \eqref{eq:MC4}
in Lemma \ref{lem:MCinf} with $\ss$ there given by $\ss$ here,
one obtains
\begin{align*}
\|Q\|_{\Hb^{N,\delta+1}}+\|Y\|_{\Hb^{N,\delta+1}}
&\lesssim_{*}
(\|\udata_{\ell-1}\|_{\Hb^{\ss,\delta}}+1)
\|\kerrdata_{\ell}-\kerrdata_{\ell-1}\|_{\Cb^{\ss,1}}
\\
&\lesssim_{*}
|\tilde{z}_\ell-\tilde{z}_{\ell-1}|
\lesssim_{*}2^{-\ell}
\end{align*}
where for the second estimate we use 
\eqref{eq:b2} and \eqref{eq:thmkerr12'}, 
and for the last estimate we use \eqref{eq:dvNNEW}
and $\aa^{-1} = 4 \CFPapply b |z_*|^{\frac12} \lesssim_{N,\delta,b,\rfix} 1$.
\end{itemize}
Collecting terms we obtain $P_{\ss}'$,
which completes the induction step.

We conclude Part 2 with $\delta'=\delta$.
The statements $P_{N}',\dots,P_{N'}'$ imply
\[ 
\|\cc_{\ell+1}-\cc_{\ell}\|_{\Hb^{N',\delta+1}}
\le 
2^{-\ell}\big(\|\cc_1\|_{\Hb^{N',\delta}}+p_2(\ell)\big)
\]
with $p_2$ a polynomial whose coefficients only depend on $N',\delta,b,b',\rfix$.
Thus the sequence $\cc_{\ell}$ is Cauchy 
and converges in $\Hb^{N',\delta+1}$,
by uniqueness of the limit (in $\Hb^{N,\delta+1}$)
it is equal to $\cc$.
Then $P_N,\dots,P_{N'}$ imply the bound in the theorem.

%

\textit{Proof of Part 2 for general $\delta'$.}
Fix $\slashed{\delta}\in[\delta,\delta+1)$ 
such that $\delta'-\slashed\delta\in\Z_{\ge0}$.
For $\delta_0\in\{\slashed\delta,\slashed\delta+1,\dots,\delta'\}$
let $P_{\delta_0}$ be the statement
\begin{align*}
P_{\delta_0}:
\qquad
\|\cc\|_{\Hb^{N',\delta_0+1}}
	\lesssim_{N',\delta',b,b',\rfix} 
	\|\ulinsol\|_{\Hb^{N',1}}
	\|\ulinsol\|_{\Hb^{N',\delta_0}}
\end{align*}
We show by induction that all $P_{\delta_0}$ hold, 
$P_{\delta'}$ implies the statement in the theorem.

\textit{Induction base:}
By Part 2 with $\delta'=\delta$,
and by $\delta\le \slashed\delta\le\delta+1$ and \eqref{eq:monotone},
\begin{align} \label{eq:cn'sldelta}
\|\cc\|_{\Hb^{N',\slashed\delta}} 
\lesssim_{N',\delta,b,b',\rfix}
\|\ulinsol\|_{\Hb^{N',1}} 
\|\ulinsol\|_{\Hb^{N',\slashed\delta}} 
\end{align}
We use Lemma \ref{lem:phi} with the parameters 
specified in the 'decay, base' column of 
Table \ref{tab:LemmaParameters},
where $b''\ge1$ only depends on $N',\delta,b,b',\rfix$
and is chosen such that $\smash{\|\cc\|_{\Hb^{N',\slashed\delta}}}\le b''$;
this exists by \eqref{eq:cn'sldelta}, \ref{item:ubound'}.
In particular $(\cc,\tilde{z}) \in \smash{\FPspace^{N',\slashed\delta}}$.
Further the assumption \ref{item:thmu0small} of the lemma
is satisfied by \ref{item:ubound'} and $\slashed\delta\le\delta'$.
Now $P_{\slashed\delta}$ follows from \eqref{eq:vzFP} and \eqref{eq:phicest},
where we use \eqref{eq:z*HH} to bound $|z_*|$,
and  \eqref{eq:cn'sldelta}, \ref{item:ubound'} to bound
the $\cc$ terms on the right hand side of \eqref{eq:phicest}.

\textit{Induction step.}
Let $\delta_0\in\{\slashed\delta+1,\slashed\delta+2,\dots,\delta'\}$
and assume that $P_{\delta_0-1}$ holds.
To show $P_{\delta_0}$
we use Lemma \ref{lem:phi} with parameters 
specified in the 'decay, step' column of Table \ref{tab:LemmaParameters},
where $b'''\ge1$ only depends on $N',\delta,b,b',\rfix$
and is chosen such that $\smash{\|\cc\|_{\Hb^{N',\delta_0}}} \le b'''$;
this exists by $P_{\delta_0-1}$, \ref{item:ubound'}.
In particular $(\cc,\tilde{z})\in\FPspace^{N',\delta_0}$.
The assumption \ref{item:thmu0small} of the lemma
is satisfied by \ref{item:ubound'} and $\delta_0\le\delta'$.
Now $P_{\delta_0}$ follows from \eqref{eq:vzFP},
\eqref{eq:phicest}, $P_{\delta_0-1}$,
\eqref{eq:z*HH} and \ref{item:ubound'}.

\textit{Proof of Part 3.}
This follows from \ref{item:support} and \eqref{eq:thmkerrsol},
and using $\rfix\ge1$.
\qed
\end{proof}
\subsection{Proof of Theorem \ref{thm:P=0} and of Corollary \ref{cor:ApplicationMinkowskiPaper}}
	\label{sec:ProofThm1}

\begin{proof}[of Theorem \ref{thm:P=0}]
We use Theorem \ref{thm:mainL-infinity} with 
the parameters in Table \ref{tab:TheoremParameters}.
Let $\Capply{\Cmain},\epsapply{\eps}_0$ be the constants produced
by that theorem (called $\Cmain,\eps_0$ there).
They depend only on \eqref{eq:P=0_parameters}, $\KSconst_{N}$, in particular
$\eps,C$ are allowed to depend on $\Capply{\Cmain},\epsapply{\eps}_0$.
Set 
\begin{align*}
C = \Capply{\Cmain}
\qquad
\eps 
=\min\{
M,\ \epsapply{\eps}_0,\ b,\ (\Csob{4,0})^{-1}\tfrac{\rhoB}{3},\ \tfrac{\rhoB}{6\KSconst_4}
\}
\end{align*}
We check that the assumptions of Theorem \ref{thm:mainL-infinity} hold:
As required $\eps\in(0,\epsapply{\eps}_0]$ and 
$\ulinsol,\kerrdata$ are as in \eqref{eq:u0Kmain}.
\ref{item:thmu0sol}:
By \ref{item:MAINDPu0=0_MAINu0small} and \eqref{eq:DPB1}.
\ref{item:thmu0small}:
By \ref{item:MAINDPu0=0_MAINu0small},
\eqref{eq:sobolev} and $N\ge6$, and the choice of $\eps$.
\ref{item:thmz*}:
By Remark \ref{rem:B2charge}, 
the definition of $z_*$ in Theorem \ref{thm:P=0}
and in Theorem \ref{thm:mainL-infinity} agree,
thus \ref{item:thmz*} follows from \ref{item:MAINu0cone}.
\ref{item:thmkerr}: 
Let $z_1\in\Ccone_{\eps,\conenb{2^{-6}}}$.
\eqref{eq:thmkerr1}:
By \eqref{eq:KSest12} with $\ss=4$, \eqref{eq:monotone}, $|z_1|\le2\eps$ 
and the choice of $\eps$.
\eqref{eq:thmkerr2}, \eqref{eq:thmkerr3}: 
By \eqref{eq:KSest12} with $\ss=N$.
\eqref{eq:thmkerrsol}: 
By \eqref{eq:thmkerr1}, \eqref{eq:thmkerr2} and $N\ge8$ we have
$\KSdata(z_1)\in\Wconv^{8}$, thus \eqref{eq:thmkerrsol}
follows from \ref{item:IntroKSConstraints}, \eqref{eq:equivMCP}.
\eqref{eq:thmkerr4}: 
By \ref{item:IntroKSCharges}. 
Thus the assumptions hold.

\begin{table}
\centering
\footnotesize{
\begin{tabular}{cc|c}
&
	\begin{tabular}{@{}c@{}}
	Parameters \\ in Theorem \ref{thm:mainL-infinity}
	\end{tabular}
	&
	\begin{tabular}{@{}c@{}}
	Parameters \\ used to invoke Theorem \ref{thm:mainL-infinity}
	\end{tabular}
	\\
\hline
Input
&$N$, $\delta$, $b$, $\rfix$
	& $N$, $\delta$, $\max\{b,\KSconst_N\}$, $101$ \\
&$\eps$	 
	& $\eps$
	\\
&$\ulinsol$, $\kerrdata$
	& $\ulinsol$, $\KSdata$ restricted to $\Ccone_{\eps,\conenb{2^{-6}}}$\\
	&$N'$, $\delta'$, $b'$ (Part 2 only)
	& $N'$, $\delta'$, $\max\{b',\KSconst_{N'}\}$\\
\hline
Output
	&$C$, $\eps_0$, $\cdata$, $z$
	& $\Capply{\Cmain}$, $\epsapply{\eps}_0$, $\cdata$, $z$
\end{tabular}}
\captionsetup{width=115mm}
\caption{%
The first column lists the input and output 
parameters of Theorem \ref{thm:mainL-infinity}. 
The second column specifies the choice 
of the input parameters when invoking
Theorem \ref{thm:mainL-infinity}
in the proof of Theorem \ref{thm:P=0}.
The constants
produced by this invocation of
Theorem \ref{thm:mainL-infinity}
are denoted $\Capply{\Cmain}$, $\epsapply{\eps}_0$ 
and depend only on the parameters in the first row.}
\label{tab:TheoremParameters}
\end{table}

Let $\cdata$, $z$ be as in Theorem \ref{thm:mainL-infinity}.
We check that they have the properties stated in Theorem \ref{thm:P=0}:
\eqref{eq:Pu0Kv} holds by \eqref{eq:MCu0Kv}, \eqref{eq:equivMCP} and $N\ge8$,
the estimate thereafter holds by that after \eqref{eq:MCu0Kv}.
Part 1 follows from Part 1 of Theorem \ref{thm:mainL-infinity}
and $C = \Capply{\Cmain}$.
Part 2 follows from Part 2 of Theorem \ref{thm:mainL-infinity}
using the parameters in Table \ref{tab:TheoremParameters},
where the assumption \ref{item:ubound'} holds by 
the assumption of Part 2 in Theorem \ref{thm:P=0},
and \ref{item:kerrbound'} holds by \eqref{eq:KSest12} with $\ss=N'$.
Part 3 follows from Part 3 of Theorem \ref{thm:mainL-infinity}.
\qed
\end{proof}
\begin{proof}[of Corollary \ref{cor:ApplicationMinkowskiPaper}]
We will make finitely many smallness assumptions on $\lambda_0$
depending only on $N,\gamma,\MinkThmsfix,b,\vlinsol$.
\textit{First item:}
By Remark \ref{rem:v0u0exist}.
\textit{Second item:}
(d1): by \ref{item:KSMC} and the assumption $\MinkThmsfix\le 1/101$.
(d2), (d3), (d4): 
This follows from \eqref{eq:KerrDecay} with $\ss=N+3$,
$
|z| 
\le|z_*| + C|z_*|^{\frac32} 
= \lambda^2(|z_*^v| + \lambda C|z_*^v|^{\frac32} )
$
by \eqref{eq:K(z)estimqte}, 
and a smallness assumption on $\lambda_0$.
(d5): By \eqref{eq:Pu0Kv}.
For (d6), (d7), (d8) note that
\begin{subequations}\label{eq:uuu}
\begin{align}
\|\udata\|_{\widetilde{C}_b^{N+3}(\Dspdata_{\le\MinkThmsfix})} 
&\lesssim_{N,\MinkThmsfix}
\|\udata\|_{\Cb^{N+4,0}(\diamonddata)} 
\label{eq:usp}
\\
\|\udata\|_{\widetilde{H}^{N+1}(\diamond_{0,\MinkThmsfix})} 
&\lesssim_{N,\MinkThmsfix}
\|\udata\|_{\Hb^{N+2,0}(\diamonddata)} 
\lesssim_{N}
\|\udata\|_{\Cb^{N+2,1}(\diamonddata)} 
\label{eq:ubulk}
\\
\smash{\|\udata-\KSdata(z)\|_{\HdataApp^{\frac52+\gamma,N+3}(\Dspdata_{\le\MinkThmsfix})}}
&\lesssim_{N,\gamma,\MinkThmsfix}
\|\udata-\KSdata(z)\|_{\Hb^{N+4,\delta_N}(\diamonddata)}
\label{eq:udecay}
\end{align}
\end{subequations}
where we decorated norms from \cite{MinkowskiPaper}
with a tilde, in order to distinguish them from the norms
defined in this paper.
In \eqref{eq:usp}, \eqref{eq:ubulk}
the shift in the number of derivatives is due to the
fact that in \cite{MinkowskiPaper} the norms are 
defined using the same number of derivatives for $\lxdata$ and for $\Idata$,
unlike \eqref{eq:normsoffset}.
Further, the norms \smash{$\tilde{C}^{N+3}_b(\Dspdata_{\le\MinkThmsfix})$}
and $\HdataApp^{\frac52+\gamma,N+3}(\Dspdata_{\le\MinkThmsfix})$ are defined using
homogeneous bases near spacelike infinity
\cite[Definition 17]{MinkowskiPaper}, 
while the bases used here are only homogeneous to leading order,
see Lemma \ref{lem:basisgdata}, but this only contributes
a constant in the estimates \eqref{eq:usp}, \eqref{eq:udecay}.
Given \eqref{eq:uuu}, 
the assumptions (d6), (d7), (d8) follow from 
the estimate after \eqref{eq:Pu0Kv} and
\eqref{eq:T1cKconcl},
\eqref{eq:sobolev}
and a smallness assumption on $\lambda_0$.

We check the last statement in the corollary.
By Part 2 of Theorem \ref{thm:P=0}
with $N'$, $\delta'$, $b'$ there given by 
$N'+6$, $\delta_{N'}$, $\|\ulinsol\|_{\Hb^{N'+6,\smash{\delta_{N'}}}}+1$ here,
\eqref{eq:Part2c} holds with $N'$, $\delta'$ replaced by $N'+6$, $\delta_{N'}$.
We conclude that the quantities in (d9)-(d12) are bounded,
which implies existence of $b'$ for which (d9)-(d12) hold.
(d9): By \eqref{eq:KerrDecay} with $\ss=N'+3$.
For (d10), (d11), (d12) note that \eqref{eq:uuu} also holds
with $N$ replaced by $N'$. 
Then (d10), (d11) follow from \eqref{eq:KerrDecay} with $\ss=N'+4$,
\smash{$\|\ulinsol\|_{\Hb^{N'+6,\smash{\delta_{N'}}}}<\infty$},
\eqref{eq:Part2c}, \eqref{eq:sobolev};
(d12) follows from 
\smash{$\|\ulinsol\|_{\Hb^{N'+6,\smash{\delta_{N'}}}}<\infty$}, \eqref{eq:Part2c}.
\qed
\end{proof}

\appendix

\section{Kerr-Schild spacetime}
\label{sec:KSreparametrization}

We construct the Kerr-Schild elements \eqref{eq:KS}.

As in \cite{Thesis,MinkowskiPaper}, 
in this section we will view $\diamond=\R^4$
as a subset of the Einstein cylinder.
Recall that $y$ are coordinates near spacelike infinity
defined by Kelvin inversion $y=x/(-(x^0)^2+|\vec{x}|^2)$.
Let $U\subset\diamond$ be the neighborhood of spacelike infinity
given by $\{|y|\le\frac{1}{100}\}$,
and $\tilde{U}\subset\diamond$ the neighborhood given by $\{|y|\le\frac{1}{101}\}$.

For $m\in[0,2^{-10}]$ and $q\in\R^9$ with $|q|\le\frac{1}{10}$ let 
\begin{align*}
\kerr(m,q)\in \gx^1(U)
\qquad\qquad
q = (\vec{b},\vec{t},\vec{a})
\end{align*}
be the family of Kerr-elements in \cite[Theorem 20]{Thesis}.
Fix a cutoff function $\psi:\diamond\to[0,1]$ that
is smooth on $\overline{\diamond}$, equal to one on $\tilde{U}$,
equal to zero on the subset of $U$ where $|y|\ge \frac{1}{100+1/2}$,
and equal to zero on $\diamond\setminus U$.
Define
\[ 
\tilde{\kerr}(m,q) = \psi\kerr(m,q) \in \gx^1(\diamond)
\qquad
\tilde{\kerrdata}(m,q) =\tilde{\kerr}(m,q) |_{\diamonddata} \in \gxdata^1(\diamonddata)
\]
with the understanding that $\tilde{\kerr}(m,q)=0$ on $\diamond\setminus U$.
Then:
\begin{enumerate}[label=\textnormal{({f\arabic*})}]
\item \label{item:KthesisMC}
$\dg\tilde\kerr(m,q)+\frac12[\tilde\kerr(m,q),\tilde\kerr(m,q)]=0$
on $\tilde{U}$
\item \label{item:Kthesisnullinf}
$\tilde\kerr(m,q)$ extends smoothly to future and past null infinity.
\item \label{item:Kthesisest}
For every $\ss\in\Z_{\ge0}$ one has,
using notation analogous to \eqref{eq:KerrDecay},
\begin{align}\label{eq:KerrDecayAp}
|(|y|\p_y)^{\le\ss}\tilde\kerr(m,q)|
&\lesssim_{\ss}
m |y|(1+|\log|y||)
\qquad
\text{on $\tilde{U}$}
\end{align}

\item \label{item:Kdata}
$\tilde\kerrdata(m,q)$ is linear in $m$,
one has 
\begin{equation}\label{eq:PDP}
\Pconstraints(\tilde{\kerrdata}(m,q)) = 0
\qquad
D\Pconstraints(\tilde{\kerrdata}(m,q)) = 0
\qquad
\text{on $\tilde{U}\cap\diamonddata$}
\end{equation}
and for all $\ss\in\Z_{\ge1}$:
\begin{subequations}\label{eq:Uest}
\begin{align}
\|\tilde\kerrdata(m,q)\|_{\Cb^{\ss,1}(\diamonddata)}
	&\lesssim_{\ss}
	m
	\label{eq:k1qbound}\\
\|\tilde\kerrdata(m,q)-\tilde\kerrdata(m',q'))\|_{\Cb^{\ss,1}(\diamonddata)}
&\lesssim_{\ss}
|m-m'| + (m+m')|q-q'|
\label{eq:k1qlip}
\end{align}
\end{subequations}

\item \label{item:psicut}
Note that $\Bdata_1(\tilde{\kerrdata}(m,q))$
vanishes on $\tilde{U}\cap\diamonddata$, by 
the second identity in \eqref{eq:PDP} and by \eqref{eq:DPB1}.
Thus we can apply $\Charge$ to it.
One has, with $q=(\vec{b},\vec{t},\vec{a})$:
\[ 
\Charge\big(\Bdata_1(\tilde{\kerrdata}(m,q))\big)
=
4\pi m 
\left(\begin{smallmatrix}
\cosh(|\vec{b}|)\\
\sinh(|\vec{b}|)\frac{\vec{b}}{|\vec{b}|} \\
\cosh(|\vec{b}|)\vec{t}
+
\sinh(|\vec{b}|) (\vec{a}\times\frac{\vec{b}}{|\vec{b}|}) \\
\cosh(|\vec{b}|)\left( 
\vec{a}-(\vec{a}\cdot\frac{\vec{b}}{|\vec{b}|})\frac{\vec{b}}{|\vec{b}|} \right)
+(\vec{a} \cdot \frac{\vec{b}}{|\vec{b}|})\frac{\vec{b}}{|\vec{b}|} 
-\sinh(|\vec{b}|) (\vec{t}\times\frac{\vec{b}}{|\vec{b}|})
\end{smallmatrix}\right)
\]
\end{enumerate}

We check these properties:
\ref{item:KthesisMC}, \ref{item:Kthesisnullinf}, \ref{item:Kthesisest}
hold by \cite[Theorem 20, Lemma 113, Remark 59]{Thesis}\footnote{
The estimate \eqref{eq:KerrDecayAp} is there shown for $\vec{b},\vec{t}=0$,
but is easily seen to extend to nonzero $\vec{b},\vec{t}$.}.
To check \ref{item:Kdata}, we use the fact that 
on $U\cap\{|y^0| \le 2^{-4} |\vec{y}|\}$ the element 
$\kerr(m,q)$ is given by \cite[(447), (454) and the '+' case in (427)]{Thesis},
where one must also use the fact that the rotations, boost, translations
map 
$U\cap\{|y^0| \le 2^{-4} |\vec{y}|\}$
to 
$|y^0|\le \frac13|\vec{y}|$.
Then by \cite[Lemma 104]{Thesis}, 
on $U\cap\{|y^0| \le 2^{-4} |\vec{y}|\}$
the element $\kerr(m,q)$ is linear in $m$,
thus $\tilde{\kerrdata}(m,q)$ is linear in $m$,
and then \eqref{eq:PDP} follows from \ref{item:KthesisMC}.
Further \cite[Remark 56]{Thesis} implies that
for all $\lambda\in(0,1]$:
\begin{align}\label{eq:KSscaling}
S_{\lambda}^* \kerr(\lambda m, \vec{b},\lambda \vec{t},\lambda \vec{a})
=
\kerr(m, \vec{b},\vec{t},\vec{a})
\quad
\text{on}
\quad 
S_{1/\lambda}(U)\cap\{|y^0| \le 2^{-4} |\vec{y}|\}
\end{align}
using the $\R_+$-action in \cite[Definition 7]{MinkowskiPaper}\footnote{%
The $\R_+$-action used in \cite{Thesis} differs from 
that in \cite{MinkowskiPaper}. 
In \cite{Thesis} the action is defined purely by pullback,
while in \cite{MinkowskiPaper} the action also involves an 
explicit $\lambda^{-1}$-factor, see \cite[(58)]{MinkowskiPaper}.}.
By \eqref{eq:KSscaling}, linearity in $m$, and the fact that
the components of $\kerr(m,q)$ are smooth jointly in $(y,m,q)$,
one obtains\footnote{
A similar, more detailed, scaling argument is
e.g.~in \cite[Proof of Lemma 113, first item]{Thesis}.}
\begin{align}
|(|y|\p_y)^{\le\ss}\kerr(m,q)|
&\lesssim_{\ss}
m|y| \label{eq:ydataest}
\\
|(|y|\p_y)^{\le\ss}(\kerr(m,q)-\kerr(m,q'))|
&\lesssim_{\ss}
m|y|(|\vec{b}-\vec{b}'|
+
|y||\vec{t}-\vec{t}'|+|y||\vec{a}-\vec{a}'|)\nonumber
\end{align}
on $U\cap\{|y^0| \le 2^{-4} |\vec{y}|\}$, 
where 
$q=(\vec{b},\vec{t},\vec{a})$, 
$q'=(\vec{b}',\vec{t}',\vec{a}')$,
and where we use notation analogous to \eqref{eq:KerrDecay}.
From this, the estimates \eqref{eq:Uest} easily follow
(the basis and vector fields used in \eqref{eq:ydataest} 
are homogeneous of degree zero, see \cite[Definition 17]{MinkowskiPaper},
while the basis and vector fields used in the norms
in \eqref{eq:Uest} are only homogeneous to leading order,
but this change only contributes a constant in the estimates).
Further \ref{item:psicut} follows from \cite[Lemma 114, Corollary 94]{Thesis}.
%
%
%
%
%
%
%
%
\begin{prop}\label{prop:Kerr}
There exists $M\in(0,2^{-11}]$ and a map
\[ 
\phi:\; \Ccone_{M,2^{-6}} 
	\;\to\; \big\{(m,q)\in[0,2^{-10}]\times\R^{9}
	\mid  |q|\le\tfrac{1}{10}  \big\}
\]
such that $\KS=\tilde{\kerr}\circ \phi$ satisfies the properties
\ref{item:KSMC}-\ref{item:IntroKSCharges} in the introduction.
\end{prop}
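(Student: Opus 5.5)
The plan is to find $\phi$ by inverting the charge map $(m,q)\mapsto \Charge(\MCinf(\tilde\kerrdata(m,q)))$ on a small cone, splitting it into a leading linear part given by \ref{item:psicut} and a quadratic remainder. Properties \ref{item:KSMC}, \ref{item:LKSsmoothnullinf}, \ref{item:Kerr4dest} will then follow from \ref{item:KthesisMC}, \ref{item:Kthesisnullinf}, \ref{item:Kthesisest} once we know $m\lesssim|z|$. \ref{item:IntroKSConstraints} follows from \ref{item:KSMC}. So the real content is \ref{item:IntroKSest} and \ref{item:IntroKSCharges}.

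\emph{Step 1: the charge of $\tilde\kerrdata(m,q)$.} For $m\le M$ small, \eqref{eq:k1qbound} with $\ss=4$ gives $\tilde\kerrdata(m,q)\in\Wconv^{\ss}$, so $\MCinf(\tilde\kerrdata(m,q))$ is defined. By \eqref{eq:PDP}, \eqref{eq:equivMCP} and \eqref{eq:DPB1}, both $\MCinf(\tilde\kerrdata)$ and $\Bdata_1(\tilde\kerrdata)$ vanish on $\tilde U\cap\diamonddata\supset\{|\vec x|\ge101\}$. Hence $\MC_{\ge2}(\tilde\kerrdata)$ is supported in $|\vec x|\le101$, and there every weight $\delta$ is equivalent to weight $0$. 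Corollary \ref{cor:bn3} (as in the proof of \eqref{eq:MC4}) and Lemma \ref{lem:ChargeEst} then give
\[
\Charge(\MCinf(\tilde\kerrdata(m,q))) = 4\pi m f(q) + m^2 g(m,q),
\]
where $f$ is the explicit vector in \ref{item:psicut}. The remainder satisfies $|g|\lesssim1$. Using \eqref{eq:k1qlip} and multilinearity, it also satisfies the Lipschitz bound
\[
|m^2g(m,q)-m'^2g(m',q')|\lesssim (m+m')\big(|m-m'|+(m+m')|q-q'|\big).
\]

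\emph{Step 2: inversion.} From \ref{item:psicut}, $f(q)=(f_0(q),\hat f(q))$ with $f_0=\cosh|\vec b|\ge1$ and $\hat f(q)/f_0(q)=q+O(|q|^2)$. In particular $q\mapsto \hat f(q)/f_0(q)$ has derivative the identity at $0$. Given $z=(z_0,\hat z)\in\Ccone_{M,2^{-6}}$, set $w=\hat z/z_0$, so $|w|<2^{-6}$, and write $m=z_0\mu$. The equation $\Charge(\MCinf(\tilde\kerrdata(m,q)))=z$ then becomes
\[
4\pi\mu f(q)+z_0\mu^2 g(z_0\mu,q)=(1,w).
\]
For $z_0=0$ it is solved by the quantitative inverse function theorem: $q=q_0(w)$ with $|q_0(w)|\le2|w|<\tfrac1{10}$, and $\mu=1/(4\pi f_0(q_0))$. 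For $0<z_0\le M$ we solve it as a fixed point in $(\mu,q)$, treating $z_0\mu^2g$ as an $O(M)$ perturbation with $O(M)$ Lipschitz constant by Step 1. For $M$ small this is a contraction, and we define $\phi(z)=(z_0\mu,q)$. This gives \ref{item:IntroKSCharges}, $m\le 2^{-10}$, $|q|\le\tfrac1{10}$, and $m\simeq z_0\simeq|z|$.

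\emph{Step 3: estimates \eqref{eq:KSest12}.} The bound $\|\KSdata(z)\|_{\Cb^{\ss,1}}\lesssim m\lesssim|z|$ follows from \eqref{eq:k1qbound}. For the difference we use \eqref{eq:k1qlip}, so it suffices to show $|m_1-m_2|+(m_1+m_2)|q_1-q_2|\lesssim|z_1-z_2|$. By the contraction, $\mu$ and $q$ are Lipschitz in $(z_0,w)$, and $|w_1-w_2|\lesssim|z_1-z_2|/\max(z_{1,0},z_{2,0})$. Multiplying by $m\simeq z_0$ gives the required bound.

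The main obstacle I expect is Step 1's Lipschitz estimate for the quadratic remainder, in the scale-invariant form with the prefactor $(m+m')$. It has to combine \eqref{eq:k1qlip} with Corollary \ref{cor:bn3}, using the compact support of $\MC_{\ge2}(\tilde\kerrdata)$ to pass to weight-$0$ norms. A secondary difficulty is the bookkeeping of the quantitative inverse function theorem, which is needed to keep $|q|\le\tfrac1{10}$ on the cone $|w|<2^{-6}$.
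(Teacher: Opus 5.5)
Your argument is correct, but it organizes the inversion differently from the paper.

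\textbf{The paper's route.} It first writes down an explicit inverse $\phi_0$ of the \emph{linear} charge map $(m,q)\mapsto\Charge(\Bdata_1(\tilde\kerrdata(m,q)))$ given by \ref{item:psicut}. This is Lemma \ref{lem:kerrphi0}, and the scale-adapted Lipschitz bound \eqref{eq:phi0_Lip} is checked there by inspection. It then solves the near-identity equation $z=z'+F(z')$, with $F(z')=\Charge(\MC_{\ge2}(\tilde\kerrdata(\phi_0(z'))))$. This is done by a contraction directly in charge space, on the ball $|z'-z|\le 2^{-7}|z|$ (Lemma \ref{lem:kerrphi1}). Since $|F(z')|\lesssim|z'|^2$ and the Lipschitz constant of $F$ is $\lesssim|z'|$, no rescaling is needed. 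The estimate \eqref{eq:KSest12} then follows by composing \eqref{eq:phi0_Lip} with the bi-Lipschitz bound \eqref{eq:phi1_lip}.

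\textbf{Your route.} You never invert the linear map in closed form. Instead you:
\begin{itemize}
\item blow up the cone via $m=z_0\mu$ and $w=\hat z/z_0$;
\item use a quantitative inverse function theorem for $q\mapsto \hat f(q)/f_0(q)$ at $z_0=0$;
\item treat $z_0\mu^2 g(z_0\mu,q)$ as an $O(z_0)$-Lipschitz perturbation in $(\mu,q)$.
\end{itemize}
This only needs the derivative at $q=0$ to be the identity, not an explicit formula. The price is extra bookkeeping in two places.
\begin{itemize}
\item You need uniform Lipschitz dependence of the fixed point on the parameter $z_0$. This does hold: $z_0^{-1}(m^2g)(z_0\mu,q)$ is Lipschitz in $z_0$ with an $O(1)$ constant, by your Step 1 bound applied with the larger of the two $z_0$'s in the denominator.
\item You need the estimate $|w_1-w_2|\lesssim|z_1-z_2|/\max_i z_{i,0}$ to recover the scale-adapted bound $|m_1-m_2|+(m_1+m_2)|q_1-q_2|\lesssim|z_1-z_2|$.
\end{itemize}
In the paper, both of these are absorbed into the explicit formula for $\phi_0$ and the bound \eqref{eq:phi0_Lip}. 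There $\phi_0^q$ depends on $z'$ only through ratios like $\vec b'/m'$, which is exactly your $w$. The explicit formula also makes trivial your secondary worry: on the full cone $|w|<2^{-6}$ one indeed lands in $|q|\le\frac1{10}$.

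\textbf{A correction to Step 1.} The second item of Corollary \ref{cor:bn3} is only linear in the distinguished input, so it cannot give $|g|\lesssim 1$. For that, and for your Lipschitz bound on $m^2g$, use Proposition \ref{prop:uBestimates} with two distinguished inputs ($|Y|=2$). In the difference estimate, slot one is $\kappa-\kappa'$ and slot two is $\kappa$ or $\kappa'$. Control the remaining inputs with $\|\cdot\|_{\Cb^{4,0}}\le\rhoB\le 1/(4\CB)$, and pass to weight $0$ using the support in $|\vec x|\le 101$. This is what the paper's phrase ``a calculation analogous to the proof of \eqref{eq:MC3} respectively \eqref{eq:MC4}'' amounts to.

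Also, $\Wconv$-membership needs the $\Cb^{\ss,1}$ bound \eqref{eq:k1qbound} for $\ss\ge 7$ (it controls $\Hb^{\ss,0}$). The case $\ss=4$ only supplies the $\Cb^{4,0}$ smallness.
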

The reparametrization $\phi$ is constructed in the next two lemmas.
\begin{lemma}\label{lem:kerrphi0}
Define $\phi_0:\Ccone_{2^{-10},2^{-5}} 
	\to \{(m,q)\in [0,2^{-10}]\times\R^{9}\mid |q|\le\frac{1}{10}\}$,
\[ 
\phi_0(m',q')
\;=\;
\left(\begin{smallmatrix}
\tfrac{1}{4\pi}\sqrt{(m')^2-|\vec{b}'|^2}\\
\arctanh(\frac{|\vec{b}'|}{m'})\frac{\vec{b}'}{|\vec{b}'|}\\
\tfrac{1}{(m')^2-|\vec{b}'|^2}
\big( m' \vec{t}' - 
\vec{a}'\times\vec{b}'-
\tfrac{1}{m'}(\vec{b}'\cdot\vec{t}')\vec{b}' \big)\\
\tfrac{1}{(m')^2-|\vec{b}'|^2}
\Big(
m'\Big(\vec{a}' 
-
\big(1-\tfrac{\sqrt{(m')^2-|\vec{b}'|^2} }{m'}\big)
(\vec{a}'\cdot\frac{\vec{b}'}{|\vec{b}'|})\frac{\vec{b}'}{|\vec{b}'|}
\Big)
-
\vec{b}' \times\vec{t}'
\Big)
\end{smallmatrix}\right)
\]
where $q' = (\vec{b}',\vec{t}',\vec{a}')$.
This satisfies, for all $z'_1,z'_2\in\Ccone_{2^{-10},2^{-5}}$:
\begin{subequations}
\begin{align}
\Charge\big( 
\Bdata_1\big(\tilde{\kerrdata}(\phi_0 (z_1'))\big)\big)
	&=z_1' \label{eq:phi0equation}\\
|\phi_0^m(z_1')|
	&\lesssim |z_1'|
	\label{eq:phi0_bd}\\
|\phi_0^m(z_1')-\phi_0^m(z_2')| 
+
(\phi_0^m(z_1')+\phi_0^m(z_2'))
|\phi_0^q(z_1')-\phi_0^q(z_2')|
	&\lesssim
	|z_1'-z_2'| \label{eq:phi0_Lip}
\end{align}
\end{subequations}
where $\phi_0^m$ and $\phi_0^q$ denote the first
respectively the last nine components of $\phi_0$.
\end{lemma}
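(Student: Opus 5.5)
The plan is to show that $\phi_0$ inverts the explicit charge map of \ref{item:psicut}, and then to verify the two estimates by elementary bounds on the cone.

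\textbf{Step 1: the charge identity \eqref{eq:phi0equation}.} Write $(m,q)=\phi_0(m',q')$ with $q=(\vec b,\vec t,\vec a)$. The formulas give $\vec b$ parallel to $\vec b'$ with $\tanh|\vec b|=|\vec b'|/m'$. Since $|\vec b'|<2^{-5}m'$, this is well defined and $|\vec b|\le\arctanh(2^{-5})$. Then
\[
\cosh|\vec b|=\frac{m'}{\sqrt{(m')^2-|\vec b'|^2}},\qquad \sinh|\vec b|=\frac{|\vec b'|}{\sqrt{(m')^2-|\vec b'|^2}} .
\]
Setting $\mu=\sqrt{(m')^2-|\vec b'|^2}=4\pi m$, the first four entries of the formula in \ref{item:psicut} become $4\pi m\cosh|\vec b|=m'$ and $4\pi m\sinh|\vec b|\,\hat b=\vec b'$, where $\hat b=\vec b'/|\vec b'|$.

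For the last six entries, decompose $\vec t,\vec a$ into components parallel and orthogonal to $\hat b$. The map $(\vec t,\vec a)\mapsto(\vec t',\vec a')$ given by \ref{item:psicut} is linear for fixed $\vec b$. It preserves the parallel parts up to the factors $\mu\cosh|\vec b|$ and $\mu$. On the orthogonal plane it acts as the $2\times2$ matrix $\mu\left(\begin{smallmatrix}\cosh & \sinh\,\hat b\times\\ -\sinh\,\hat b\times & \cosh\end{smallmatrix}\right)$. Because $(\hat b\times)^2=-1$ on that plane, its inverse is $\mu^{-1}\left(\begin{smallmatrix}\cosh & -\sinh\,\hat b\times\\ \sinh\,\hat b\times & \cosh\end{smallmatrix}\right)$. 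Rewriting this inverse in terms of $m',\vec b'$ should reproduce exactly the third and fourth blocks of $\phi_0$; I will check this directly, including the parallel components $\vec t\cdot\hat b=(\vec t'\cdot\hat b)/m'$ and $\vec a\cdot\hat b=\vec a'\cdot\hat b/\mu\cdot(\mu/m')\cdots$.

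\textbf{Step 2: range and the bound \eqref{eq:phi0_bd}.} On $\Ccone_{2^{-10},2^{-5}}$ one has $\mu\ge m'\sqrt{1-2^{-10}}$, so $m\le m'/(4\pi)\le 2^{-10}$ and $\phi_0^m(z')\le|z'|$. For $q$, use $|\vec t'|,|\vec a'|\le|q'|\le2^{-5}m'$ together with $\mu\sim m'$. This gives $|\vec t|,|\vec a|\lesssim|q'|/m'\le 2^{-5}\cdot(1+O(2^{-5}))$, and $|\vec b|\le\arctanh 2^{-5}$. Hence $|q|\le\tfrac1{10}$ after a crude numeric check.

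\textbf{Step 3: the Lipschitz estimate \eqref{eq:phi0_Lip}, which is the main point.} The key observation is that $\phi_0^q$ is homogeneous of degree zero: $\phi_0^q(\lambda z')=\phi_0^q(z')$. Writing $z'=m'(1,\xi)$ with $|\xi|<2^{-5}$, the map $\phi_0^q$ is a smooth function $g(\xi)$ on the closed ball $|\xi|\le2^{-5}$. For $\vec b'$, use $\arctanh(s)/s$ smooth, so the apparent singularities at $\vec b'=0$ are removable. The $\hat b$-dependent term $(1-\mu/m')\hat b\hat b^T$ equals $\big(1-\sqrt{1-s^2}\big)/s^2$ times $\vec b'\vec b'^T/(m')^2$, with $s=|\vec b'|/m'$, which is again smooth. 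Thus $|\nabla_{z'}\phi_0^q|\lesssim 1/m'$.

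Likewise $\phi_0^m=m'\sqrt{1-|\xi_b|^2}/(4\pi)$ is homogeneous of degree one with bounded gradient. This gives $|\phi_0^m(z_1')-\phi_0^m(z_2')|\lesssim|z_1'-z_2'|$, the segment between two points of the convex cone staying in the cone.

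For the $q$ part, integrate $\nabla\phi_0^q$ along the segment from $z_1'$ to $z_2'$. The cone is convex, so $m'\ge\min(m_1',m_2')$ along it, giving $|\phi_0^q(z_1')-\phi_0^q(z_2')|\lesssim|z_1'-z_2'|/\min(m_1',m_2')$. If $m_1'\le 2m_2'$ (or symmetrically), multiplying by $\phi_0^m(z_1')+\phi_0^m(z_2')\lesssim m_1'+m_2'$ gives the bound. Otherwise $|z_1'-z_2'|\ge|m_1'-m_2'|\gtrsim\max(m_1',m_2')$, and the bound follows from $|\phi_0^q|\le\tfrac1{10}$ instead.
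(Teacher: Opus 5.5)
Your proposal is correct and follows the paper's proof, which is simply direct verification from \ref{item:psicut}. Your inversion of the charge map, the range check, and the degree-zero homogeneity argument for \eqref{eq:phi0_Lip} make explicit what the paper calls ``direct inspection''.

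One slip to fix when you carry out the check. In \ref{item:psicut} one has $\vec a\times\hat b=-\hat b\times\vec a$ and $-\vec t\times\hat b=\hat b\times\vec t$, so both off-diagonal signs of your $2\times2$ matrix are reversed. Also, the parallel component is just $\vec a\cdot\hat b=(\vec a'\cdot\hat b)/\mu$. With these corrections the inverse reproduces the third and fourth blocks of $\phi_0$ exactly.
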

\begin{proof}
By direct inspection, using \ref{item:psicut}.\qed
\end{proof}
\begin{lemma}\label{lem:kerrphi1}
There exists $M\in (0,2^{-11}]$ and a map
$\phi_1:\Ccone_{M,2^{-6}}\to\Ccone_{2M,2^{-5}}$
such that for all $z_1,z_2\in\Ccone_{M,2^{-6}}$: 
\begin{subequations}
\begin{align}
&\Charge \big(\MCinf\big(\tilde{\kerrdata}\big((\phi_0\circ\phi_1)(z_1)\big)\big)\big) 
	= z_1
	\label{eq:CMCinfeq}\\
&|z_1-\phi_1(z_1)|
	\le 2^{-7}|z_1| 
	\label{eq:phi1_bd}\\
%
&\tfrac23 |z_1-z_2|
	\le |\phi_1(z_1)-\phi_1(z_2)| 
	\le 2 |z_1-z_2|
	\label{eq:phi1_lip}
\end{align}
\end{subequations}
\end{lemma}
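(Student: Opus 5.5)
We construct $\phi_1$ as the inverse of the map $F:=\Charge\circ\MCinf\circ\tilde{\kerrdata}\circ\phi_0$, viewed as a perturbation of the identity. By \eqref{eq:phi0equation}, $\Charge(\Bdata_1(\tilde\kerrdata(\phi_0(z'))))=z'$, so
\[
F(z') = z' + R(z'),\qquad R(z') := \Charge\big(\MC_{\ge2}(\tilde{\kerrdata}(\phi_0(z')))\big).
\]
This is well defined: $\MCinf(\tilde\kerrdata)$ vanishes on $\tilde U\cap\diamonddata\supset\{|\vec{x}|\ge101\}$ by \eqref{eq:PDP} and \eqref{eq:equivMCP}, and so does $\Bdata_1(\tilde\kerrdata)$. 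Hence $\MC_{\ge2}(\tilde\kerrdata)$ is compactly supported in $|\vec{x}|\le101$, and Lemma \ref{lem:ChargeEst} applies. For $\tilde\kerrdata\in\Wconv^{8}$ we need $\|\tilde\kerrdata\|_{\Cb^{4,0}}\le\rhoB$, which holds for $m$ small by \eqref{eq:k1qbound} and \eqref{eq:phi0_bd}.

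\emph{Quadratic bounds on $R$.} Write $\kappa(z')=\tilde\kerrdata(\phi_0(z'))$. By \eqref{eq:k1qbound} and \eqref{eq:phi0_bd}, $\|\kappa(z')\|_{\Cb^{\ss,1}}\lesssim|z'|$. By \eqref{eq:k1qlip} and \eqref{eq:phi0_Lip}, $\|\kappa(z_1')-\kappa(z_2')\|_{\Cb^{\ss,1}}\lesssim|z_1'-z_2'|$; note that \eqref{eq:phi0_Lip} is exactly designed to absorb the factor $(m+m')|q-q'|$. Since everything is supported in $|\vec{x}|\le101$, the weight $\delta+1$ (say $\delta=2$) costs only a constant, exactly as in \eqref{eq:deltato0}. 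Lemma \ref{lem:ChargeEst}, Proposition \ref{prop:uBestimates} and Corollary \ref{cor:bn3} (second item, applied to each multilinear term as in the proof of \eqref{eq:MC4}, but keeping two factors of $\kappa$ for $n\ge2$) then give
\[
|R(z')|\le C|z'|^2,\qquad |R(z_1')-R(z_2')|\le C(|z_1'|+|z_2'|)\,|z_1'-z_2'|
\]
for $z'_1,z'_2\in\Ccone_{2^{-10},2^{-5}}$. The difference is expanded telescopically as in \eqref{eq:MCMCkkp}, and for $n\ge2$ each term carries one factor $\kappa(z_1')-\kappa(z_2')$ and at least one further factor $\kappa$ of size $\lesssim|z'|$. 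I expect this to be the main obstacle: the mere $H^{\ss,0}$ bounds of Corollary \ref{cor:bn3} must be upgraded to quadratic smallness, using the first item or \eqref{eq:uBestimates} with $Y$ containing two indices together with \eqref{eq:B2est} for $n=2$.

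\emph{Inverting $F$.} For $z\in\Ccone_{M,2^{-6}}$, solve $z'=z-R(z')$ by a Banach fixed point argument on the ball $\{z':|z'-z|\le2^{-7}|z|\}$. This ball lies in $\Ccone_{2M,2^{-5}}\subset\Ccone_{2^{-10},2^{-5}}$: indeed $m'\ge(1-2^{-7})m$ and $|q'|\le|q|+2^{-7}|z|\le2^{-5}m'$. The map $z'\mapsto z-R(z')$ sends the ball into itself because $C|z'|^2\le 4C|z|^2\le2^{-7}|z|$ once $M$ is small. It is a contraction with constant $4C|z|\le\frac14$. Define $\phi_1(z)$ as the fixed point. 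Then \eqref{eq:CMCinfeq} is $F(\phi_1(z))=z$, and \eqref{eq:phi1_bd} holds by construction.

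\emph{Lipschitz bounds and remaining properties.} From $\phi_1(z_1)-\phi_1(z_2)=(z_1-z_2)-(R(\phi_1(z_1))-R(\phi_1(z_2)))$ and the Lipschitz bound on $R$ with constant at most $\frac14$, we get $|\phi_1(z_1)-\phi_1(z_2)|$ between $\frac34|z_1-z_2|$ and $\frac43|z_1-z_2|$, which gives \eqref{eq:phi1_lip}. Finally, Proposition \ref{prop:Kerr} follows by setting $\phi=\phi_0\circ\phi_1$:
\begin{itemize}
\item \ref{item:KSMC}, \ref{item:LKSsmoothnullinf} and \ref{item:Kerr4dest} follow from \ref{item:KthesisMC}, \ref{item:Kthesisnullinf} and \ref{item:Kthesisest} together with $m\lesssim|z|$.
\item \ref{item:IntroKSest} follows from \eqref{eq:Uest}, \eqref{eq:phi0_bd}, \eqref{eq:phi0_Lip} and \eqref{eq:phi1_lip}.
\item \ref{item:IntroKSCharges} is \eqref{eq:CMCinfeq}.
\end{itemize}
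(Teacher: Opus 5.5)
Your proposal is correct and follows essentially the same route as the paper. You use \eqref{eq:phi0equation} to write $\Charge(\MCinf(\kappa(z')))=z'+\Charge(\MC_{\ge2}(\kappa(z')))$, bound the remainder quadratically via compact support in $|\vec{x}|\le101$ and Lemma \ref{lem:ChargeEst}, invert by a Banach fixed point on the ball $\{|z'-z|\le2^{-7}|z|\}\subset\Ccone_{2M,2^{-5}}$, and read off \eqref{eq:phi1_bd} and \eqref{eq:phi1_lip} from the fixed point identity; only the bookkeeping of constants differs from the paper, which takes the Lipschitz constant $2^{-8}$ where you take $\tfrac14$.
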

\begin{proof}
Abbreviate $\kappa = \tilde{\kerrdata}\circ\phi_0$.
By \eqref{eq:k1qbound} and \eqref{eq:phi0_bd}
we can choose $M'\in[0,2^{-10}]$ such that 
for all $z'\in\Ccone_{M',2^{-5}}$ one has $\kappa(z')\in\Wconv^{8}$
(we also use the fact that the 
$\Hb^{8,0}$-norm is bounded by the $\Cb^{8,1}$-norm).
Both $\MCinf(\kappa(z'))$ and $\MC_{\ge2}(\kappa(z'))$
vanish on $|\vec{x}|\ge101$, 
by \eqref{eq:PDP}, \eqref{eq:equivMCP}, \eqref{eq:DPB1}.
Then \eqref{eq:phi0equation} implies
\begin{equation}\label{eq:Mcinf2}
\Charge(\MCinf(\kappa(z')))
=
z' + \Charge(\MC_{\ge2}(\kappa(z')))
\end{equation}
Denote $F(z') = \Charge(\MC_{\ge2}(\kappa(z')))$,
which is a map $F:\Ccone_{M',2^{-5}}\to\R^{10}$.
Using Lemma \ref{lem:ChargeEst} and a calculation 
analogous to the proof of \eqref{eq:MC3} respectively \eqref{eq:MC4},
\begin{align*}
|F(z_1')| 
	&\lesssim
	\|\kappa(z_1')\|_{\Cb^{7,1}}^2
	\lesssim
	|z_1'|^2\\
|F(z_1')-F(z_2')| 
	&\lesssim
	(\|\kappa(z_1')\|_{\Cb^{7,1}}+\|\kappa(z_2')\|_{\Cb^{7,1}})
		\|\kappa(z_1')-\kappa(z_2')\|_{\Cb^{7,1}}\\
	&\lesssim
	(|z_1'|+|z_2'|)|z_1'-z_2'|
\end{align*}
where in both estimates, the second step holds by 
\eqref{eq:Uest}, \eqref{eq:phi0_bd}, \eqref{eq:phi0_Lip}.
Choose $M\in[0,M'/2]$ sufficiently small such that
for all $z_1',z_2'\in\Ccone_{2M,2^{-5}}$: 
\begin{equation}\label{eq:Fest1/2}
|F(z_1')| \le 2^{-8}|z_1'|
\qquad
\qquad
|F(z_1')-F(z_2')| \le 2^{-8}|z_1'-z_2'|
\end{equation}
Then for every $z\in \Ccone_{M,2^{-6}}$, the map
$z' \mapsto z-F(z')$
is contracting and a self map on
$B_{z}=\{z'\in\R^{10}\mid |z'-z| \le 2^{-7}|z|\}\subset\Ccone_{2M,2^{-5}}$.
%
Thus by the Banach fixed point theorem, there exists
a unique $z'\in B_z$ that satisfies
\begin{equation}\label{eq:zz'Fs}
z = z'+F(z')
\end{equation}
Set $\phi_1(z)=z'$.
This satisfies \eqref{eq:CMCinfeq}
by \eqref{eq:zz'Fs} and \eqref{eq:Mcinf2};
the estimate \eqref{eq:phi1_bd} holds because $\phi_1(z)\in B_{z}$;
and \eqref{eq:phi1_lip} 
follows from \eqref{eq:Fest1/2}, \eqref{eq:zz'Fs}.
\qed
\end{proof}
\begin{proof}[of Proposition \ref{prop:Kerr}]
Set $\phi=\phi_0\circ\phi_1$ using 
Lemma \ref{lem:kerrphi0} and \ref{lem:kerrphi1}.
Let $\phi^m$, $\phi^q$ the first respectively
last nine components of $\phi$. 
For all $z_1,z_2\in\Ccone_{M,2^{-6}}$:%
\begin{subequations}\label{eqPphibdlip}
\begin{align}
|\phi^m(z_1)| 
	&\lesssim |z_1| \label{eq:phibd}\\
|\phi^m(z_1)-\phi^m(z_2)|
+
(\phi^m(z_1)+\phi^m(z_2))|\phi^q(z_1)-\phi^q(z_2)| 
	&\lesssim |z_1-z_2| \label{eq:philip}
\end{align}
\end{subequations}
where the first holds by \eqref{eq:phi0_bd}, \eqref{eq:phi1_bd};
the second by \eqref{eq:phi0_Lip}, \eqref{eq:phi1_lip}.
We conclude that $\KS$ has the stated properties.
\ref{item:KSMC}: by \ref{item:KthesisMC}.
\ref{item:LKSsmoothnullinf}: by \ref{item:Kthesisnullinf}.
\ref{item:Kerr4dest}: 
by \ref{item:Kthesisest}, \eqref{eq:phibd}.
\ref{item:IntroKSConstraints}: by \eqref{eq:PDP}.
\ref{item:IntroKSest}: by \eqref{eq:Uest}, \eqref{eqPphibdlip}.
\ref{item:IntroKSCharges}: by \eqref{eq:CMCinfeq}.
\qed
\end{proof}

\footnotesize
\step

  \noindent\textsc{Department of Mathematics, Stockholm University,
  Stockholm, Sweden}\par\nopagebreak
 \noindent\textit{Email address:} \texttt{andrea.nuetzi@math.su.se}

\end{document}